\theoremstyle{plain}
\newtheorem{theorem}{Theorem}[section]
\newtheorem{lemma}[theorem]{Lemma}
\newtheorem{condition}[theorem]{Condition}
\newtheorem{corollary}[theorem]{Corollary}
\newtheorem{definition}[theorem]{Definition}
\newcommand{\set}[1]{\left\{#1\right\}}
\newcommand{\st}{\medspace | \medspace}
\newcommand{\posreals}{\mathbb{R}^{+}_0}
\newcommand{\reals}{\mathbb{R}}
\newcommand{\R}{\mathbb{R}}
\newcommand{\nat}{\mathbb{N}}
\newcommand{\Nb}[1]{\N_{#1}}
\newcommand{\coloneq}{\mathrel{\mathop:}=}
\newcommand{\BO}{\mathcal{O}}
\newcommand{\localskew}{\mathcal{S}}
\newcommand{\stableskew}{{\mathcal{S}}^{\infty}}
\newcommand{\stabtime}{\mathcal{T}_S}
\newcommand{\N}[1][]{\ifthenelse{\equal{#1}{}}
{\ensuremath{N}}
{\ensuremath{N_{#1}}}} 
\DeclareMathOperator{\dist}{dist}
\newcommand{\err}[1]{\epsilon_{#1}}
\newcommand{\eps}{\varepsilon}
\newcommand{\drift}{\ensuremath{\rho}} 
\newcommand{\delay}{\ensuremath{\mathcal{T}}} 
\newcommand{\tup}{\ensuremath{\tau}}
\newcommand{\rate}{\ensuremath{\mathit{mult}}}
\newcommand{\lc}[1]{\ensuremath{L_{#1}}} 
\newcommand{\hc}[1]{\ensuremath{H_{#1}}} 
\newcommand{\lce}[2]{\ensuremath{\tilde{L}_{#1}^{#2}}} 
\newcommand{\dH}[1]{{\frac{d}{dt}H_{{#1}}}}
\newcommand{\dL}[1]{{\frac{d}{dt}L_{{#1}}}}
\newcommand{\sr}[2]{\stackrel{\eqref{#1}}{#2}}
\newcommand{\SC}{\ensuremath{\mathbf{SC}}\xspace}
\newcommand{\FC}{\ensuremath{\mathbf{FC}}\xspace}
\newcommand{\MC}{\ensuremath{\mathbf{MC}}\xspace}
\newcommand{\hide}[1]{ }
\newcommand{\Aopt}{\ensuremath{\mathcal{A}^{\mathrm{OPT}}}}
\newcommand{\mc}[1]{M_{#1}}
\newcommand{\G}{{\cal G}}
\newcommand{\Gu}{\hat{\G}} 
\newcommand{\Ge}{\tilde{\G}}
\newcommand{\Tset}{\mathbb{T}}
\newcommand{\true}{\ensuremath{\mathbf{true}}}
\newcommand{\false}{\ensuremath{\mathbf{false}}}
\newcommand{\insconst}{{\cal B}}
\newcommand{\instime}{\mathcal{I}}
\newcommand{\stabint}[1]{\Theta_{#1}}
\newcommand{\barstabint}[1]{\bar{\Theta}_{#1}}
\renewcommand{\phi}{\varphi}
\begin{document}
\title{Optimal Gradient Clock Synchronization in Dynamic Networks}

\author{Fabian Kuhn$^1$ ~~ Christoph Lenzen$^2$ ~~ Thomas Locher$^3$ ~~ Rotem Oshman$^4$\\
  \small $^1$University of Freiburg, Germany\\
  \small kuhn@cs.uni-freiburg.de\\
  \small $^2$MPI Saarbruecken, Germany\\
  \small clenzen@mpi-inf.mpg.de\\
  \small $^3$ABB Research, Switzerland\\
  \small thomas.locher@ch.abb.com\\
  \small $^4$Tel Aviv University, Israel\\
  \small roshman@tau.ac.il
}
\date{}

\maketitle
\thispagestyle{empty}

\begin{abstract}
  We study the problem of clock synchronization in highly dynamic
  networks, where communication links can appear or disappear at any
  time. The nodes in the network are equipped with hardware clocks,
  but the rate of the hardware clocks can vary arbitrarily within
  specific bounds, and the estimates that nodes can obtain about the
  clock values of other nodes are inherently inaccurate. Our goal in
  this setting is to output a logical clock at each node such that the
  logical clocks of any two nodes are not too far apart, and nodes
  that remain close to each other in the network for a long time are
  better synchronized than distant nodes. This property is called
  \emph{gradient clock synchronization}.

  Gradient clock synchronization has been widely studied in the static
  setting, where the network topology does not change.  We show that
  the asymptotically optimal bounds obtained for the static case also
  apply to our highly dynamic setting: if two nodes remain at distance
  $d$ from each other for sufficiently long, it is possible to upper
  bound the difference between their clock values by
  $\BO(d \log (D / d))$, where $D$ is the diameter of the
  network. This is known to be optimal even for static networks.
  Furthermore, we show that our algorithm has optimal
  \emph{stabilization time}: when a path of length $d$ appears between
  two nodes, the time required until the clock skew between the two
  nodes is reduced to $\BO(d \log (D / d))$ is $\BO(D)$, which we
  prove to be optimal. Finally, the techniques employed for the more
  intricate analysis of the algorithm for dynamic graphs provide
  additional insights that are also of interest for the static
  setting. In particular, we establish self-stabilization of the
  gradient property within $\BO(D)$ time.
\end{abstract}

\section{Introduction}

A core algorithmic problem in distributed computing is to establish coordination
among the participants of a distributed system, which is often achieved through a
common notion of time. Typically, every node in a network has its own local
hardware clock, which can be used for this purpose; however, hardware clocks of
different nodes run at slightly different rates, and the rates can change over
time. This \emph{clock drift} causes clocks to drift out of synch, requiring
periodic communication to restore synchronization. However, communication is
typically subject to delay, and although an upper bound on the delay may be
known, specific message delays are unpredictable. Consequently, estimates for the
current local time at other nodes are inherently inaccurate.

A distributed clock synchronization algorithm computes at each node a
\emph{logical clock}, and the goal is to synchronize these clocks as tightly as
possible. Traditionally, distributed clock synchronization algorithms focus on
minimizing the \emph{clock skew} between the logical clocks of any two nodes in
the network. The clock skew between two clocks is simply the difference between
the two clock values. The maximum clock skew that may occur in the worst case
between any two nodes at any time is called the \emph{global skew} of a clock
synchronization algorithm. A well-known result states that no algorithm can
guarantee a global skew better than $\Omega(D)$, where $D$ denotes the diameter
of the network~\cite{biaz01}. However, in many cases it is more important to
tightly synchronize the logical clocks of nearby nodes in the network than it is
to minimize the global skew. For example, if a time division multiple access
(TDMA) protocol is used to coordinate access to a shared communication medium in
a wireless sensor network, it suffices to synchronize the clocks of nodes that
interfere with each other when transmitting. The problem of providing better
guarantees on the synchronization quality between nodes that are closer is
called \emph{gradient clock synchronization}. The problem was introduced in a
seminal paper by Fan and Lynch~\cite{fan04}, where the authors show that a clock
skew of $\Omega(\log D/\log\log D)$ cannot be prevented between immediate
neighbors in the network. The largest possible clock skew that may occur between
the logical clocks of any two adjacent nodes at any time is called the
\emph{local skew} of a clock synchronization algorithm. For static networks, it
has been proved that the best possible local skew that an algorithm can achieve
is bounded by $\Theta(\log D)$~\cite{lenzen08,lenzen10}.

While tight bounds have been shown for the static model, the dynamic case has not
been as well understood. A dynamic network arises in many natural contexts: for
example, when nodes are mobile, or when communication links are unreliable and
may fail and recover. The dynamic network model we consider in this article is
general: it allows communication links to appear and disappear arbitrarily,
subject only to a global connectivity constraint (which is required to
maintain a bounded global skew). Hence the model is suitable for modeling
various types of dynamic networks which remain connected over time.

In a dynamic network the distances between nodes change over time as
communication links appear and disappear. Consequently, we divide the
synchronization guarantee into two parts: a \emph{global skew guarantee} bounds
the skew between any two nodes in the network at any time, and a \emph{dynamic
gradient skew guarantee} that bounds the skew between two nodes as a function of
the distance between them and how long they remain at that distance.

In~\cite{kuhn09}, three of the authors showed that a clock
synchronization algorithm cannot react immediately to the formation of new links,
and that a certain \emph{stabilization time} is required before the clocks of
newly-adjacent nodes can be brought into synch. The stabilization time is
inversely related to the synchronization guarantee: the tighter the
synchronization required in stable state, the longer the time to reach that
state. Intuitively, this is because when strict synchronization guarantees are
imposed, the algorithm cannot change clock values quickly without violating the
guarantee, and hence it takes longer to react. The algorithm given
in~\cite{kuhn09} achieves the optimal trade-off between skew bound and
stabilization time; however, its local skew bound is $\BO(\sqrt{D})$, which is far from optimal.

In this article, we propose an algorithm, referred to as $\Aopt$, that achieves
the same asymptotically optimal skew bounds as in the static model: if two nodes
remain at distance $d$ for sufficiently long, the skew between them is reduced
to $\BO(d \log (D / d))$, where $D$ is the dynamic diameter of the network
(corresponding roughly to the time it takes for information to propagate from
one end of the network to the other). The stabilization time of the algorithm,
that is, the time to reach this guarantee, is $\BO(D)$.

\section{Related Work}
\label{sec:relwork}

The fundamental problem of synchronizing clocks in distributed systems has been
studied extensively and many results have been published for various models over
the course of the last approximately 30 years (see,
e.g,~\cite{lundelius84,ostrovsky99,shamir94,srikanth87}).
Until recently, the main focus has been on bounding the clock skew that may
occur between any two nodes in the network. Using the well-known shifting
argument~\cite{lundelius84}, which exploits the variable message delays to
construct indistinguishable executions, it has been shown that a clock skew of
$D/2$ cannot be prevented on any graph of diameter $D$~\cite{biaz01}. This lower
bound holds even if clocks do not drift. Indistinguishable executions can also
be constructed by exploiting variable clock rates~\cite{dolev84}, which can be
used together with the shifting argument to prove a stronger lower bound of
roughly $D$ for algorithms that must ensure that all clock values are always
within a linear envelope of real time~\cite{lenzen10}. In light of these
results, the algorithm proposed by Srikanth and Toueg~\cite{srikanth87} is
asymptotically optimal as it guarantees a skew of at most $\BO(D)$ between any
two clocks. The accuracy of their algorithm is also optimal in the sense that
all clock values are within a linear envelope of real time, i.e., a better
accuracy with respect to real time cannot be guaranteed. A crucial shortcoming
of this algorithm is that a clock skew of $\Omega(D)$ may occur between
neighboring nodes.
  
The problem of synchronizing clocks of nodes that are close-by as accurately as
possible has been introduced by Fan and Lynch~\cite{fan04}. In their work, the
authors show that a clock skew of $\Omega(\log D / \log\log D)$ between
neighboring nodes cannot be avoided if the clock values must increase at a
constant minimum progress rate. Subsequently, this result has been improved to
$\Omega(\log D)$~\cite{lenzen10}. If we take the minimum logical clock rate
$\alpha$, the maximum logical clock rate $\beta$, and the maximum clock drift
rate $\rho$ into account, the more general statement of the lower bound is that
a clock skew of $\Omega(\log_b D)$, where $b :=
\min\{1/\rho,(\beta-\alpha)/(\alpha\rho)\}$ cannot be avoided. The first
algorithm guaranteeing a sublinear bound on the worst-case clock skew between
neighbors achieves a bound of $\BO(\sqrt{\rho D})$~\cite{locher09diss,locher06}.
Recently, this result has been improved to $\BO(\log D)$~\cite{lenzen08} (where the
base of the logarithm is a constant) and subsequently to $\BO(\log_b
D)$~\cite{lenzen10}. Thus, tight bounds have been achieved for static networks
in which neither nodes nor edges fail.

The problem of synchronizing clocks in the presence of faults has also received
considerable attention (see, e.g.,
\cite{dolev95,halpern84,lamport85,lundelius84b,marzullo83}). Some of the proposed
algorithms are able to handle not only simple crash failures but also
Byzantine behavior, which is outside the scope of this article. However, while
these algorithms can tolerate a broader range of failures, their network model is
not fully dynamic as their results rely on the assumption that a large part of
the network remains non-faulty and stable at all times. For the fully dynamic
setting, it has been shown that there is an inherent trade-off between the clock
skew $\mathcal{S}$ guaranteed between neighboring nodes that have been connected
for a long time and the time it takes to guarantee a small clock skew over newly
added edges. In particular, the time it takes to reduce the clock skew over new
edges to $\BO(\mathcal{S})$ is $\Omega(D / \mathcal{S})$, where $n$ denotes the
number of nodes in the network~\cite{kuhn09}. In the same work, it is shown that
for $\mathcal{S} \in \Omega(\sqrt{\rho D})$, there is an algorithm that reduces
the clock skew between any two nodes to $\BO(\mathcal{S})$ in
$\Theta(D/\mathcal{S})$ time. In this article, we show that $\mathcal{S}$ can be
reduced to $\BO(\log_b D)$, i.e., the same optimal bound as for static
networks can be achieved, while still establishing this bound within
$\Theta(D/\mathcal{S})$ time on newly formed edges.

Another notion of fault-tolerance is
\emph{self-stabilization}~\cite{dijkstra74}, i.e., the ability to recover
correct operation after a period of arbitrary transient faults. Many clock
synchronization algorithms are self-stabilizing simply because of their
continuous strive for maintaining synchronization. However, a strong gradient
property is a more involved requirement than just minimizing the global skew,
hence self-stabilization is not immediate for our algorithm; the previous works
on the static case do not yield this result. In contrast, in the dynamic
setting, we exploit self-stabilization properties of the algorithm in order to
safely establish the gradient property on recently appeared edges (without
disrupting the guarantees for edges that have been present for a long time).
Consequently, we obtain self-stabilization of the gradient property as a
corollary of our analysis.

\section{Preliminaries}
\label{sec:model}

In this section we introduce the dynamic clock synchronization problem and the
model for dynamic networks that will be used in this paper.
We begin by reviewing classical (static) clock synchronization.

\paragraph{Clock synchronization.} In the clock synchronization problem, each
node $u$ is equipped with a continuous and differentiable
\emph{hardware clock} $\hc{u}:\R^+_0\to \R^+_0$, which is initialized to
$\hc{u}(0) \coloneq 0$. We use $h_u(t)$ to denote the rate $\dH{u}(t)$ at which
node $u$'s hardware clock advances at time $t$.\footnote{Unless otherwise
specified, times are always in $\R^+_0$.} The hardware clocks advance at
roughly the rate of real time, but they suffer from clock drift bounded by
$\drift \in (0,1)$; formally, we assume that at all times $t$ we have $h_u(t)
\in [1 - \drift, 1 + \drift]$ for all nodes $u$.
As a result, for any two times $t_1 \leq t_2$ we have
\begin{equation*}
	(1 - \drift)(t_2 - t_1) \leq \hc{u}(t_2) - \hc{u}(t_1) \leq (1 + \drift)(t_2 - t_1).
\end{equation*}

The objective of a clock synchronization algorithm (CSA) is to output a
left-differentiable\footnote{This requirement can be dropped. It is introduced
to simplify the presentation. The same results can be derived even for
discontinuous (in particular discrete) clocks by approximating the true clocks
by left-differentiable functions and accounting for the difference in the
uncertainty of estimates.} \emph{logical clock} $\lc{u} : \R^+_0 \to \R^+_0$
(also initialized to $\lc{u}(0)\coloneq 0$), such that at all times, the logical
clock values of different nodes are close to each other (we elaborate on this
requirement below). We use $l_u(t)$ to denote the rate $\dL{u}(t)$ of $u$'s
hardware clock at time $t$. The logical clocks are also required to have bounded
drift: there must exist constants $\alpha, \beta > 0$, such that for all $t$ we
have $l_u(t) \in [\alpha, \beta]$.

In the algorithm we present in this paper, nodes always increase their logical
clocks at either the rate of their hardware clock $h_u(t)$, or at a rate of $(1
+ \mu) \cdot h_u(t)$, where $\mu\in \BO(1)$ is a parameter of the algorithm.
Thus, the algorithm bounds the drift of the logical clocks, and we have that $\alpha \coloneq 1 -
\drift$ and $\beta \coloneq (1 + \drift)(1 + \mu)$.

\subsection{The Dynamic Graph Model}
\label{sec:dynamicgraphmodel}

\paragraph{The estimate graph.} In~\cite{kuhn09b} two of the authors introduced an abstraction called \emph{the
  estimate layer}, which simplifies reasoning about
CSAs. Synchronization typically involves periodic exchanges of clock
values between nodes, either through direct communication, or by other
means (e.g., reference broadcast synchronization~\cite{RBS}). The
estimate layer encapsulates all means by which nodes can estimate the
clock values of other nodes,
and eliminates the need to
reason explicitly about delay bounds and other parameters of the system.

The estimate layer provides an \emph{estimate graph}, where each edge $\set{u,v}$
represents the fact that node $u$ has some means of estimating $v$'s current
clock value and vice versa. The edges of the estimate graph are not necessarily
direct communication links between nodes (see~\cite{RBS} for examples). Node $u$
is provided with a \emph{local estimate} $\lce{u}{v}$ of $\lc{v}$, whose
accuracy is guaranteed by the estimate layer:
\begin{equation}
	\forall t \; \forall u \in V, v \in \Nb{u}(t) : |\lc{v}(t)-\lce{u}{v}(t)| \le
\err{\set{u,v}},
\label{eq:lce}
\end{equation}
where $\err{\set{u,v}}\in \R^+$ is called the \emph{uncertainty}, or the
\emph{weight}, of the edge $\set{u,v}$, and $\Nb{u}(t)$ is the set of neighbors
of $u$ at time $t$, which will be formally introduced shortly. The
uncertainty of a
path $p=(u_0,u_1,\dots,u_k)$, $\err{p}$ is defined as
\[
\err{p} := \sum_{i=1}^k \err{\set{u_{i-1},u_i}}.
\]

In the sequel, we refer to estimate edges of the sort described above simply as
\emph{edges}; similarly, when we say ``the graph'' we mean the estimate graph. We
do not reason explicitly about the communication graph, as the salient aspects of
communication are encapsulated by the estimate layer.

\paragraph{Dynamic networks.} We consider dynamic networks over a fixed
set of nodes $V$ of size $n \coloneq |V|$.
Edge insertions and removals are modeled as discrete events controlled by a
worst-case adversary. In keeping with the abstract representation
from~\cite{kuhn09b}, we say that there is an \emph{estimate edge} $\set{u,v}$
between two nodes $u,v\in V$ at time $t\geq 0$ iff $u$ and $v$ have a means of
obtaining clock value estimates about each other at time $t$. As explained above,
this does not necessarily mean that there is a direct communication link between
$u$ and $v$ at time $t$. 

We do not assume that nodes detect the formation or failure of a communication link between them at the same time,
which introduces some asymmetry into the model.
Hence, we model the network as a \emph{directed dynamic graph} $G = (V, E)$, where
$E : \posreals \rightarrow 2^{(V \times V)}$ maps non-negative times $t$ to a set of directed
estimate edges $E(t)$ that exist at time $t$. 
If $(u, v) \in E(t)$, then at time $t$ node $u$ has an estimate for node $v$'s logical clock, but not necessarily
vice-versa.
Formally, the set of node $u$'s neighbors at time $t$ is defined as $\Nb{u}(t) \coloneq \set{ v \st (u,v) \in E(t)}$.
We assume that any asymmetry in the graph corresponds to the delay in nodes finding 
out about link status changes and is only temporary; this is explained below.

In the following, we frequently refer to \emph{undirected} edges $\set{u,v}$; when we write $\set{u,v} \in E(t)$, we mean
that both $(u,v) \in E(t)$ and $(v,u) \in E(t)$.
We say that edge $\set{u,v}$ {exists} throughout a time interval $[t_1,
t_2]$ if for all $t \in [t_1, t_2]$ we have $\set{u,v} \in E(t)$. By extension, a
path $p$ is said to exist throughout $[t_1, t_2]$ if all its edges exist
throughout the interval.

Each undirected estimate edge $\set{u,v}$ is associated with three parameters:
\begin{itemize}
  \item The \emph{estimate uncertainty} $\err{\set{u,v}}$, as explained above.
  \item The \emph{detection delay} $\tup_{\set{u,v}}$. We assume that $u$ and
  $v$ detect if the edge disappears ``at'' the respective other node within
  $\tup_{\set{u,v}}\in \R^+$ time. Formally,
  \begin{enumerate}[(a)]
      \item if $(u,v)\notin E(t)$, then there is some time
      $t' \in [t - \tup_{\{u,v\}}, t + \tup_{\{u,v\}}]$ so that $(v,u)\notin
      E(t')$; and, symmetrically,
      \item if $(v,u)\notin E(t)$, then there is some time
      $t' \in [t - \tup_{\{u,v\}}, t + \tup_{\{u,v\}}]$ so that $(u,v)\notin
      E(t')$.
    \end{enumerate}
    \item We assume that $u$ and $v$ can exchange messages with
    \emph{message delay} $\delay_{\{u,v\}}$. More precisely, nodes that share an
    estimate edge can actively exchange information if required (possibly
    through other nodes, if there is no direct communication link between them),
    and $\delay_{\set{u,v}}$ bounds how long such communication might be
    delayed. Formally, if $u$ sends a message at time $t$ and $u\in \Nb{v}(t')$
    for all $t'\in [t,t+\delay_{\{u,v\}}]$, then $v$ will receive this message
    at some time $t''\in [t,t+\delay_{\{u,v\}}]$.\footnote{Note that the
    neighbor relation seems to be ``reversed'' here in the prerequisite for the
    reception of a message. This definition reflects that the estimate edge must
    exist for the node receiving the message. However, this detail is irrelevant
    to the functionality of the algorithm.} If $u$ is not during this entire
    interval in $\Nb{v}$, the message may or may not be delivered; if it is
    delivered, however, it is guaranteed to arrive within the specified
    interval.
\end{itemize}
We remark that our algorithm will use explicit communication by messages as
above only upon formation of an edge, to perform a simple handshake.

\paragraph{Causality and the dynamic estimate diameter.}
While local message exchange may be infrequent, flooding techniques
may ensure quick dissemination of timing information on a global level
without necessitating a large (amortized) number of messages per time
unit. We therefore characterize how information propagates through the
dynamic graph without imposing a particular communication
structure. In this context, we are interested in the global skew. Our
algorithm will ensure that any node whose logical clock attains the
current maximum clock value will run at the speed of the hardware
clock, i.e., no faster than at rate $1+\drift$. Further, the logical
clock of any node always runs at least at rate $1-\drift$. For
estimating the global skew, the maximum logical clock speed
$(1+\drift)(1+\mu)$ is of no significance.  More generally, this is
true for any algorithm that satisfies an optimal envelope condition,
i.e., that guarantees the best approximation of real-time offered by the
hardware clocks.

For a synchronization message $M$ sent from $u$ at time $t$ that is
received by $v$ at time $t'>t$ let $U(M)$ denote the uncertainty in its delay,
i.e., in particular the receiver $v$ knows that $M$ was in transit for
at least $t'-t-U(M)$ time units (clearly, $U(M)\leq \delay_{\{u,v\}}$
but potentially it is much smaller).

We define the family of relations $\stackrel{\eta}{\leadsto}$,
$\eta\in \R^+_0$, on $V\times \mathbb{R}$ as specified
below. Intuitively, for node $u$ and $v$, times $t$ and $t'\geq t$,
and a value $\eta\geq 0$, $(u,t)\stackrel{\eta}{\leadsto}(v,t')$ can
be interpreted as follows. At time $t'$, node $v$ can lower bound
$u$'s clock value at time $t$ (hardware or logical) with an error of at most
$\eta$. Specifically,

\begin{itemize}
  \item $\forall u\in V,\,\forall t:
  (u,t)\stackrel{0}{\leadsto} (u,t)$ ($u$ knows its own clock perfectly).
\item $\forall u,v\in V,\,\forall t''\geq t'\geq t,\,\forall \eta\in
  \R^+_0: (u,t)\stackrel{\eta}{\leadsto} (v,t')\Rightarrow
  (u,t)\stackrel{\eta'}{\leadsto} (v,t'')$, where
  $\eta':=\eta+\frac{4\drift}{1+\drift}(t''-t')$\\ ($v$ knows
  that $u$'s clock runs at least at $\frac{1-\drift}{1+\drift}$ times
  the rate of $v$'s hardware clock. The maximum error is obtained if
  $u$'s clock runs at rate $1+\drift$ and $v$'s hardware clock runs at
  rate $1-\drift$).
\item If $M$ is a message sent by $v$ at time $t'$ and received by $w$
  at time $t''\geq t'$, then $\forall u\in V,\,\forall t\leq
  t',\,\forall \eta\in\R^+_0:(u,t)\stackrel{\eta}{\leadsto}
  (v,t')\Rightarrow (u,t)\stackrel{\eta'}{\leadsto} (w,t'')$, where
  $\eta':=\eta+(1-\drift)U(M)+2\drift(t''-t')$\\ ($u$'s hardware clock
  progresses by at most $(1+\drift)(t''-t')$ during the transit time,
  but $w$ can safely add $(1-\drift)(t''-t'-U(M))$ to the estimate).
\end{itemize}

A fundamental lower bound~\cite{nodriftbound} shows that the
performance of a CSA in a static network depends on the diameter of
the network. In dynamic networks there is no immediate equivalent to a
diameter. Informally, the diameter corresponds to the time
it takes (at most) for information to spread from one end of the
network to the other. The above relation integrates this information
with the amount of uncertainty that is attached to this communication;
this is crucial in our scenario comprising heterogeneous edges since, for
instance, a communication path that is slower in terms of the time it
takes to traverse it might yield much more accurate estimates of clock
values.

\begin{definition}[Dynamic Estimate Radius and Diameter] Given a
  dynamic graph $G$, we say that node $v\in V$ has a \emph{dynamic
    estimate radius} of $R_v(t)$ at time $t$ if for every $u \in V$,
  there is some $t'\leq t$ so that $(u,t') \stackrel{R_v(t)}{\leadsto}
  (v,t)$, where $R_v(t)$ is minimal with this property. Moreover, $G$
  has a \emph{dynamic estimate diameter} $D(t):=\max_{v\in V}\{R_v(t)\}$
  (or simply ``diameter'' for short).
\end{definition}
Because this definition refers to the actual communication, (some) dynamic
estimate radii might be much smaller than the dynamic diameter at the same
instant of time. Moreover, both values strongly depend on the structure of
message exchange. However, the lower bounds from the static case apply in the
sense that the dynamic estimate diameter is lower bounded in terms of the
maximum over all pairs of nodes $v,w$ of the minimal sum of uncertainties on
\emph{any} possible communication path from $v$ to $w$. Hence, if the
communication layer provides an asymptotically optimal dynamic diameter, a
global skew bound that behaves roughly as $\BO(D(t))$ (neglecting disturbances
due to large fluctuations of $D(t)$) is asymptotically optimal.

As the primary focus of this work is not on the global skew, we refrain from
further discussing these points except for the following remark. We can make
an arbitrary node $u_0$ artificially faster (by multiplying its hardware clock
rate by $(1+\drift)/(1-\drift)$) so that it is always the node with the maximal
hardware clock value in the network. This can be seen as replacing its hardware
clock by one of drift $\tilde{\rho}\leq (1+\drift)^2/(1-\drift)-1\approx 3\rho$.
A CSA can easily guarantee that this node also has the largest logical clock value
in the network at all times.
All our statements then apply if we replace the drift bound $\rho$ by
$\tilde{\rho}$ and $D(t)$ by $R_{u_0}(t)$, which might be beneficial in networks
with a large discrepancy between $D(t)$ and $R_{u_0}(t)$.

\subsection{Dynamic Clock Synchronization}

Throughout the paper, we frequently refer to \emph{the skew on a path $p = (u_0, \ldots, u_k)$} at
time $t$, by which we mean $|\lc{u_0}(t) - \lc{u_k}(t)|$.
The goal of a CSA is to minimize the skew on all paths.

To measure the quality of a CSA we consider two kinds of
requirements: a \emph{global skew constraint} which gives a bound on the
difference between any two logical clock values in the system, and a
\emph{gradient skew constraint}, which becomes stronger the closer two
nodes $u,v$ are in the subgraph induced by the edges that have been present for
a sufficiently long time to stabilize. In particular, for nodes that remain
neighbors for a long time, the gradient skew constraint imposes a much
smaller permissible clock skew than the global skew constraint.

\begin{definition}[Global Skew]
	For any time $t$, a CSA guarantees a \emph{global skew} of ${\cal G}(t)$, if for any two nodes
	$u,v\in V$ it holds that $\lc{u}(t) - \lc{v}(t) \leq {\cal G}(t)$.
\end{definition}

\begin{definition}[Stable Gradient Skew] Given a non-decreasing
	function $\localskew : \posreals \rightarrow \posreals$, we say that a CSA
	$\mathcal{A}$ guarantees a \emph{stable gradient skew} of $\localskew$ with
	\emph{stabilization time} $\stabtime$ if for each time $t$ and each
	path $p = (u_0, \ldots, u_k)$ that exists throughout $[t-\stabtime, t]$, we
	have that
	\begin{equation*}
		\lc{u_0}(t) - \lc{u_k}(t) \leq \localskew(\err{p}).
	\end{equation*}
\end{definition}
More generally, one can express the skew bound as a function of the length of the
time interval during which the path $p$ existed (cf.~\cite{kuhn10}).
The literature on gradient clock
synchronization (e.g.,~\cite{gradientclocksynch,kuhn09,lenzen10,locher06}) is
typically concerned with the \emph{local skew} of a CSA, which bounds the skew
on any single edge. The local skew can be considered equivalent to the
stable gradient skew $\localskew(1)$, provided that all edges are of uniform
weight~$1$.

The stable gradient skew and the stabilization time are functions of $D$, a
bound on the dynamic estimate diameter of the network that held for sufficient
time (and are thus inherently dependent on $t$ as well), and potentially other
parameters such as the bound on the clock drift $\rho$ or the minimum edge
weight. We usually omit these dependencies to simplify the notation.


\section{An Optimal Dynamic Gradient CSA}
\label{sec:algorithm}

In this section we describe a CSA $\Aopt$ which achieves the optimal stable
gradient skew, and reaches this stable skew in the optimal stabilization time,
in light of the trade-off presented in Section~\ref{sec:lower}. We begin in
Section~\ref{sec:algstrategy} by introducing the overall strategy used to
achieve a stable skew of $\Theta(d \log (D / d))$ in static graphs; this
strategy also underlies the design of the dynamic algorithm. In
Section~\ref{sec:algoverview}, we give an informal overview of the algorithm, and
the technical details follow in Section~\ref{sec:algformal}. We remark that both
the description of the algorithm and in particular its analysis given in
Section~\ref{sec:analysis} is complicated by a number of technical details that
need to be resolved, but may obfuscate the key ideas behind the reasoning. We
refer the reader to \cite{lenzen11} for a simplified presentation in a less involved (but
unrealistic) model focusing on the key aspects of the problem and its analysis.

\subsection{Achieving a Stable Skew of \texorpdfstring{\boldmath{$\Theta(d \log(D /
d))$}}{Theta(d log(D/d))}}
\label{sec:algstrategy}
The optimal static algorithm~\cite{kuhn09b,lenzen10} and the
algorithm we present here share the same high-level structure. Both achieve a
(static or stable) gradient skew of $\Theta(d \log_{\sigma}(D / d))$ on paths of
length (or weight) $d$, where the base $\sigma$ of the logarithm is a function
of the parameter $\mu$ and the drift $\drift$. In this section we introduce
several notions that underlie the design of both algorithms. For simplicity, we
ignore here the dynamic behavior of the graph, and present the \emph{static-graph}
version of the definitions (as used in~\cite{kuhn09b,lenzen10}),
assuming that all edge weights are 1. This version is simpler than the weighted
dynamic-graph version and is helpful in understanding the dynamic algorithm. In
Section~\ref{sec:analysis} we give the full dynamic versions of these notions
and use them to analyze the dynamic skew of the algorithm.

The static algorithm is based on a discretized version of the gradient skew requirement.
Let $C = \set{C_s}_{s \in \nat}$ be the non-increasing sequence defined by $C_s
\coloneq D / \sigma^s$. The algorithm guarantees the following condition (up to
constants we neglect here): for any path $p = (u_0, \ldots, u_k)$ and any
integer $s \in \nat$, if the path $p$ has length $d_p \geq C_s$, then at all
times $t$ we have
\begin{equation*}
	\lc{u_0}(t) - \lc{u_k}(t) \leq s \cdot d_p.
\end{equation*}
This discretized condition is equivalent to the standard $\Theta(d \log_{\sigma}
(D / d))$-gradient skew requirement: if $p$ is a path of length $d_p$, then for
$s = \lceil \log_{\sigma} (D / d_p) \rceil$ we have
\begin{equation*}
	C_s = \frac{D}{\sigma^{\lceil \log_{\sigma} (D / d_p) \rceil}} \leq d_p,
\end{equation*}
and therefore the discretized condition asserts that the skew on $p$ is no
greater than $s \cdot d_p = \lceil \log_{\sigma} (D / d_p) \rceil \cdot d_p \in
\Theta(d_p \log_{\sigma}(D / d_p))$.

From the algorithm's point of view, the discretized condition divides the paths
into \emph{levels}, where paths of level $s$ are of length $d \approx
D / \sigma^s$ and the skew on such paths is upper bounded by $s \cdot
d$. If we evenly distribute the permissible skew over the edges of the path,
we see that each of the $d$ edges should only contribute a skew of roughly $s$
to the total. And indeed, this is exactly what each node executing the algorithm
tries to accomplish: it tries to ensure that for all $s \in \nat$, none of its
edges exhibit a skew of more than $s$. Similarly, in the weighted version of the
static algorithm~\cite{kuhn09b}, each node tries to ensure that no
adjacent edge of weight $w_e$ carries a skew of more than $s \cdot w_e$, so that
when we sum over all the edges of a path $p$ of weight $w_p$ the total skew will
be no more than $s \cdot w_p$. The overall gradient skew is then $\BO(w_p \cdot
\log_{\sigma} (D / w_p))$, a direct generalization of the unweighted case.

The description above is informal but we will see that tests of
the form ``is there some neighbor whose clock is more than $s \cdot w_e$ ahead
or behind?'' make up the basis of the algorithm. Essentially, through such tests
nodes check if their adjacent edges contribute more than their fair share of the
skew on some path.

If a node finds that the skew over some of its edges is too large, it can adjust the
speed of its logical clock to compensate. The algorithm uses only two rates, a
\emph{slow rate} and a \emph{fast rate}. When a node uses the slow rate we
say that it is in \emph{slow mode}, and when it uses the fast rate we say that
it is in \emph{fast mode}. At the heart of the algorithm are the rules for
deciding which mode to use; we proceed to describe these rules, which are based
on the static rules from~\cite{kuhn09b,lenzen10} but also take into
account the dynamic behavior of the graph.

\subsection{Overview of the Algorithm}
\label{sec:algoverview}

When an edge first appears, the algorithm is first concerned with reducing the
skew on \emph{long} paths that contain the edge.
Once this is accomplished, it allows the skew on shorter paths to also be
reduced, and then on even shorter paths, until eventually the skew on individual
edges is reduced to its stable value. In some sense, the algorithm takes the
global skew $\G$, which cannot be avoided, and \emph{redistributes} it
throughout the network until the gradient property is satisfied. Notice that
longer paths have a larger (that is, weaker) gradient skew bound, so they are in
some sense easier to deal with. In particular, for the longest paths in the
network, the gradient skew bound is the same as the global skew bound. Since the
global skew bound holds for any two nodes in the network, it can never be
violated by adding new edges, so these longest paths immediately satisfy their
gradient skew requirement as soon as they appear.

\paragraph{Neighbor sets.}
Throughout the algorithm, each node partitions its neighbors according to the
amount of time it has had an edge to each neighbor. More precisely, each node
$u$ maintains an ordered list $\N_u^0, \N_u^1, \ldots$ of neighbor sets, where
$\N_u^0 \supseteq \N_u^1 \supseteq \ldots$. To simplify the presentation we
initially assume an infinite list of sets; we will later see that nodes only
need to store a finite prefix of the list, but we defer this discussion to a
later point. Moreover, as the neighbor sets change at discrete
times, we need a convention what $\N_u^s(t)$ means if the set is modified at
time $t$. We define that if node $v$ is added to $\N_u^s$ at time $t^-$ and
removed at time $t^+$ (without intermediately leaving the set) then $v\in
N_u^s(t)$ for all $t\in [t^-,t^+]$. We assume that the neighbor sets
change only finitely often in finite time, implying that for all $u$, $v$, and
$s$, the set $\{t\,|\,v\in N_u^s(t)\}$ is closed.\footnote{This convention
simplifies the notation in our proofs. However, since clocks are continuous
functions, this convention does not bear any implication for the behavior of
the algorithm.}

Informally, if $v \in \N_u^s$ at time $t$, then at time $t$ node $u$ is
concerned with maintaining a good skew on paths of level $s$ containing edge
$\set{u,v}$. In contrast, if $v \not \in \N_u^s$, then node $u$ is ``not
worried'' about level $s$ paths containing $\set{u,v}$. Accordingly, when an
edge $\set{u,v}$ is discovered, node $u$ first adds $v$ to $\N_u^0$, then after
some time it adds $v$ to $\N_u^1$, and so on. Recall from
Section~\ref{sec:algstrategy} that the index $s$ of a level decreases as the
length of the path in the level increases, so adding edges in the order $\N_u^0,
\N_u^1, \ldots$ corresponds to dealing first with longer paths and then with
shorter ones.

Specifically, when node $u$ first discovers edge $\set{u,v}$, it
\emph{immediately} adds $v$ to $\N_u^0$; hence $\N_u=\N_u^0$, because
this is the set of all neighbors that node $u$ has discovered. Each of
the remaining sets is updated within time $\Theta(\G/\mu)$. The sets
are updated in a loosely synchronized manner. Both nodes $u$ and $v$
coordinate adding the edge $\set{u,v}$ to their respective
sets. In a time interval during which nodes add edges to their level $s$
neighbor set $N_u^s$, we can only show non-trivial gradient skew
guarantees for levels different from $s$. In order to always have
non-trivial guarantees for the skew on paths of all lengths, we need
to loosely synchronize the insertions of different edges such that
insertions of different edges on different levels are sufficiently
separated from each other. The details appear in Section
\ref{sec:algformal}.

Whenever a node $u$ discovers that one of its edges $\set{u,v}$ has
disappeared, it immediately removes node $v$ from all neighbor sets
$\N_u^0, \N_u^1, \ldots$ Finally, for simplicity, we assume that at
time $0$, $N_u(0)$ contains all edges that are present at time $0$ and
all neighbor sets $N_u^s$ are initialized to $N_u(0)$, i.e., for all
$s\geq 0$, $N_u^s(0)=N_u(0)$ as there is no violation at time $0$.

\paragraph{The fast and slow conditions.}
Each edge $e$ is associated with a weight $\kappa_e$, which roughly corresponds
to the uncertainty $\epsilon_e$ of the edge. The algorithm is designed to 
guarantee the following conditions governing when a node is in fast or in slow
mode. These conditions are not the actual rules used by nodes to determine when
to enter fast or slow mode, but we will see in Section~\ref{sec:algformal} that
the rules are quite similar; the conditions we give here refer to the clock
values of neighbors, which a node cannot estimate exactly, and the actual
triggers for entering fast or slow mode have to compensate for this inaccuracy.
We will see in Section~\ref{sec:analysis} that the fast and slow mode triggers
(given in Section~\ref{sec:algformal}) implement the fast and slow mode
conditions (given below).

The first condition, \FC, specifies when a node $u$ must be in fast mode. It
states that some neighbor in $\N_u^s$ is ``too far ahead of $u$'', and no other
neighbor in $\N_u^s$ is ``too far behind'', where ``too far'' here roughly
corresponds to $s$ times the weight of the edge (as outlined in
Section~\ref{sec:algstrategy}).
\begin{definition}[\FC: The Fast Mode Condition] For all $s \in \nat, u \in V$,
and times $t$, if
	\begin{itemize}
		\item For some $w \in \N_u^s(t)$ we have $L_w(t) - L_u(t) \geq s \cdot \kappa_{\{u,w\}}$, and
		\item For all $v \in \N_u^s(t)$ we have $L_u(t) - L_v(t) \leq s \cdot \kappa_{\{u,v\}}+2\mu\tup_{\{u,v\}}$,
	\end{itemize}
then node $u$ is in fast mode at time $t$.
\label{def:fc}
\end{definition}
The term $2 \mu \tup_{\set{u,v}}$ in the second requirement compensates for the
drift that can accumulate on an edge while only one of its endpoints is aware of
its existence (recall that the length of this period is bounded by
$\tau_{\set{u,v}}$).

The condition \SC\ for being in slow mode is roughly symmetric to the fast mode
condition: it states that some node in $\N_u^s$ is ``too far behind $u$'', and
no other node in $\N_u^s$ is ``too far ahead''. The condition uses a value
$\delta>0$ that corresponds to the smallest uncertainty in the network. An
exact value will be defined in Lemma~\ref{lemma:algo_correct}; the algorithm is
oblivious of $\delta$ and is correct if there exists \emph{any} such $\delta>0$
under which the slow mode condition is satisfied, and
Lemma~\ref{lemma:algo_correct} shows that such a value exists.
\begin{definition}[\SC: The Slow Mode Condition] For all $s \in \nat, u \in V$,
and times $t$, if
	\begin{itemize}
		\item For some $w \in \N_u^s(t)$ we have $L_u(t)-L_w(t)\geq \left(s+\frac{1}{2}\right) \cdot \kappa_{\{u,w\}}-\delta$, and
		\item For all $v \in \N_u^s(t)$ we have $L_v(t)-L_u(t)\leq \left(s+\frac{1}{2}\right) \cdot \kappa_{\{u,v\}}+\delta + \mu(1 + \drift)\tau_{\set{u,v}}$,
	\end{itemize}
	then node $u$ is in slow mode at time $t$.
	\label{def:sc}
\end{definition}
The slow mode condition uses a slightly different value for ``too far'' from the
fast mode condition. There are two immediate reasons for this: first, the
conditions for being in fast mode and in slow mode must be mutually exclusive,
otherwise a node might be required to be in both modes at the same time; hence
the term $(s + 1/2)$ instead of $s$. And second, the slack $\delta$ is necessary
to smooth out the discontinuities that occur when a neighbor is removed from
$\N_u^s$, by providing a small region around $(s + 1/2)\cdot \kappa_e$ in which
a neighbor of $u$ that is behind $u$ can still keep node $u$ in slow mode.
Lemma~\ref{lemma:neighbors_clocks} below captures this intuition formally and
shows how the slack $\delta$ is used. Moreover, for technical reasons in \SC
a smaller term of $(1+\drift)\mu \tau_{\set{u,v}}$ is sufficient to address the
issue that skew may accumulate while only one endpoint of an edge is aware of
the edge.

Note that the fast and slow mode conditions are disjoint, and their union does
not cover the entire state space.
In such cases nodes choose their mode according to their estimate
of the maximum logical clock in the network, as described below.

\paragraph{Max estimates.}
As in~\cite{kuhn09,lenzen10,locher09diss,locher06}, each node maintains a
local estimate $\mc{u}$ of the maximum logical clock value in the
network. Max estimates are computed by flooding: each node always adds
its current estimate $\mc{u}$ to each message it sends, and updates
the estimate conservatively so that it cannot exceed the actual
maximum logical clock value in the system. When a node receives a
larger max estimate from some neighbor, it updates its own max
estimate to match. The max estimates are computed such that the
following constraints can be guaranteed.

\begin{condition}\label{cond:max_estimate} If the dynamic graph has a dynamic
  estimate diameter of $D(t)$, then for all $t \geq 0$ and for all
  nodes $u$ we have
\begin{align}
	&\mc{u}(t) \leq \max_{v \in V} \set{\lc{v}(t)},
	\label{eq:mc_upper}\\
	&\mc{u}(t) \geq \max_{v \in V} \set{\lc{v}(t)} - D(t),
	\label{eq:mc_lower}\\
	&\mc{u}(t) \geq \lc{u}(t),
	\label{eq:mc_self}
      \end{align}
\end{condition}
Specifically, node $u$ updates its max estimate
$\mc{u}$ as follows. Whenever $\mc{u}=\lc{u}$, node $u$ increases
$\mc{u}$ at the rate of its logical clock.  If $\mc{u}>\lc{u}$, node
$u$ has to make sure that it only increases $\mc{u}$ at a rate such
that $\mc{u}$ remains upper bounded by the largest logical clock
$\lc{v}$ in the network. As the largest logical clock progresses at
rate at least $1-\drift$ and node $u$'s hardware clock progresses at
rate at most $1+\drift$, this can be achieved if $u$ increases
$\mc{u}$ at rate $\frac{1-\drift}{1+\drift}$ times the rate of its
hardware clock. These rules suffice to guarantee \eqref{eq:mc_upper}
and \eqref{eq:mc_self}. In order to also guarantee
\eqref{eq:mc_lower}, nodes piggy-back their current max estimate to
each message sent. Whenever a node $u$ receives a message from a node
$v$, $u$ increases its max estimate to the largest possible value such
that $\mc{u}$ is guaranteed to remain upper bounded by $\mc{v}$ (or by
the existing max estimate $\mc{u}$ if that is larger). Condition
\eqref{eq:mc_lower} now follows directly from the  definition of
$D(t)$.

Note that as a result of Condition~\ref{cond:max_estimate}, the max
estimate of any node is always accurate up to the diameter $D(t)$.  In
addition,~\eqref{eq:mc_self} asserts that nodes cannot set their
logical clock ahead of their max estimate. The max estimate
$\mc{u}(t)$ is used to determine the mode of node $u$ when neither
$\FC$ nor $\SC$ are satisfied.
\begin{definition}[\MC: The Max Estimate Condition]
  \label{def:MC}
  For all $u\in V$ and times $t$:
  \begin{itemize}
  \item If $\lc{u}(t) = \mc{u}(t)$ and for all $v\in \Nb{u}(t)$, we
    have $\lc{u}(t)\geq\lc{v}(t)$, then node $u$ is in slow mode at time $t$.
  \item If $\lc{u}(t) \leq \mc{u}(t) -\iota$ and for all
    $v\in\Nb{u}(t)$, we have $\lc{u}(t)\leq\lc{v}(t)$, then node $u$
    is in fast mode at time $t$,
  \end{itemize}
  where $\iota >0$ is some small constant used to separate the two conditions.%
  \footnote{The analysis of the algorithm goes through even if we
    change the condition for entering fast mode to $\lc{u}(t) <
    \mc{u}(t)$. However, such a requirement cannot be realized,
    because there is no ``first point in time'' when $\lc{u}(t) <
    \mc{u}(t)$. To ensure that the algorithm is realizable we make
    sure that when we require a node to be in a certain mode, the
    conditions of the requirement form a closed region, and are strictly separated
    from any other requirement.}%
\end{definition}

\paragraph{Global skew estimates.}

At all times $t\geq 0$, the algorithm requires each node $u$ to have
an estimate $\Ge_u(t)$ of the global skew $\G(t)$. We require that
\begin{equation}\label{eq:global_est_dynamic}
  \text{For all nodes $u\in V$ and for all times $t\geq 0$, }
  \Ge_u(t) \geq \G(t).
\end{equation}
It turns out that a lack of guarantees on the accuracy of these estimates and/or
their speed of change over time and across the network significantly complicates
edge insertion. For the sake of a more accessible presentation, we thus assume a
static (i.e., neither time- nor node-dependent) global skew estimate
\begin{equation}\label{eq:global_est_static}
  \text{For all times $t\geq 0$, }
  \Ge \geq \G(t).
\end{equation}
for now. Note that $\Ge$ must be chosen conservatively, as it must bound the
global skew for all times, and thus relying on it may result in unnecessarily
slow edge insertions. We will discuss how to adapt edge insertion to the much
weaker condition \eqref{eq:global_est_dynamic} in Section~\ref{sec:dynamic_est},
alongside a proof of the resulting (time-dependent) gradient property.

\subsection{Detailed Description of the Algorithm}
\label{sec:algformal}

We describe the parameters and constants used to define the algorithm, the local
variables maintained at each node, and finally the continuous and discrete
transitions that modify these variables.

\subsubsection{Parameters and Constants}
\label{sec:parameters}
\paragraph{\underline{\boldmath{$\drift$}:}} As specified in Section
\ref{sec:model}, the constant $\drift\in(0,1)$ specifies an upper
bound on the drift of the hardware clocks.
\paragraph{\underline{\boldmath{$\mu$}:}} This parameter governs the fastest
possible logical clock rate. In slow mode, the logical clock is increased at the
same rate as the hardware clock, and in fast mode the rate of the hardware clock
is multiplied by $1 + \mu$. The value of $\mu$ is bounded from below as a
function of the drift $\drift$, because we must ensure that a node in fast mode
is \emph{always} faster than a node in slow mode, even when the hardware clock
progresses slowly for the node in fast mode and
quickly for the node in fast mode. To ensure that for any $u,v \in V$ we always
have $(1 + \mu) h_u(t) > h_v(t)$ it is sufficient to require $(1 + \mu)(1 -
\drift) > 1 + \drift$, which is equivalent to $\sigma>1$ (see
below). For technical reasons, we require that
\begin{equation}
  \label{eq:mu}
  \mu \leq \frac{1}{10}.
\end{equation}

\paragraph{\underline{\boldmath{$\sigma$}:}} The base of the logarithm in the
desired gradient skew function, which is $\Theta(d \log_{\sigma} (D / d))$.
To obtain the best asymptotic gradient skew bound, we set
\begin{equation}
	\sigma\coloneq \frac{(1-\drift)\mu}{2\drift}>1,\label{eq:sigma}
\end{equation}
and control the base of the logarithm by setting the value of $\mu$
appropriately. Clearly, we must require that $\sigma > 1$, which imposes the
constraint that $\mu>2\drift/(1-\drift)$.

\paragraph{\underline{\boldmath{$\kappa_{\set{u,v}}$}:}} Each edge $\set{u,v}
\in \binom{V}{2}$ is associated with a \emph{weight} $\kappa_{\set{u,v}}$,
corresponding roughly to the uncertainty $\eps_{\set{u,v}}$. The weights must
satisfy
\begin{equation}
\kappa_{\set{u,v}} > 4(\err{\set{u,v}} + \mu\tup_{\set{u,v}}).
\label{eq:def_kappa}
\end{equation}
The term $4\mu \tup_{\set{u,v}}$ compensates for the time during
which the edge $\set{u,v}$ only exists for one of its nodes $u$ and $v$. Otherwise,
the asymmetric behavior could result in one of the nodes erroneously being in
fast mode (or, similarly, slow mode). Therefore, the time uncertainty of
$\tup_{\set{u,v}}$ with respect to the symmetric existence of edges is reflected
in $\kappa$ with a prefactor of $\mu$; we remark that $\mu$ also occurs as a
factor in a term contributing to $\err{\set{u,v}}$ (in any system), as there
must be a non-zero delay for propagating information about $L_u$ to $v$.

\subsubsection{Local Variables}
Each node $u$ maintains the following local variables throughout the execution
of the algorithm.

\paragraph{\underline{\boldmath{$\lc{u}$:}}} the logical clock of node $u$.

\paragraph{\underline{\boldmath{$\rate_u$}:}} the current rate-multiplier for
node $u$'s logical clock. It can take only two values, $1$ or $1 + \mu$; when
$u$ is in slow mode we have $\rate_u = 1$, and when $u$ is in fast mode we have
$\rate_u = 1 + \mu$.

\paragraph{\underline{\boldmath{$M_u$}}:} node $u$'s current estimate for the
maximum logical clock in the network.

\paragraph{\underline{\boldmath{$\N_u = \N_u^0$:}}} the set of all neighbors
node $u$ is aware of.

\paragraph{\underline{\boldmath{$\N_u^1, \N_u^2, \ldots$}:}} the neighbor sets
of node $u$ for each of the levels it maintains.

\paragraph{\underline{\boldmath{$\Ge_u$}}:} the nodes' current global skew
estimate. We assume that at all times $t$, $\Ge_u(t)$ is an upper
bound on the actual global skew at time $t$. Prior to
Section~\ref{sec:dynamic_est}, we assume that simply $\Ge_u(t)=\Ge$ for all
nodes and times.

\paragraph{\underline{\boldmath{${T_0}^{\set{u,v}}<{T_1}^{\set{u,v}}<{T_2}^{\set{u,v}}<
      \ldots$}:}} the logical times for adding edge $\set{u,v}$. For
each edge $\set{u,v}$ and each $s = 1,2,\dots$, nodes $u$ and $v$
decide on logical times $T_1^{\set{u,v}},T_2^{\set{u,v}},\dots$ when
they add the respective neighbor to their respective level-$s$
neighbor set. That is, node $u$ adds $v$ to $\N_u^s$ when its logical
clock reaches $\lc{u}(t) = T_s^{\set{u,v}}$. For convenience, each
node also maintains a logical time $T_0^{\set{u,v}}$ that is used to
define the times $T_s^{\set{u,v}}$ for $s\geq 1$. Note that the nodes
$u$ and $v$ use the same values for $T_0^{\set{u,v}}, T_1^{\set{u,v}},
\ldots$, but because their logical clocks are not perfectly
synchronized, they may update their neighbor sets at different times.
For each edge $\set{u,v}$, the times $T_1^{\set{u,v}},
T_2^{\set{u,v}}, \ldots$ define a converging sequence, so that edges
can be added on all infinite levels in finite time. In fact, in the
analysis we assume that all nodes update all sets, that is, they use
infinite levels. Note however, that only a $\BO(\log\Ge)$ levels are
needed in the algorithm. Further, the times $T_s^{\set{u,v}}$ only depend on
$T_0^{\set{u,v}}$ and on $\Ge_{\set{u,v}}$. All neighbor sets $N_u^s(t)$ are
therefore implicitly given by $\lc{u}(t)$ and some bounded additional
information for each edge $\set{u,v}$, so that the sets $N_u^s(t)$ could be
maintained by only managing a constant number of values per edge.

\subsubsection{Rules for Updating the Local Variables}

\begin{algorithm}[t!]
\caption{Responses to other events at node $u$}
\label{algo:events}
$\Delta:=\frac{(1+\rho)(1+\mu)(\delay_{\{u,v\}}+\tup_{\{u,v\}})}{1-\drift}+\tup_{\{u,v\}}$\;
\When(\textbf{\boldmath formation of an edge $\set{u,v}$ to node $v$ is discovered:}){
  $N_u^0 := N_u^0 \cup \set{v}$\;
  \If{$u$ is the leader of the edge $\set{u,v}$}{
    wait for at least $\Delta$ time\;
    \If{$v\in N_u^0(t')$ for all $t'$ with $L_u(t')\in
      [L_u(t)-(1+\rho)(1+\mu)\Delta,
      L_u(t)]$}{ $\Ge_{\set{u,v}} := \Ge_u$ \hspace{4.3cm} // we
      assume $\Ge_u = \Ge$, except in Section~\ref{sec:dynamic_est}\;
      $L_{\mathit{ins}} := \lc{u} + \Ge_{\set{u,v}} +
      (1+\rho)(1+\mu)\delay_{\set{u,v}}$\;
      \textbf{send} $\mathrm{insertedge}\left(\set{u,v}, L_{\mathit{ins}}, \Ge_{\set{u,v}} \right)$ to $v$\;
      \textbf{call} $\mathrm{computeInsertionTimes}\left(\set{u,v},
      L_{\mathit{ins}},\Ge_{\set{u,v}} \right)$\;
    }
  }
}
\When(\textbf{\boldmath receiving message $\mathrm{insertedge}\left(\set{u,v}, L_{\mathit{ins}}, \Ge \right)$
from node $v$:}){
  wait for at least $\delay_{\{u,v\}}+\tup_{\{u,v\}}$, but at most
  $\Delta-\tup_{\{u,v\}}$ time\;
  \If{$v\in N_u^0(t')$ for all $t'$ with
  $L_u(t')\in [L_u(t)-(1+\rho)(1+\mu)(\delay_{\{u,v\}}+\tup_{\{u,v\}}), L_u(t)]$}{
    \textbf{call} $\mathrm{computeInsertionTimes}\left(
    \set{u,v},L_{\mathit{ins}}, \Ge \right)$\;
  }
}
\When(\textbf{\boldmath failure of an edge to node $v$ is discovered:}){
  \ForEach{$s \in \set{0,1,\ldots}$}{
    $\N_u^s \coloneq \N_u^s \setminus \set{v}$\; \label{line:remove}
    $T_s^{\{u,v\}}:=\bot$\;
  }
}
\When(\textbf{\boldmath $\lc{u} = T_s^{\set{u,v}}$ (for some $s$ and $v$)}){
  $N_u^s := N_u^s \cup \set{v}$\;
}
\end{algorithm}

The algorithm makes three kinds of discrete transitions: the first
kind occurs when a node discovers the formation or failure of a
communication link. The second kind occurs when a node $u$'s logical clock
reaches an update time $T_s^{\set{u,v}}$ for some $s \in \nat$ and an
incident new edge. The responses to these events are given in
Listing~\ref{algo:events}.

The third and final kind of transition is triggered when the slow mode trigger,
the fast mode trigger, or the max estimate triggers, which correspond to \SC,
\FC, and \MC and will be stated shortly, require the node to change its mode;
the logic governing a node's mode is shown in Listing~\ref{algo:rate}.  When no
trigger holds and $\MC$ does not hold, the node is free to choose its mode
nondeterministically; for example, it can stay in its current mode until it is
required to switch modes.\footnote{For simplicity it is
  assumed that the code in Listing~\ref{algo:rate} is evaluated
  \emph{continuously}, so that, for example, as soon as the fast mode trigger
  holds for some node, that node is \emph{already} in fast mode. An
  implementation of the algorithm can achieve this by adding a small ``guard
  region'' to the conditions, and changing mode \emph{before} the triggers hold.
  All the triggers are strictly separated from each other, so such regions can
  be added to each trigger.}

Between discrete transitions the value of each node $u$'s logical clock
increases at a rate of $l_u = \rate_u \cdot h_u(t)$.
In the remainder of the section we describe the algorithm's discrete
transitions.

\paragraph{Coordinating with new neighbors and calculating the
  insertion times.}

When a new edge is formed, the two nodes start a simple protocol during which
they agree on the logical times for adding each other to the respective neighbor
sets. For simplicity, we assume that for each potential edge $\set{u,v}$, one of
the two nodes $u$ and $v$ is the leader of the edge. This can for example be
determined by assuming that nodes $u$ and $v$ have unique
identifiers.\footnote{If we drop the assumption that we can predefine a leader
for each potential edge, it would be possible to use a more complicated
handshake protocol to coordinate between the two nodes of an edge.} Assume that
node $u$ is the leader of an edge $\set{u,v}$. As soon as node $u$ discovers the
edge $\set{u,v}$, it starts the protocol for adding $\set{u,v}$. In order to
make sure that also node $v$ has discovered the edge, node $u$ first waits for
at least $\tup_{\set{u,v}}$ time units (w.r.t.\ real time). If the edge exists
throughout that waiting period, node $u$ decides on a global skew estimate
$\Ge_{\set{u,v}}$ for the edge insertion (which is just node $u$'s current
global skew estimate $\Ge_u$) and a logical time to start adding the edge.
Node $u$ sends this information to node $v$. If node $v$ sees the edge when
receiving the information, it computes the edge insertion times based on the
received information. The protocol guarantees that a) either both nodes insert
the edge or they do not start inserting or cancel the insertion within
$\tup_{\set{u,v}}$ time units of each other, and b) if both nodes insert the
edge, they use the same insertion times and global skew estimate for the
insertions. For further details and a formal argument, we refer to Lemma
\ref{lemma:insertion}. Pseudo-code of the coordination protocol is given in
Listing \ref{algo:events}. The computation of the insertion times based on a
logical time for start inserting and a given global skew estimate is given in
Listing \ref{algo:insertiontimes}. When inserting an edge $\set{u,v}$, $u$ and $v$ compute a time interval of length  $\instime_{\set{u,v}}$ during which the edge $\set{u,v}$ is inserted on all levels. The duration $\instime_{\set{u,v}}$ depends on the global skeq estimate $\Ge_{\set{u,v}}$ of the edge and it is computed differently depending on whether we work with a fixed, static global skew estimate $\Ge$ or whether the global skew estimate is allowed to be dynamically adapted. Outside Section~\ref{sec:dynamic_est}, we assume the global skew estimate to be a fixed value $\Ge$. The insertion duration $\instime_{\set{u,v}}$ of an edge $\set{u,v}$ is then computed as
\begin{equation}\label{eq:instime_static}
	\instime_{\set{u,v}} := \instime(\Ge) := \left(\frac{20(1+\mu)}{(1-\rho)}+56\mu+\frac{8+56\mu}{\sigma}\right)\cdot\frac{\Ge}{\mu}.
\end{equation}
 In Section~\ref{sec:dynamic_est}, we show how our clock synchronization algorithm can adapt to a changing global skew. In this case, the time for inserting an edge has to be chosen larger mainly because we need to make sure that the time is chosen such that it is based on a global skew estimate that holds during the complete insertion process. The insertion time is further increased because the insertions of different edges might use different global skew estimates and thus, the times of inserting the edges on different levels are harder to coordinate (and separate) properly. For details, we refer to Section~\ref{sec:dynamic_est}. In the case of a dynamic global skew, the insertion duration $\instime_{\set{u,v}}$ of an edge $\set{u,v}$ is computed as
\begin{align}\label{eq:instime_dynamic}
	\instime_{\set{u,v}} & := \instime(\Ge_{\set{u,v}}) := 2^{\lceil\log_2 \ell_{\set{u,v}}\rceil},\\
	& \text{where } \nonumber
	\ell_{\set{u,v}} :=  (1+\drift)(1+\mu)(\delta_{\set{u,v}}+2\tup_{\set{u,v}})+8\insconst\cdot\frac{\Ge_{\set{u,v}}}{\mu}.
\end{align}
The parameter $\insconst$ is a constant that is introduced
for convenience and which has to satisfy the following conditions:
\begin{equation}\label{eq:insconst}
  \frac{\mu}{2\drift}\geq \insconst \geq \frac{320\cdot
      2^7}{(1-\drift)^2}.
\end{equation}
We note that together with \eqref{eq:mu}, the above inequality
directly implies that for the dynamic global skew analysis in Section~\ref{sec:dynamic_est}, we can assume that
\begin{equation}\label{eq:rho_dynamic}
  \frac{\drift}{(1-\drift)^2} \leq \frac{1}{6400\cdot 2^7}.
\end{equation}

\begin{algorithm}[t]
\caption{Calculating insertion times}
\label{algo:insertiontimes}
\Proc(\textbf{\boldmath$\mathrm{computeInsertionTimes}\left(\set{u,v},
    L, \Ge\right):$}){
Compute $\instime_{\set{u,v}}:=\instime(\Ge_{\set{u,v}})$ according to \eqref{eq:instime_static} or \eqref{eq:instime_dynamic}\;
$T_0^{\set{u,v}} := \min\set{T\geq L\,:\, \frac{T}{\instime_{\set{u,v}}(\Ge)} \in \mathbb{Z}}$\;
  \For{$s \in \set{1,2,\dots}$}{
    $T_s^{\set{u,v}} := T_0^{\set{u,v}}
      + \left(1 - \frac{1}{2^{s-1}}\right)\instime_{\set{u,v}}(\Ge)$\;
  }
}
\end{algorithm}

\noindent In the technical analysis, we sometimes use $\instime$ for
$\instime(\Ge)$, if $\Ge$ is clear from the context. Note that the sequence $T_1^{\set{u,v}}, T_2^{\set{u,v}}, \ldots$ converges to
\begin{equation*}
	T_{\infty}^{\set{u,v}} := T_0^{\set{u,v}} + \instime_{\set{u,v}}.
\end{equation*}
Also note that although the sequence is infinite, it (and also the
sets $N_u^s$) can be implicitly stored using only bounded
information. Further, if an edge $\set{u,v}$ with leader $u$ appears at time
$t$, the total time to insert $\set{u,v}$ on all levels is in the order of
\[
\Theta\left(\delay_{\set{u,v}} + \tup_{\set{u,v}} + \instime_{\set{u,v}}\right)
\subseteq \BO\left(\delay_{\set{u,v}} + \tup_{\set{u,v}} +
\frac{\Ge_u(t)}{\mu}\right).
\]
For convenience, for a given execution and a level $s\geq 1$, we
define $\Tset_s$ to be the set of all level $s$ insertion times
$T_s^{\set{u,v}}$ used for any possible edge $\set{u,v}$ at any time.
Further, we define $\Tset:=\bigcup_{s\geq 1} \Tset_s$ to be the set of
all edge insertion times of a given execution.

\paragraph{The fast and slow mode triggers.}
The rules for deciding when to enter the fast mode or the slow mode correspond to the conditions
from Section~\ref{sec:algoverview}, but they compensate for the uncertainty of
the clock estimates to ensure that the conditions are satisfied. The triggers
for switching modes are as follows.

\begin{definition}[Fast Mode Trigger]
	Node $u$ satisfies the \emph{fast mode trigger} at time $t$ if there exists an integer $s \in \nat$ such that
	\begin{itemize}
		\item For some $w \in \N_u^s(t)$ we have $\lce{u}{w}(t) - \lc{u}(t) \geq s \cdot \kappa_{\set{u,w}} - \err{\set{u,w}}$, and
		\item For all $v \in \N_u^s(t)$ we have $\lc{u}(t) - \lce{u}{v}(t) \leq s \cdot \kappa_{\set{u,v}} + 2\mu \tup_{\set{u,v}} + \err{\set{u,v}}$.
	\end{itemize}
\end{definition}

The slow mode trigger incorporates some slack, which we also encountered in
Definition~\ref{def:sc}; we now define it as a parameter $\delta_e$ for each
edge $e$, and require
\begin{equation*}
	\delta_e \in \left(0, \frac{\kappa_e}{2} - 2\err{e} - 2\mu \tup_e\right).
\end{equation*}
This constraint ensures that the fast mode and the slow mode triggers are
mutually exclusive (see Lemma~\ref{lemma:disjoint}).
We note that $\frac{\kappa_e}{2} - 2\err{e} - 2\mu \tup_e >0$ due to~\eqref{eq:def_kappa}, which
constrains the choice of $\kappa_e$.
\begin{definition}[Slow Mode Trigger] Node $u$ satisfies the \emph{slow mode
trigger} at time $t$ if there exists an integer $s \in \nat$ such that
\begin{itemize}
	\item For some $w \in \N_u^s(t)$ we have
		$\lc{u}(t) - \lce{u}{w}(t)\geq \left(s+\frac{1}{2}\right)\kappa_{\set{u,w}}-\delta_{\set{u,w}} - \err{\set{u,w}}$, and
	\item For all $v \in \N_u^s(t)$ we have
		$\lce{u}{v}(t) - \lc{u}(t)\leq \left(s+\frac{1}{2}\right)\kappa_{\set{u,v}}+\delta_{\set{u,v}} + \err{\set{u,v}} + \mu(1 + \drift)\tau_{\set{u,v}}$.
\end{itemize}
\end{definition}

\begin{algorithm}[t!]
	\caption{Setting the rate of node $u$'s logical clock}
	\label{algo:rate}
	\If{the slow mode trigger is satisfied}
	{
		$\rate_u \coloneq 1$\;
	}
	\ElseIf{the fast mode trigger is satisfied}
	{
		$\rate_u \coloneq 1 + \mu$\;
	}
	\Else(\tcp*[h]{Neither slow nor fast mode trigger are satisfied; check for
	max estimate triggers}) {
		\If{$\lc{u} = \mc{u}$}
		{
			$\rate_u \coloneq 1$\;
		}
		\ElseIf{$\lc{u} \leq \mc{u} - \iota$}
		{
			$\rate_u \coloneq 1 + \mu$\;
		}
	}
\end{algorithm}

The fast and slow mode triggers are disjoint (as we will prove later), and since
both are closed regions, they are strictly separated from each other: there are
some states that satisfy neither condition. In these in-between regions, nodes
choose their mode based on the max-estimate trigger, which ensures that $\MC$
is satisfied.

\begin{definition}[Max Estimate Triggers]
Node $u$ satisfies the \emph{fast max estimate trigger} at time $t$ if the slow
mode trigger is not satisfied and $L_u(t)\leq M_u(t)-\iota$. It satisfies the
\emph{slow max estimate trigger} at time $t$ if the fast mode trigger is not
satisfied and $L_u(t)=M_u(t)$.
\end{definition}

The code implementing these triggers is shown in Listing~\ref{algo:rate}.


\section{Analysis}\label{sec:analysis}

In this section, we analyze the algorithm described in Section
\ref{sec:algorithm} and bound its worst-case global and dynamic
gradient skew.

\subsection{Basic Properties}

We begin with some basic properties which were stated informally in
Section~\ref{sec:algorithm}. Essentially, in this subsection we show that the
algorithm behaves ``as intended,'' which is the foundation for our subsequent
reasoning about skews. The first property states that the neighbor set $\N_u^s$
is a subset of $\N_u^{s-1}$ for all $s\ge 1$ at all times.
\begin{lemma}
  For all $u \in V$, at all times $t\geq 0$ we have $\N_u(t) = \N_u^0(t)$
  and $N_u^{s}(t)\subseteq N_u^{s-1}(t)$ for all $s\geq 1$.
  \label{lemma:Nchain}
\end{lemma}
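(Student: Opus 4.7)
The plan is to first unpack the definition of $\N_u^0$ from Listing~\ref{algo:events}, and then prove the chain $\N_u^s(t) \subseteq \N_u^{s-1}(t)$ by tracking the discrete events that can modify any neighbor set over time.

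For $\N_u = \N_u^0$, this is essentially by construction: the algorithm inserts $v$ into $\N_u^0$ upon discovery of the edge $\set{u,v}$ and removes it from every $\N_u^{s'}$ (in particular from $\N_u^0$) upon detection of its failure; combined with the initialization $\N_u^0(0)=\N_u(0)$ and the closedness convention on $\{t\mid v\in \N_u^0(t)\}$, this shows that $\N_u^0$ coincides with the current set of estimate neighbors at every $t$.

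For the inclusion $\N_u^s(t) \subseteq \N_u^{s-1}(t)$ with $s \ge 1$, I would proceed by induction on the timeline of the discrete events that change any neighbor set. The base case $t=0$ is immediate from $\N_u^s(0)=\N_u(0)$ for all $s$. An edge-failure event removes $v$ from all $\N_u^{s'}$ simultaneously, preserving every inclusion. An edge-discovery event only adds $v$ to $\N_u^0$, which can only enlarge $\N_u^{s-1}$ (when $s=1$) and leaves $\N_u^s$ for $s\ge 1$ untouched, so again all inclusions persist. The nontrivial case is the trigger $\lc{u}(t) = T_s^{\set{u,v}}$ with $s \ge 1$ and $T_s^{\set{u,v}}\neq \bot$, at which $v$ is added to $\N_u^s$; here one must argue that $v\in \N_u^{s-1}(t)$ already holds.

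This last case is the main obstacle and rests on two observations about the insertion protocol in Listings~\ref{algo:events} and~\ref{algo:insertiontimes}. First, for $s \ge 2$ the insertion-time formula yields $T_s^{\set{u,v}} - T_{s-1}^{\set{u,v}} = \instime_{\set{u,v}}/2^{s-1} > 0$, so $\lc{u}$ must have crossed $T_{s-1}^{\set{u,v}}$ strictly earlier, triggering $v$'s insertion into $\N_u^{s-1}$; because an edge failure nullifies all $T_{s'}^{\set{u,v}}$ simultaneously, the fact that the current $T_s^{\set{u,v}}$ is not $\bot$ implies that no failure has occurred in between, so $v$ is still in $\N_u^{s-1}$. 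Second, for $s=1$ we have $T_1^{\set{u,v}} = T_0^{\set{u,v}} \ge L_{\mathit{ins}} > \lc{u}(t_c)$ at the time $t_c$ at which \textbf{computeInsertionTimes} is called, and the in-protocol check ensures $v \in \N_u^0(t_c)$; since $v$ can only leave $\N_u^0$ through an edge failure, and any such failure would also have nullified $T_1^{\set{u,v}}$ before the trigger could fire, we conclude that $v \in \N_u^0 = \N_u^{s-1}$ at $t$. Tying the nullification-on-failure mechanism to the ordering of discrete events is the only subtle bookkeeping needed; no clock-skew or max-estimate arguments enter the proof.
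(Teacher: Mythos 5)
Your proof is correct and follows the same strategy as the paper's: induction over the discrete transitions of node $u$, checking that initialization, edge-failure removals, discovery insertions into $\N_u^0$, and level-$s$ insertions at the trigger $\lc{u}=T_s^{\set{u,v}}$ all preserve the chain of inclusions. The one thing you add is an explicit justification of the step the paper merely asserts (``the update times $T_s^e$ are reached in order $s=1,2,\dots$, and we only add nodes already in $\N_u^{s-1}$''), by pointing out that $T_s-T_{s-1}=\instime_{\set{u,v}}/2^{s-1}>0$ for $s\ge 2$, that $T_1=T_0\ge L_{\mathit{ins}}$, and that any intervening edge failure would have reset $T_s$ to $\bot$; this is the right bookkeeping and makes the argument fully rigorous.
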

\begin{proof}
  At time $0$, the neighbor sets are initialized to
  $\N_u^s(0)=\N_u(0)$ for all $s\geq 0$. Further, for every edge
  $e=\set{u,v}$ of node $u$, the update times $T_s^{e}$ are reached in
  order $s = 1, 2, \ldots$, and at each such time, we only add to
  $\N_u^s$ nodes that already belong to $\N_u^1, \ldots,
  \N_u^{s-1}$. Therefore node additions preserve the property.

  Nodes are only removed from neighbor sets in Line~\ref{line:remove}
  of Listing~\ref{algo:events}. As a node is removed from all
  the neighbor sets, also node removals preserve the property claimed
  by the lemma. Formally, the claim of the lemma therefore follows by
  induction on node $u$'s discrete transitions.
\end{proof}

In Section~\ref{sec:algoverview} we introduced the fast and slow mode
conditions (\FC\ and \SC), and claimed that the algorithm implements
these conditions; now we prove this claim. Our goal is to show that
when the inaccuracy of the estimates is taken into account, the
fast and the slow mode triggers hold whenever the fast and the slow
mode conditions apply, respectively. Likewise, the max estimate condition \MC
is satisfied by the algorithm.

\begin{lemma}\label{lemma:algo_correct}
Algorithm~\Aopt satisfies the fast and slow mode conditions,
as well as the max estimate condition.
\end{lemma}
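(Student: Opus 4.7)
The core idea is that each of \FC, \SC, \MC is stated in terms of true clock values $\lc{v}$ and the max-estimate $\mc{u}$, while Listing~\ref{algo:rate} acts on the estimates $\lce{u}{v}$. The proof is therefore essentially a translation from ``true'' to ``estimated'' inequalities, via the estimate-layer guarantee~\eqref{eq:lce} $|\lce{u}{v}(t)-\lc{v}(t)|\le \err{\set{u,v}}$ whenever $v\in\Nb{u}(t)$, together with a small case analysis matching the branches of Listing~\ref{algo:rate}.

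For \FC, assume its hypothesis holds at some level $s$, node $u$, and time $t$. Applying $\lce{u}{w}(t)\ge \lc{w}(t)-\err{\set{u,w}}$ turns $\lc{w}(t)-\lc{u}(t)\ge s\kappa_{\set{u,w}}$ into $\lce{u}{w}(t)-\lc{u}(t)\ge s\kappa_{\set{u,w}}-\err{\set{u,w}}$, and $\lce{u}{v}(t)\ge \lc{v}(t)-\err{\set{u,v}}$ turns $\lc{u}(t)-\lc{v}(t)\le s\kappa_{\set{u,v}}+2\mu\tup_{\set{u,v}}$ into $\lc{u}(t)-\lce{u}{v}(t)\le s\kappa_{\set{u,v}}+2\mu\tup_{\set{u,v}}+\err{\set{u,v}}$; these are precisely the two clauses of the fast-mode trigger, so Listing~\ref{algo:rate} sets $\rate_u$ to $1+\mu$. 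The verification of \SC is entirely analogous once I define $\delta:=\min_{e}\delta_e$; this minimum is strictly positive because the admissible range $(0,\kappa_e/2-2\err{e}-2\mu\tup_e)$ for each $\delta_e$ is nonempty by~\eqref{eq:def_kappa}, and replacing $\delta$ by the (possibly larger) edge-specific $\delta_e$ only weakens both inequalities in \SC before they are translated to the estimate layer.

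For \MC I do a case analysis against Listing~\ref{algo:rate}. Consider the slow case: $\lc{u}(t)=\mc{u}(t)$ and $\lc{u}(t)\ge \lc{v}(t)$ for all $v\in\Nb{u}(t)$. I plan to show that the fast-mode trigger cannot fire, so the algorithm falls through to the max-estimate branch where $\lc{u}(t)=\mc{u}(t)$ triggers slow mode. For any $s\ge 1$ and $w\in \N_u^s(t)$, if $w\in\Nb{u}(t)$ then $\lc{w}(t)\le\lc{u}(t)$ and $\lce{u}{w}(t)\le\lc{w}(t)+\err{\set{u,w}}\le\lc{u}(t)+\err{\set{u,w}}$; combining with the fast-mode trigger's first clause forces $s\kappa_{\set{u,w}}\le 2\err{\set{u,w}}$, which contradicts~\eqref{eq:def_kappa}. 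If $w\in\N_u^s(t)\setminus\Nb{u}(t)$---so the edge disappeared within the last $\tup_{\set{u,w}}$ time units without $u$ noticing---I absorb the drift accumulated during this asymmetric period into the $4\mu\tup_{\set{u,w}}$ inflation built into $\kappa_{\set{u,w}}$ via~\eqref{eq:def_kappa}, again yielding a contradiction for $s\ge 1$. The fast case of \MC is symmetric, using the $\iota$-gap to separate from the max-estimate boundary.

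The main obstacle I anticipate is the level $s=0$ interaction between the fast/slow-mode triggers and the max-estimate triggers, where the gap afforded by $\kappa_{\set{u,v}}$ and $\delta_{\set{u,v}}$ shrinks essentially to zero. Closing this boundary case cleanly requires invoking the mutual exclusivity of the slow- and fast-mode triggers (the content of the upcoming Lemma~\ref{lemma:disjoint}) together with the detection-delay term $2\mu\tup_{\set{u,v}}$ and the $\iota$-separation in \MC, to show that whenever the max-estimate hypothesis is in force the opposing trigger cannot be active, so Listing~\ref{algo:rate} is guided to the correct max-estimate branch.
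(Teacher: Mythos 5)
Your translation arguments for $\FC$ and $\SC$ and your $\MC$ case analysis are correct and match the paper's proof almost line for line (the paper likewise defines $\delta := \min_e \delta_e$ for $\SC$ and shows for $\MC$ that the ``opposing'' trigger cannot fire, then lets Listing~\ref{algo:rate} do the rest). However, the ``main obstacle'' you anticipate does not exist: in this paper $\nat$ denotes $\{1,2,\ldots\}$ (observe the explicit use of $\nat_0$ elsewhere, e.g.\ in the definition of $\mathbf{T}_u^{\set{u,v}}(t)$), so the fast- and slow-mode conditions and triggers quantify only over $s\geq 1$, and your $s\kappa_{\set{u,w}}\leq 2\err{\set{u,w}}$ contradiction already closes every case. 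The case $w\in\N_u^s(t)\setminus\Nb{u}(t)$ you worry about also cannot occur: by Lemma~\ref{lemma:Nchain}, $\N_u^s(t)\subseteq\N_u^0(t)=\N_u(t)=\Nb{u}(t)$, since $\N_u^0$ is updated instantaneously upon edge discovery or failure; the detection delay $\tup$ models asymmetry between the \emph{directed} views $(u,v)$ and $(v,u)$, not a lag in $u$'s knowledge of its own incident edges. Finally, Lemma~\ref{lemma:disjoint} is not needed (and would be a forward reference): Listing~\ref{algo:rate} gives the slow trigger priority, so for the slow case of $\MC$ it suffices that the fast trigger is inactive, and for the fast case it suffices that the slow trigger is inactive together with $\lc{u}\leq\mc{u}-\iota$ — exactly what you already establish.
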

\begin{proof}
Let us start with $\FC$. Suppose that the antecedent of $\FC$ holds
at node $u$: that is, there is some $s \in \nat$ such that for some
$w \in \N_u^s(t)$ we have
\begin{equation}
\lc{w}(t) - \lc{u}(t) \geq s \cdot \kappa_{\set{u,w}},
\label{eq:fc_leading}
\end{equation}
and for all $v \in \N_u^s(t)$ we have
\begin{equation}
\lc{u}(t) - \lc{v}(t) \leq s \cdot \kappa_{\set{u,v}} + 2\mu\tup_{\set{u,v}}.
\label{eq:fc_trailing}
\end{equation}
The estimate $\lce{u}{w}(t)$ that node $u$ has for node $w$
satisfies $\lc{w}(t) \leq \lce{u}{w}(t) + \err{\set{u,w}}$, and
combined with~\eqref{eq:fc_leading} we obtain
\begin{equation*}
\lce{u}{w}(t) - \lc{u}(t) \geq s \cdot \kappa_{\set{u,w}} - \err{\set{u,w}}.
\end{equation*}
Similarly, for all $v \in \N_u^s(t)$ we have $\lc{v}(t) \leq
\lce{u}{v}(t) + \err{\set{u,v}}$, so from~\eqref{eq:fc_trailing},
\begin{equation*}
\lc{u}(t) - \lce{u}{v}(t) \leq s \cdot \kappa_{\set{u,v}} + 2\mu\tup_{\set{u,v}} + \err{\set{u,v}}.
\end{equation*}
Hence the fast mode trigger is satisfied and node $u$ is in fast
mode.

Now consider $\SC$. Define $\delta \coloneq \min_{e\in
E}\{\delta_e\}>0$, and suppose that for this value of $\delta$ the
antecedent of $\SC$ holds at node $u$: there is some $s \in \nat$
such that for some $w \in \N_u^s(t)$ we have
\begin{equation}
\lc{u}(t) - \lc{w}(t) \geq \left( s + \frac{1}{2} \right) \kappa_{\set{u,w}} - \delta,
\label{eq:sc_trailing}
\end{equation}
and for all $v \in \N_u^s(t)$ we have
\begin{equation}
\lc{v}(t) - \lc{u}(t) \leq \left( s + \frac{1}{2} \right) \kappa_{\set{u,v}} + \delta + \mu(1 + \drift)\tau_{\set{u,v}}.
\label{eq:sc_leading}
\end{equation}
Now we use the other direction of the estimate accuracy guarantee:
for all $w \in \N_u^s(t)$ we have $\lc{w}(t) \geq \lce{u}{w}(t) -
\err{\set{u,w}}$. In particular, since $\delta \leq
\delta_{\set{u,w}}$ for any $w \in \N_u^s(t)$,
from~\eqref{eq:sc_trailing} we obtain
\begin{equation*}
\lc{u}(t) - \lce{u}{w}(t) \geq \left( s + \frac{1}{2} \right) \kappa_{\set{u,w}} - \delta - \err{\set{u,w}}
\geq \left( s + \frac{1}{2} \right) \kappa_{\set{u,w}} - \delta_{\set{u,w}} - \err{\set{u,w}}.
\end{equation*}
Moreover, when replacing $w$ with $v$ we get that $\lc{v}(t) \geq \lce{u}{v}(t) -
\err{\set{u,v}}$, which together with~\eqref{eq:sc_leading} yields
\begin{align*}
\lce{u}{v}(t) - \lc{u}(t)
&\leq
\left( s + \frac{1}{2} \right) \kappa_{\set{u,v}} + \delta + \err{\set{u,v}} + \mu(1 + \drift)\tau_{\set{u,v}}
\\
&
\leq \left( s + \frac{1}{2} \right) \kappa_{\set{u,v}} + \delta_{\set{u,v}} + \err{\set{u,v}} + \mu(1 + \drift)\tau_{\set{u,v}}.
\end{align*}
Therefore the slow mode trigger is satisfied, and node $u$ is in slow mode.

It remains to show that the algorithm also satisfies the max
estimate condition \MC. Suppose first that \MC requires the node to be in slow
mode. Then $M_u(t)=\lc{u}(t)$ and the fast mode trigger cannot be satisfied, as
there is no neighbor $v\in \Nb{u}$ with $\lc{v}>\lc{u}$. Thus, $u$ is in slow
mode either because the slow mode trigger applies or because neither the slow
nor fast mode trigger applies and $\lc{u}(t)=M_u(t)$, cf.~Listing
\ref{algo:rate}.

Similarly, if \MC requires the node to be in fast mode, $M_u(t)\geq
\lc{u}(t)-\iota$ and there is no neighbor $v\in \Nb{u}$ with $\lc{v}>\lc{u}$.
Hence, the slow mode trigger is not satisfied and $u$ will be in fast mode.
Hence, the max estimate condition \MC is satisfied.
\end{proof}

Next we show that the slow and fast mode triggers are never in conflict. While
this statement is not needed for deriving the guarantees of the algorithm, we
could not actually \emph{implement} the algorithm if it did not hold.

\begin{lemma}
For all $u \in V$, the slow and fast mode triggers are never satisfied at the same time.
\label{lemma:disjoint}
\end{lemma}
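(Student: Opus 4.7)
The plan is to derive a contradiction from the assumption that both triggers hold simultaneously at node $u$, possibly at different levels $s$ (for the fast trigger) and $s'$ (for the slow trigger). Let $w \in N_u^s(t)$ be the ``leading'' neighbor certifying the fast trigger, and let $w' \in N_u^{s'}(t)$ be the ``trailing'' neighbor certifying the slow trigger. The key tool is Lemma~\ref{lemma:Nchain}, which ensures that the neighbor sets are nested: whichever of $s, s'$ is smaller, its neighbor set contains the other. This lets me apply the universal (``for all $v$'') clause of one trigger to the existential witness of the other, on the \emph{same} edge, so that both bounds involve a common weight $\kappa_e$, $\err{e}$, and $\tup_e$.

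Concretely, I would split into two cases. If $s \leq s'$, then $w' \in N_u^{s'} \subseteq N_u^s$, so the fast trigger's universal clause applied with $v = w'$ gives
\[
\lc{u}(t) - \lce{u}{w'}(t) \leq s \cdot \kappa_{\{u,w'\}} + 2\mu\tup_{\{u,w'\}} + \err{\{u,w'\}},
\]
while the slow trigger's existential clause with witness $w'$ gives
\[
\lc{u}(t) - \lce{u}{w'}(t) \geq \left(s' + \tfrac{1}{2}\right)\kappa_{\{u,w'\}} - \delta_{\{u,w'\}} - \err{\{u,w'\}}.
\]
Combining and using $s' - s \geq 0$ yields $\tfrac{1}{2}\kappa_{\{u,w'\}} \leq \delta_{\{u,w'\}} + 2\err{\{u,w'\}} + 2\mu\tup_{\{u,w'\}}$, which contradicts the constraint $\delta_e < \kappa_e/2 - 2\err{e} - 2\mu\tup_e$ imposed on $\delta_e$.

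Symmetrically, if $s > s'$, then $w \in N_u^s \subseteq N_u^{s'}$, so I apply the slow trigger's universal clause with $v = w$ and combine with the fast trigger's existential clause on $w$. Using $s - s' \geq 1$ (since both are integers) gives $\tfrac{1}{2}\kappa_{\{u,w\}} \leq \delta_{\{u,w\}} + 2\err{\{u,w\}} + \mu(1+\drift)\tup_{\{u,w\}}$; since $\drift < 1$, we have $\mu(1+\drift)\tup_e < 2\mu\tup_e$, and again the constraint on $\delta_e$ yields a contradiction.

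The only subtlety, and the one step that needs care, is the case split and the invocation of Lemma~\ref{lemma:Nchain} to justify that the chosen witness lies in both relevant neighbor sets; once that is in place, everything reduces to arithmetic bounded by the choice of $\delta_e$. The asymmetry between the two triggers (the extra $\mu(1+\drift)\tup_e$ in the slow trigger versus $2\mu\tup_e$ in the fast trigger) is exactly absorbed by the slack in the definition of $\delta_e$, which is why the constraint $\delta_e \in (0, \kappa_e/2 - 2\err{e} - 2\mu\tup_e)$ was chosen as it was.
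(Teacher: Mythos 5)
Your proof is correct and follows essentially the same approach as the paper: a case split on $s \le s'$ versus $s > s'$, using Lemma~\ref{lemma:Nchain} to apply one trigger's universal clause to the other's existential witness, and closing with the constraint $\delta_e < \kappa_e/2 - 2\err{e} - 2\mu\tup_e$. The arithmetic and the observation that $\mu(1+\drift)\tup_e < 2\mu\tup_e$ (via $\drift < 1$) in the second case match the paper's proof precisely.
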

\begin{proof}
Suppose for the sake of contradiction that for some node $u \in V$ and time $t$, the fast
mode trigger is satisfied for an integer $s$ and the slow mode trigger is
satisfied for an integer $s'$. We consider two cases.
\begin{enumerate}[I.]
  \item $s \leq s'$. Due to Lemma~\ref{lemma:Nchain}, we have that
  $\N_u^{s'}(t) \subseteq \N_u^s(t)$ in this case.
  Because the slow mode trigger is satisfied for $s'$, there is a node $w \in
  \N_u^{s'}$ such that
  \begin{equation}
\lc{u}(t) - \lce{u}{w}(t) \geq \left( s' + \frac{1}{2} \right)\kappa_{\set{u,w}} - \delta_{\set{u,w}} - \err{\set{u,w}}.
\label{eq:sc1}
\end{equation}
However, since $w \in \N_u^{s'}(t) \subseteq \N_u^s(t)$, the second part of the fast mode condition applies to $w$, and it states that
\begin{equation}
\lc{u}(t) - \lce{u}{w}(t) \leq s \cdot \kappa_{\set{u,w}} + \err{\set{u,w}} + 2\mu\tup_{\set{u,w}}.
\label{eq:fc1}
\end{equation}
Combining~\eqref{eq:sc1} and~\eqref{eq:fc1} yields
\begin{align*}
\left( s' + \frac{1}{2} \right) \kappa_{\set{u,w}}
-
\delta_{\set{u,w}} - \err{\set{u,w}}
&\leq
s \cdot \kappa + \err{\set{u,w}} + 2\mu\tup_{\set{u,w}}
\\
&\leq
s' \cdot \kappa + \err{\set{u,w}} + 2\mu\tup_{\set{u,w}}.
\end{align*}
By re-arranging the terms, we obtain
\begin{equation}
\kappa_{\set{u,w}} \leq 4\err{\set{u,w}} + 4\mu\tup_{\set{u,w}} + 2\delta_{\set{u,w}}.
\label{eq:kappa1}
\end{equation}
However, recall that $\delta_e$ is chosen in the range
$(0, \kappa_e / 2 - 2\err{e} - 2\mu\tup_e)$
for each edge $e$.
Therefore $4\err{\set{u,w}} + 4\mu\tup_{\set{u,w}} + 2\delta_{\set{u,w}} < \kappa_{\set{u,w}}$, contradicting~\eqref{eq:kappa1}.

\item $s > s'$, that is, $s \geq s' + 1$. In this case
Lemma~\ref{lemma:Nchain} states that $\N_u^s(t) \subseteq \N_u^{s'}(t)$.
Because the fast mode trigger is satisfied for $s$, there is some node $w \in
\N_u^s(t)$ such that
\begin{equation}
\lce{u}{w}(t) - \lc{u}(t) \geq s \cdot \kappa_{\set{u,w}} - \err{\set{u,w}}.
\label{eq:fc2}
\end{equation}
Because $w \in \N_u^s(t) \subseteq \N_u^{s'}(t)$, the second part of the slow mode trigger also applies to $w$:
\begin{equation}
\lce{u}{w}(t) - \lc{u}(t)  \leq \left( s' + \frac{1}{2} \right)\kappa_{\set{u,w}} + \delta_{\set{u,w}} - \err{\set{u,w}}
+ \mu(1 + \drift)\tau_{\set{u,w}}
.
\label{eq:sc2}
\end{equation}
As before, we combine~\eqref{eq:fc2} and~\eqref{eq:sc2} and obtain
\begin{align*}
\left( s' + \frac{1}{2} \right)\kappa_{\set{u,w}} + \delta_{\set{u,w}} + \err{\set{u,w}}
+ \mu(1 + \drift)\tau_{\set{u,w}}
&\geq
s \cdot \kappa_{\set{u,w}} - \err{\set{u,w}}
\\
& \geq (s' + 1)\kappa_{\set{u,w}} - \err{\set{u,w}}.
\end{align*}
Re-arranging the terms yields
\begin{equation}
\kappa_{\set{u,w}} \leq 4\err{\set{u,w}} + 2\delta_{\set{u,w}} + 2\mu(1 +
\drift)\tau_{\set{u,w}}.
\label{eq:kappa2}
\end{equation}
However, since $\drift < 1$, we have $2\mu(1 + \drift)\tau_{\set{u,w}} < 4\mu\tau_{\set{u,w}}$,
and since $\delta_{\set{u,w}} < \kappa_{\set{u,w}}/2 - 2\err{\set{u,w}} - 2\mu\tup_{\set{u,w}}$,
we have $4\err{\set{u,w}} + 2\delta_{\set{u,w}} + 2\mu(1 + \drift)\tau_{\set{u,w}} < \kappa_{\set{u,w}}$,
contradicting~\eqref{eq:kappa2}.
\end{enumerate}

\end{proof}

Finally, we characterize the behavior of the neighbor coordination
mechanism introduced in Section~\ref{sec:algformal} to ensure that
nodes add each other as neighbors in a roughly symmetric manner. Intuitively,
node $u$ is trying to insert (or has inserted) edge $\{u,v\}$ at time $t$ if and
only if its variables $T_s^{\{v,w\}}\neq \bot$ at time $t$. If the edge
disappears in the view of one of the nodes, at the latest $\tup_{\{v,w\}}$
time later this happens also for the other, and both stop considering the edge
for evaluating their mode until it is reinserted. However, for the sake of the
analysis, we want to discard the initial time period in which it is possible
that not both $u$ and $v$ agree on the values $T_s^{\{v,w\}}$; this time period
is irrelevant, since neither node will actually add the edge to one of its
neighbor sets for $s\in \nat$ before this agreement is established. This is
captured by the following definition and lemma. We state them for the general
case that $\Ge_u(t)$ is not constant, as this does not affect the proof of the
lemma and will be of use in Section~\ref{sec:dynamic_est}.
\begin{definition}
For any node $u$ and any time $t$, we define the set
$\mathbf{T}_u^{\set{u,v}}(t):=\set{T_0^{\set{u,v}},\dots,T_\infty^{\set{u,v}}}$,
where $T_s^{\set{u,v}}$, $s\in \nat_0$ refers to the state of $u$'s
respective variable at time $t$. We simply write
$\mathbf{T}_u^{\set{u,v}}(t)=\bot$ if $u$ never computed these variables or set
them to $\bot$ according to Algorithm~\ref{algo:events}.

If $\mathbf{T}_u^{\set{u,v}}(t)\neq \bot$, denote by $T_{u,0}^{\set{u,v}}(t)$
the time $T_0^{\set{u,v}}\in \mathbf{T}_u^{\set{u,v}}(t)$; otherwise,
$T_{u,0}^{\set{u,v}}(t):=\infty$. We define the Boolean variable
\begin{equation*}
\mathcal{A}_u^{\set{u,v}}(t):=\left(
\mathbf{T}_u^{\set{u,v}}(t)\neq \bot \wedge t\geq 
\min\left\{T_{u,0}^{\set{u,v}}(t),T_{v,0}^{\set{u,v}}(t)\right\}\right).
\end{equation*}
\end{definition}

\begin{lemma}\label{lemma:insertion}
  For every (potential) edge $\set{u,v}$ and for all $t\geq 0$, the following three statements are true:
  \begin{align*}
    \mathrm{(I)} && 
    \big(\mathcal{A}_u^{\set{u,v}}(t) \land
    \mathcal{A}_v^{\set{u,v}}(t)\big) & \Longrightarrow
    \mathbf{T}_u^{\set{u,v}}(t) = \mathbf{T}_v^{\set{u,v}}(t),\\
    \mathrm{(II)} && \big(\mathcal{A}_u^{\set{u,v}}(t) \land
    \lnot\mathcal{A}_v^{\set{u,v}}(t)\big) &\Longrightarrow
    \big(\lnot\mathcal{A}_u^{\set{u,v}}(t+\tup_{\set{u,v}}) \land
    \lnot\mathcal{A}_v^{\set{u,v}}(t+\tup_{\set{u,v}})\big),\\
    \mathrm{(III)} && \big(\lnot\mathcal{A}_u^{\set{u,v}}(t) \land
    \mathcal{A}_v^{\set{u,v}}(t)\big) &\Longrightarrow
    \big(\lnot\mathcal{A}_u^{\set{u,v}}(t+\tup_{\set{u,v}}) \land \lnot\mathcal{A}_v^{\set{u,v}}(t+\tup_{\set{u,v}})\big).
  \end{align*}
\end{lemma}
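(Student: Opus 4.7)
The plan is to analyze the handshake protocol in Listing~\ref{algo:events} together with the detection-delay property of $\tup_{\{u,v\}}$. I first note the two mechanisms that change $\mathbf{T}_u^{\{u,v\}}$: it becomes non-$\bot$ only via a call to $\mathrm{computeInsertionTimes}$ (initiated by the leader or triggered by receipt of an $\mathrm{insertedge}$ message), and it reverts to $\bot$ only when $u$ detects the failure of the edge. The structural idea is that the guards ``$v\in N_u^0(t')$ for all $t'$ in the specified window'' at both the leader and the follower, combined with the detection-delay bound, ensure that both sides of any successful handshake were aware of the edge throughout the transit of the $\mathrm{insertedge}$ message, and that any failure detected by one side is mirrored at the other within $\tup_{\{u,v\}}$ time.

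For part~(I), I would argue as follows. Without loss of generality let $u$ be the leader of $\{u,v\}$. The only path by which $\mathbf{T}_v^{\{u,v\}}$ becomes non-$\bot$ is receipt of an $\mathrm{insertedge}(\{u,v\},L_{\mathit{ins}},\Ge)$ message from $u$. When $u$ sends such a message, $u$ then immediately calls $\mathrm{computeInsertionTimes}(\{u,v\},L_{\mathit{ins}},\Ge)$ with those same arguments, and the procedure in Listing~\ref{algo:insertiontimes} is deterministic in $(L_{\mathit{ins}},\Ge)$, so both sides compute identical sequences $T_0^{\{u,v\}},T_1^{\{u,v\}},\ldots,T_\infty^{\{u,v\}}$. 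The remaining concern is that $\mathbf{T}_u$ and $\mathbf{T}_v$ might be remnants of different handshake instances. To rule this out, I would use the fact that between the instant $u$ sent its current message and the instant $v$ processed it, neither endpoint can have detected a failure: the guards on both sides require that the edge be continuously present, across a window that, by the detection-delay property, covers the entire transit interval. Hence $v$'s $\mathbf{T}_v$ must be the one set from the same send event that produced $u$'s current $\mathbf{T}_u$.

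For parts~(II) and~(III) the arguments are symmetric; consider~(II). Since $\mathcal{A}_u$ holds we have $\mathbf{T}_u\neq\bot$; since $\mathcal{A}_v$ fails, either $\mathbf{T}_v=\bot$ or $\mathbf{T}_v\neq\bot$ but $t<\min\{T_{u,0}^{\{u,v\}},T_{v,0}^{\{u,v\}}\}$. In the latter case, the same lower bound would falsify $\mathcal{A}_u$ (the $\min$ is symmetric in $u,v$), a contradiction; so $\mathbf{T}_v=\bot$. Therefore either $v$ has not yet called $\mathrm{computeInsertionTimes}$ for the current incarnation of the edge, or it has done so and has since reset $\mathbf{T}_v$ after detecting failure. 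In both cases there exists a time $t^\star\le t$ at which $(u,v)\notin E(t^\star)$ in $v$'s view. The detection-delay property then gives a time $t^{\star\star}\in[t^\star-\tup_{\{u,v\}},t^\star+\tup_{\{u,v\}}]$ at which $(v,u)\notin E(t^{\star\star})$ in $u$'s view, which by the failure handler in Listing~\ref{algo:events} forces $\mathbf{T}_u:=\bot$. Propagating detection delay once more in the forward direction shows that by time $t+\tup_{\{u,v\}}$ both $\mathbf{T}_u$ and $\mathbf{T}_v$ are $\bot$, so $\lnot\mathcal{A}_u$ and $\lnot\mathcal{A}_v$ hold at $t+\tup_{\{u,v\}}$. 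Part~(III) follows by interchanging the roles of $u$ and $v$.

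The main obstacle I expect is the bookkeeping needed to pin $\mathbf{T}_u$ and $\mathbf{T}_v$ to the same handshake instance in part~(I), specifically ruling out the scenario where a stale set of $T_s^{\{u,v\}}$ from a previous instance happens to coincide in time with a freshly computed set at the other endpoint. The key invariant to make explicit is that after any edge failure in the view of either endpoint, both endpoints' $\mathbf{T}$ are $\bot$ within $\tup_{\{u,v\}}$, so whenever both are non-$\bot$ there has been no intervening failure detection on either side, and the waiting times $\Delta$ and $\delay_{\{u,v\}}+\tup_{\{u,v\}}$ built into the protocol precisely cover the windows needed to invoke the detection-delay axioms. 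Once this invariant is set up, each of the three statements reduces to a short case analysis.
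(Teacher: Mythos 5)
Your overall strategy matches the paper's: fix $u$ as the leader, track the handshake protocol, and use the detection-delay bound together with the built-in waiting periods $\Delta$ and $\delay_{\{u,v\}}+\tup_{\{u,v\}}$ to propagate failure information. The mechanisms you identify are the right ones. But two places in the sketch contain genuine gaps, which are precisely the places where the paper's bookkeeping does actual work.

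In part~(II), after locating a time $t^\star\leq t$ at which $u\notin N_v(t^\star)$ and a corresponding $t^{\star\star}\in[t^\star-\tup_{\{u,v\}},t^\star+\tup_{\{u,v\}}]$ at which $v\notin N_u(t^{\star\star})$, you conclude that this "forces $\mathbf{T}_u:=\bot$." That conclusion is unjustified unless you also show that $t^{\star\star}$ falls \emph{after} the event that produced $u$'s current, non-$\bot$ value of $\mathbf{T}_u^{\{u,v\}}$; if $t^{\star\star}$ preceded $u$'s last call to $\mathrm{computeInsertionTimes}$, the reset at $t^{\star\star}$ is irrelevant. This is exactly why the paper derives a lower bound $t^{\star}\geq t_S-\Delta+\tup_{\{u,v\}}$ on the failure time and then shows $t^{\star\star}>t_S$, using the fact that the leader waits at least $\Delta$ of real time with the edge continuously present before sending. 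Without this step, the statement "$u$'s detection forces a reset" is incomplete. Likewise, you must rule out $u$ (and hence $v$) re-running $\mathrm{computeInsertionTimes}$ during $(t^{\star\star},t+\tup_{\{u,v\}}]$; the sentence "propagating detection delay once more in the forward direction" does not establish this. The non-recomputation of $u$ relies on the leader's $\Delta$-long waiting period after any rediscovery, and the non-recomputation of $v$ relies on $v$ acting only upon receipt of a fresh $\mathrm{insertedge}$ message from $u$. The paper makes both arguments explicitly.

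In part~(I), the "remaining concern" you flag — stale $T_s$-values from a previous handshake coinciding with fresh ones — is resolved in the paper not by your proposed invariant alone, but by showing that $v$ \emph{cannot} have received any other $\mathrm{insertedge}$ message in the window $[t_S-\Delta+\delay_{\{u,v\}},t_R]$ (because $u$ must wait $\Delta$ between successive messages), and by showing $u\in N_v(t')$ throughout $[t_R-\delay_{\{u,v\}}-\tup_{\{u,v\}},t]$ (else $v$ would have discarded the message or reset $\mathbf{T}_v$). Your proposed invariant — that after a failure in either endpoint's view both $\mathbf{T}$-variables are $\bot$ within $\tup_{\{u,v\}}$ — is, once stated carefully, essentially a restatement of parts~(II) and~(III). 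Using it to derive~(I) is not circular, but it still has to be proved, and proving it reduces to the same bookkeeping you have postponed. So the proposal is a correct outline of the paper's argument, but the crucial timing inequalities that make each step sound are missing, and stating them is not a routine completion.
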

\begin{proof}

Without loss of generality, assume that node $u$ is the leader of edge
$\set{u,v}$. If node $u$ has never discovered neighbor $v$ before time $t$, we
have $\mathcal{A}_u^{\set{u,v}}(t)=\false$ and because $v$ can only start
inserting the edge after receiving a message from $u$, we also have
$\mathcal{A}_v^{\set{u,v}}(t)=\false$. Hence, we can assume that there has been
a time $t'\leq t$ when $u$ discovered neighbor $v$. Let $\underline{t}$ be the
last such time before $t$.

\smallskip
\noindent\textbf{Case 1: \boldmath$\mathcal{A}_u^{\set{u,v}}(t)=\true$.}
Thus, $v\in N_u(t')$ for all $t'\in [\underline{t},t]$. Consequently, $u\in
N_v(t')$ for all $t'\in [\underline{t}+\tup_{\{u,v\}},t-\tup_{\{u,v\}}]$. 
Abbreviate
\begin{equation*}
\Delta:=\frac{(1+\rho)(1+\mu)(\delay_{\{u,v\}}+\tup_{\{u,v\}})}{1-\drift}
+\tup_{\{u,v\}}.
\end{equation*}
At the logical time $t_S$ when
$L_u(t_S)-L_u(\underline{t})=(1+\rho)(1+\mu)\Delta$, $u$ sent a message to $v$,
informing it about the times $\mathbf{T}_u^{\set{u,v}}(t)$, by communicating the
global skew estimate $\Ge_{\set{u,v}}=\Ge_u(t_S)$ and
$L_{\mathit{ins}}=\lc{u}(t_S) + \Ge_u(t_S) +
(1+\drift)(1+\mu)\delay_{\set{u,v}}$. Based on these parameters, in the call to
$\mathrm{computeInsertionTimes}$, the logical time $T_{u,0}^{\set{u,v}}$ is set
to a value of at least $L_{\mathit{ins}}$. Let $t_0$ be the time such that
$\min\set{\lc{u}(t_0),\lc{v}(t_0)}=T_{u,0}^{\set{u,v}}$. We claim that $t_0\geq
t_S+\delay_{\{u,v\}}$. To see this, consider any node $w\in \{u,v\}$ and bound
\begin{align*}
L_w(t_S+\delay_{\{u,v\}})&\leq
L_w(t_S)+(1+\rho)(1+\mu)\delay_{\{u,v\}}\\
&\leq L_v(t_S)+\G(t_S)+(1+\rho)(1+\mu)\delay_{\{u,v\}}\\
&\leq L_v(t_S)+\Ge_u(t_S)+(1+\rho)(1+\mu)\delay_{\{u,v\}}\\
&=  L_{\mathit{ins}}\\
&\leq T_{u,0}^{\set{u,v}}.
\end{align*}
In particular, the definition of $\underline{t}$ implies that $t\geq t_0\geq
t_S+\delay_{\{v,w\}}$.

\smallskip
\noindent\textbf{Case 1a:
\boldmath$\mathcal{A}_v^{\set{u,v}}(t)=\true$.}
In this case we need to show that $\mathbf{T}_u^{\set{u,v}}(t) =
\mathbf{T}_v^{\set{u,v}}(t)$ in order to establish (I). Denote by $t_R$ the
maximal time in $[0,t]$ when $v$ received an
$\mathrm{insertedge}(\set{u,v},\cdot, \cdot)$ message from $u$; such a time must
exist, as otherwise $\mathcal{A}_v^{\set{u,v}}(t)=\false$. Because $u$ waits for
at least $\Delta$ time after an edge has formed (from $u$'s perspective) before
sending a message, $v$ cannot receive any other
$\mathrm{insertedge}(\set{u,v},\cdot, \cdot)$ message from $u$ during
$[t_S-\Delta+\delay_{\{u,v\}},t_R]\supseteq
[t_R-\delay_{\{u,v\}}-\tup_{\{u,v\}},t_R]$. We claim that $u\in N_v(t')$ for all
$t'\in [t_R-\delay_{\{u,v\}}-\tup_{\{u,v\}},t]$. Otherwise, $v$ would satisfy
$T_s^{\{u,v\}}=\bot$ for all $s\in \{0,1,\ldots\}$ at some time $t'\in
[t_R-\delay_{\{u,v\}}-\tup_{\{u,v\}},t]$, and if there is only such a $t'<
t_R$, it would have ignored the message received at time $t_R$ because
\begin{equation*}
L_v(t_R-\delay_{\{u,v\}}-\tup_{\{u,v\}})\geq
L_v(t_R)-(1+\rho)(1+\mu)(\delay_{\{u,v\}}+\tup_{\{u,v\}}).
\end{equation*}
We conclude that the message
sent by $u$ at time $t_S$ is received by $v$; therefore, $t_R\in
[t_S,t_S+\delay_{\{u,v\}}]$ is actually the time when this message is received.
(I) now follows because $t_S+\delay_{\{u,v\}}\leq t_0\leq t$, $v$ computes
$\mathbf{T}_v^{\set{u,v}}(t_R) = \mathbf{T}_u^{\set{u,v}}(t_S)$ upon reception
of the message, and $v$ does not change these variables again until time $t$.

\smallskip
\noindent\textbf{Case 1b:
\boldmath$\mathcal{A}_v^{\set{u,v}}(t)=\false$.}
In this case we need to show that
$\big(\lnot\mathcal{A}_u^{\set{u,v}}(t+\tup_{\set{u,v}}) \land
\lnot\mathcal{A}_v^{\set{u,v}}(t+\tup_{\set{u,v}})\big)$ in order to establish
(II). We claim that there is a time $t'\in
[t_S-\Delta+\tup_{\{u,v\}},t]$ so that $u\notin N_v(t')$. Otherwise, we had
$u\in N_v$ throughout $[t_S-\Delta+\tup_{\{u,v\}},t]$ and $v$ would receive and
not discard the message by $u$, as
\begin{equation*}
\lc{v}(t_R)\geq \lc{v}(t_S)\geq
\lc{v}\left(t_S-\Delta+\tup_{\{u,v\}}\right)+
(1+\drift)(1+\mu)(\delay_{\{u,v\}}+\tup_{\{u,v\}}).
\end{equation*}
However, this would entail that $\mathcal{A}_v^{\set{u,v}}(t)=\true$, so indeed
such a time $t'$ must exist.

Denote by $t''\in [t'-\tup_{\{u,v\}},t'+\tup_{\{u,v\}}]\subseteq
[t_S-\Delta,t+\tup_{\{u,v\}}]$ a time so that $v\notin N_u(t'')$; by the
communication model, such a time exists. In fact, we have that $t''> t_S$, as
the edge must be continuously present from the perspective of $u$ for
$(1+\drift)(1+\mu)\Delta$ local time, i.e., since at least $\Delta$ real time,
before it sends a message. Therefore, $u$ will set $T_s^{\{u,v\}}:=\bot$ for all
$s\in \{0,1,\ldots\}$ at time $t''$. While the definition of $\underline{t}$
admits that $u$ may observe the reappearance of the edge at a time larger than
$t$, $u$ will not recompute the values $T_s^{\{u,v\}}$ or send another message
to $v$ by time $t+\tup_{\{u,v\}}$. This implies that also $v$ does not recompute
its values $T_s^{\{u,v\}}$ during $[t'',t+\tup_{\{u,v\}}]$, and it follows that
$\big(\lnot\mathcal{A}_u^{\set{u,v}}(t+\tup_{\set{u,v}}) \land
\lnot\mathcal{A}_v^{\set{u,v}}(t+\tup_{\set{u,v}})\big)$, as claimed.

\smallskip
\noindent\textbf{Case 2: \boldmath$\mathcal{A}_u^{\set{u,v}}(t)=\false
\wedge \mathcal{A}_v^{\set{u,v}}(t)=\true$.} In this case we need to show that
$\big(\lnot\mathcal{A}_u^{\set{u,v}}(t+\tup_{\set{u,v}}) \land
\lnot\mathcal{A}_v^{\set{u,v}}(t+\tup_{\set{u,v}})\big)$ in order to establish
(III). Denote by $t_R$ the latest time before $t$ when $v$ received an
$\mathrm{insertedge}(\set{u,v},\cdot, \cdot)$ message from $u$; as
$\mathcal{A}_v^{\set{u,v}}(t)=\true$, such a time must exist. Denote by $t_S\in
[t_R-\delay_{\{u,v\}},t_R]$ the time when it was sent. Note that
$\mathcal{A}_v^{\set{u,v}}(t)=\true$ also implies that $v$ neither discarded the
message, i.e., it recomputed the times $T_s^{\{u,v\}}$ at time $t_R$, nor did it
set $T_s^{\{u,v\}}:=\bot$ for any $s$ during $[t_R,t]$. Hence, $u\in N_v(t')$
for all $t'\in [t_R-\delay_{\{u,v\}}-\tup_{\{u,v\}},t]$. We infer that $v\in
N_u(t')$ for all $t'\in [t_R-\delay_{\{u,v\}},t-\tup_{\{u,v\}}]\subseteq
[t_S,t-\tup_{\{u,v\}}]$. This implies that $u$ cannot detect the reappearance of
the edge during this interval and thus will not call
$\mathrm{computeInsertionTimes}$ during $(t_S,t+\tup_{\{u,v\}}]$; it follows
that $\mathcal{A}_u^{\set{u,v}}(t+\tup_{\set{u,v}})=\false$.

To see that also $\mathcal{A}_v^{\set{u,v}}(t+\tup_{\set{u,v}})=\false$, note
that since $\mathcal{A}_u^{\set{u,v}}(t)=\false$, it must hold that $v\notin
N_u(t')$ for some $t'\in [t_S,t]$. Therefore, $u\notin N_v(t'')$ for some
$t''\in [t_S-\tup_{\{u,v\}},t+\tup_{\{u,v\}}]\subseteq
[t_R-\delay_{\{u,v\}}-\tup_{\{u,v\}},t+\tup_{\{u,v\}}]$. As $u\in N_v(t')$
for all $t'\in [t_R-\delay_{\{u,v\}}-\tup_{\{u,v\}},t]$, we obtain that
$t''\in (t,t+\tup_{\{u,v\}}]\subseteq (t_R,t+\tup_{\{u,v\}}]$.
Consequently, $v$ resets $T_s^{\{u,v\}}:=\bot$ for all $s\in \{0,1,\ldots\}$ at
time $t''\in (t_R,t+\tup_{\{u,v\}}]$, yielding that
$\mathcal{A}_v^{\set{u,v}}(t+\tup_{\set{u,v}})=\false$. This proves
Statement~(III), concluding the proof.
\end{proof}

\subsection{The Global Skew}
\label{sec:global}

Like its predecessors in~\cite{kuhn09b,lenzen10}, our algorithm
achieves an asymptotically optimal global skew. In static networks of diameter
$D$ where each message has an uncertainty $U$ in its transit time,
the best possible global skew guarantee is $\Theta((\rho+U)
D)$~\cite{kuhn09b,lenzen10}, and the dynamic estimate diameter satisfies
$D(t)\in \Omega((\rho+U)D)$ at all times. Let $D:=\max_t D(t)$ be the maximal
network uncertainty of a given execution of our algorithm. In the following, we
show that the algorithm always guarantees a global skew of $\BO(D)$. In fact, we
show the following stronger statement:

\begin{theorem}\label{thm:global}
Let $\iota$ be defined as in Definition \ref{def:MC}.
\begin{enumerate}[I.]
  \item On any dynamic graph executing $\Aopt$, at any time the global skew
  increases at rate at most $2\rho$.
  \item On any dynamic graph executing $\Aopt$, at any time $t$ when the global
  skew exceeds $D(t)+\iota$, it decreases at rate at least
  $\mu(1-\drift)-2\drift > 0$.
\end{enumerate}
\end{theorem}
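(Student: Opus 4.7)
The strategy for both parts is to use the max estimate condition \MC (which \Aopt satisfies by Lemma~\ref{lemma:algo_correct}) to pin down the mode of the nodes that attain the current maximum and minimum logical clock values. Let $\mathcal{M}(t):=\max_u L_u(t)$, $\mathcal{m}(t):=\min_u L_u(t)$, so $\G(t)=\mathcal{M}(t)-\mathcal{m}(t)$. Since all logical clocks are left-differentiable, so are $\mathcal{M}$, $\mathcal{m}$, and $\G$, and it suffices to upper-bound the left Dini derivatives of $\mathcal{M}$ and $-\mathcal{m}$ at every time $t$. By a standard argument, the left derivative of $\mathcal{M}$ at $t$ is bounded above by $\max_{u\in A(t)} l_u^-(t)$, where $A(t):=\arg\max_u L_u(t)$; symmetrically, $(-\mathcal{m})^{-\prime}(t)\leq -\min_{v\in B(t)} l_v^-(t)$ with $B(t):=\arg\min_u L_u(t)$. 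The remainder of the proof consists of identifying the modes of the nodes in $A(t)$ and $B(t)$.

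For Part~I, pick any $u\in A(t)$. Then $L_u(t)=\mathcal{M}(t)\geq L_v(t)$ for every $v\in \Nb{u}(t)$, and by \eqref{eq:mc_upper} and \eqref{eq:mc_self} we have $\mc{u}(t)\leq \mathcal{M}(t)=L_u(t)\leq \mc{u}(t)$, so $L_u(t)=\mc{u}(t)$. Thus the hypothesis of the slow part of \MC holds, and by Lemma~\ref{lemma:algo_correct} node $u$ is in slow mode, giving $l_u(t)\leq 1+\drift$. On the other hand, every logical clock increases at rate at least $1-\drift$, so $l_v(t)\geq 1-\drift$ for any $v\in B(t)$. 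Combining these with the Dini-derivative bounds above yields $\G^{-\prime}(t)\leq (1+\drift)-(1-\drift)=2\drift$, proving~I.

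For Part~II, assume $\G(t)>D(t)+\iota$ and pick any $v\in B(t)$. By \eqref{eq:mc_lower} we have
\[
\mc{v}(t)\geq \mathcal{M}(t)-D(t)=L_v(t)+\G(t)-D(t)>L_v(t)+\iota,
\]
so $L_v(t)\leq \mc{v}(t)-\iota$. Moreover $L_v(t)=\mathcal{m}(t)\leq L_w(t)$ for all $w\in \Nb{v}(t)$. Hence the fast part of \MC applies, and by Lemma~\ref{lemma:algo_correct} node $v$ is in fast mode, giving $l_v(t)\geq (1+\mu)(1-\drift)$. Combined with the same bound $l_u(t)\leq 1+\drift$ for $u\in A(t)$ derived in Part~I (which still holds, since the argument there did not use the bound on $\G(t)$), we obtain
\[
\G^{-\prime}(t)\leq (1+\drift)-(1+\mu)(1-\drift)=-\bigl(\mu(1-\drift)-2\drift\bigr).
\]
The constraint $\mu>2\drift/(1-\drift)$ from~\eqref{eq:sigma} ensures $\mu(1-\drift)-2\drift>0$, completing the proof.

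The main subtlety is in handling the Dini derivative when $A(t)$ or $B(t)$ contains several nodes, or changes with $t$; the nontrivial point is that every member of $A(t)$ (resp.\ $B(t)$) is simultaneously forced into slow (resp.\ fast) mode by \MC, so the bound on $l_u^-$ is uniform over $A(t)$ (and on $l_v^-$ over $B(t)$). All other steps are direct applications of \MC together with the estimate properties of Condition~\ref{cond:max_estimate}.
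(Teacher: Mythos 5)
Your proof is correct and follows essentially the same strategy as the paper: use Condition~\ref{cond:max_estimate} together with \MC to pin down the modes of nodes attaining the current maximum and minimum logical clock values, then translate the resulting rate bounds into a bound on the derivative of $\G$. The only substantive difference is that you spell out the Dini-derivative handling when the argmax/argmin sets contain several nodes or change over time, a point the paper's proof treats informally with the phrase ``the theorem then follows because.''
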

\begin{proof}
It suffices to show that (i) any node with the largest clock value throughout
the network is in slow mode and (ii) whenever the global skew exceeds
$D(t)+\iota$, any node with the smallest clock value in the network is in fast
mode. The theorem then follows because any clock in slow mode is at most
$2\rho$ faster than any other clock and
\[
(1+\mu)(1-\drift) - (1+\drift) =
\mu(1-\drift) - 2\drift \stackrel{\eqref{eq:sigma}}{>}0.
\]

Let $u$ and $v$ be nodes with the largest and smallest logical clock values at
an arbitrary time $t$, respectively. By Inequalities \eqref{eq:mc_upper} and
\eqref{eq:mc_self}, we have $\mc{u}(t)=\lc{u}(t)$. Provided that ${\cal
G}(t)=L_u(t)-L_v(t)> D(t)+\iota$, from Inequality \eqref{eq:mc_lower} it follows
that $\mc{v}(t)\geq\lc{u}(t)-D(t) > \lc{v}(t)-\iota$. Therefore, due to the max
estimate condition \MC, we conclude that Statements (i) and (ii) are true,
yielding the claims of the theorem.
\hspace*{\fill}\end{proof}

Note that Statement~$II.$ of the theorem implies that the global skew is
self-stabilizing in the sense that it is reduced at an asymptotically optimal
rate of $\mu(1-\drift) - 2\drift\in \Omega(\mu)$ when it exceeds the best
possible guarantee. One could add a simple consistency check mechanism to the
algorithm that forces logical clocks to be instantaneously set to close values
(i.e., violate the progress bound of $(1+\drift)(1+\mu)$) whenever a global skew
exceeding a certain multiple of $D(t)$ is detected. While (together with
self-stabilizing implementations of the algorithms rules and neighbor sets) this
would ensure quick stabilization from arbitrarily corrupted states, such
behavior might be undesired if $D(t)$ decreases rapidly in a fault-free
execution, as clock values would change too rapidly; this could also cause
violations of the gradient skew bound.

\subsection{Analysis of the Gradient Skew}

\subsubsection{Preliminary Definitions and Statements}

As described in Section \ref{sec:algstrategy}, the gradient skew
requirement states that paths of certain lengths cannot have an
average skew that exceeds a certain bound. In particular, for every
positive integer $s$, there is a length $C_s$ such that paths $p$ of
length $\kappa_p\geq C_s$ have an average skew of at most
$\BO(s\kappa_e)$ per edge $e$, where the values $C_s$ are
exponentially decreasing in $s$. Since in the process of the
analysis we will have to use different such sequences $C_s$,
we do not explicitly define the values here, but will work with an
abstract \emph{gradient sequence} as defined in the following
Definition \ref{def:sequence} for most of the analysis.

\begin{definition}[Gradient Sequences]\label{def:sequence}
  A \emph{gradient sequence} is a non-increasing sequence of values
  $C=\{C_s\}_{s\in \nat}$.
\end{definition}

The specific sequences that will later be used to prove the gradient
skew properties of the algorithm roughly look as follows. At all times
$t$, we have $C_1\geq 2\G(t)$. As a consequence, the gradient skew
property for level $1$ will follow directly from the bound on the
global skew. Further, for most levels $s$, we have
$C_{s+1}=C_s/\sigma$ such that also for short paths, we obtain a
sufficiently strong requirement on the skew.

We have seen that the different values of $s$ correspond to different
average skew bounds. Throughout the proof, we will mostly make
arguments for a particular such level $s$. In the following, we define
the sets of edges and paths used when arguing about level $s$.

\begin{definition}[Level-$s$ Edge Set]\label{def:sedges}
  For all $s\in\nat$, we define
  \begin{equation*}
    E^s(t) := \set{\set{u,v}\in\binom{V}{2}\,\Big|\, v\in N_u^s(t) \wedge u\in
      N_v^s(t)}.
  \end{equation*}
\end{definition}

\begin{definition}[Level-$s$ Paths]\label{def:paths}
  We define for all $u_0\in V$, $s\in \nat$, and all times $t$ the set
  of \emph{level-$s$ paths starting at node $u_0$ at time $t$} to be
  \begin{equation*}
    P_{u_0}^s(t)\coloneq \{p=(u_0,\ldots,u_k)\,|\, \forall
    i\in
    \{0,\ldots,k-1\}:~\{u_i,u_{i+1}\}\in E^s(t)\}.
  \end{equation*}
\end{definition}

For convenience of notation, we also define reversed and concatenated paths.
\begin{definition}[Path Reversal and Concatenation]
  Given the path $p=(u_0,\ldots,u_k)$, the corresponding
  \emph{reversed path} is $\bar{p}:=(u_k,\ldots,u_0)$. Given two paths
  $p=(u_0,\ldots,u_k)$ and $q=(u_k,\ldots,u_{\ell})$, their
  \emph{concatenation} is $p\circ
  q:=(u_0,\ldots,u_k,\ldots,u_{\ell})$.
\end{definition}
Note that for all levels $s$ and all times $t$, reversal and
concatenation of level-$s$ paths again yield level-$s$ paths.

In addition, we define two notions of ``weighted skew'', essentially
capturing how far away certain paths are from the level-$s$ skew
bound. The multiplicative factors in the two conditions
correspond to the factors in the fast and slow mode condition \FC\ and
\SC.

\begin{definition}\label{def:xi_local}
  For all paths $p = (u, \ldots, v)$, all $s\in \nat$, and all times
  $t$ we define
  \begin{equation*}
    \xi_p^s(t) \coloneq \lc{u}(t) - \lc{v}(t) - s\kappa_p.
  \end{equation*}
  We further define for all $u\in V$ that
  \begin{equation*}
    \Xi_u^s(t) \coloneq \max_{p\in P_u^s(t)}\{\xi_p^s(t)\}.
  \end{equation*}
\end{definition}

\begin{definition}\label{def:psi_local}
  For all paths $p = (u, \ldots, v)$, all $s\in \nat$, and all times $t$ we
  define
  \begin{equation*}
    \psi_p^s(t) \coloneq \lc{v}(t) - \lc{u}(t) - \left(s+\frac{1}{2}\right)\kappa_p.
  \end{equation*}
  Furthermore, we set for all $u\in V$
  \begin{equation*}
    \Psi_u^s(t) \coloneq \max_{p\in P_u^s(t)} \{\psi_p^s(t)\}.
  \end{equation*}
\end{definition}

A gradient sequence $C$ defines what clock skew is allowed on
different paths.  Rather than directly defining a gradient skew
requirement as described in Section \ref{sec:algstrategy}, we define a
condition that is based on the value introduced in Definition
\ref{def:psi_local}. If this requirement is satisfied, we say that the system is
\emph{legal}. Legality is formally defined as follows.

\begin{definition}[Legality]\label{def:legality}
  Given a weighted, dynamic graph $G$ and a gradient sequence $C$, for
  each $s\in \nat$ the system is \emph{$(C,s)$-legal at time $t$ and
    node $u\in V$}, if and only if it holds that
  \begin{equation*}
    \Psi_u^s(t) < \frac{C_s}{2}
  \end{equation*}
  The system is \emph{$C$-legal at $t$ and $u$} if it is $(C,s)$-legal
  for all $s\in \nat$ at node $u$ and time $t$.
\end{definition}

The legality definition can be used to derive an upper bound on the
clock skew between any two nodes $u$ and $v$.

\begin{lemma}\label{lemma:clocks}
  Assume that for some $s\in \nat$ and a path $p=(u,\dots,v)\in
  P_u^s(t)$, the system is $(C,s)$-legal at nodes $u$ and $v$ at time
  $t$. Then, $|\lc{v}(t)-\lc{u}(t)|< (s+1/2)\kappa_p + C_s/2$.
\end{lemma}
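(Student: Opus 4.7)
The plan is to derive both directions of the absolute value bound by applying the legality condition at $u$ and at $v$ in turn, using that path reversal preserves level-$s$ membership.

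First I would apply the legality assumption at $u$. Since the system is $(C,s)$-legal at $u$ and $t$, by Definition~\ref{def:legality} we have $\Psi_u^s(t) < C_s/2$, and since $p \in P_u^s(t)$ we get $\psi_p^s(t) \le \Psi_u^s(t) < C_s/2$. Unfolding Definition~\ref{def:psi_local} for the path $p=(u,\ldots,v)$ this reads $\lc{v}(t) - \lc{u}(t) < (s+1/2)\kappa_p + C_s/2$, which already gives one side of the desired inequality.

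For the other direction I would use the reversed path $\bar p=(v,\ldots,u)$. The remark following the path-reversal definition tells us that reversing a level-$s$ path at time $t$ yields another level-$s$ path, so $\bar p \in P_v^s(t)$. Applying $(C,s)$-legality at $v$ we obtain $\psi_{\bar p}^s(t)<C_s/2$, that is, $\lc{u}(t)-\lc{v}(t)-(s+1/2)\kappa_{\bar p} < C_s/2$. Since the weight is symmetric ($\kappa_{\bar p}=\kappa_p$, as it is the sum of the same edge weights in reverse order), this rearranges to $\lc{u}(t)-\lc{v}(t) < (s+1/2)\kappa_p + C_s/2$.

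Combining the two one-sided bounds gives $|\lc{v}(t)-\lc{u}(t)| < (s+1/2)\kappa_p + C_s/2$, as claimed. There is no real obstacle here; the only subtlety is noticing that legality must be invoked at both endpoints of $p$ (one gives the upper bound on $\lc{v}-\lc{u}$, the other on $\lc{u}-\lc{v}$), and that this is legitimate because $P^s$ is closed under reversal and $\kappa$ is path-reversal invariant.
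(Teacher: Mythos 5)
Your proof is correct and mirrors the paper's argument exactly: invoke $(C,s)$-legality at $u$ for the path $p$ to bound $\lc{v}(t)-\lc{u}(t)$, then invoke it at $v$ for the reversed path $\bar p\in P_v^s(t)$ (using $\kappa_{\bar p}=\kappa_p$) to bound $\lc{u}(t)-\lc{v}(t)$, and combine.
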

\begin{proof}
  Since the system is $(C,s)$-legal at $u$ at time $t$, we have that
  \begin{equation*}
    \lc{v}(t)-\lc{u}(t)-\left(s+\frac{1}{2}\right)\kappa_p=\psi_p^s(t)\leq
    \Psi_u^s(t)<\frac{C_s}{2}.
  \end{equation*}
  We therefore get that $\lc{v}(t)-\lc{u}(t)< (s+1/2)\kappa_p +
  C_s/2$. From legality at $v$ and because $p\in P_u^s(t)$ implies
  that the reversed path is in $P_v^s(t)$, we get the same upper bound
  on $\lc{u}(t)-\lc{v}(t)$.
\end{proof}

We now show a few simple properties that follow from the various
definitions and the basic structure of the algorithm.

\begin{lemma}\label{lemma:simple_stuff}
  The following statements hold at all times $t$ and all nodes $u\in
  V$.
  \begin{itemize}
  \item [(i)] $\forall s, s'\in \nat$, $s'\leq s$: $P_u^s(t)\subseteq
    P_u^{s'}(t)$.
  \item [(ii)] $\forall s, s'\in \nat$, $s'\leq s$: $\Psi_u^s(t)\leq
    \Psi_u^{s'}(t)$.
  \item [(iii)] If for some $s\in \nat$ the system is $(C,s)$-legal and
    $C_s=C_{s+1}$, then the system is also $(C,s+1)$-legal.
  \end{itemize}
\end{lemma}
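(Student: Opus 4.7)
The plan is to prove the three items in order, with each following fairly directly from the earlier structural lemmas, particularly Lemma~\ref{lemma:Nchain}.

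For part~(i), I would argue by iterating Lemma~\ref{lemma:Nchain}: that lemma gives $N_u^{s}(t)\subseteq N_u^{s-1}(t)$ for every $s\geq 1$, so a simple induction on $s-s'$ yields $N_u^s(t)\subseteq N_u^{s'}(t)$ for any $s'\leq s$. Symmetrically for any neighbor, so by Definition~\ref{def:sedges} we get $E^s(t)\subseteq E^{s'}(t)$. Finally, if $p=(u_0,\ldots,u_k)\in P_u^s(t)$, then every consecutive pair satisfies $\{u_i,u_{i+1}\}\in E^s(t)\subseteq E^{s'}(t)$, so $p\in P_u^{s'}(t)$ by Definition~\ref{def:paths}. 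This is purely a set-theoretic chase and does not require any reasoning about clocks.

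For part~(ii), I would first observe that for any fixed path $p=(u,\ldots,v)$ and any $s'\leq s$, by Definition~\ref{def:psi_local},
\begin{equation*}
\psi_p^{s'}(t) - \psi_p^s(t) = (s-s')\kappa_p \geq 0,
\end{equation*}
since edge weights $\kappa_e$ are positive. Hence $\psi_p^s(t)\leq \psi_p^{s'}(t)$ pointwise in $p$. Combining this with part~(i), which says that $P_u^s(t)$ is a subset of $P_u^{s'}(t)$, the maximum in $\Psi_u^s(t)$ is taken over a (possibly) smaller set of paths and with a (weakly) smaller integrand, giving
\begin{equation*}
\Psi_u^s(t) = \max_{p\in P_u^s(t)} \psi_p^s(t) \leq \max_{p\in P_u^s(t)} \psi_p^{s'}(t) \leq \max_{p\in P_u^{s'}(t)} \psi_p^{s'}(t) = \Psi_u^{s'}(t).
\end{equation*}

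For part~(iii), this is immediate from part~(ii) applied with the pair $(s+1,s)$: $(C,s)$-legality at $u$ and $t$ says $\Psi_u^s(t)<C_s/2$, and by~(ii) we get $\Psi_u^{s+1}(t)\leq \Psi_u^s(t) < C_s/2 = C_{s+1}/2$, which is exactly $(C,s+1)$-legality at $u$ and $t$ by Definition~\ref{def:legality}.

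I do not foresee any real obstacle here: the only conceptual content is translating the chain of neighbor-set inclusions from Lemma~\ref{lemma:Nchain} into inclusions of level-$s$ edge sets and hence of path sets, after which parts~(ii) and~(iii) are one-line consequences using the positivity of $\kappa_p$ and the definition of legality.
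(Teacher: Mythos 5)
Your proposal is correct and follows the same approach as the paper: part~(i) from Lemma~\ref{lemma:Nchain} and the definitions of $E^s$ and $P_u^s$, part~(ii) from the definition of $\psi_p^s$ together with part~(i), and part~(iii) from part~(ii) and the definition of legality. The paper's own proof is a terse one-liner for each item; you have simply filled in the (entirely routine) details.
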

\begin{proof}
  Statement~(i) is a direct consequence of Lemma~\ref{lemma:Nchain}
  and Definition~\ref{def:paths}. Statement~(ii) follows from
  Definition \ref{def:psi_local} together with Statement~(i). Finally,
  Statement~(iii) follows from Statement~(ii) together with Definition
  \ref{def:legality}.
\end{proof}

\subsubsection{Stabilization Condition and Convergence to Small Skews}

In order to prove the claimed bound on the stabilization time, we
require a \emph{stabilization condition}, which depends on the
$s$-legality of the system for a certain $s \in \nat$. For ease of
presentation, for every $s\in\nat$, we define a parameter that
will be used in the definition of the stabilization condition, as well
as throughout the remainder of the proof.
\begin{equation}\label{eq:stabint}
  \forall s\geq 2: \stabint{s} := \frac{C_{s-1}}{(1+\drift)\mu}.
\end{equation}

\begin{definition}[Stabilization Condition]
  \label{def:stabcondition}
  For a node $u\in V$, a gradient sequence $C$, an integer $s>1$, and
  a time $t$, we say that $u$ satisfies the
  $(C,s)$-\emph{stabilization condition} at time $t$ if and only if
  \begin{itemize}
  \item[\bf (S0)] For all $t'\in[t-\stabint{2},t]$, we have
    $C_1\geq 2\G(t')$.
  \item[\bf (S1)] For all $s'\in\set{2,\dots,s-1}$ and all nodes $v\in
    V$ for which $|\lc{u}(t)-\lc{v}(t)|\leq s'C_{s'-1} + \big(2\drift
    + \mu(1 + \drift)\big)\stabint{s'}$, the system is
    $(C,s')$-legal at node $v$ at all times in $[t-\stabint{s'},t]$.
    \label{stabcond:smallskew}
  \item[\bf (S2)] We have:
    \begin{equation*}
      \forall T_s\in \Tset_s: |\lc{u}(t)- T_s| \geq
      (1+\mu)(1+\drift)\stabint{s} + sC_{s-1}
    \end{equation*}
    \label{stabcond:time}
  \end{itemize}
\end{definition}

A simple observation is that the stabilization condition directly implies
level-$1$ legality.
\begin{lemma}\label{lemma:1legality}
  Let $t\geq 0$ be a time and assume that condition (S0) of the
  stabilization condition holds at some node $u\in V$ at some time
  $t'\in[t,t+\stabint{2}]$. Then the system is $(C,1)$-legal
  at all nodes $v\in V$ at time $t$.
\end{lemma}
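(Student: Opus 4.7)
The plan is to unfold the definitions, applying (S0) to get a bound on $C_1$ in terms of the global skew, and then to observe that the level-$1$ legality condition at any node reduces essentially to a global skew bound, so the two immediately combine.

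First, I would use the hypothesis $t' \in [t, t+\stabint{2}]$ to observe that $t \in [t'-\stabint{2}, t']$, so that condition (S0) applied at $u$ and $t'$ yields $C_1 \geq 2\mathcal{G}(t)$.

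Next, fix any node $v \in V$ and any path $p = (v = u_0, u_1, \ldots, u_k) \in P_v^1(t)$. By Definition~\ref{def:psi_local},
\begin{equation*}
\psi_p^1(t) \;=\; \lc{u_k}(t) - \lc{v}(t) - \tfrac{3}{2}\kappa_p \;\leq\; \mathcal{G}(t) - \tfrac{3}{2}\kappa_p \;\leq\; \tfrac{C_1}{2} - \tfrac{3}{2}\kappa_p,
\end{equation*}
using the definition of the global skew in the first inequality and the conclusion of the previous step in the second. Taking the maximum over $p \in P_v^1(t)$ gives $\Psi_v^1(t) \leq C_1/2$.

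The only subtlety is obtaining the strict inequality demanded by Definition~\ref{def:legality}. I would split the maximization by path length: any path with at least one edge has $\kappa_p > 0$ by~\eqref{eq:def_kappa}, so $\psi_p^1(t) \leq C_1/2 - (3/2)\kappa_p < C_1/2$; while the trivial length-$0$ path $p=(v)$ contributes $\psi_p^1(t) = 0 < C_1/2$ (noting that $C_1 > 0$, since $C_1 \geq 2\mathcal{G}(t) \geq 0$ and the trivial case $C_1=0$ would force all logical clocks to coincide, making level-$1$ legality vacuous). Hence $\Psi_v^1(t) < C_1/2$, which is $(C,1)$-legality at $v$ and $t$. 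Since $v$ was arbitrary, the lemma follows. There is no substantive obstacle here: the statement is essentially a sanity check that the crude bound $\Psi_v^1 \leq \mathcal{G}$ combined with (S0) gives level-$1$ legality ``for free,'' as already anticipated in the discussion following Definition~\ref{def:sequence}.
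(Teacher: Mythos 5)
Your proof is correct and follows essentially the same route as the paper: apply (S0) at $(u,t')$ to get $C_1 \geq 2\G(t)$ (since $t \in [t'-\stabint{2},t']$), then observe that for any path $p=(v,\dots,w)$ the chain $\psi_p^1(t) = L_w(t)-L_v(t)-\tfrac{3}{2}\kappa_p \leq \G(t) - \tfrac{3}{2}\kappa_p \leq C_1/2 - \tfrac{3}{2}\kappa_p$ yields the required bound. The only difference is that you explicitly address the strictness of the inequality (splitting into the trivial path $p=(v)$ and paths with $\kappa_p>0$), whereas the paper tacitly folds strictness into the step $\psi_p^1(t) < |L_w(t)-L_v(t)|$, which is itself only strict for $\kappa_p>0$; both treatments gloss over the degenerate trivial path / $C_1=0$ corner, and your side remark that $C_1=0$ ``makes legality vacuous'' is not quite right (it would actually make legality unachievable, since $\Psi_v^1(t)\geq 0$ always via the trivial path), but this case never occurs under the paper's standing assumptions and does not affect the substance of the argument.
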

\begin{proof}
  Consider arbitrary nodes $v,w\in V$. For any path
  $p=(v,\dots,w)$, we have
  \[
  \psi_p^1(t) = \lc{w}(t) -\lc{v}(t) - \frac{3}{2}\kappa_p <
  |\lc{w}(t)-\lc{v}(t)| \leq \G(t) \stackrel{(S0)}{\leq} \frac{C_1}{2}.
  \]
  Given Definition \ref{def:legality}, this implies that $v$ is
  $(C,1)$-legal at time $t$.
\end{proof}
This can be seen as an induction anchor, starting from which increasingly
stronger bounds can be established for higher levels $s$. The core theorem of
our analysis, stated next, provides the matching induction step.
\begin{theorem}\label{thm:catch_up}
  Fix a level $s > 1$, a node $u \in V$ and an interval $[t^-, t^+]$.
  Let $\Lambda_s \coloneq C_{s-1} / ( 2(1 - \drift)\mu)$, and suppose
  that for each $t \in [t^-, t^+]$ and for each path $(u, \ldots, v)
  \in P_u^s(t)$, if $\kappa_{(u, \ldots, v)} \leq C_{s-1}$, then the
  endpoint $v$ satisfies the $(C,s)$-stabilization condition at time
  $t$. In this case, for all
  \begin{equation*}
    t \in \left[t^-+\Lambda_s+ 2\stabint{s}, t^+\right]
  \end{equation*}
  we have
  \begin{equation*}
    \Psi^s_u(t) < 2\drift \Lambda_s 
    = \frac{\drift C_{s-1}}{(1-\drift)\mu}=\frac{C_{s-1}}{2\sigma}.
  \end{equation*}
\end{theorem}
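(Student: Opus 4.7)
The plan is to prove the claimed upper bound on $\Psi_u^s(t)$ by combining a coarse a priori bound of roughly $C_{s-1}/2$ with a rate-of-decrease argument showing that whenever $\Psi_u^s$ exceeds $2\drift\Lambda_s$, it shrinks at rate at least $\mu(1-\drift) - 2\drift > 0$. Since $\Lambda_s = C_{s-1}/(2(1-\drift)\mu)$ is precisely the time needed to reduce a potential of $C_{s-1}/2$ to $2\drift\Lambda_s$ at this rate, this yields the conclusion; the extra $2\stabint{s}$ buffer absorbs transient effects due to neighbor-set updates and ``warm-up'' of lower-level legality guarantees that are only available on sliding windows.

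First, I would establish the initial bound $\Psi_u^s(t) < C_{s-1}/2$ for $t \in [t^-,t^+]$. Observe that $u$ itself is the endpoint of the trivial level-$s$ path from $u$ of weight $0 \leq C_{s-1}$, so the hypothesis forces $u$ to satisfy the $(C,s)$-stabilization condition. Applying (S1) at $u$ with $s' = s-1$ yields $(C,s-1)$-legality at every node $v$ sufficiently close to $u$ in clock value, and by Lemma~\ref{lemma:simple_stuff}(i), every path $p = (u,\ldots,v)\in P_u^s(t)$ of weight $\kappa_p \leq C_{s-1}$ is also a level-$(s-1)$ path with endpoint $v$ in this range. Lemma~\ref{lemma:clocks} then gives $\psi_p^s(t) < -\kappa_p + C_{s-1}/2 \leq C_{s-1}/2$; for paths of weight exceeding $C_{s-1}$, condition (S0) together with Lemma~\ref{lemma:1legality} and the global-skew control of Theorem~\ref{thm:global} provides the same estimate.

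Second, I would prove the decrease. Fix $t$ with $\Psi_u^s(t) > 2\drift\Lambda_s$, let $p = (u_0,\ldots,u_k = v)$ be a path realizing $\Psi_u^s(t)$, and by replacing $p$ with a suitable prefix assume that the partial-$\psi^s$ value along $p$ is maximized at the final index $k$. Prefix-maximality at $k$ gives $\lc{v}(t) - \lc{u_{k-1}}(t) \geq (s+\tfrac{1}{2})\kappa_{\set{u_{k-1},v}}$ (up to the $\delta$-slack), matching the first clause of \SC\ at $v$; for the second clause, if some $w \in \N_v^s(t)$ satisfied $\lc{w}(t) - \lc{v}(t)$ above the \SC\ threshold, then Lemma~\ref{lemma:insertion} combined with (S2) (which rules out spurious level-$s$ insertion events in the relevant window) would allow us to extend $p$ by $w$ into a level-$s$ path of strictly larger $\psi^s$-value, contradicting maximality. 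By Lemma~\ref{lemma:algo_correct}, $v$ is therefore in slow mode and $l_v(t)\leq 1+\drift$. A symmetric analysis at $u_0 = u$ shows that \FC\ holds there: the first edge $\set{u,u_1}$ of $p$ already supplies a level-$s$ neighbor with $\lc{u_1}(t) - \lc{u}(t) \geq (s+\tfrac{1}{2})\kappa_{\set{u,u_1}} > s \kappa_{\set{u,u_1}}$, and any would-be far-behind level-$s$ neighbor $v'$ of $u$ (which is itself the endpoint of a level-$s$ path $(u,v')$ of weight at most $C_{s-1}$, hence also satisfies the stabilization hypothesis) would force $\Psi_{v'}^s > \Psi_u^s$ by prepending $v'$ to $p$; iterating this yields an infinite strictly increasing chain, which is impossible in a finite network. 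Hence $l_u(t) \geq (1+\mu)(1-\drift)$, and so $\frac{d^-}{dt}\Psi_u^s(t) \leq l_v(t) - l_u(t) \leq 2\drift - \mu(1-\drift) < 0$.

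A standard Gronwall-style integration of this left-derivative bound, starting from $\Psi_u^s(t^-) < C_{s-1}/2$, drives the potential below $2\drift\Lambda_s$ within time $\Lambda_s$; the additional $2\stabint{s}$ slack in the theorem accommodates the fact that the sliding-window legality statements of (S1) and (S2) are only usable once enough time has elapsed at all level-$s$-reachable endpoints. The principal obstacle is making the symmetric argument at $u$ fully rigorous: while the \SC\ trigger at $v$ follows cleanly from prefix-maximality, the \FC\ trigger at $u$ can be disrupted by far-behind level-$s$ neighbors, and the cascade must be controlled uniformly — handling the $\delta$-slack, the $\mu\tup$ drift terms from the edge-insertion asymmetry of Lemma~\ref{lemma:insertion}, and the bookkeeping needed to verify that each intermediate node in the cascade truly inherits all three parts of the stabilization condition so that the argument can be reapplied.
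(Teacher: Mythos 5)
Your overall plan is a fundamentally different approach from the paper: you try to prove the bound by showing a pointwise rate of decrease of $\Psi_u^s$ whenever it exceeds $2\drift\Lambda_s$, whereas the paper argues by contradiction, constructing an explicit backward-in-time \emph{chain} of nodes and paths, converting the assumed violation at time $t_0$ into a large $\Xi^s$-value at time $t_0-\Lambda_s$ via Lemma~\ref{lemma:pull}, and contradicting $(C,s-1)$-legality there. Unfortunately the rate argument cannot be made to close, for a reason the paper flags explicitly in the preamble to Lemma~\ref{lemma:pull}: ``we do not know what node $u$ itself does \ldots $u$ does not necessarily realize that it has a large skew, and it can be in either slow mode or fast mode.'' Your proposal needs $u$ to be in fast mode, and that cannot be established.

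Concretely, the gap is in the symmetric analysis at $u_0=u$. Prefix-maximality of $p=(u_0,\ldots,u_k)$ at index $k$ compares $\psi_{(u_0,\ldots,u_k)}^s$ with $\psi_{(u_0,\ldots,u_{k-1})}^s$ and yields information about the \emph{last} edge $\set{u_{k-1},u_k}$ only; it gives no lower bound on $L_{u_1}-L_{u_0}$. The quantity $\Psi_u^s$ is a maximum over paths \emph{starting at} $u$, so no comparison with paths starting at $u_1$ is available — the claimed suffix-type inequality $L_{u_1}(t)-L_u(t)\geq(s+\tfrac{1}{2})\kappa_{\set{u,u_1}}$ simply does not follow. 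Worse, the rescue via the second clause of \FC\ is also broken: if $v'$ is a far-behind level-$s$ neighbor of $u$, prepending $v'$ to $p$ changes $\psi$ by $L_{u_0}(t)-L_{v'}(t)-(s+\tfrac{1}{2})\kappa_{\set{v',u_0}}$, and ``far behind'' (the negation of the second \FC\ clause) only gives $L_{u_0}-L_{v'}>s\kappa_{\set{u_0,v'}}+2\mu\tup_{\set{u_0,v'}}$, so the increment is bounded by $2\mu\tup_{\set{u_0,v'}}-\tfrac{1}{2}\kappa_{\set{u_0,v'}}$, which is \emph{negative} by~\eqref{eq:def_kappa}. So the prepended path has a smaller $\psi^s$-value, there is no strictly increasing chain, and no contradiction arises. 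Since $l_u(t)$ is not controlled, $\frac{d^-}{dt}\Psi_u^s(t)\leq l_v(t)-l_u(t)$ is not sign-definite and the Gronwall integration does not start. The paper's chain construction in Section~\ref{sec:convergence} is precisely the machinery needed to route around this: instead of controlling $u$'s mode, it tracks a sequence $v_0,\ldots,v_{\ell-1}$ of nodes going back in time, using forward/backward steps and Lemma~\ref{lemma:pull} (which controls the mode of the \emph{trailing} endpoint of $\Xi$-maximizing paths, not the mode of the node whose potential is being tracked), until the accumulated skew forces $\Xi^s > C_{s-1}/2$ and contradicts the lower-level legality delivered by the stabilization condition.

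Two smaller points are also worth flagging. First, your argument that $v=u_k$ satisfies \SC\ omits the asymmetry of the directed estimate graph: a node $w'\in\N_v^s(t)$ need not have $v\in\N_{w'}^s(t)$, so the extended path is not automatically in $P_u^s(t)$; the paper handles this with the $\tau_e$-shifted version of Lemma~\ref{lemma:paths}, and this accounts for the otherwise-mysterious $\mu(1+\drift)\tau$ term in \SC. Second, the $2\stabint{s}$ slack is not merely a ``warm-up'' buffer for (S1)/(S2): it absorbs the backward-in-time jumps of up to $\stabint{s}$ produced by Lemma~\ref{lemma:pull}, and is intrinsic to the chain construction rather than an artifact of sliding-window legality.
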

Proving this theorem is technically challenging, but self-contained. Therefore,
we postpone its proof to Section~\ref{sec:convergence}, in order to show how it
is used to establish the desired gradient skew first.

\subsubsection{Derivation of Skew Bounds}

We now have all the necessary technical tools to prove the gradient skew bound
of our algorithm. As pointed out earlier, our algorithm has some
self-stabilization~\cite{dijkstra74} properties in the following sense: Even if
we start the algorithm from a configuration in which no non-trivial gradient
skew bound holds, the system adapts and converges to a state in which the
desired gradient skew bound holds. Edge insertion exploits this by adding edges
level by level, every time waiting until the respective level and all higher
levels have stabilized to small skews again. This entails that, at any given
time, for level-$s$ paths all but at most one level contributes a reduction by
factor $\sigma$ to the skew bound given by level-$s$ legality. Taking into
account the connection between logical clock values and real time, this
motivates the following definition.

\begin{definition}[Gradient Sequences for Static Global
Skew Estimate]\label{def:levelSeq}
Set
\begin{equation}\label{eq:def_delta}
  \Delta_s:= \left(1-\frac{1}{2^{s-1}}\right)\instime(\Ge)
  +\left(\frac{5(1+\mu)}{2(1-\rho)\mu}+2s\right)C_{s-1}.
\end{equation}
Fix a time $t$ and let $L(t)$ be maximal satisfying that $L(t)\leq L_u(t)$ for
all $u\in V$ and that $L(t)/\instime(\Ge)\in \mathbb{Z}$.
Then, for $s\in \mathbb{N}$, $u\in V$, and a parameter $\Gu$, define
\begin{equation*}
  C^{(t,u)}_s:=\left\{\begin{matrix}
  \frac{2\Gu}{\sigma^{s-1}}& \mbox{if }L_u(t)\geq
  L(t)+\Delta_s\\
  \frac{2\Gu}{\sigma^{\max\{s-2,0\}}}& \mbox{else.}\\
  \end{matrix}\right.
\end{equation*}
\end{definition}

The next lemma shows that the system is legal at all nodes and times with
respect to the above gradient sequences, granted that $\Gu$ is an upper bound on
the global skew and there is some initial time period of length
$\Lambda_2+3\Theta_2\in O(\Gu)$ during which the system is legal. For
simplicity, we also make the assumption that $\sigma \geq 3$. However, by
modifying the insertion times computed in Algorithm~\ref{algo:insertiontimes}
(using a base depending on $\sigma$), one can handle any $\sigma>1$; the
downside is that $\instime(\Ge)$ goes to infinity as $\sigma$ approaches $1$.
\begin{lemma}\label{lemma:legality_static}
Assume that $\sigma\geq 3$ and that the system is $C^{(t,u)}$-legal at all
times $t\in [t_0,t_0+\Lambda_2+3\Theta_2]$ and nodes $u$. If $\Gu\geq \G(t)$ for
all $t\geq t_0$, then the system is $C^{(t,u)}$-legal at all times $t\geq t_0$
and nodes~$u$.
\end{lemma}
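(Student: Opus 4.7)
The plan is to prove the lemma by strong induction on $s$, showing that for all times $t\geq t_0$ and all nodes $u$ the system is $(C^{(t,u)},s)$-legal, where $C^{(t,u)}$ is treated as a fixed gradient sequence for the given pair $(t,u)$. The base case $s=1$ is immediate: since $C_1^{(t,u)}=2\Gu\geq 2\G(t)$ by assumption on $\Gu$, Lemma~\ref{lemma:1legality} (applied with the constant level-$1$ value) yields $(C^{(t,u)},1)$-legality everywhere. For the inductive step with $s>1$, times $t\in[t_0,t_0+\Lambda_2+3\Theta_2]$ are covered by hypothesis, so I would focus on $t>t_0+\Lambda_2+3\Theta_2$ and split into two cases based on whether $L_u(t)\geq L(t)+\Delta_s$.

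In the ``weak'' case $L_u(t)<L(t)+\Delta_s$, I would first observe that by the shape of the sequence defined in Definition~\ref{def:levelSeq}, one has $C_s^{(t,u)}=C_{s'}^{(t,u)}$ for some $s'<s$ (concretely, the sequence $C^{(t,u)}$ is non-increasing but contains a single ``plateau'' at the first weak level, so $C_s^{(t,u)}$ always matches either $C_{s-1}^{(t,u)}$ or an earlier value). Combined with the monotonicity $\Psi_u^{s}\leq \Psi_u^{s'}$ from Lemma~\ref{lemma:simple_stuff}(ii), the inductive hypothesis at level $s'$ then gives $\Psi_u^s(t)\leq \Psi_u^{s'}(t)<C_{s'}^{(t,u)}/2=C_s^{(t,u)}/2$ as required.

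The ``strong'' case $L_u(t)\geq L(t)+\Delta_s$ is the main work. Here I would apply Theorem~\ref{thm:catch_up} with gradient sequence $C^{(t,u)}$, $t^-:=t-(\Lambda_s+2\Theta_s)$, and $t^+:=t$, yielding $\Psi_u^s(t)<C_{s-1}^{(t,u)}/(2\sigma)=C_s^{(t,u)}/2$. For this I must verify the $(C^{(t,u)},s)$-stabilization condition at every endpoint $v$ of level-$s$ paths from $u$ of weight at most $C_{s-1}^{(t,u)}$, for every $t''\in[t^-,t^+]$. Condition (S0) is trivial from $C_1^{(t,u)}=2\Gu\geq 2\G(t'')$. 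Condition (S1) requires showing $(C^{(t,u)},s')$-legality at $v$ at time $t''$ for all $s'<s$; the inductive hypothesis gives $(C^{(t'',v)},s')$-legality, so I need that $C^{(t'',v)}_{s'}\leq C^{(t,u)}_{s'}$. Since $L_u(t)\geq L(t)+\Delta_s$ implies (via the increasing nature of $\Delta_{s'}$ in $s'$ together with the $2sC_{s-1}$ slack) that $L_v(t'')\geq L(t'')+\Delta_{s'}$ for all relevant $v,t''$, both sequences take the ``strong'' value at level $s'$ and the inequality in fact becomes equality. The proximity bound used here is $|L_u(t)-L_v(t'')|\leq s'C_{s'-1}+\big(2\drift+\mu(1+\drift)\big)\Theta_{s'}+(1+\rho)(1+\mu)\Theta_s$, which is precisely the margin built into $\Delta_s$ by the term $(5(1+\mu)/(2(1-\rho)\mu)+2s)C_{s-1}$.

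Condition (S2) is what I expect to be the main obstacle: showing $|L_v(t'')-T_s|\geq (1+\mu)(1+\rho)\Theta_s+sC_{s-1}$ for every $T_s\in\Tset_s$. The insertion times $T_s^{\set{u,v}}$ are located at offsets $(1-1/2^{s-1})\instime(\Gu)$ from anchor points $T_0^{\set{u,v}}$ which are themselves multiples of $\instime(\Gu)$ at least $L(t)$. Thus all level-$s$ insertion times are congruent to $(1-1/2^{s-1})\instime(\Gu)\bmod\instime(\Gu)$ relative to the global anchor lattice. The $(1-1/2^{s-1})\instime(\Gu)$ term in $\Delta_s$ is chosen precisely so that once $L_u(t)\geq L(t)+\Delta_s$, the clock value $L_v(t'')$ at every relevant endpoint lies in a window that avoids all such level-$s$ insertion times by the required clearance. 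I would handle this by a case analysis on the residue of $L_v(t'')-L(t'')$ modulo $\instime(\Gu)$, carefully accounting for the drift over $[t^-,t^+]$ and over the path from $u$ to $v$. Once (S0)--(S2) are established, Theorem~\ref{thm:catch_up} closes the induction and the lemma follows.
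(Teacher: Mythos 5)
Your strong case (where $L_u(t)\geq L(t)+\Delta_s$) matches the paper's Case~1 in spirit: apply Theorem~\ref{thm:catch_up} with $t^-=t-\Lambda_s-2\Theta_s$ and $t^+=t$, and verify the $(C,\bar{s})$-stabilization condition for all endpoints $v$ of short level-$s$ paths from $u$. That part is on the right track.

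Your weak case, however, has a genuine gap. You claim that when $L_u(t)<L(t)+\Delta_s$ one has $C_s^{(t,u)}=C_{s'}^{(t,u)}$ for some $s'<s$ because the sequence has ``a single plateau,'' and then invoke Statement~(ii) of Lemma~\ref{lemma:simple_stuff}. This only works at the \emph{first} weak level. Since $\Delta_s$ is increasing in $s$, there is a threshold $s^*$ with $L_u(t)\geq L(t)+\Delta_s$ exactly for $s\leq s^*$. At $s=s^*+1$ you do get $C_{s^*+1}^{(t,u)}=2\Gu/\sigma^{s^*-1}=C_{s^*}^{(t,u)}$, so the plateau argument (equivalently, Statement~(iii) of Lemma~\ref{lemma:simple_stuff}) closes that case. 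But for any $s>s^*+1$, both $s$ and $s-1$ are weak, so $C_s^{(t,u)}=2\Gu/\sigma^{s-2}$ while $C_{s-1}^{(t,u)}=2\Gu/\sigma^{\max\{s-3,0\}}=\sigma\cdot C_s^{(t,u)}$. No value $C_{s'}^{(t,u)}$ with $s'<s$ equals $C_s^{(t,u)}$, and $\Psi_u^s\leq\Psi_u^{s-1}<C_{s-1}^{(t,u)}/2$ is a factor of $\sigma$ too weak to give $(C^{(t,u)},s)$-legality. So the ``deep weak'' case $L_u(t)<L(t)+\Delta_{s-1}$ cannot be dispatched by monotonicity.

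That deep weak case is exactly the paper's Case~2, and it requires its own application of Theorem~\ref{thm:catch_up} with the all-weak gradient sequence $C_s:=2\Gu/\sigma^{\max\{s-2,0\}}$. There the verification profile is different from your strong case: (S1) is actually \emph{easier}, since $C_s\leq C_s^{(t',w)}$ holds unconditionally for all $(t',w)$; but (S2) is harder, because one must sandwich $L_v(t')$ between the nearest level-$\bar{s}$ insertion times on both sides, exploiting the ceiling $L_v(t')\leq L_u(t')+\bar{s}C_{\bar{s}-1}<L(\bar{t})+\Delta_{\bar{s}-1}+\bar{s}C_{\bar{s}-1}$ (available precisely because $L_u(\bar{t})<L(\bar{t})+\Delta_{\bar{s}-1}$), rather than the floor you use in the strong case. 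Without adding this second branch, your induction step does not go through for levels beyond the plateau.

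One further structural note: the paper argues by contradiction over a minimal violation time $\bar{t}$ and a minimal violated level $\bar{s}$, and the minimality of $\bar{s}$ is exactly what rules out $\bar{s}=s^*+1$ (via $C_{\bar{s}}^{(\bar{t},u)}\neq C_{\bar{s}-1}^{(\bar{t},u)}$), cleanly reducing to two cases. Your induction on $s$ over all $t$ can be made to work, but it must still distinguish the plateau level (handled by monotonicity) from the deeper weak levels (handled by a second Theorem~\ref{thm:catch_up} application with the weak sequence), rather than lumping them together.
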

\begin{proof}
Assume for contradiction that there is a node $u\in V$ and a minimal time
$\bar{t}> t_0+\Lambda_2+3\Theta_2$ violating
$C^{(\bar{t},u)}$-legality.\footnote{As logical clocks are continuous, the set
of times when the system is not legal at some node is closed. Thus, in any
execution in which the claim of the lemma does not hold, such a minimal time
exists.} Because $C^{(t,u)}_1\geq 2\Gu\geq 2\G(t)$ for any $t\geq t_0$, (S0) is
satisfied at time $t$ at any node $v\in V$. In particular, this is true for node
$u$ and time $\bar{t}$, yielding that the system is $(C^{(\bar{t},u)},1)$-legal
at node $u$ and time $\bar{t}$ by Lemma~\ref{lemma:1legality}. Let $\bar{s}>1$
be the minimal level on which legality is violated at node $u$ and time
$\bar{t}$. By Lemma~\ref{lemma:simple_stuff}, this implies that
$C_{\bar{s}}^{(\bar{t},u)}\neq C_{\bar{s}-1}^{(\bar{t},u)}$. Therefore,
$L_u(\bar{t})\notin
[L(\bar{t})+\Delta_{\bar{s}-1},L(\bar{t})+\Delta_{\bar{s}})$.

We show that the preconditions of Theorem~\ref{thm:catch_up} are satisfied at
node $u$ for level $\bar{s}$, times $t^+=\bar{t}$ and
$t^-=\bar{t}-\Lambda_s-2\Theta_{\bar{s}}$, and the gradient sequence given by
\begin{equation*}
C_s:=
\begin{cases}
\frac{2\Gu}{\sigma^{s-1}} & \mbox{if }
C^{(\bar{t},u)}_{\bar{s}}=\frac{2\Gu}{\sigma^{\bar{s}-1}}\\
\frac{2\Gu}{\sigma^{\max\{s-2,0\}}} & \mbox{if }
C^{(\bar{t},u)}_{\bar{s}}=\frac{2\Gu}{\sigma^{\bar{s}-2}}.
\end{cases}
\end{equation*}
This leads to the desired contradiction, as then
\begin{equation*}
\Psi^{\bar{s}}_u(\bar{t})<\frac{C_{\bar{s}-1}}{2\sigma}
=\frac{C^{(\bar{t},u)}_{\bar{s}}}{2}.
\end{equation*}

Hence, it remains to show that for any $t\in
[\bar{t}-\Lambda_s-2\Theta_{\bar{s}},\bar{t}]$ and any path $(u,\ldots,v)\in
P^{\bar{s}}_u(t)$ with $\kappa_{(u,\ldots,v)}\leq C^{(\bar{t},u)}_{\bar{s}-1}$,
$v$ satisfies the $(C^{(\bar{t},u)},\bar{s})$-stabilization condition at time
$t$. As $C_1= C_1^{(\bar{t},u)}\geq 2\G(t)$, (S0) holds at all times.

For (S1) and (S2), we will make a case distinction. However, both cases will use
the following observation. The system is $(C^{(t,v)},\bar{s})$-legal at
node $v$ and $(C^{(t,u)},\bar{s})$-legal at node $u$ at any time $t\in
[t_0,\bar{t})$ by the minimality of $\bar{t}$. As $C_{\bar{s}-1}\leq
\min\{C^{(t,u)}_{\bar{s}},C^{(t,v)}_{\bar{s}}\}$, this entails
$(C,{\bar{s}-1})$-legality at both nodes and we can apply
Lemma~\ref{lemma:clocks} to bound
\begin{equation}\label{eq:diff_u_v}
|L_v(t)-L_u(t)|\leq 
\left(\bar{s}-\frac{1}{2}\right)\kappa_{(u,\ldots,v)}+\frac{C_{\bar{s}-1}}{2}
\leq  \bar{s}C_{\bar{s}-1}.
\end{equation}
As clocks are continuous, this bound also applies for $t=\bar{t}$. We now
proceed to the case distinction.

\textbf{Case 1:} $L(\bar{t})\geq L(\bar{t})+\Delta_{\bar{s}}$, i.e.,
$C^{(\bar{t},u)}_{\bar{s}}=\frac{2\Gu}{\sigma^{\bar{s}-1}}$. Then
\begin{eqnarray}
L_v(t)&\sr{eq:diff_u_v}{\geq}& L_u(t)-\bar{s}C_{\bar{s}-1}\nonumber\\
&\geq & L_u(\bar{t})-(1+\rho)(1+\mu)(\bar{t}-t)-\bar{s}C_{\bar{s}-1}\nonumber\\
&\geq & L_u(\bar{t})
-(1+\rho)(1+\mu)\left(\frac{1}{2(1-\rho)\mu}+\frac{2}{(1+\rho)\mu}+\bar{s}\right)
C_{\bar{s}-1}\nonumber\\
&= & 
L_u(\bar{t})-\left(\frac{(5-3\rho)(1+\mu)}{2(1-\rho)\mu}+\bar{s}\right)C_{\bar{s}-1}\nonumber\\
&\geq & 
L(\bar{t})+\Delta_{\bar{s}}-\left(\frac{(5-3\rho)(1+\mu)}{2(1-\rho)\mu}+\bar{s}\right)C_{\bar{s}-1}
\label{eq:L_v_large}\\
&>&
L(\bar{t})+\left(1-\frac{1}{2^{\bar{s}-1}}\right)\instime(\Ge)
-\left(\frac{(1+\rho)(1+\mu)}{(1-\rho)\mu}+\bar{s}\right)C_{\bar{s}-1}
\label{eq:L_v_large}\nonumber\\
&=& L(\bar{t})+\left(1-\frac{1}{2^{\bar{s}-1}}\right)\instime(\Ge)
+(1+\mu)(1+\rho)\Theta_{\bar{s}}+\bar{s}C_{\bar{s}-1}.\nonumber
\end{eqnarray}
Hence, if (S2) is violated for some $T_{\bar{s}}\in \mathbb{T}_{\bar{s}}$, then
$T_{\bar{s}}>L(\bar{t})+\left(1-\frac{1}{2^{\bar{s}-1}}\right)\instime(\Ge)$.
The smallest possible such $T_{\bar{s}}$ is
$L(\bar{t})+\left(2-\frac{1}{2^{\bar{s}-1}}\right)\instime(\Ge)$. However, by
definition of $L(\bar{t})$,
\begin{eqnarray*}
L_v(t)&\leq & L_v(\bar{t})\\
&\leq & \min_{w\in V}\{L_w(\bar{t})\} + \G(\bar{t})\\
&< & L(\bar{t})+\instime(\Ge) + \Ge\\
&\leq & L(\bar{t})+\frac{3\instime(\Ge)}{2}
-\left(\frac{1+\mu}{\mu}+2\right)2\Gu\\
&= & L(\bar{t})+\frac{3\instime(\Ge)}{2}
-(1+\mu)(1+\rho)\Theta_2 - 2C_1\\
&\stackrel{\sigma\geq 3}{\leq} &
L(\bar{t})+\left(2-\frac{1}{2^{\bar{s}-1}}\right)\instime(\Ge)
-(1+\mu)(1+\rho)\Theta_{\bar{s}} - \bar{s}C_{\bar{s}-1},
\end{eqnarray*}
as $\bar{s}\geq 2$. Therefore, (S2) is satisfied.

Next, consider any $s\in \{2,\ldots,\bar{s}-1\}$ and node $w\in V$ with
$|L_v(t)-L_w(t)|\leq sC_{s-1}+(2\rho +\mu(1+\rho))\Theta_s)$. We have that
\begin{eqnarray*}
L_w(t)&\geq & L_v(t)-sC_{s-1}-(2\rho+\mu(1+\rho))\Theta_s\\
&\sr{eq:L_v_large}{\geq} &
L(\bar{t})+\Delta_{\bar{s}}-\left(\frac{(5-3\rho)(1+\mu)}{2(1-\rho)\mu}+\bar{s}\right)C_{\bar{s}-1}
-\left(\frac{2\rho}{(1+\rho)\mu}+s+1\right)C_{s-1}\\
&> &
L(\bar{t})+\Delta_{\bar{s}}-\left(\frac{(5-3\rho)(1+\mu)}{2(1-\rho)\mu}+\bar{s}\right)C_{\bar{s}-1}
-\left(\frac{1}{\sigma}+s+1\right)C_{s-1}\\
&= &
L(\bar{t})+\Delta_{\bar{s}}-\left(\frac{(5-3\rho)(1+\mu)}{2(1-\rho)\mu}+\bar{s}+1\right)C_{\bar{s}-1}
-(s+1)C_{s-1}\\
&\stackrel{\sigma\geq 3}{\geq} &
L(\bar{t})+\Delta_{s+1}-\left(\frac{(5-3\rho)(1+\mu)}{2(1-\rho)\mu}+s+2\right)C_s
-(s+1)C_{s-1}\\
&>& L(\bar{t})+\left(1-\frac{1}{2^s}\right)\instime(\Ge)-(s+1)C_{s-1}\\
&=&
L(\bar{t})+\Delta_s+\frac{\instime(\Ge)}{2^s}
-\left(\frac{5(1+\mu)}{2(1-\rho)\mu}+3s+1\right)\frac{2\Gu}{\sigma^{s-2}}\\
&\stackrel{\sigma\geq 3}{\geq} &
L(\bar{t})+\Delta_s+\frac{1+\mu}{\mu}\cdot C_{s-1}\\
&=& L(\bar{t})+\Delta_s+(1+\mu)(1+\rho)\Theta_s.
\end{eqnarray*}
For any time $t'\in [t-\Theta_s]$, this yields that
\begin{equation*}
L_w(t')\geq L_w(t)-(1+\mu)(1+\rho)\Theta_s \geq L(\bar{t})+\Delta_s.
\end{equation*}
We conclude that $C_s^{(t',w)}=C_s$, which by minimality of $\bar{t}$ and
$\bar{s}$ implies that the system is $(C,s)$-legal at node $w$ and time $t'$,
i.e., (S1) is satisfied. Therefore, all all preconditions of
Theorem~\ref{thm:catch_up} are satisfied and Case 1 leads to a contradiction.

\textbf{Case 2:} $L_u(\bar{t})<L(\bar{t})+\Delta_{\bar{s}-1}$, i.e.,
$C^{(\bar{t},u)}_{\bar{s}}=\frac{2\Gu}{\sigma^{\bar{s}-2}}$. Then, for
any $s\in \{2,\ldots,\bar{s}-1\}$, time $t$, and node $w\in V$, we have that
\begin{equation*}
C_s= \frac{2\Gu}{\sigma^{\max\{s-2,0\}}}\leq C_s^{(t,w)}.
\end{equation*}
Thus, as $\bar{t}\geq t_0+\Lambda_s+3\Theta_2\geq
t_0+\Lambda_s+2\Theta_{\bar{s}}+\Theta_s$, by minimality of $\bar{t}$ (S1) is
satisfied for all $t\in [\bar{t}-\Lambda_s-2\Theta_{\bar{s}},\bar{t}]$.

Concerning (S2), note that
\begin{eqnarray*}
\frac{\instime(\Ge)}{2^{\bar{s}-1}} &\stackrel{\sigma\geq 3}{>}&
\left(\frac{5(1+\mu)}{2(1-\rho)\mu}+2(\bar{s}-1)\right)\frac{2\Gu}{\sigma^{\max\{\bar{s}-3,0\}}}
+\left(\frac{1+\mu}{\mu}+2\bar{s}\right)\frac{2\Gu}{\sigma^{\bar{s}-2}}\\
&=& \left(\frac{5(1+\mu)}{2(1-\rho)\mu}+2(\bar{s}-1)\right)C_{\bar{s}-2}
+(1+\mu)(1+\rho)\Theta_{\bar{s}}+2\bar{s}C_{\bar{s}-1},
\end{eqnarray*}
where we used that the ratio between left- and right-hand side is minimized for
$\bar{s}=3$ (as opposed to minimal $\bar{s}=2$ like in other places).

Together with \eqref{eq:diff_u_v}, this yields that
\begin{eqnarray*}
L_v(t)&\sr{eq:diff_u_v}{\leq} & L_u(t)+\bar{s}C_{\bar{s}-1}\\
&< & L(\bar{t})+\Delta_{\bar{s}-1}+\bar{s}C_{\bar{s}-1}\\
& =& L(\bar{t})+\left(1-\frac{1}{2^{\bar{s}-2}}\right)\instime(\Ge)
+\left(\frac{5(1+\mu)}{2(1-\rho)\mu}+2(\bar{s}-1)\right)C_{\bar{s}-2}+\bar{s}C_{\bar{s}-1}\\
&\leq & L(\bar{t})+\left(1-\frac{1}{2^{\bar{s}-1}}\right)\instime(\Ge)
-(1+\mu)(1+\rho)\Theta_{\bar{s}}-\bar{s}C_{\bar{s}-1}.
\end{eqnarray*}
As $L(\bar{t})+\left(1-\frac{1}{2^{\bar{s}-1}}\right)\instime(\Ge)$
is the smallest possible logical time that is at least $L(\bar{t})$ and in
$\mathbb{T}_{\bar{s}}$, (S2) cannot be violated for any $T_{\bar{s}}\geq
L(\bar{t})$.
On the other hand,
\begin{eqnarray*}
L_v(t)&\geq & L_v(\bar{t})-(1+\mu)(1+\rho)(\bar{t}-t)\\
&\geq & L(\bar{t})-(1+\mu)(1+\rho)(\Lambda_s+2\Theta_{\bar{s}})\\
&> & L(\bar{t})-\frac{5(1+\mu)}{2(1-\rho)\mu}\cdot
\frac{2\Gu}{\sigma^{\bar{s}-2}}\\
&\stackrel{\sigma\geq 3}{\geq} & L(\bar{t})-\frac{\instime(\Ge)}{2^{\bar{s}-1}}
+\left(\frac{1+\mu}{\mu}+\bar{s}\right)\frac{2\Gu}{\sigma^{\bar{s}-2}}\\
&=& L(\bar{t})-\frac{\instime(\Ge)}{2^{\bar{s}-1}}
+(1+\mu)(1+\rho)\Theta_{\bar{s}}+\bar{s}C_{\bar{s}-1}.
\end{eqnarray*}
Hence, (S2) can also not be violated with respect to any
$T_{\bar{s}}<L(\bar{t})$.
Thus, again all preconditions of Theorem~\ref{thm:catch_up} are met and the proof is
complete.
\end{proof}
In order to obtain a bound on the gradient skew, it remains to show that there
is an interval $[t_0,t_0+\Lambda_2+3\Theta_2]$ during which the system is
$C^{(t,u)}$-legal at all times $t\geq t_0$ and nodes~$u$. If $\Gu\approx \Ge$,
the above reasoning is sufficient for this purpose.
\begin{corollary}\label{coro:legality_static}
If $\Gu\geq \G(t)$ at all times $t$ and $\sigma\geq 3$, the system is
$C^{(t,u)}$-legal at all times $t \geq \frac{2\instime(\Ge)}{1-\rho}$ and
nodes~$u$.
\end{corollary}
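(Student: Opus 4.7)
The plan is to apply Lemma~\ref{lemma:legality_static} with $t_0 = 0$. Since $\Lambda_2 + 3\Theta_2 = O(\Gu/\mu) \le 2\instime(\Ge)/(1-\rho)$ (this follows from the choice of $\instime$ in~\eqref{eq:instime_static} together with $\Gu \le O(\Ge)$), it is enough to establish $C^{(t,u)}$-legality at every node $u$ for every $t \in [0, \Lambda_2 + 3\Theta_2]$; the Lemma then yields legality at all $t \ge 0$, and in particular for all $t \ge 2\instime(\Ge)/(1-\rho)$.

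The base case $t = 0$ is immediate: all logical clocks are initialized to $0$, so $L_v(0) - L_u(0) = 0$ for every pair $u, v$ and $\psi_p^s(0) = -(s+1/2)\kappa_p \le 0$ for every path $p$ and every $s$. The trivial path attains the maximum in Definition~\ref{def:psi_local}, giving $\Psi_u^s(0) = 0 < C_s^{(0,u)}/2$. For $t \in (0, \Lambda_2 + 3\Theta_2]$, Theorem~\ref{thm:global}(I) yields $\G(t) \le 2\rho t = O(\rho \Gu / \mu) = O(\Gu/\sigma)$, since $\G(0) = 0$. Moreover, $L_u(t) \le (1+\mu)(1+\rho)(\Lambda_2 + 3\Theta_2)$ is far smaller than $\Delta_s \ge \instime(\Ge)/2$ (by~\eqref{eq:def_delta} and~\eqref{eq:instime_static}), so the ``else'' branch in Definition~\ref{def:levelSeq} is active throughout the initial interval and $C_s^{(t,u)} = 2\Gu / \sigma^{\max\{s-2,0\}}$ for every $s$ and $u$.

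It then remains to verify $\Psi_u^s(t) < C_s^{(t,u)}/2$ for every $s$ throughout $(0, \Lambda_2 + 3\Theta_2]$. For $s \in \{1,2\}$, $C_s^{(t,u)}/2 = \Gu$, and strict inequality holds because the trivial path contributes $0$ while every non-trivial path $p$ satisfies $\psi_p^s(t) \le \G(t) - (s+1/2)\kappa_p < \G(t) \le \Gu$; the maximum over the finite path set is therefore strictly less than $\Gu$. For $s \ge 3$, $C_s^{(t,u)}/2 = \Gu/\sigma^{s-2}$: combining the smallness of $\G(t) = O(\Gu/\sigma)$ with the penalty $(s+1/2)\kappa_p$ appearing in $\psi_p^s$ (which grows with $s$ and suppresses the contribution of every non-trivial path) yields the required bound on $\Psi_u^s(t)$.

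The main obstacle is the intermediate range of $s$, where neither the trivial skew bound nor the penalty term alone suffices; this is exactly where the constants in~\eqref{eq:instime_static} and the hypothesis $\sigma \ge 3$ are used, in the same spirit as the case analysis in the proof of Lemma~\ref{lemma:legality_static}. Once legality on the initial interval is established, Lemma~\ref{lemma:legality_static} extends it to all $t \ge 0$, which in particular proves the claim for $t \ge 2\instime(\Ge)/(1-\rho)$.
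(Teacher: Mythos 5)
Your strategy — apply Lemma~\ref{lemma:legality_static} with $t_0=0$ after directly establishing $C^{(t,u)}$-legality on $[0,\Lambda_2+3\Theta_2]$ — has a genuine gap, and it is exactly the gap you flag but do not close. The claim you need is $\Psi_u^s(t)<C_s^{(t,u)}/2$ on the initial interval for all $s$, but this is not a consequence of the global skew bound alone. At $t\approx\Lambda_2+3\Theta_2\approx 7\Gu/\mu$, Theorem~\ref{thm:global}(I) gives $\G(t)\leq 2\rho t\approx 7(1-\rho)\Gu/\sigma$, which for $s\geq 3$ exceeds the target $C_s^{(t,u)}/2=\Gu/\sigma^{s-2}$ by a constant factor (roughly $7$ already at $s=3$). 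The subtractive penalty $(s+\tfrac12)\kappa_p$ in $\psi_p^s$ does not rescue this: for a path of weight $\kappa_p$ small compared to $\Gu/\sigma$, you would need the a priori skew bound $L_v-L_u$ to scale \emph{with} $\kappa_p$, which is precisely the gradient property being proved — the argument is circular. Establishing legality on $(0,\Lambda_2+3\Theta_2]$ in fact requires the same Theorem~\ref{thm:catch_up} machinery (with its stabilization-condition hypotheses, notably (S2)) that the lemma itself uses; it cannot be done by inspection.

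The paper's proof does something structurally different: it does not try to verify $C^{(t,u)}$-legality on the initial interval. Instead it introduces a weaker, ``ramped'' gradient sequence $\bar{C}^{(t,u)}$ that has $\bar{C}_s^{(t,u)}=2\Gu$ for \emph{every} $s$ throughout $[0,\Lambda_2+3\Theta_2]$ (because the activation threshold is taken relative to $L(0)$, not $L(t)$), so initial legality is literally the global skew bound and is free. It then \emph{re-runs} the contradiction argument of Lemma~\ref{lemma:legality_static} for $\bar{C}$ rather than invoking the lemma as a black box, using the observation that the case $L_u(\bar{t})<L(0)+\Delta_{\bar{s}}$ is trivially dispatched by Lemma~\ref{lemma:simple_stuff} because there $\bar{C}_{\bar{s}}=\bar{C}_{\bar{s}-1}$. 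Finally it checks that $\bar{C}^{(t,u)}=C^{(t,u)}$ once $t\geq\instime(\Ge)/(1-\rho)$, which converts $\bar{C}$-legality into the stated $C^{(t,u)}$-legality. To repair your proof you would have to replace the direct verification on the initial interval by this ramp-up construction (or an equivalent device that makes the early levels vacuous); constants and $\sigma\geq 3$ alone will not make the direct verification go through.
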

\begin{proof}
Observe that within $\frac{\instime}{1-\rho}$ time, it must occur that
$L(t)=\min_{u\in V}\{L_u(t)\}$. Shifting the time axis, it is hence sufficient
to show the claim for all times $t\geq \frac{\instime(\Ge)}{1-\rho}$ under the
assumption that $L(0)=\min_{u\in V}\{L_u(0)\}$.

We modify $C^{(t,u)}$ to ``switch on'' its guarantees level by level. That is,
we consider the gradient sequence
\begin{equation*}
\bar{C}_s^{(t,u)}:=\begin{cases}
C_1^{(t,u)} & \mbox{if }s=1\\
C_s^{(t,u)} & \mbox{if }s>1 \mbox{ and }L_u(t)\geq L(0)+\Delta_s\\
\bar{C}_{s-1}^{(t,u)} & \mbox{if }s>1 \mbox{ and }L_u(t)<L(0)+\Delta_s.
\end{cases}
\end{equation*}
The proof is now analogous to the one of Lemma~\ref{lemma:legality_static}, with
the exception that if $L_u(\bar{t})<L(0)+\Delta_{\bar{s}}$, then
$C_{\bar{s}}^{(t,u)}=C_{\bar{s}-1}^{(t,u)}$ and Lemma~\ref{lemma:simple_stuff}
immediately yields a contradiction. As the case $L_u(\bar{t})\geq
L(0)+\Delta_{\bar{s}}$ only uses bounds for levels $s\in\{2,\ldots,\bar{s}-1\}$
at logical times of at least $L(0)+\Delta_s$ (and on level $1$ at times $t\geq
0$), the weaker guarantees offered by $\bar{C}$ are sufficient. Because during
the time interval $[0,0+\Lambda_2+3\Theta_2]$ we have that
$\bar{C}_s^(t,u)=\bar{C}_1^{(t,u)}=2\Gu$, the prerequisite that the system is
legal at all nodes during this interval is satisfied.

Finally, observe that $L\left(\frac{\instime}{1-\rho}\right)\geq L(0)+\instime$,
implying that $\bar{C}^{(t,u)}=C^{(t,u)}$ at times $t\geq
\frac{\instime}{1-\rho}$.
\end{proof}

We can now infer that the system achieves a gradient skew based on $\Gu$,
an upper bound on the global skew that holds at all times, and all edges that
have been present for $O\left(\frac{\Ge}{\mu}\right)$ time, where $\Ge$ is the
a priori upper bound on the global skew that is known to the algorithm.
\begin{theorem}\label{thm:gradient_slow}
Suppose that $\sigma\geq 3$ and $\Gu\geq \G(t)$ for all times $t$. Denote by
$G_{\instime}(t)$ the graph on nodes $V$ with all edges $\{u,v\}$ that have been
continuously present for at least
$\frac{2\instime+\Ge+(1+\rho)(1+\mu)\delay_{\{u,v\}}}{1-\rho}\in
O(\frac{\Ge}{\mu})$ time. If path $p=(u,\ldots,v)$ exists in $G_{\instime}$ at
time $t$, it holds that
\begin{equation*}
|L_v(t)-L_w(t)|=\left(\log_{\sigma}\frac{\Gu}{\kappa_p}+O(1)\right)\kappa_p.
\end{equation*}
\end{theorem}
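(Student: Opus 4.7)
The plan is to combine Corollary~\ref{coro:legality_static}, which guarantees $C^{(t,\cdot)}$-legality at all nodes and all sufficiently late times, with Lemma~\ref{lemma:clocks}, which converts legality on a given level into a skew bound for any path that currently lies in that level's path set. The freedom to pick the level on which we evaluate the bound is what produces the logarithmic dependence on $\Gu/\kappa_p$.

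First I would verify that the path $p = (u_0,\ldots,u_k) = (u,\ldots,v)$ satisfies $p \in P^s_{u_0}(t)$ for every $s \in \nat$. By hypothesis, each edge $\{u_i,u_{i+1}\}$ has been continuously present for at least $\frac{2\instime + \Ge + (1+\rho)(1+\mu)\delay_{\{u_i,u_{i+1}\}}}{1-\rho}$ time. Tracing through the handshake in Listing~\ref{algo:events} and the computation in Listing~\ref{algo:insertiontimes}, and invoking Lemma~\ref{lemma:insertion} to get the symmetry of agreement on the insertion times, this duration is sufficient for both endpoints to discover the edge, exchange the $\mathrm{insertedge}$ message, agree on the common sequence $T_0^{\{u_i,u_{i+1}\}}, T_1^{\{u_i,u_{i+1}\}}, \ldots$, and then advance their logical clocks past $T_\infty^{\{u_i,u_{i+1}\}} = T_0^{\{u_i,u_{i+1}\}} + \instime$ by real time $t$. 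Hence $\{u_i,u_{i+1}\} \in E^s(t)$ for every $s$, so $p \in P^s_{u_0}(t)$ for every $s$. In particular, $t \geq 2\instime/(1-\rho)$, so Corollary~\ref{coro:legality_static} applies and gives $C^{(t,\cdot)}$-legality at every node.

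Next I would choose the level $s^\star := \lceil \log_\sigma(\Gu/\kappa_p) \rceil + 2$. By Definition~\ref{def:levelSeq}, regardless of whether $L_{u_0}(t)$ and $L_{u_k}(t)$ lie above $L(t) + \Delta_{s^\star}$, both $C^{(t,u_0)}_{s^\star}$ and $C^{(t,u_k)}_{s^\star}$ are at most $2\Gu/\sigma^{s^\star-2} \leq 2\kappa_p$. Setting $C_s := \max\{C^{(t,u_0)}_s, C^{(t,u_k)}_s\}$ produces a non-increasing gradient sequence (the pointwise maximum of two such sequences is non-increasing), and both endpoints are $(C,s^\star)$-legal at time $t$, since $\Psi^{s^\star}_{u_0}(t) < C^{(t,u_0)}_{s^\star}/2 \leq C_{s^\star}/2$ and analogously for $u_k$. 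Lemma~\ref{lemma:clocks} then yields
\begin{equation*}
|L_v(t) - L_u(t)| < \left(s^\star + \tfrac{1}{2}\right)\kappa_p + \tfrac{C_{s^\star}}{2} \leq \left(s^\star + \tfrac{3}{2}\right)\kappa_p = \left(\log_\sigma(\Gu/\kappa_p) + \BO(1)\right)\kappa_p,
\end{equation*}
which is the claimed bound.

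The main obstacle is the bookkeeping in the first step: one must carefully convert the real-time ``continuously present'' requirement into logical-time progress, and check that the interleaving of the $\Delta$, $\delay_{\{u_i,u_{i+1}\}}$, and $\tup_{\{u_i,u_{i+1}\}}$ waiting intervals in the insertion protocol all fit within the stated duration, so that indeed every edge of $p$ has been added to every level's neighbor set at both endpoints by time $t$. Once this is in hand, the remainder is a single application of Lemma~\ref{lemma:clocks} and the choice of $s^\star$ is forced by the requirement that $C_{s^\star}/2 \in \BO(\kappa_p)$.
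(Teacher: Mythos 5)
Your proof takes essentially the same route as the paper: both invoke Corollary~\ref{coro:legality_static} to get $C^{(t,\cdot)}$-legality at all sufficiently late times, then apply Lemma~\ref{lemma:clocks} at a level $s^\star \approx \lceil\log_\sigma(\Gu/\kappa_p)\rceil+2$ chosen so that $C_{s^\star}\in\BO(\kappa_p)$. The paper glosses over both of the side issues you address explicitly (that the path lies in $P^s_u(t)$ for all $s$, and that a common node-independent gradient sequence is needed for Lemma~\ref{lemma:clocks}; the paper simply uses the uniform upper bound $C_s=2\Gu/\sigma^{\max\{s-2,0\}}$ rather than your pointwise maximum), so your version is if anything slightly more careful.
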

\begin{proof}
As $\Gu\geq \G(t)$ for all $t$, we can apply
Corollary~\ref{coro:legality_static}. This shows that for all times $t\geq
\frac{2\instime}{1-\rho}$, the system is $C^{(t,u)}$-legal at each node $u$. In
particular, it is legal w.r.t.\ the gradient sequence given by
$C_s=\frac{2\Gu}{\sigma^{\max\{s-2,0\}}}$. At smaller times, $G_{\instime}$
contains no edges. Hence, for any path $p=(u,\ldots,v)$ that exists in
$G_{\instime}$ at such a time $t$ and any $s\in \nat$, Lemma~\ref{lemma:clocks}
yields that
\begin{equation*}
|L_v(t)-L_w(t)|\leq \left(s+\frac{1}{2}\right)\kappa_p+\frac{C_s}{2}.
\end{equation*}
Choosing $s=2+\left\lceil\log_{\sigma}\frac{\Gu}{\kappa_p}\right\rceil$, we have
that $C_s\leq \kappa_s$ and the statement of the theorem follows.
\end{proof}

This theorem has two shortcomings. First, even if $\Gu\ll \Ge$, it does not
provide any guarantee at times $t\ll \frac{\Ge}{\mu}$. Second, if the global
skew is large and then decreases later, the gradient skew bound does not adapt.
In the next section, we address these points.

\subsection{Fast Stabilization in Case of Small Global Skew}
We will again rely on Lemma~\ref{lemma:legality_static}, but need to handle the
case that $\Gu\ll \Ge$ differently than in Corollary~\ref{coro:legality_static}.
First, we prove the precondition of the lemma under the assumption that for
$O(\frac{\Gu}{\mu})$ time no edge is inserted, and then we show that within
$O(\frac{\Gu}{\mu})$ time, such a ``silent'' period must occur.
\begin{lemma}\label{lemma:stability_from_silence}
Set $\Gamma:=\frac{(29+8\mu)\Gu}{2(1-\rho)\mu}\in O(\frac{\Gu}{\mu})$. Suppose
that $\sigma \geq 3$, that $\G(t)\leq \Gu$ for all $t\in [t',t'+\Gamma]$, and
that no node inserts an edge during this interval. Then the system is
$C^{(t,u)}$-legal at all times $t\in [t'+\Gamma-\Lambda_2-3\Theta_2,t'+\Gamma]$
and nodes~$u$.
\end{lemma}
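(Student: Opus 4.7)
The plan is to adapt the minimal-counterexample argument of Lemma~\ref{lemma:legality_static}, replacing the large initial wait $2\instime(\Ge)/(1-\rho)$ used in Corollary~\ref{coro:legality_static} by the shorter silence window $\Gamma\in\BO(\Gu/\mu)$. Suppose for contradiction that $\bar{t}\in[t'+\Gamma-\Lambda_2-3\Theta_2,\,t'+\Gamma]$ is a minimal time at which $C^{(\bar{t},u)}$-legality fails at some node~$u$. Because $C_1^{(\bar{t},u)}=2\Gu\geq 2\G(\bar{t})$ by hypothesis, Lemma~\ref{lemma:1legality} yields level-$1$ legality everywhere in $[t',t'+\Gamma]$, so the minimal violated level $\bar{s}$ is at least~$2$. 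As in Lemma~\ref{lemma:legality_static}, Lemma~\ref{lemma:simple_stuff}(iii) then forces $C^{(\bar{t},u)}_{\bar{s}}\neq C^{(\bar{t},u)}_{\bar{s}-1}$, and I would split into the same two cases $L_u(\bar{t})\geq L(\bar{t})+\Delta_{\bar{s}}$ and $L_u(\bar{t})<L(\bar{t})+\Delta_{\bar{s}-1}$ using the same auxiliary gradient sequence~$C$.

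The bulk of the work is an application of Theorem~\ref{thm:catch_up} at node~$u$, level~$\bar{s}$, on the interval $[\bar{t}-\Lambda_{\bar{s}}-2\Theta_{\bar{s}},\bar{t}]$. Containment of this interval in $[t',t'+\Gamma]$ follows from $\Lambda_{\bar{s}}+2\Theta_{\bar{s}}\leq\Lambda_2+2\Theta_2$ together with the definition of $\Gamma$, which was chosen to ensure $\Gamma\geq 2\Lambda_2+5\Theta_2$. Condition (S0) of the stabilization condition is immediate from $\G(t)\leq \Gu$. Condition (S1) is obtained from the minimality of $(\bar{t},\bar{s})$ exactly as in Lemma~\ref{lemma:legality_static}: the key bound $|L_v(t)-L_u(t)|\leq \bar{s}C_{\bar{s}-1}$ derived via Lemma~\ref{lemma:clocks} carries over verbatim, because it uses only level-$(\bar{s}-1)$ legality at earlier times, which is guaranteed by minimality; in particular the level-$s'$ conditions of (S1) propagate back through the needed window $[t-\Theta_{s'},t]$ without leaving $[t',\bar{t}]$.

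The main obstacle, and the new ingredient relative to Lemma~\ref{lemma:legality_static}, is the verification of (S2) using the silence hypothesis in place of the lattice-plus-$L(\bar{t})$ argument. The silence assumption implies that for every $s$ no node's logical clock crosses any $T_s\in\Tset_s$ during $[t',t'+\Gamma]$, so each such $T_s$ lies either below $\min_w L_w(t')$ or above $\max_w L_w(t'+\Gamma)$. Since the elements of $\Tset_{\bar{s}}$ sit on an arithmetic progression with spacing $\instime(\Ge)$ (inherited from the definition of $T_0^{\{u,v\}}$ in Algorithm~\ref{algo:insertiontimes}), a short case distinction on whether the nearest $T_{\bar{s}}$ below $L_v(t')$ is close to or far from $L_v(t')$ gives the needed separation: in the first case $L_v(t)-T_{\bar{s}}\geq(1-\rho)(t-t')\in\Omega(\Gu/\mu)$, and in the second case the next lattice point above is at least $\instime(\Ge)-\BO(\Gu/\mu)$ above $L_v(t)$, where the bound $|L_v(t)-L_u(\bar{t})|\leq \bar{s}C_{\bar{s}-1}+\BO(\Gu)$ from Lemma~\ref{lemma:clocks} controls the displacement. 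Tightening the resulting inequalities to the explicit constant $(29+8\mu)/(2(1-\rho)\mu)$ in $\Gamma$ is the messiest piece and is where the bulk of the arithmetic lives; the structural subtlety is that at the upper endpoint $\bar{t}=t'+\Gamma$ the forward-progress argument for separating $L_v(t)$ from the next lattice point above degenerates, and must be replaced by a global-skew-shifted upper bound on $L_v(t)$ together with the lattice spacing $\instime(\Ge)\gg \Gu/\mu$.
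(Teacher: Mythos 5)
Your overall intuition is in the right ballpark—use the silence window instead of a long initial wait, verify (S0)/(S1)/(S2), and invoke Theorem~\ref{thm:catch_up}—but the structural framework you chose has a genuine gap, and it also diverges from what the paper actually does.

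The gap is in the minimal-counterexample framing combined with the verification of (S1). You take $\bar{t}$ minimal in the \emph{target window} $[t'+\Gamma-\Lambda_2-3\Theta_2,\,t'+\Gamma]$, but minimality over that window tells you nothing about legality at times before $t'+\Gamma-\Lambda_2-3\Theta_2$. However, verifying (S1) at a node $v$ at time $t\in[\bar{t}-\Lambda_{\bar{s}}-2\Theta_{\bar{s}},\bar{t}]$ requires $(C,s')$-legality at times as early as $\bar{t}-\Lambda_{\bar{s}}-2\Theta_{\bar{s}}-\Theta_{s'}$. When $\bar{t}$ sits at the left edge of the target window, this point lies strictly before the window, and the minimality hypothesis gives you nothing there. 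In Lemma~\ref{lemma:legality_static} and Corollary~\ref{coro:legality_static} this problem does not arise: there the counterexample is taken over \emph{all} times exceeding a threshold, backed either by an explicit hypothesis of legality over an initial segment or by a modified sequence $\bar{C}$ that is trivially satisfied on low levels at early times. You need some analogue of that device, and the proposal does not supply one. The phrase ``without leaving $[t',\bar{t}]$'' does not help, because what is needed is not that the times stay in $[t',\bar{t}]$ but that legality is already \emph{known} there—which it is not.

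For comparison, the paper's proof sidesteps the problem entirely: rather than a minimal counterexample, it does direct induction on the level $s$ using a node-independent auxiliary sequence $C^{(t)}_s$ that is ``switched on'' at fixed real times $t_s$ (so $C^{(t)}_s=C^{(t)}_{s-1}$ until time $t_s$ and $C^{(t)}_s=2\Gu/\sigma^{s-1}$ afterward). Because $C^{(t)}$ is trivially satisfied at early times and because the $t_s$ are spaced to leave room for a full application of Theorem~\ref{thm:catch_up} between $t_{s-1}$ and $t_s$, (S1) becomes a matter of checking an arithmetic inequality $t-\Lambda_s-2\Theta_s-\Theta_{s'}\geq t_{s'}$, with no appeal to minimality. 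The resulting legality with respect to $C_s=2\Gu/\sigma^{s-1}$ then dominates $C^{(t,u)}$-legality since $C_s\leq C^{(t,u)}_s$. Your (S2) idea—using silence to place all relevant $T_s\in\Tset_s$ safely in the past and arguing separation—is the right instinct, but the arithmetic-progression case analysis is unnecessary: with the standard w.l.o.g.\ that no insertions happen after $t'+\Gamma$ either, one only has to argue that $L_v(t^-)$ exceeds $L_v(t')+(1+\mu)(1+\rho)\Theta_s+sC_{s-1}$, which follows because $t^-\geq t'+\Theta_2$ plus the choice of constants in $\Gamma$. The lattice structure plays no role in the paper's (S2) argument here, unlike in Lemma~\ref{lemma:legality_static}.
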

\begin{proof}
Set $C_s:=\frac{2\Gu}{\sigma^{s-1}}$,
$t_2:=t'+\Lambda_2+\left(2+\frac{(1+\mu)(1+\rho)}{1-\rho}\right)\Theta_2+\frac{C_1}{1-\rho}$
and $t_s:=t_{s-1}+\Theta_{s-1}+\Lambda_s+2\Theta_s$ for $3\leq s\in \nat$. Consider the gradient sequence
\begin{equation*}
C_s^{(t)}:=\begin{cases}
2\Gu & \mbox{if }s=1\\
C_{s-1}^{(t)} & \mbox{if }t< t_s\\
C_s & \mbox{if }t\geq t_s.
\end{cases}
\end{equation*}
We claim that the system is legal with respect to this sequence at all nodes and
times $t\in [t',t'+\Gamma]$, which we show by induction on the level $s$.
For $s=1$, the claim is trivial. For step from $s-1$ to $s\geq 2$, note that, by
Lemma~\ref{lemma:simple_stuff}, the claim is immediate from $(s-1)$-legality for
all times $t<t_s$, so assume that $t\geq t_s$. We apply
Theorem~\ref{thm:catch_up} for level $s$ with $t^+=t$,
$t^-=t-\Lambda_s-2\Theta_s$, and the gradient sequence $C_s$, which will
complete the induction step.

It remains to establish the preconditions of the theorem. (S0) holds
because $t^-\geq t'+\Theta_2$. (S1) holds because for any time
$t''\geq t-\Lambda_s-2\Theta_s-\Theta_{s'}\geq t_{s-1}$ we have that
$C_{s-1}^{(t'')}=C_{s-1}$. Concerning (S3), by assumption there are no edge
insertions during $[t',t'+\Gamma]$. W.l.o.g., we may assume that
there are no edge insertions after time $t'+\Gamma$: future events
cannot influence past clock values, and we can always extend the execution such
that no further edges are inserted (e.g.\ by removing all edges at time
$t'+\Gamma$). Hence, the only remaining case is that there is a logical time
$T_s\in \mathbb{T}_s$ and a node $v$ such that $L_v(t')>T_s$, but
$L_v(t^-)<T_s+(1+\mu)(1+\rho)\Theta_s+C_{s-1}$. However, $t^-\geq
t_2-\Lambda_2-2\Theta_2\geq
\left(\frac{(1+\mu)(1+\rho)}{1-\rho}\right)\Theta_2+\frac{2C_1}{1-\rho}$ and
thus
\begin{equation*}
L_v(t^-)\geq L_v(t')+(1+\mu)(1+\rho)\Theta_2+2C_1
\stackrel{\sigma\geq 3/2}{>} T_s+(1+\mu)(1+\rho)\Theta_s+sC_{s-1},
\end{equation*}
showing (S2).

We conclude that the system is $C$-legal at all nodes and times $t\in
[t_{\infty},t'+\Gamma]$, where $t_{\infty}:=\lim_{s\to \infty} t_s$. As
\begin{eqnarray*}
t'+\Gamma-t_{\infty}&=& \Gamma-\frac{(1+\mu)(1+\rho)}{1-\rho}\cdot \Theta_2 +
\frac{C_1}{(1-\rho)}-\sum_{s=2}^{\infty}(\Lambda_s+3\Theta_s)\\
&=&\Gamma-\left(\frac{1+\mu}{(1-\rho)\mu}+1\right)C_1
-\frac{\sigma}{\sigma-1}
\cdot\left(\frac{C_1}{2(1-\rho)\mu}+\frac{C_1}{(1+\rho)\mu}\right)\\
&\stackrel{\sigma\geq 3}{>}& \frac{5\Gu}{(1-\rho)\mu}\\
&>& \frac{C_1}{2(1-\rho)\mu}+\frac{C_1}{(1+\rho)\mu}\\
&=& \Lambda_2+3\Theta_2.
\end{eqnarray*}
As $C_s\leq C_s^{(t,u)}$ for all times $t$ and nodes $u$, the claim of the
lemma follows.
\end{proof}

\begin{lemma}\label{lemma:silence}
Set $\Gamma':=\frac{(2+(1+\rho)\mu)(29+10\mu)\Gu}{(1-\rho)^2\mu}\in
O(\frac{\Gu}{\mu})$ and suppose that $\G(t)\leq \Gu\leq
\frac{20\Ge}{(1+\rho)(29+8\mu)}$ for all $t\in [t',t'+\Gamma']$. Then there is
some time $t''\in [t',t'+\Gamma'-\Gamma]$ such that no edges are inserted during
$t\in [t'',t''+\Gamma]$.
\end{lemma}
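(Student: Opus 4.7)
The strategy is to exploit the rigid discretization of the insertion schedule. Since we are in the static-$\Ge$ regime and Algorithm~\ref{algo:insertiontimes} always sets $T_0^{\set{u,v}}$ to a multiple of $\instime=\instime(\Ge)$, every level-$s$ insertion ($s\geq 1$) occurs at a logical time in
\begin{equation*}
S := \set{k\instime + (1-2^{-(s-1)})\instime \ : \ k \in \nat, \ s \in \nat}.
\end{equation*}
Within each period $[k\instime,(k+1)\instime]$, the elements of $S$ are $k\instime,\, k\instime+\instime/2,\, k\instime+3\instime/4,\,\ldots$, accumulating at the right endpoint. This yields open gaps of sizes $\instime/2,\, \instime/4,\, \instime/8,\,\ldots$

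A real-time interval $[t'',t''+\Gamma]$ is silent whenever the logical window $[L_{\min}(t''),L_{\max}(t''+\Gamma)]$ is disjoint from $S$, where $L_{\min}$ and $L_{\max}$ denote the pointwise minimum and maximum over all logical clocks. Since $\G(t)\leq\Gu$ and logical clocks advance at rate at most $(1+\mu)(1+\rho)$, this window has width at most $Y := \Gu + (1+\mu)(1+\rho)\Gamma$. It therefore suffices to fit the window into some gap of $S$. Letting $s^\ast$ be the largest integer with $\instime/2^{s^\ast}>Y$, the gaps of sizes $\instime/2,\dots,\instime/2^{s^\ast}$ are wide enough, and we aim to pick $k$ and $s\leq s^\ast$ with
\begin{equation*}
L_{\min}(t'')\in\left[k\instime+(1-2^{-(s-1)})\instime,\ k\instime+(1-2^{-s})\instime-Y\right],
\end{equation*}
which forces $L_{\max}(t''+\Gamma)\leq k\instime+(1-2^{-s})\instime$, so no clock hits $S$ during $[t'',t''+\Gamma]$.

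The remaining task is to show that such a $t''$ exists within $[t',t'+\Gamma'-\Gamma]$. Since $L_{\min}$ is continuous and non-decreasing at rate $\geq 1-\rho$, its image on this window is an interval of logical length at least $(1-\rho)(\Gamma'-\Gamma)$. The set of \emph{admissible} positions for $L_{\min}(t'')$ consists, inside each period, of the usable subintervals described above; between two consecutive admissible intervals, the worst-case logical gap occurs at the transition from Gap $s^\ast$ of period $k$ (whose usable part ends at $(k+1)\instime-\instime/2^{s^\ast}-Y$) to Gap $1$ of period $k+1$ (which starts at $(k+1)\instime$), and equals $\instime/2^{s^\ast}+Y\leq 3Y$ by the definition of $s^\ast$. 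Hence it suffices to verify $(1-\rho)(\Gamma'-\Gamma)\geq 3Y$; substituting the explicit expressions for $\Gamma$ and $\Gamma'$, using $\instime\geq 20(1+\mu)\Ge/((1-\rho)\mu)$ from Eq.~\eqref{eq:instime_static}, and invoking the hypothesis $\Gu\leq 20\Ge/((1+\rho)(29+8\mu))$, this inequality follows by a direct calculation.

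The main obstacle is the bookkeeping in the final inequality: the constants $(2+(1+\rho)\mu)$, $(29+10\mu)$, and $(29+8\mu)$ appearing in $\Gamma$ and $\Gamma'$ are tuned so that $(1-\rho)\Gamma'$ matches $3Y+(1-\rho)\Gamma$ at leading order in $\mu$, and the correction terms in $\rho$ and $\mu$ must be carefully tracked to confirm the inequality with the prescribed slack.
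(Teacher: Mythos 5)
Your approach is genuinely different from the paper's. The paper anchors on the \emph{first} insertion time $t$ in $[t',t'+\Gamma]$ and argues locally: either the gap of $S$ right after $L_u(t)=T_s$ is wide enough (done), or it is narrow and one can jump forward to the next period boundary $z\instime$, whose following gap $\instime/2$ is certainly wide enough, and then check $\bar t+\Gamma\leq t'+\Gamma'$ using the ``narrowness'' inequality. You instead characterize the whole set of admissible positions for $L_{\min}(t'')$ and bound the maximal bad gap by $3Y$, which is a cleaner global picture and avoids the two-case structure.

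However, the final numerical step is wrong as stated. With your window width
\begin{equation*}
Y := \Gu + (1+\mu)(1+\rho)\Gamma,
\end{equation*}
the inequality $(1-\rho)(\Gamma'-\Gamma)\geq 3Y$ does \emph{not} hold. Expanding with $\Gamma=\frac{(29+8\mu)\Gu}{2(1-\rho)\mu}$ and $\Gamma'=\frac{(2+(1+\rho)\mu)(29+10\mu)\Gu}{(1-\rho)^2\mu}$, one finds
\begin{equation*}
\frac{2(1-\rho)\mu}{\Gu}\Bigl[(1-\rho)(\Gamma'-\Gamma)-3Y\Bigr]
= -58\rho - 27\mu - 39\rho\mu - 4\mu^2 - 4\rho\mu^2 \;<\; 0,
\end{equation*}
so the left-hand side is short by roughly $13.5\,\Gu$. (Numerically, with $\mu=0.01$, $\rho=0.001$: $(1-\rho)(\Gamma'-\Gamma)\approx 4401\Gu$ versus $3Y\approx 4417\Gu$.) The fix is available but must be invoked: the maximum logical clock in the network advances at rate at most $1+\rho$, not $(1+\mu)(1+\rho)$, because by \MC\ and the proof of Theorem~\ref{thm:global} any node currently holding the maximum clock value is in slow mode. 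Using the tighter bound $Y := \Gu + (1+\rho)\Gamma$, the same expansion gives
\begin{equation*}
\frac{2(1-\rho)\mu}{\Gu}\Bigl[(1-\rho)(\Gamma'-\Gamma)-3Y\Bigr]
= -58\rho + (60+48\rho)\mu + 20(1+\rho)\mu^2 \;>\; 0
\end{equation*}
since $\mu>2\rho/(1-\rho)$. So your argument does close, but only after replacing the generic rate bound by the $1+\rho$ bound on the maximal clock; as written, the claim that the inequality ``follows by a direct calculation'' is false, and the constants $(29+10\mu)$, $(29+8\mu)$ are tuned precisely so tightly that this distinction matters.

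Two smaller remarks: (i) you need the \emph{strict} inequality $(1-\rho)(\Gamma'-\Gamma)>3Y$ (the range of $L_{\min}$ is a closed interval and the bad set is closed, so equality would not guarantee a hit of the open good set); this is fine with the corrected $Y$. (ii) The hypotheses $\Gu\leq 20\Ge/((1+\rho)(29+8\mu))$ and $\instime\geq 20(1+\mu)\Ge/((1-\rho)\mu)$ that you cite are used to ensure $\instime/2>Y$ (so that $s^\ast\geq 1$ and at least one usable gap exists), not in the $\Gamma$--$\Gamma'$--$Y$ comparison itself; the exposition conflates these two steps.
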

\begin{proof}
If the statement does not hold for $t''=t'$, let $t$ be the minimal time when an
edge insertion occurs during $[t',t'+\Gamma]$ at some node $u$. Denoting by $s$
the level on which the edge is inserted, we have that $L_u(t)=T_s$ for some
$T_s\in \mathbb{T}_s$. Observe that the next larger logical time at which some
edge could be inserted is $T_s+\frac{\instime}{2^s}$. For any node $v$, we have
that $L_v\left(t+\frac{\Gu}{1-\rho}\right)\geq L_v(t)+\Gu\geq T_s$ and
$L_v\left(t+\frac{\instime}{2^s(1+\rho)(1+\mu)}+\frac{\Gu}{(1+\rho)(1+\mu)}\right)\leq
T_s+\frac{\instime}{2^s}$. Hence, either the claim holds for
$t''=t+\frac{\Gu}{1-\rho}+\varepsilon$ (for some sufficiently small
$\varepsilon>0$) or
\begin{equation}
\frac{\instime}{2^s}\leq
\frac{(2+(1+\rho)\mu)\Gu}{1-\rho}+(1+\rho)(1+\mu)\Gamma.
\label{eq:beat_zI}
\end{equation}
In the latter case, we use that for any node $v$, we have that
\begin{equation*}
L_v\left(t+\frac{\instime}{(1-\rho)2^{s-1}}+\frac{\Gu}{1-\rho}\right)\geq
T_s+\frac{\instime}{2^{s-1}}=z \instime
\end{equation*}
for some $z\in \mathbb{Z}$. Setting $\bar{t}:=\max_{v\in
V}\{t_v\,|\,L_v(t_v)=z\instime\}$, we then have that
\begin{equation*}
L_v\left(\bar{t}-\frac{\Gu}{1-\rho}\right)\geq L_v(\bar{t})-\Gu \geq z\instime
\end{equation*}
and
\begin{equation*}
L_v\left(\bar{t}+\frac{\instime-2\Gu}{2(1+\rho)(1+\mu)}\right)\leq
L_v(\bar{t})+\frac{\instime}{2}-\Gu \leq
\left(z+\frac{1}{2}\right)\instime,
\end{equation*}
which due to
\begin{equation*}
\frac{\instime-2\Gu}{2(1+\rho)(1+\mu)}>
\frac{20\Ge}{2(1+\rho)(1-\rho)\mu}
\geq\frac{(29+8\mu)\Gu}{2(1-\rho)\mu}=\Gamma
\end{equation*}
yields that no edges are inserted during $[t'',t''+\Gamma]$ if we set
$t'':=\bar{t}+\varepsilon$ for sufficiently small $\varepsilon>0$. As
\begin{eqnarray*}
\bar{t}+\Gamma &\leq &
t+\frac{\instime}{(1-\rho)2^{s-1}}+\frac{\Gu}{1-\rho}+\Gamma \\
&\leq & t'+\frac{\instime}{(1-\rho)2^{s-1}}+\frac{\Gu}{1-\rho}+2\Gamma\\
&\sr{eq:beat_zI}{\leq} & t'+\frac{(2+(1+\rho)\mu)\Gu}{(1-\rho)^2}
+\frac{2(1+\rho)(1+\mu)\Gamma}{1-\rho}+\frac{\Gu}{1-\rho}+2\Gamma\\
&< &
\left(\frac{4+2(1+\rho)\mu}{1-\rho}\right)\left(\frac{\Gu}{1-\rho}+\Gamma\right)\\
&=& t'+\Gamma',
\end{eqnarray*}
this proves the claim of the lemma.
\end{proof}

Together, the above results yield the following theorem.
\begin{theorem}\label{thm:gradient_fast}
Suppose that $\sigma\geq 3$ and $\Gu\geq \G(t)$ for all times $t$. Set
$S:=\Gamma'$ if $\Gu \leq \frac{20\Ge}{(1+\rho)(29+8\mu)}$ and
$S:=\frac{2\instime}{1-\rho}$ otherwise, and let $s\in \nat$. At any time
$t\geq S\in O(\frac{\Gu}{\mu})$, any level-$s$ path $p=(u,\ldots,v)$ satisfies
that
\begin{equation*}
|L_v(t)-L_w(t)|=
\left(s+\frac{1}{2}\right)\kappa_p+\frac{\Gu}{\sigma^{\max\{s-2,0\}}}.
\end{equation*}
\end{theorem}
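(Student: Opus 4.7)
The plan is to split according to the case distinction in the theorem statement, establish $C^{(t,u)}$-legality with respect to the gradient sequence whose values are $C_s=\frac{2\Gu}{\sigma^{\max\{s-2,0\}}}$ (i.e., the weaker of the two cases in Definition~\ref{def:levelSeq}) at every node and every time $t\geq S$, and then read off the claimed skew bound by a single application of Lemma~\ref{lemma:clocks}.

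For the ``large $\Gu$'' case, where $\Gu>\frac{20\Ge}{(1+\rho)(29+8\mu)}$ and $S=\frac{2\instime}{1-\rho}$, I would simply invoke Corollary~\ref{coro:legality_static}. The only bookkeeping is to note that $\Ge\in O(\Gu)$ here, so $\instime(\Ge)\in O(\Ge/\mu)\subseteq O(\Gu/\mu)$, which confirms $S\in O(\Gu/\mu)$, and that the legality guarantee applies to $C^{(t,u)}$ (and hence to the weaker sequence $C_s=\frac{2\Gu}{\sigma^{\max\{s-2,0\}}}$ by Lemma~\ref{lemma:simple_stuff}).

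For the ``small $\Gu$'' case, where $\Gu\leq \frac{20\Ge}{(1+\rho)(29+8\mu)}$ and $S=\Gamma'$, the argument is a three-step chain. First, Lemma~\ref{lemma:silence} supplies a silent subinterval $[t'',t''+\Gamma]\subseteq[0,\Gamma']$ during which no edge is inserted and the global skew remains bounded by $\Gu$. Second, Lemma~\ref{lemma:stability_from_silence} applied to this interval yields $C^{(t,u)}$-legality on the terminal subinterval $[t''+\Gamma-\Lambda_2-3\Theta_2,t''+\Gamma]$. Third, Lemma~\ref{lemma:legality_static} bootstraps this initial window of legality into $C^{(t,u)}$-legality for every later time and every node. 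Combining these, legality holds at every node at every time $t\geq t''+\Gamma$; since $t''+\Gamma\leq \Gamma'=S$, we conclude legality at every node for $t\geq S$, as required.

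Once legality is established, the rest is straightforward: given any level-$s$ path $p=(u,\ldots,v)\in P_u^s(t)$, both $u$ and $v$ are $(C^{(t,\cdot)},s)$-legal, so Lemma~\ref{lemma:clocks} gives $|L_v(t)-L_u(t)|<(s+\tfrac12)\kappa_p+\tfrac12 C^{(t,u)}_s\leq (s+\tfrac12)\kappa_p+\frac{\Gu}{\sigma^{\max\{s-2,0\}}}$, matching the theorem. The main obstacle I anticipate is verifying the time-budget arithmetic in the small-$\Gu$ case, namely checking that $\Gamma'$ is large enough to accommodate both the search for a silent window (Lemma~\ref{lemma:silence}) and the subsequent ``warm-up'' window of length $\Lambda_2+3\Theta_2$ needed by Lemma~\ref{lemma:legality_static} -- this requires that the constants chosen for $\Gamma$ and $\Gamma'$ leave enough slack, and that the hypothesis $\Gu\leq\frac{20\Ge}{(1+\rho)(29+8\mu)}$ of Lemma~\ref{lemma:silence} coincides with the case threshold of the theorem, which it does by design.
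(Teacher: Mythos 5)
Your proposal is correct and follows essentially the same route as the paper: the same case split on $S$, the same chain Corollary~\ref{coro:legality_static} (large $\Gu$) versus Lemma~\ref{lemma:silence} $\to$ Lemma~\ref{lemma:stability_from_silence} $\to$ Lemma~\ref{lemma:legality_static} (small $\Gu$), and the same final invocation of Lemma~\ref{lemma:clocks} with the bound $C^{(t,u)}_s\leq\frac{2\Gu}{\sigma^{\max\{s-2,0\}}}$. The only cosmetic difference is that the paper instantiates the small-$\Gu$ chain relative to the target time $t$ (finding a silent window at $t'\leq t-\Gamma$), whereas you establish legality once for all $t\geq S$ by taking $t'=0$; both are valid.
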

\begin{proof}
If $S\neq \Gamma'$, we can apply Corollary~\ref{coro:legality_static}.
Otherwise, we apply Lemma~\ref{lemma:silence} to show that the preconditions of
Lemma~\ref{lemma:stability_from_silence} are satisfied for a time $t'\leq
t-\Gamma$. This enabls us to apply Lemma~\ref{lemma:legality_static} for a time
$t_0\leq t$. In both cases, we have shown $C^{(t,u)}$-legality at all nodes
$u\in V$. As $C^{(t,u)}_s\leq \frac{\Gu}{\sigma^{\max\{s-2,0\}}}$ for all $t$,
$u$, and $s$, the claim now follows from Lemma~\ref{lemma:clocks}.
\end{proof}

In particular, if the global skew is bounded by $\Gu$ for $O(\frac{\Gu}{\mu})$
time, on fully inserted edges we have a stable gradient skew that depends on
$\Gu$ only.
\begin{corollary}
Denote by $G_{\infty}(t)$ the graph on nodes $V$ and with all edges that have
been inserted on all levels. Then for any path $(u,\ldots,v)$ in this graph we
have that
\begin{equation*}
|L_v(t)-L_w(t)|=\left(\log_{\sigma}\frac{\Gu}{\kappa_p}+O(1)\right)\kappa_p.
\end{equation*}
\end{corollary}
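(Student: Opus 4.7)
The plan is to read the corollary as a direct specialization of Theorem~\ref{thm:gradient_fast} to paths whose every edge belongs to every level set simultaneously. I first observe that if $p=(u=u_0,\ldots,u_k=v)$ exists in $G_\infty(t)$, then by definition each edge $\{u_i,u_{i+1}\}$ has been inserted on \emph{all} levels by time~$t$, so $\{u_i,u_{i+1}\}\in E^s(t)$ for every $s\in\nat$. Consequently $p\in P_u^s(t)$ for every $s\in\nat$, which is the key structural fact distinguishing $G_\infty$ from $G_{\instime}$ used in Theorem~\ref{thm:gradient_slow} (where one was confined to a fixed level).

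Next, I would apply Theorem~\ref{thm:gradient_fast} to this path at a carefully chosen level. The theorem yields, for each $s\in\nat$ and $t\ge S\in\BO(\Gu/\mu)$,
\begin{equation*}
 |L_v(t)-L_u(t)|\le \left(s+\frac{1}{2}\right)\kappa_p+\frac{\Gu}{\sigma^{\max\{s-2,0\}}}.
\end{equation*}
Pick $s:=\max\bigl\{2,\,2+\bigl\lceil\log_\sigma(\Gu/\kappa_p)\bigr\rceil\bigr\}$. For this choice we have $\sigma^{\max\{s-2,0\}}\ge \Gu/\kappa_p$, so the second term is at most $\kappa_p$, and the first term is $\bigl(\log_\sigma(\Gu/\kappa_p)+\BO(1)\bigr)\kappa_p$ (using $s=\log_\sigma(\Gu/\kappa_p)+\BO(1)$). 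Combining gives the claimed bound.

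The two things to check carefully are: (i) that $G_\infty(t)$ is well-defined at the time $t$ we care about, i.e.\ that for the edges in question the convergent sequence $T_s^{\{u_i,u_{i+1}\}}\to T_\infty^{\{u_i,u_{i+1}\}}$ has indeed been reached by time $t$ at both endpoints (this is part of what ``fully inserted on all levels'' means, and is guaranteed by Lemma~\ref{lemma:insertion} once sufficient real time has elapsed after the handshake); and (ii) that the chosen $s$ is legal, i.e.\ an integer with $s\ge 1$, which is immediate from the $\max$ with $2$. No extra work is needed to handle the case $\kappa_p\ge \Gu$, since then $s=2$ already works and the bound reduces to $\BO(\kappa_p)$, consistent with the claim.

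The main conceptual point—and the only real ``obstacle''—is to notice that Theorem~\ref{thm:gradient_fast} allows us to optimize over $s$, unlike Theorem~\ref{thm:gradient_slow} where one is effectively tied to a particular level set. Once this observation is made, the calculation collapses to choosing $s$ so that the exponentially-decaying term $\Gu/\sigma^{s-2}$ matches the linearly-growing term $s\kappa_p$; this balance occurs at $s=\Theta(\log_\sigma(\Gu/\kappa_p))$, producing the stated $\bigl(\log_\sigma(\Gu/\kappa_p)+\BO(1)\bigr)\kappa_p$ skew bound.
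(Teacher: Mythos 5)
Your proof is correct and follows the approach the paper intends (the corollary carries no explicit proof precisely because it is an immediate application of Theorem~\ref{thm:gradient_fast} with $s$ chosen to balance $s\kappa_p$ against $\Gu/\sigma^{s-2}$, exactly as you do, and exactly as the proof of Theorem~\ref{thm:gradient_slow} does for $G_{\instime}$). One small correction: your side remark that in Theorem~\ref{thm:gradient_slow} one is ``tied to a particular level set'' is not quite right --- that proof also optimizes over $s$ for paths in $G_{\instime}$; the real difference between the two results is that Theorem~\ref{thm:gradient_slow} requires edges to be present for $O(\Ge/\mu)$ time (the a~priori bound), whereas the present corollary, via Theorem~\ref{thm:gradient_fast}, needs only $O(\Gu/\mu)$ time, which can be much smaller when $\Gu\ll\Ge$.
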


\subsection{Discussion}

Before proceeding to the remaining more technical sections, we briefly put the
obtained results in context.

\subsubsection*{Optimality}

Our algorithm is simultaneously optimal or asymptotically optimal in terms of
several parameters:

\paragraph{Global Skew.}
The global skew bound given by Theorem~\ref{thm:global} is optimal in the sense
that there are executions with dynamic diameter of $D$ in which a skew of $D$
cannot be avoided~\cite{kuhn09b,lenzen10}.

\paragraph{Clock Rates.}
The algorithm guarantees that no logical clock runs at rate smaller than
$1-\rho$ and the largest clock value increases at most at rate $1+\rho$, which
is clearly optimal. Moreover, the maximum logical clock rate is
$(1+\mu)(1+\rho)$. The above proofs assumed that $\sigma\geq 3$ and thus
$\mu\geq \frac{6\rho}{1-\rho}$. However, more careful reasoning would show that
any $\mu>\frac{2\rho}{1-\rho}$ is feasible (at the expense of diverging
insertion times as $\mu$ approaches this bound). This is optimal, as it is
necessary that $(1-\rho)(1+\mu)>1+\rho$ so that nodes with slow hardware clocks
can catch up to those with fast hardware clocks.

\paragraph{Gradient Skew.}
Provided that the global skew has been bounded by $\Gu$ for sufficiently
long, the stable gradient skew between nodes $u$ and $v$ connected by a path $p$
with $\kappa_p$ is
$\left(\log_{\sigma}\frac{\Gu}{\kappa_p}+O(1)\right)\kappa_p$, where
$\kappa_e=4(\err{e}+\mu \tup_e)$. The first term is matched by a lower bound of
$\err{p}\log_{\Theta(\sigma)}\frac{\Gu}{\err{p}}$ for the static
case~\cite{lenzen10}, i.e., if $\mu \tup_e \ll \err{e}$, this bound is optimal
up to factor $4+o(1)$.

The additional slack of $\mu \tup_e$ compensates for the amount by which logical
clock values can drift apart while only one endpoint is aware of the edge. It
seems plausible that such a term is needed, but no matching lower bound is
known. However, it is worth pointing out that $\err{e}\geq \mu d_e$, where $d_e$
is the time it takes for information on clock values to propagate along the edge
$e$. One can thus expect that in most systems $\tup_e\in O(\err{e})$.

\paragraph{Stabilization Time.}
For a given global skew bound $\Gu$, Theorem~\ref{thm:gradient_fast} shows that
within $O(\frac{\Gu}{\mu})$ time the gradient skew bound holds for all inserted
edges. A lower bound of $\Omega(D)$ for establishing the gradient skew bound is
given in Section~\ref{sec:lower}, where $D$ is the (current) diameter of the
graph. As pointed out above, there are executions in which a global skew of $D$
cannot be avoided; in fact a skew of $\Omega(D)$ can be hidden from the
algorithm entirely. Therefore, for $\mu\in \Theta(1)$, this bound would be
asymptotically optimal (recall that we assumed $\mu \in O(1)$; for larger
values of $\mu$, we would obtain a bound of $\Theta(\Gu)$ in
Theorem~\ref{thm:gradient_fast}). Note also that $O(\frac{\Gu}{\mu})$ time is
trivially necessary, as a newly inserted edge $e$ may exhibit a skew of $\Gu$,
regardless of its uncertainty $\err{e}$.

Unfortunately, our algorithm takes $\Theta(\frac{\Ge}{\mu}+\tup_e+\delay_e)$
time to \emph{insert} a newly discovered edge $e$,
cf.~Theorem~\ref{thm:gradient_slow}. The additive terms of $\tup_e$ and
$\delay_e$ are necessary for guaranteeing \emph{any} communication between its
endpoints, and therefore must be present either explicitly or via the
abstraction of the estimate layer (by making stronger assumptions on the
interface it provides). However, it may be the case that $\Ge \gg \Gu$,
especially since we assume that $\Ge$ is an upper bound on the global skew that
holds at all times. In Section~\ref{sec:dynamic_est}, we discuss how to insert
edges based on local, time-dependent global skew estimates $\Ge_u(t)$,
overcoming this issue.

The leading constants in Theorem~\ref{thm:gradient_slow} and
Theorem~\ref{thm:gradient_fast} are moderate. For example, if $\mu\leq
\frac{1}{100}$ and thus $\rho \leq \frac{\mu}{100}$ ($\rho\leq 10^{-5}$ for a
typical quartz oscillator), then
\begin{equation*}
\frac{2\instime+\Ge}{1-\rho}<
\frac{43\Ge}{\mu}
\end{equation*}
and
\begin{equation*}
\Gamma'<\frac{58\Gu}{\mu}.
\end{equation*}
We remark that in the interest of a more streamlined presentation, we did not
attempt to optimize constants. We conjecture that the leading constants can be
reduced to less than $10$ without introducing additional techniques.

\subsection*{Comparison to Simultaneous Insertion on all Levels}

In~\cite{lenzen10}, we presented a simpler insertion strategy and analysis that
inserts edges on all levels right after discovering them. The idea is to
initially give the edge a very large weight $\kappa$, so that the gradient skew
bound is trivially satisfied due to the global skew bound, and then reduce the
weight exponentially until the final value is reached. Adding some additional
slack to the (final) $\kappa$, it is then shown that the gradient property holds
on all existing edges w.r.t.\ the time-dependent values of $\kappa$.

Compared to the solution present so far, the gradient skew bound we achieve here
is slightly better (as no additional slack is needed). More importantly,
the insertion time is asymptotically optimal in case $\Ge=\Gu$, in contrast to a
multiplicative overhead of $\Theta(\frac{\Ge}{\min_e\{\err{e}\}})$ for the
simpler strategy. Given that the simpler strategy has a leading constant of at
least $24$ in its insertion time bound, the above bounds compare favorably with
this approach.

A big advantage of the simpler approach, however, is that it can readily use
local and time-dependent estimates $\Ge_u(t)$ for edge insertion, thus
performing better in case of a large gap between $\Ge$ and $\Gu$. As mentioned
earlier, we adapt our insertion strategy to account for such local estimates in
Section~\ref{sec:dynamic_est}. Unfortunately, this results in a very large
leading constant, meaning that the simpler insertion strategy performs better in
practice, as $\frac{\Ge}{\min_e\{\err{e}\}}$ is extremely unlikely to exceed
$10^3$. We thus prove that an asymptotically optimal insertion time can be
achieved even for time- and node-dependent estimates of the global skew, but
leave whether they can be used to obtain small insertion times in practice open.


\section{Proving Convergence}\label{sec:convergence}

On an abstract level, the proof of Theorem~\ref{thm:catch_up} follows the same
strategy as in the static case. Consider the potentials $\Psi^s:=\max_{u\in
V}\{\Psi_u^s\}$ and $\Xi^s:=\max_{u\in V}\{\Xi_u^s\}$, $s\in \nat$. By the
design of the algorithm, in the static case $\Psi^s$ grows at any time at rate
at most $2\drift=1+\drift-(1-\drift)$ because \emph{any} node $v$ that is the
endpoint with larger clock value of a path maximizing $\Psi^s$ must be in slow
mode. The reason is that \SC not being satisfied at the node on level $s$
implied the existence of a neighbor $w$ of $v$ that is more than $s\kappa$
ahead, yielding a path with even larger $\psi^s$-value. On the other hand, the
algorithm is much more aggressive with respect to $\Xi^s$: Whenever $\Xi^s>0$,
it is shown that \emph{any} node $v$ that is the endpoint with the smaller clock
value of a path maximizing $\Xi^s$ will be in fast mode. This holds as \FC not
being satisfied for $s$ at $v$ would permit to extend the path by a neighbor $w$
of $v$ to a new path with larger $\xi^s$-value. These two statements play
together to ensure that the maximal values of $\Psi^s$, i.e., the sequence
$C_s$, decreases exponentially in $s$, which is the essence of the statement of
Theorem~\ref{thm:catch_up}.

\subsection{Existence of Relevant Paths}

Given the level-$s$ stabilization condition, we can conclude that paths with
sufficiently large $\xi$- or $\psi$-values to be of interest cannot have a large
weight, and infer that the system must have been $s'$-legal for all $1<s'<s$ at
one of the path's endpoints for a long time.
\begin{lemma}\label{lemma:shortpaths}
  Let $s>s'\geq 1$ be two integers and let $t\geq 0$ be a time. Assume
  that the $(C,s)$-stabilization condition holds at node $u$ at some
  time in $[t,t+\stabint{s'}]$ (if $s'>1$) or $[t,t+\stabint{2}]$ (if $s'=1$),
  and let $p=(u,\dots,v)$ be a path for which $p\in P_u^{s'}(t)$. Moreover,
  suppose that $\xi_p^{s}(t)\geq 0$ or $\psi_p^{s}(t)\geq
  -\kappa_p/2$. Then
  \[
  \kappa_p\leq C_{s'}
  \]
  and the system is $s'$-legal at $v$ and time $t$.
\end{lemma}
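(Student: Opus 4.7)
The plan is to observe that both alternatives of the hypothesis give $|\lc{u}(t) - \lc{v}(t)| \geq s\kappa_p$ (using $\xi_p^s(t) + \psi_p^s(t) = -\kappa_p/2$, so $\psi_p^s(t) \geq -\kappa_p/2$ yields $\lc{v}(t) - \lc{u}(t) \geq s\kappa_p$ and $\xi_p^s(t) \geq 0$ yields $\lc{u}(t) - \lc{v}(t) \geq s\kappa_p$), and then to unpack the stabilization condition at $u$ into the two conclusions. The case $s' = 1$ is immediate: letting $t^\ast \in [t, t+\stabint{2}]$ be the time at which the stabilization condition holds, $t \in [t^\ast - \stabint{2}, t^\ast]$, so (S0) gives $\G(t) \leq C_1/2$ and hence $s\kappa_p \leq C_1/2 \leq C_1 = C_{s'}$, while Lemma~\ref{lemma:1legality} directly yields $(C,1)$-legality at $v$ at time $t$.

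For $s' \geq 2$, I plan to instantiate (S1) with the trivial choice $v := u$ (the zero clock-distance satisfies any proximity requirement) to obtain $(C, s')$-legality at $u$ at all times in $[t^\ast - \stabint{s'}, t^\ast] \ni t$. Applied to $p \in P_u^{s'}(t)$ via Definition~\ref{def:legality}, this gives the one-sided bound $\lc{v}(t) - \lc{u}(t) < (s' + \frac{1}{2})\kappa_p + C_{s'}/2$. In the subcase $\psi_p^s(t) \geq -\kappa_p/2$, pairing with $\lc{v}(t) - \lc{u}(t) \geq s\kappa_p$ yields $(s - s' - \frac{1}{2})\kappa_p < C_{s'}/2$, which forces $\kappa_p < C_{s'}$ since $s \geq s'+1$. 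The symmetric subcase $\xi_p^s(t) \geq 0$ instead calls for $(C, s')$-legality at $v$; I plan to obtain this from (S1) applied at $v$, bootstrapping the needed bound on $|\lc{u}(t^\ast) - \lc{v}(t^\ast)|$ from the coarse estimate $\lc{u}(t) - \lc{v}(t) \leq \G(t) \leq C_1/2$ supplied by (S0), if necessary walking through the intermediate levels $s' + 1, \ldots, s - 1$ of (S1) to progressively tighten the clock-proximity estimate at $v$.

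Once $\kappa_p < C_{s'}$ is in hand, the one-sided legality at $u$ bounds $|\lc{u}(t) - \lc{v}(t)| < (s' + 1)C_{s'}$; propagating this to $t^\ast$ costs at most $(2\drift + \mu(1 + \drift))\stabint{s'}$, and for the gradient sequences used later in the paper (where $C_{s'} \leq C_{s'-1}/\sigma$ with $\sigma \geq 3$) the resulting estimate fits the clock-proximity requirement of (S1) at level $s'$, so a final application of (S1) at $v$ delivers the claimed $(C, s')$-legality at $v$. The principal obstacle I anticipate is the asymmetric subcase $\xi_p^s(t) \geq 0$: because (S1) at $u$ only constrains $\lc{w} - \lc{u}$ for paths emanating from $u$, establishing the reverse-direction bound at $v$ depends entirely on the bootstrap sketched above, and the delicate point will be choosing the intermediate level so that the numerical constants in (S1) line up.
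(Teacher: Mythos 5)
Your proposal has a real gap in the step that establishes $(C,s')$-legality at $v$. You want to obtain this by a single application of (S1) at level $s'$, feeding it the proximity bound $|\lc{u}(t^*)-\lc{v}(t^*)| < (s'+1)C_{s'} + (2\drift+\mu(1+\drift))\stabint{s'}$, which requires $(s'+1)C_{s'} \leq s'C_{s'-1}$. You notice that this fails for general gradient sequences and try to salvage it by assuming $C_{s'} \leq C_{s'-1}/\sigma$ with $\sigma\geq 3$ — but Definition~\ref{def:sequence} only requires the sequence to be non-increasing, and the lemma is stated and applied at that level of generality (indeed Lemma~\ref{lemma:simple_stuff}(iii) is specifically about consecutive equal terms, where $C_{s'}=C_{s'-1}$ and your inequality fails outright). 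The same obstruction infects the subcase $\xi_p^s\geq 0$: there you need legality at $v$ even to bound $\kappa_p$, so the circular dependence you flag is not resolved by your bootstrap. The paper sidesteps all of this by inducting on a single auxiliary level index $s''$ and tracking \emph{both} facts "$\kappa_p\leq C_{s''}$" and "$v$ is $(C,s'')$-legal" simultaneously: at each step, Lemma~\ref{lemma:clocks} (using legality at $u$ \emph{and} at $v$ at level $s''$, plus $\kappa_p\leq C_{s''}$) gives exactly $|\lc{u}(t)-\lc{v}(t)| \leq (s''+1)C_{s''}$, which after propagating to $t^*$ is \emph{precisely} the proximity threshold of (S1) at level $s''+1$ — no slack, no assumption on the decay rate of $C$. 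That exact cancellation is the idea your proposal is missing.

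Two smaller points. The parenthetical identity $\xi_p^s(t)+\psi_p^s(t) = -\kappa_p/2$ is incorrect ($\xi_p^s+\psi_p^s = -(2s+\tfrac{1}{2})\kappa_p$; the relation the paper actually exploits is $\psi_p^s = \xi_{\bar p}^s - \kappa_p/2$), though the two downstream inequalities you write are right. And the "intermediate levels $s'+1,\dots,s-1$" for your bootstrap run in the wrong direction: to progressively tighten the proximity estimate you would have to walk through levels $2,3,\dots,s'$, starting from the $(C,1)$-legality supplied by Lemma~\ref{lemma:1legality}. Your observation that (S1) instantiated with $v:=u$ immediately yields legality at $u$ at any intermediate level is correct and is implicitly also what the paper uses when invoking Lemma~\ref{lemma:clocks}.
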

\begin{proof}
  Because $\psi_p^s(t)=\xi_{\bar{p}}^s(t)-\kappa_p/2$, we
  assume, w.l.o.g., that $\xi_p^s(t)\geq 0$. It holds that
  \begin{equation*}
    0\leq \xi_p^s(t)=L_u(t)-L_v(t)-s\kappa_p\leq \G(t)-\frac{\kappa_p}{2}
    \leq \frac{C_1-\kappa_p}{2},
  \end{equation*}
  implying $\kappa_p\leq C_1$. The last inequality follows from (S0)
  because the stabilization condition at node $u$ holds at some time
  $t'\in[t,t+\stabint{2}]$. Together with Lemma \ref{lemma:1legality},
  this implies the claim of the lemma for $s'=1$ (and arbitrary
  $s>1$).

  For $s'\geq2$, we can therefore define $s''$ to be the largest
  integer in $\set{1,\dots,s'}$ for which both $\kappa_p\leq C_{s''}$
  and $v$ is $(C,s'')$-legal at time $t$. We want to show that
  $s''=s'$, so assume for the sake of contradiction that $s''<s'$.
  From the precondition that $p\in P_u^{s'}(t)$, we also get that
  $p\in P_u^{s''}(t)$ (Statement~(i) of
  Lemma~\ref{lemma:simple_stuff}). Since we have $(C,s'')$-legality
  at nodes $u$ and $v$ at time $t$, we can use Lemma~\ref{lemma:clocks}
  to obtain an upper bound on the skew between $u$
  and $v$:
  \[ |\lc{u}(t) - \lc{v}(t)| \leq \left(s''+\frac{1}{2}\right)\kappa_p +
  \frac{C_{s''}}{2} \leq (s''+1)C_{s''}.
  \]
  We know that the $(C,s)$-stabilization condition holds at node $u$
  at some time $t'\in[t,t+\stabint{s'}]$. Because logical clocks
  progress at a rate between $1-\drift$ and $(1+\mu)(1+\drift)$, we
  get
  \begin{align*}
    |\lc{u}(t') - \lc{v}(t')| &\leq
    |\lc{u}(t) - \lc{v}(t)| +
    ((1+\mu)(1+\drift)-(1-\drift))\stabint{s'}\\
    &\leq
    (s''+1)C_{s''} + (2\drift + \mu(1+\drift))\stabint{s'}\\
    &\leq
    (s''+1)C_{s''} + (2\drift + \mu(1+\drift)))\stabint{s''+1}.
  \end{align*}
  From Requirement (S1) of the $(C,s)$-stabilization condition at
  node $u$ at time $t'$ for level $s''+1\leq s'<s$, we therefore get
  that the system is $(C,s''+1)$-legal at node $v$ and time $t$.

  As $p\in P_u^{s''}(t)$ also implies that $\bar{p}\in P_v^{s''}(t)$,
  $(s''+1)$-legality at node $v$ and time $t$ implies that
  \begin{equation*}
    \frac{C_{s''+1}}{2} > \Psi_v^{s''+1}(t) \geq
    \psi_{\bar{p}}^{s''+1}(t) =
    \psi_{\bar{p}}^{s}(t) + (s-(s''+1))\kappa_p \geq
    \psi_{\bar{p}}^{s}(t) + \kappa_p =
    \xi_p^{s}(t) + \frac{\kappa_p}{2}
    \geq \frac{\kappa_p}{2},
  \end{equation*}
  i.e., $\kappa_p< C_{s''+1}$, contradicting the maximality of
  $s''$.  We conclude that $s''=s'$, concluding the proof.
\end{proof}

Next, we show that the level-$s$ stabilization condition guarantees that
paths of relevant skew on level $s$ must have existed for at least $\Theta_s$
time on level $s$. Moreover, if we append an edge $e$ to such a path that exists
only in one direction, we can still argue that the entire path has been a
level-$s$ path until at most $\tau_e$ time ago.
\begin{lemma}\label{lemma:paths}
  Assume that node $u\in V$ satisfies the $(C,s)$-stabilization
  condition for a gradient sequence $C$ and an integer $s>1$
  throughout a time interval $[t^-,t^+]$. Consider a path $p =
  (u,\ldots,v)\in P_u^s(t^+)$. If $\xi_p^s(t^+) \geq 0$,
  $\psi_p^s(t^+)\geq-\kappa_p/2$, $\xi_{\bar{p}}^s(t^+)\geq0$, or
  $\psi_{\bar{p}}^s(t^+)\geq-\kappa_p/2$, then $p\in P_u^s(t)$ for all
  \begin{equation*}
    t\in \left[t^--\stabint{s},t^+\right].
  \end{equation*}
  Furthermore, let $p'=p\circ (v,w)=(u,\dots,v,w)$. If $w\in
  \N_v^s(t^+)$ and $\xi_{p'}^s(t^+-\tup_{\{v,w\}})\geq 0$,
  $\psi_{p'}^s(t^+-\tup_{\{v,w\}})\geq -\kappa_{p'}/2$,
  $\xi_{\bar{p}'}^s(t^+-\tup_{\{v,w\}})\geq 0$, or
  $\psi_{\bar{p}'}^s(t^+-\tup_{\{v,w\}})\geq -\kappa_{p'}/2$, then
  $p'\in P_u^s(t)$ for all
  \begin{equation*}
    t\in \left[t^--\stabint{s},t^+-\tup_{\{v,w\}}\right],
  \end{equation*}
  and this time interval is non-empty.
\end{lemma}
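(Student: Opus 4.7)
The plan is to argue by contradiction. If the first claim fails, then some edge $e$ of $p$ is inserted at level $s$ at a time $t_2 \in (t^- - \stabint{s}, t^+]$, meaning $\lc{x}(t_2) = T_s^e$ for one endpoint $x$ of $e$ and some $T_s^e \in \Tset_s$ (using the closedness convention on neighbor sets, $p \in P_u^s(t_2)$ so the sub-path to $x$ is available). To set things up, I would first invoke Lemma~\ref{lemma:shortpaths} at time $t^+$ with inner parameter $s-1$: the hypothesized sign condition on $\xi_p^s(t^+)$ (or one of its $\psi$/reversed variants, which are all equivalent to a $\geq s\kappa_p$ skew in some direction and hence imply the Lemma's preconditions) yields both $\kappa_p \leq C_{s-1}$ and $(C,s-1)$-legality at $v$. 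The weight bound $\kappa_p \leq C_{s-1}$ is what drives the rest of the argument.

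Next I would combine (S2) with the rate bound on $\lc{u}$. Applying (S2) at $u$ at time $\tilde t := \max\{t_2,t^-\} \in [t^-,t^+]$ gives $|\lc{u}(\tilde t) - T_s^e| \geq (1+\mu)(1+\drift)\stabint{s} + sC_{s-1}$; since $|\tilde t - t_2| \leq \stabint{s}$ and $\lc{u}$ progresses at rate at most $(1+\mu)(1+\drift)$, this translates into
\[
|\lc{u}(t_2) - T_s^e| > sC_{s-1}.
\]
Because $\lc{x}(t_2) = T_s^e$, the contradiction will follow once I show $|\lc{u}(t_2)-\lc{x}(t_2)| < sC_{s-1}$.

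To establish this skew bound, I would prove by induction on $s' \in \{1,\ldots,s-1\}$ that both $u$ and $x$ are $(C,s')$-legal at time $t_2$. The base $s'=1$ follows from (S0) via Lemma~\ref{lemma:1legality} (applied at $t=t_2$, using $\stabint{2} \geq \stabint{s}$ to ensure the required window meets $[t^-,t^+]$). For the step, the sub-path $p_x = (u,\ldots,x)$ lies in $P_u^s(t_2) \subseteq P_u^{s'-1}(t_2)$ with $\kappa_{p_x} \leq \kappa_p \leq C_{s-1} \leq C_{s'-1}$; assuming $(C,s'-1)$-legality at both endpoints, Lemma~\ref{lemma:clocks} yields $|\lc{u}(t_2)-\lc{x}(t_2)| < s'C_{s'-1}$, strictly within the (S1) threshold at level $s'$. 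Then (S1) applied at $u$ at time $\tilde t$ delivers $(C,s')$-legality at $x$ (and, taking $v=u$, at $u$) throughout $[\tilde t - \stabint{s'}, \tilde t]$, an interval containing $t_2$ since $\stabint{s'} \geq \stabint{s}$ (the sequence is non-increasing). At $s' = s-1$, Lemma~\ref{lemma:clocks} finally yields $|\lc{u}(t_2)-\lc{x}(t_2)| < sC_{s-1}$, contradicting the display above and completing the proof of the first claim.

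For the extended path $p' = p \circ (v,w)$, the same contradiction argument runs at reference time $t^+ - \tup_{\{v,w\}}$ instead of $t^+$. The shift by $\tup_{\{v,w\}}$ absorbs the potentially asymmetric insertion of the final edge through Lemma~\ref{lemma:insertion}, so that $\{v,w\} \in E^s(t)$ for the relevant $t$ and one can safely take $p' \in P_u^s(t^+ - \tup_{\{v,w\}})$ as the starting point; non-emptiness of $[t^- - \stabint{s}, t^+ - \tup_{\{v,w\}}]$ follows from the standing parameter choices (in particular $\stabint{s}$ dominating $\tup_{\{v,w\}}$). The main obstacle is the cascading induction that lifts legality from level $1$ up to level $s-1$ at the intermediate endpoint $x$: at every step one must verify that Lemma~\ref{lemma:clocks}'s strict bound stays inside (S1)'s threshold, that the weight bound $\kappa_{p_x} \leq C_{s'-1}$ is preserved along the sequence $C$, and that (S1)'s time window---anchored at $\tilde t$ rather than at $t_2$ when $t_2 < t^-$---still reaches back to $t_2$, which is precisely where the monotonicity $\stabint{s'} \geq \stabint{s}$ is used.
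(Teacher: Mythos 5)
Your argument for the first claim is correct and amounts to a genuinely different (and arguably cleaner) presentation than the paper's. The paper proves a positive statement by an interval-extension argument: it shows that every node $x$ on $p$ keeps its logical clock strictly between the surrounding insertion times $T_s^-,T_s^+$, hence never adds new level-$s$ neighbors, and propagates $(C,s-1)$-legality backward in time. You instead argue by contradiction from a hypothesized level-$s$ insertion event at a time $t_2 \in (t^--\stabint{s},t^+]$, using \textbf{(S2)} to push $\lc{u}(t_2)$ far from $T_s^e$ while the $(C,s')$-legality ladder (built via Lemma~\ref{lemma:1legality}, \textbf{(S1)}, and Lemma~\ref{lemma:clocks}) bounds $|\lc{u}(t_2)-\lc{x}(t_2)|<sC_{s-1}$, a contradiction with $\lc{x}(t_2)=T_s^e$. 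The two approaches use the same toolkit. One detail worth being explicit about: you should fix $t_2$ as the minimal time for which $[t_2,t^+]\subseteq\{t:p\in P_u^s(t)\}$ (using the standing ``finitely many changes'' hypothesis), so that $p\in P_u^s(t_2)$ actually holds and the sub-path $p_x$ is available at level $s$; your phrasing ``using the closedness convention'' gestures at this but does not pin it down.

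For the second claim there is a genuine gap. The hypotheses only give $p\in P_u^s(t^+)$ and $w\in \N_v^s(t^+)$; they say nothing about $v\in \N_w^s$. Your plan ``take $p'\in P_u^s(t^+-\tup_{\{v,w\}})$ as the starting point'' assumes precisely what needs to be proved, and Lemma~\ref{lemma:insertion} does not deliver it. That lemma only synchronizes the \emph{logical} insertion times $T_{s'}^{\{v,w\}}$ between $v$ and $w$; whether $\lc{w}$ has actually reached $T_s^{\{v,w\}}$ by time $t^+-\tup_{\{v,w\}}$ depends on the clock skew between $w$ and the rest of $p$. The paper establishes this by a separate bootstrapping induction on levels $s'=2,\ldots,s$ that \emph{simultaneously} shows $p'\in P_u^{s'}(t^+-\tup_{\{v,w\}})$ and $\tup_{\{v,w\}}<\stabint{s'}$. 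The loop is: level-$s'$ membership of $p'$ plus \textbf{(S1)} gives $s'$-legality, which via Lemma~\ref{lemma:shortpaths} gives $\kappa_{p'}\leq C_{s'}$, which via~\eqref{eq:def_kappa} (this is where $\kappa_e>4(\err{e}+\mu\tup_e)$ is essential) gives $\tup_{\{v,w\}}<\stabint{s'+1}$, which combined with \textbf{(S2)} and Lemma~\ref{lemma:clocks} lets one conclude that every node on $p'$---crucially including $w$---has logical clock above $T_{s'+1}^-$ at time $t^+-\tup_{\{v,w\}}$, so that $p'\in P_u^{s'+1}(t^+-\tup_{\{v,w\}})$. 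You describe the ``main obstacle'' as the same legality ladder from the first claim, but that is not where the difficulty lies: the difficulty is this interleaved induction tying the bound on $\tup_{\{v,w\}}$ to the improving weight bound. Without it, the starting point of your contradiction argument for $p'$ is unsupported, and the non-emptiness of $[t^--\stabint{s},t^+-\tup_{\{v,w\}}]$ also remains unproved (it follows from $\tup_{\{v,w\}}<\stabint{s}$, the terminal output of that same induction, not from ``standing parameter choices'').
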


\begin{proof}
  For all $t$, we have $\xi_{\bar{p}}^s(t) = \psi_p^s(t) + \kappa_p/2$
  and $\psi_{\bar{p}}^s(t) = \xi_p^s - \kappa_p/2$. Therefore,
  assuming that $\xi_p^s(t^+) \geq 0$, $\psi_p^s(t^+)\geq-\kappa_p/2$,
  $\xi_{\bar{p}}^s(t^+)\geq0$, or
  $\psi_{\bar{p}}^s(t^+)\geq-\kappa_p/2$ implies that either
  $\xi_p^s(t^+)\geq0$ or $\psi_p^s(t^+)\geq-\kappa_p/2$. Because the
  system satisfies the $(C,s)$-stabilization condition at node $u$ at
  time $t^+$, using Lemma \ref{lemma:shortpaths}, together with $p\in
  P_u^s(t^+)\supseteq P_u^{s-1}(t^+)$ both cases imply that
  $\kappa_p\leq C_{s-1}$. 
  
  In the following, we define 
  \[
  T_s^+ :=\min\set{T_s\in\Tset_s: T_s\geq \lc{u}(t^+)}
  \quad\text{and}\quad
  \forall s'\leq s:
  T_{s'}^-:=\max\set{T_{s'}\in\Tset_{s'}: T_{s'}\leq \lc{u}(t^-)}.
  \]
  For all $t\in[t^-,t^+]$, the stabilization condition yields that
  \begin{equation}\label{eq:clock_u}
    T_s^- +  (1+\drift)(1+\mu)\stabint{s} - sC_{s-1} < \lc{u}(t) <
    T_s^+ - (1+\drift)(1+\mu)\stabint{s} + sC_{s-1}.
  \end{equation}
  Let $X$ be the set of nodes of the path $p$. In order to prove the
  first part of the lemma, we show that
  \begin{equation}\label{eq:lemmapart1}
    \forall t\in[t^--\stabint{s},t^+],\ \forall x\in X :
    \text{ $x$ is $(C,s-1)$-legal at time $t$ and }
    N_x^s(t)\supseteq N_x^s(t^+).
  \end{equation}
  Note that this directly implies the first part of the lemma. Before
  showing \eqref{eq:lemmapart1}, we first show that 
  \begin{equation}\label{eq:Xisstable}
    \forall t\in[t^--\stabint{s},t^+]\text{ for which
      \eqref{eq:lemmapart1} holds, } \forall x\in X:~T_s^- < \lc{x}(t) < T_s^+.
  \end{equation}
  As at time $t$ \eqref{eq:lemmapart1} holds, the sub-path
  $(u,\ldots,x)\in P_u^s(t)\supseteq P_u^{s-1}(t)$ and the system is
  $(C,s-1)$-legal at both $u$ and $x$ at time $t$.  Applying
  Lemma~\ref{lemma:clocks}, we see that
  \[
  |\lc{u}(t) - \lc{x}(t)| \leq s\cdot C_{s-1}
  \]
  for all $x\in X$. Because $t> t^--\stabint{s}$, we have
  $\lc{u}(t)> \lc{u}(t^-)-(1+\mu)(1+\drift)\stabint{s}$ and therefore
  \[
  |\lc{u}(t^-) - \lc{x}(t)| < (1+\mu)(1+\drift)\stabint{s} + s\cdot C_{s-1}.
  \]
  Together with Inequality \eqref{eq:clock_u}, this implies that
  $T_s^-<\lc{x}(t)<T_s^+$ and thus \eqref{eq:Xisstable} for all $x\in
  X$.

  Let us now show that \eqref{eq:lemmapart1} holds.  Trivially, it
  holds that $N_x^s(t^+)\supseteq N_x^s(t^+)$. In addition, due to
  Statement~(i) of Lemma~\ref{lemma:simple_stuff} we have that
  $p\in P_u^{s-1}(t^+)$ and Lemma~\ref{lemma:shortpaths} therefore
  establishes \eqref{eq:lemmapart1} for time $t=t^+$.  Now suppose for
  the sake of contradiction that the non-empty maximal time interval
  $[t',t^+]\subseteq [t^--\stabint{s},t^+]$ for which both statements
  of \eqref{eq:lemmapart1} are satisfied is not equal to
  $[t^--\stabint{s},t^+]$, i.e., $t'>t^--\stabint{s}$. At time $t'$,
  \eqref{eq:lemmapart1} still holds and therefore from
  \eqref{eq:Xisstable}, we get that $T_s^-<\lc{x}(t')<T_s^+$. As nodes
  add level-$s$ neighbors only at times $T_s\in \Tset_s$, it follows
  that there is some time $t''\in [t^--\stabint{s},t')$ such that
  $N_x^s(t)\supseteq N_x^s(t')\supseteq N_x^s(t^+)$ for all $t\in
  [t'',t']$. Therefore, $p\in P_u^{s-1}(t)\subseteq P_u^s(t)$ for all
  such $t$ and because $u$ satisfies the $(C,s)$-stabilization
  condition throughout interval $[t^-,t^+]$ and thus at some time in
  $[t,t+\stabint{s}]\subseteq [t,t+\stabint{s-1}]$, we can apply
  Lemma~\ref{lemma:shortpaths} to infer that the system is
  $(s-1)$-legal at $x$ at time $t$. This is a contradiction to the
  maximality of the interval $[t',t^+]\subseteq [t^--\stabint{s},t^+]$
  (i.e., the minimality of $t'$). We conclude that indeed
  $t'=t^--\stabint{s}$, in particular showing the first statement of
  the lemma.

  It therefore remains to prove the second part of the lemma. For
  the path $p'$ used in the second part of the claim, we know that
  either
  \begin{equation}\label{eq:pprimepos}
    \xi_{p'}^s(t^+-\tup_{\set{v,w}}) \geq 0
    \quad\text{or}\quad
    \psi_{p'}^s(t^+-\tup_{\set{v,w}}) \geq -\frac{\kappa_{p'}}{2}.
  \end{equation}
  Let us first assume that $\tup_{\set{v,w}}< \stabint{s}$ and that
  $v\in N_w^s(t^+-\tup_{\set{v,w}})$. From $w\in N_v^s(t^+)$ and
  \eqref{eq:lemmapart1}, this implies that $w\in
  N_v^s(t^+-\tup_{\set{v,w}})$ and thus $p'\in
  P_u^s(t^+-\tup_{\set{v,w}})$. Further, by the stabilization condition, the
  system is $(C,s-1)$-legal at $u$ at time
  $t^+-\tup_{\set{v,w}}$. Because by \eqref{eq:pprimepos}, either
  $\xi_{p'}^s(t^+-\tup_{\set{v,w}})\geq0$ or
  $\psi_{p'}^s(t^+-\tup_{\set{v,w}})\geq -\kappa_{p'}/2$, Lemma
  \ref{lemma:shortpaths} then implies that $\kappa_{p'}\leq
  C_{s-1}$. Therefore, in that case exactly the same argument as for
  the first part of the lemma also shows that for all
  $t\in[t^--\stabint{s},t^+-\tup_{\set{v,w}}]$ we have $p'\in
  P_u^s(t)$. To also prove the second claim, it therefore suffices to
  show that $\tup_{\set{v,w}}< \stabint{s}$ and that $v\in
  N_w^s(t^+-\tup_{\set{v,w}})$.

  To this end, we will prove by induction that
  \begin{itemize}
  \item [(i)] $\forall s'\in \{1,\ldots,s\}: p'\in
    P_u^{s'}(t^+-\tup_{\set{v,w}})$, and
  \item [(ii)] $\forall s'\in \{2,\ldots,s\}:
    \tup_{\set{v,w}}<\Theta_{s'}$.
  \end{itemize}
  For $s'=s$, these statements imply the above and hence complete the
  proof. 

  Before we continue with the induction, we make the following
  observation. In the first part of the lemma, we proved
  \eqref{eq:Xisstable}, implying that no node $x\in X$ adds a new
  neighbor to $N_x^s$ during the interval
  $[t^--\stabint{s},t^+]$. Thus, for all $x\in X$ and all $y\in
  N_x^s(t^+)$, $y$ has been added to $N_x^{s}$ at the latest at time
  $t<t^--\stabint{s}$ for which $\lc{x}(t)=T_s^-$. As $x$ adds $y$ to
  set $N_x^s$ after adding $y$ to sets $N_x^{s'}$ for all $s'<s$, this
  also implies that for all $x\in X$, all $y\in N_x^s(t^+)$, and all
  $s'\leq s$, node $x$ adds $y$ to $N_x^{s'}$ at the latest at time
  $t$ for which $\lc{x}(t)=T_{s'}^-$. Note that this includes node $w$
  which is in the set $N_v^s(t^+)$. By Lemma~\ref{lemma:insertion}, both
  nodes on an edge add the edge to the respective neighbor sets at the
  same logical times. We therefore also know that $w$ adds $v$ to
  $N_w^{s'}$ at the latest at time $t_w^{s'}$ such that
  $\lc{w}(t_w^{s'})=T_{s'}^-$. To prove (i) for a specific $s'$, it is
  thus sufficient to prove for all $x\in X\cup\{w\}$ (note the
  inclusion of $w$!) that $\lc{x}(t^+-\tup_{\set{v,w}})\geq T_{s'}^-$.

  We now proceed with the induction. We anchor it at $s'=2$. Let us
  first consider (ii) for $s'=2$. From (S0) of the
  $(C,s)$-stabilization condition at node $u$ at time $t^+$, we have
  $C_1 > (1+\drift)\mu\tup_{\set{v,w}}$ and therefore 
  \begin{equation}\label{eq:iibase}
    \tup_{\set{v,w}} < \frac{C_1}{(1+\drift)\mu} = \stabint{2}.
  \end{equation}
  Further, as node $u$ satisfies the
  $(C,s)$-stabilization condition at time $t^+$ (for $s\geq 2$), from
  (S2), we have
  \begin{eqnarray*}
    \lc{u}(t^+-\tup_{\set{v,w}}) & \geq & \lc{u}(t^+) -
    (1+\mu)(1+\drift)\tup_{\set{v,w}}\\
    & \stackrel{(S2)}{\geq} & T_2^- + (1+\mu)(1+\drift)\stabint{2} +
    2C_1 - (1+\mu)(1+\drift)\tup_{\set{v,w}}\\
    & \stackrel{\eqref{eq:iibase}}{>} & T_2^- +2C_1\\
    & \stackrel{(S0)}{\geq} & T_2^- + \G(t^+-\tup_{\set{v,w}}).
  \end{eqnarray*}
  The last inequality follows from $(S0)$ because we already know that
  $\tup_{\set{v,w}}<\stabint{2}$. For every node $x\in V$ (and
  therefore in particular for every node $x\in X\cup\set{w}$), we
  thus have
  \[
  \lc{x}(t^+-\tup_{\set{v,w}}) > T_2^-.
  \]
  We have already seen that this implies also Statement~(i) for $s'=2$.

  The induction step comes in two parts. First, we prove for $s'\in
  \{1,\ldots,s-1\}$ that Statement~(i) for $s'$ and Statement~(ii) for
  $s'$ imply Statement~(ii) for $s'+1$.  From Statement~(i) for $s'$,
  we know that $p'\in P_u^{s'}(t)$ for $t^+-\tup_{\set{v,w}}$. From
  Statement~(ii), we also know that
  $t^+-\tup_{\set{v,w}}>t^+-\stabint{s'}$. Since the
  $(C,s)$-stabilization condition holds at node $u$ at time $t^+$, we
  therefore know that $u$ is $(C,s')$-legal at time
  $t^+-\tup_{\set{v,w}}$. Together with \eqref{eq:pprimepos},
  Lemma~\ref{lemma:shortpaths} then implies that $\kappa_{p'}\leq
  C_{s'}$. We thus get that
  \begin{equation*}
    \tup_{\{v,w\}}\stackrel{\eqref{eq:def_kappa}}{<}
    \frac{\kappa_{\{v,w\}}}{(1+\drift)\mu} 
    \stackrel{(\kappa_{p'}\leq C_{s'})} \leq
    \frac{C_{s'}}{(1+\drift)\mu} = \stabint{s'+1},
  \end{equation*}
  
  To conclude the induction step, we now also show that Statement~(i)
  for $s'$ and Statement~(ii) for $s'+1$ imply Statement~(i) for
  $s'+1$. We lower bound $\lc{u}(t^+-\tup_{\set{v,w}})$ as follows:
  \begin{eqnarray*}
    \lc{u}(t^+-\tup_{\set{v,w}}) &\geq &
    \lc{u}(t^+) - (1+\mu)(1+\drift)\tup_{\set{v,w}}\\
    &\stackrel{\text{I.H.}}{>} & \lc{u}(t^+) - (1+\mu)(1+\drift)\stabint{s'+1}\\
    &\stackrel{(S2)}{\geq} &
    T_{s'+1}^{-} + (1+\mu)(1+\drift)\stabint{s'+1}  + (s'+1)C_{s'} -
    (1+\mu)(1+\drift)\stabint{s'+1}\\
    &=&
    T_{s'+1}^{-} + (s'+1)C_{s'}.
  \end{eqnarray*}

  As $p'\in P_u^{s'}(t)$, for each $x\in X\cup \{w\}$
  Lemma~\ref{lemma:shortpaths} yields that the system is $s'$-legal at
  $x$ and time $t^+-\tup_{\set{v,w}}$ as well as $\kappa_{p'}\leq
  C_{s'}$. By Lemma~\ref{lemma:clocks}, it follows that
  $L_x(t^+-\tup_{\set{v,w}})>T_{s'+1}^{-}$. As already noted, this
  implies that $p'\in P_u^{s'+1}(t^+-\tup_{\set{v,w}})$, as required.
\end{proof}

\subsection{Properties of \texorpdfstring{\boldmath{$\Xi$}}{Xi}}

The next lemma shows, roughly speaking, that when a node $u$ is too far ahead on
some level-$s$ paths---that is, it has a positive $\Xi_u^s$ value---then all
endpoints of paths $p$ satisfying $\xi_p^s=\Xi_u^s$ are in fast mode, trying to
catch up to $u$. Their clocks increase at a rate of at least $(1 - \drift)(1 +
\mu)$, the slowest possible fast rate. However, we do not know what node $u$
itself does in this situation: because of the local nature of the algorithm, $u$
does not necessarily realize that it has a large skew, and it can be in either
slow mode or fast mode. In the latter case, the skew on the path might actually
increase, due to hardware clock drift; thus we cannot necessarily show that the
skew decreases. The lemma states that over an interval, the weighted skew
$\Xi_u^s$ increases by at most $u$'s logical clock increase, \emph{minus} the
catching-up that nodes trailing behind $u$ achieve at a rate of $(1 - \drift)(1
+ \mu)$. Finally, it may be the case that an edge $e$ is not present throughout
the entire relevent time period, causing us to ``jump'' back in by $\tau_e$
time; in this case, we use some additional slack in \FC to gain a ``reserve
term'' accounting for the resulting time difference later on.
\begin{lemma}
  \label{lemma:pull}
  Assume that a node $u\in V$ satisfies the $(C,s)$-stabilization for
  $s > 1$ throughout a time interval $[t^-, t^+]$, and that for all $t
  \in (t^-, t^+)$ we have $\Xi_u^s(t) > 0$. Then there exists a time
  \begin{equation*}
    t'\in \left[ t^- - \stabint{s},t^- \right]
  \end{equation*}
  such that
  \begin{equation*}
    \Xi_u^s(t^+)-\Xi_u^s(t')\leq L_u(t^+)-L_u(t')-(1-\drift)(1+\mu)(t^+-t')
    -(1+\drift)\mu(t^--t').
  \end{equation*}
\end{lemma}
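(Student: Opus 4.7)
My strategy is to show that whenever $\Xi_u^s(t) > 0$, the endpoint $v$ of any maximizing path $p^*(t) \in P_u^s(t)$ must be in fast mode, so that $L_v$ grows at rate at least $(1-\drift)(1+\mu)$ and hence $\xi_{p^*(t)}^s = L_u - L_v - s\kappa_{p^*(t)}$ grows at most at rate $h_u(t) - (1-\drift)(1+\mu)$. The reserve term $(1+\drift)\mu(t^- - t')$ will come from the $2\mu\tup_e$ slack in the trailing clause of \FC, which can be ``banked'' whenever an asymmetric edge-discovery situation forces the argument to step back by $\tup_e$ via Lemma~\ref{lemma:paths}.

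To see the fast-mode claim, fix $t \in (t^-, t^+)$ and a maximizer $p^*(t) = (u=u_0,\dots,u_k=v)$ achieving $\xi_{p^*(t)}^s(t) = \Xi_u^s(t) > 0$. Truncating $p^*(t)$ to $(u_0,\dots,u_{k-1})$ and comparing $\xi$-values forces $L_{u_{k-1}}(t) - L_v(t) \geq s\kappa_{\{u_{k-1},v\}}$, so the leading clause of \FC\ at $v$ is witnessed by $u_{k-1} \in N_v^s(t)$. If $v$ were not in fast mode, the trailing clause of \FC\ would fail, yielding $w \in N_v^s(t)$ with $L_v(t) - L_w(t) > s\kappa_{\{v,w\}} + 2\mu\tup_{\{v,w\}}$. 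Then the extended path $p^*(t) \circ (v,w)$ satisfies $\xi^s(t) > \Xi_u^s(t) + 2\mu\tup_{\{v,w\}}$. It need not lie in $P_u^s(t)$ because of asymmetric discovery, but rate bounds show its $\xi^s$-value at time $t - \tup_{\{v,w\}}$ is still nonnegative, so Lemma~\ref{lemma:paths} places the extended path in $P_u^s(t - \tup_{\{v,w\}})$ with $\xi$-value still strictly exceeding $\Xi_u^s(t - \tup_{\{v,w\}})$---contradicting the definition of $\Xi_u^s$.

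With this, I would perform a backward walk from $t_0 = t^+$ through times $t_1 > t_2 > \cdots > t_N = t'$. On each open interval $(t_{i+1}, t_i)$, a single maximizer $p^*_i$ can be tracked (its continued existence back to $t^- - \stabint{s}$ is guaranteed by Lemma~\ref{lemma:paths}), and its endpoint is in fast mode throughout; integrating gives $\Xi_u^s(t_i) - \Xi_u^s(t_{i+1}) \leq L_u(t_i) - L_u(t_{i+1}) - (1-\drift)(1+\mu)(t_i - t_{i+1})$. Each time the maximizer changes, an extension argument in the spirit of the previous paragraph steps backward by some $\tup_e$, banking $2\mu\tup_e$ of slack; since $2 > 1+\drift$, the total slack accumulated over the jumps that push the walk past $t^-$ covers the contribution $(1+\drift)\mu(t^- - t')$. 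Summing across all intervals produces the claimed inequality. The main obstacle is the careful bookkeeping around maximizer switches---ensuring at every step that the invoked level-$s$ path (and possibly its extension) is actually in $P_u^s$ at the relevant time, that $\xi$- and $\psi$-values remain well-behaved under the $\tup$-sized jumps, and that the banked $2\mu\tup$-slacks cleanly translate into the $(1+\drift)\mu(t^- - t')$ contribution in the final bound.
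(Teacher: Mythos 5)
Your overall plan—track a maximizing path's endpoint, exploit \FC, and step back by $\tup_e$ with a banked $2\mu\tup_e$ of slack that decomposes as $(1-\drift+2\mu)=(1-\drift)(1+\mu)+(1+\drift)\mu$—is the same mechanism the paper uses. But the way you set it up contains a logical error that, if taken literally, would make your argument invalid.

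Your second paragraph tries to prove that the endpoint of \emph{every} maximizing path is in fast mode, by deriving a contradiction when the trailing clause of \FC\ fails. This is not a contradiction and, more importantly, the claim itself is false: the paper explicitly states ``We cannot guarantee that the second condition holds'' for the endpoint. Concretely, after you place the extended path $p'=p^*(t)\circ(v,w)$ in $P_u^s(t-\tup_{\{v,w\}})$ via Lemma~\ref{lemma:paths} and observe that $\xi_{p'}^s(t-\tup_{\{v,w\}})$ is large, this does \emph{not} contradict anything—since $p'\in P_u^s(t-\tup_{\{v,w\}})$, the definition of $\Xi_u^s$ only tells you $\Xi_u^s(t-\tup_{\{v,w\}})\geq\xi_{p'}^s(t-\tup_{\{v,w\}})$, which is perfectly consistent. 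You never establish that the maximizer \emph{at time} $t-\tup_{\{v,w\}}$ has $\xi$-value strictly less than $\xi_{p'}^s(t-\tup_{\{v,w\}})$, which you would need for a reductio, and indeed no such thing is true. What the computation actually buys you is a productive inequality, namely
\begin{equation*}
  \Xi_u^s(\theta)-\Xi_u^s(\theta-\tup_{\{u_k,v\}})<L_u(\theta)-L_u(\theta-\tup_{\{u_k,v\}})-(1-\drift+2\mu)\tup_{\{u_k,v\}},
\end{equation*}
which is a step-back gain, not a contradiction. This is exactly how the paper uses it. Your third paragraph then re-introduces jumps triggered by ``the maximizer changing,'' which is inconsistent with a paragraph~2 that (if correct) would show no jumps ever occur; and the correct trigger is not a change of maximizer but a failure of the trailing clause of \FC\ at the endpoint of a maximizer, which the paper captures by defining $\theta$ as the infimal time after which the trailing clause holds for all maximizers and applying a mean-value argument on $(\theta,t^+)$ (handling changes of $P_u^s$ separately via Lemma~\ref{lemma:paths}: added paths have negative $\xi^s$, removed paths cannot increase the max). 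Recasting your argument in this direct, constructive form—rather than as a reductio—repairs the gap and matches the paper's proof.
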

\begin{proof}
  Set $u_0\coloneq u$, and consider an arbitrary time $t \in (t^-, t^+)$.
  Let $p = (u_0,\ldots,u_k) \in P_u^s(t)$ be any path such that $\Xi_u^s(t) =
  \xi_p^s(t)$ (that is, a path that maximizes the value of $\xi_p^s(t)$ at time
  $t$). By assumption, $\xi_p^s(t) > 0$. We will show that for the endpoint $u_k$,
  the first condition of $\FC$ is satisfied; specifically, the next node $u_{k-1}$
  on the path satisfies $\lc{u_{k-1}}(t) - \lc{u_k}(t) \geq s \cdot
  \kappa_{\set{u_{k-1}, u_k}}$. Note that since $\xi_p^s(t) = \lc{u_0}(t) -
  \lc{u_k}(t) - s \cdot \kappa_p > 0$, we cannot have $u_0 = u_k$, i.e., $u_{k-1}$
  must exist.

  Consider the sub-path $(u_0, \ldots, u_{k-1})$. Because it is a sub-path of $p$,
  and $p \in P_u^s(t)$, we also have $(u_0, \ldots, u_{k-1}) \in P_u^s(t)$.
  Further, by choice of $p$ we know that $\xi_p^s(t) = \Xi_u^s(t) \geq \xi_{(u_0,
    \ldots, u_{k-1})}^s(t)$; that is,
  \begin{equation*}
    \lc{u_0}(t) - \lc{u_k}(t) - s \cdot \kappa_{(u_0, \ldots, u_k)} \geq \lc{u_0}(t) - \lc{u_{k-1}}(t) - s \cdot \kappa_{(u_0, \ldots, u_{k-1})},
  \end{equation*}
  which we can re-arrange to obtain
  \begin{equation*}
    \lc{u_{k-1}}(t) - \lc{u_k}(t) \geq s \cdot \left( \kappa_{(u_0, \ldots, u_k)} - \kappa_{(u_0, \ldots, u_{k-1})} \right) =
    s \cdot \kappa_{\set{u_{k-1}, u_k}}.
  \end{equation*}
  This is the first condition of $\FC$ at node $u$. We cannot guarantee that the
  second condition holds, but if it does, that is, if at time $t$ we also have
  \begin{equation}
    \label{eq:not_blocked}
    \forall v \in \N_{u_k}^s(t)  : \quad
    \lc{u_k}(t) - \lc{v}(t) \leq s \cdot \kappa_{\set{u_k,v}}+2\mu\tup_{\set{u_k,v}},
  \end{equation}
  then $\FC$ is satisfied at node $u_k$ at time $t$, and $u_k$ must be in fast
  mode. In that case we have $l_{u_k}(t) \geq (1 - \drift)(1 + \mu)$, which is
  the rate needed for the statement of the lemma. We proceed by considering the
  longest suffix of $[t^-, t^+)$ such that $\FC$ holds for all nodes that maximize
  $\Xi_u^s$ during the interval. At the point where $\FC$ stops holding for some
  node $u_k$ that maximizes $\Xi_u^s$, we can show that some other node is ``to
  blame for this'', and that node has an even larger skew to $u$.

  Let $\theta \in [t^-, t^+]$ be the infimal time such that for all $t \in
  (\theta, t^+)$, condition~\eqref{eq:not_blocked} above holds for all paths
  $p$ (where $p$ is a path maximizing $\xi_p^s(t)$, as defined above), where
  $\theta \coloneq t^+$ if no such time exists.

  By definition, $\Xi_u^s(t) = \max_{p \in P_u^s(t)} \{\xi_p^s(t)\}$.
  Each $\xi_p^s(t)$ is continuous and left-differentiable, since it is
  obtained by taking the difference of logical clocks, which are themselves
  continuous and left-differentiable. Therefore $\Xi_u^s(t)$ is also continuous
  and left-differentiable. By choice of $\theta$, for all $t \in (\theta, t^+)$
  we have 
  \begin{equation*}
  d/dt^- \xi_{p(t)}^s(t) \leq d/dt^- \lc{u}(t) - (1 - \drift)(1 + \mu)(t^+ -
  \theta)
  \end{equation*}
  (where $p(t)$ is the path such that $\xi_{p(t)}^s(t) = \Xi_u^s(t)$),
  because the other endpoint of $p(t)$ is in fast mode and its logical clock
  increases at a rate of at least $(1 - \drift)(1 + \mu)$. Consequently also $d
  /dt^- \Xi_u^s(t) \leq d/dt^- \lc{u}(t) - (1 - \drift)(1 + \mu)(t^+ - \theta)$.
  Using the mean-value theorem (which generalizes to the case where the function
  is only semi-differentiable), we see that over any interval $[\theta_1,
  \theta_2] \subseteq [\theta, t^+]$ where $P_u^s$ does not change,
  \begin{equation*}
    \Xi_u^s(\theta_2) - \Xi_u^s(\theta_1) \leq \lc{u}(\theta_2) - \lc{u}(\theta_1) - (1 - \drift)(1 + \mu)(\theta_2 - \theta_1).
  \end{equation*}
  Now consider points in time when $P_u^s$ changes. If a path is removed from
  $P_u^s$ at time $t$ then the value of $\Xi_u^s(t)$ can only decrease. If a path
  $q$ is added to $P_u^s$ at time $t$, then Lemma~\ref{lemma:paths} shows that
  $\xi_q^s(t) < 0$, (otherwise $q$ must be in $P_u^s$ throughout $[t^-, t]$).
  By the conditions of the current lemma we know that $\Xi_u^s(t) > 0$, so
  $\xi_q^s(t) < \Xi_u^s(t)$, and again $\Xi_u^s(t)$ is not increased by the
  addition of $q$.
  It follows that over the entire interval $[\theta, t^+]$,
  \begin{equation}\label{eq:smoothly}
    \Xi_u^s(t^+) - \Xi_u^s(\theta) \leq \lc{u}(t^+) - \lc{u}(\theta) - (1 - \drift)(1 + \mu)(t^+ - \theta).
  \end{equation}
  Therefore, if $\theta = t^-$, then we set $t' \coloneq t^-$ and we are done.

  Suppose that $\theta > t^-$. Let $p =  (u_0, \ldots, u_k)$ be some path
  such that $\Xi_u^s(\theta) = \xi_p^s(\theta)$ and
  Condition~\eqref{eq:not_blocked} does not hold for $u_k$ (by choice of
  $\theta$ such a path exists); that is, there is some $v \in \N_{u_k}^s(\theta)$
  such that
  \begin{equation}
    \label{eq:blocked}
    \lc{u_k}(\theta) - \lc{v}(\theta) > s \cdot \kappa_{\set{u_k,v}} + 2\mu\tup_{\set{u_k,v}}.
  \end{equation}
  Let $p' \coloneq p\circ (u_k,v)=(u_0, \ldots, u_k, v)$. We use
  Lemma~\ref{lemma:paths} to ``switch'' from path $p$ to path $p'$ and go back in
  time to time $\theta - \tau_{\set{u_k, v}}$, \emph{increasing the weighted skew}
  as we go back in time.
  We have
  \begin{eqnarray}
    \xi_{p'}^s(\theta - \tup_{\{u_k,v\}})
    &=&
    \xi_{p'}^s(\theta) -
    (L_u(\theta)-L_u(\theta-\tup_{\{u_k,v\}})) +
    L_v(\theta)-L_v(\theta-\tup_{\{u_k,v\}})\nonumber\\
    &\geq&
    \xi_p^s(\theta)
    -(L_u(\theta)-L_u(\theta-\tup_{\{u_k,v\}}))+(1-\drift)\tup_{\{u_k,v\}}
    \nonumber\\
    &&
    + (L_{u_k}(\theta)-L_v(\theta)-s \cdot \kappa_{\{u_k,v\}})
    \nonumber\\
    &\sr{eq:blocked}{>}& \xi_p^s(\theta)
    -(L_u(\theta)-L_u(\theta-\tup_{\{u_k,v\}}))
    +(1-\drift+2\mu)\tup_{\{u_k,v\}}\nonumber\\
    &=& \Xi_u^s(\theta)-(L_u(\theta)-L_u(\theta-\tup_{\{u_k,v\}}))
    +(1-\drift+2\mu)\tup_{\{u_k,v\}}\label{eq:xi_pprime}\\
    &
    \geq
    & \Xi_u^s(\theta) - (1 + \drift)(1 + \mu)\tup_{\set{u_k, v}}
    + (1 - \drift + 2\mu)\tup_{\set{u_k, v}}\nonumber
    \\
    &>& 0,\nonumber
  \end{eqnarray}
  where in the last step we used the fact that $(1-\drift)\mu>2\drift$. Hence,
  Lemma~\ref{lemma:paths} shows that $p'\in P_u^s(\theta-\tup_{\{u_k,v\}})$,
  giving
  \begin{eqnarray*}
    \Xi_u^s(\theta)-\Xi_u^s(\theta-\tup_{\{u_k,v\}}) &\leq&
    \Xi_u^s(\theta)-\xi_{p'}^s(\theta-\tup_{\{u_k,v\}})\\
    &\sr{eq:xi_pprime}{<}& L_u(\theta)-L_u(\theta-\tup_{\{u_k,v\}})
    -(1-\drift+2\mu)\tup_{\{u_k,v\}}.
  \end{eqnarray*}
  We conclude that
  \begin{equation}
    \Xi_u^s(t^+) - \Xi_u^s(\theta-\tup_{\{u_k,v\}}) \sr{eq:smoothly}{\leq}
    L_u(t^+)-L_u(\theta-\tup_{\{u_k,v\}})
    -(1-\drift)(1+\mu)(t^+-(\theta-\tup_{\{u_k,v\}}))
    -(1+\drift)\mu\tup_{\{u_k,v\}}.
    \label{eq:Xi_ih}
  \end{equation}

  From Lemma~\ref{lemma:paths} we know that $\theta - \tup_{\set{u,v}} \geq t^- -
  \Theta_s$, that is, we did not go back too far in time. Thus, if $\theta -
  \tup_{\{u_k,v\}} \leq t^-$, the statement follows by setting $t'\coloneq \theta
  - \tup_{\{u_k,v\}}$: Recall that by definition, $\theta\geq t^-$, and hence $(1
  + \drift)\mu (t^- - (\theta - \tup_{\set{u,v}})) \leq (1 +
  \drift)\mu\tup_{\set{u,v}}$; therefore~\eqref{eq:Xi_ih} shows that
  \begin{align*}
    \Xi_u^s(t^+) - \Xi_u^s(t') &\leq \lc{u}(t^+) - \lc{u}(t') - (1 - \drift)(1 + \mu)(t^+ - t')
    - (1 + \drift)\mu\tup_{\set{u_k, v}}
    \\
    & \leq \lc{u}(t^+) - \lc{u}(t') - (1 - \drift)(1 + \mu)(t^+ - t')
    - (1 + \drift)\mu(t^- - t').
  \end{align*}

  Otherwise, if $\theta - \tup_{\set{u_k, v}} > t^-$, we drop the term $-(1 +
  \drift)\mu\tup_{\set{u_k, v}}$ from~\eqref{eq:Xi_ih} to obtain
  \begin{equation*}
    \Xi_u^s(t^+) - \Xi_u^s(\theta-\tup_{\{u_k,v\}}) \sr{eq:smoothly}{\leq}
    L_u(t^+)-L_u(\theta-\tup_{\{u_k,v\}})
    -(1-\drift)(1+\mu)(t^+-(\theta-\tup_{\{u_k,v\}})).
  \end{equation*}
  To prove the claim, it is sufficient to find a time $t' \in [t^- - \stabint{s},
  \theta - \tup_{\set{u_k, v}}]$ for which
  \begin{equation}
    \Xi_u^s(\theta - \tup_{\set{u_k, v}}) - \Xi_u^s(t') \leq
    \lc{u}(\theta - \tup_{\set{u_k, v}}) - \lc{u}(t')
    - (1 - \drift)(1 + \mu)( (\theta - \tup_{\set{u_k, v}}) - t')
    - (1 + \drift)\mu(t^- - t').
    \label{eq:Xi_step}
  \end{equation}
  In other words, we need to show the original statement of the lemma, but only
  for the sub-interval $[t^-, \theta - \tup_{\set{u_k, v}}] \subset [t^-, t^+]$.
  The lemma then follows by summing~\eqref{eq:Xi_ih} and~\eqref{eq:Xi_step}.

  To this end, we continue inductively, applying the entire argument over again to
  the interval $[t^-, \theta - \tup_{\set{u_k, v}}]$. At each step we go back in
  time at least $\min_{x \neq y \in V} \tup_{\set{x,y}} > 0$, and we never go
  further back than $t^- - \stabint{s}$; therefore the induction halts after a
  finite number of steps, at a time
  \begin{equation*}
    t'\in \left[ t^- - \stabint{s},t^- \right],
  \end{equation*}
  for which it holds that
  \begin{equation*}
    \Xi_u(t^+) - \Xi_u(t') \leq \lc{u}(t^+) - \lc{u}(t') - (1 - \drift)(1 + \mu)(t^+ - t')
    -(1 + \drift)\mu(t^- - t'),
  \end{equation*}
  as required.
\end{proof}

We will also need a technical helper lemma about $\Xi_u^s$ that guarantees that
$\Xi_u^s$ remains positive under certain circumstances, enabling to apply
Lemma~\ref{lemma:pull}.
\begin{lemma}
  Let $[t^-, t^+]$ be an interval such that node $u$ satisfies the
  $(C,s)$-stabilization condition throughout $[t^-, t^+]$, where
  $s>1$. If
  \begin{equation}
    \Xi_u^s(t^+) \geq 2\drift(t^+ - t^-)
    \label{eq:ass_u_Xi}
  \end{equation}
  and
  \begin{equation}
    \lc{u}(t^+) - \lc{u}(t^-) \leq (1 + \drift)(t^+ - t^-),
    \label{eq:ass_avg_rate}
  \end{equation}
  then for all $t \in (t^-, t^+]$ we have $\Xi_u^s(t) > 0$.
  \label{lemma:Xi_pos}
\end{lemma}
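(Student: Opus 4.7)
The plan is proof by contradiction via Lemma~\ref{lemma:pull}. Assume some $t^\ast\in(t^-,t^+]$ satisfies $\Xi_u^s(t^\ast)\leq 0$. Since $\Xi_u^s$ is continuous (cf.~the proof of Lemma~\ref{lemma:pull}) and~\eqref{eq:ass_u_Xi} combined with $t^+>t^-$ (which follows from the existence of $t^\ast$) gives $\Xi_u^s(t^+)>0$ strictly, set $\hat t:=\sup\{t\in[t^-,t^+]:\Xi_u^s(t)\leq 0\}$. Continuity yields $\Xi_u^s(\hat t)=0$, $\hat t\in[t^-,t^+)$, and $\Xi_u^s(t)>0$ for every $t\in(\hat t,t^+]$. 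One may further assume $\hat t>t^-$, since otherwise the conclusion already holds on $(t^-,t^+]$.

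Next, I would invoke Lemma~\ref{lemma:pull} on the subinterval $[\hat t,t^+]$, whose hypotheses are satisfied, yielding $t'\in[\hat t-\stabint{s},\hat t]$ with
\begin{equation*}
\Xi_u^s(t^+)-\Xi_u^s(t')\leq L_u(t^+)-L_u(t')-(1-\drift)(1+\mu)(t^+-t')-(1+\drift)\mu(\hat t-t').
\end{equation*}
I would then bound $L_u(t^+)-L_u(t')$ by splitting at $t^-$, combining~\eqref{eq:ass_avg_rate} with the standard progress bounds $L_u(t')-L_u(t^-)\geq(1-\drift)(t'-t^-)$ when $t'\geq t^-$ and $L_u(t^-)-L_u(t')\leq(1+\drift)(1+\mu)(t^--t')$ when $t'<t^-$. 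Substituting and simplifying using the identities $(1+\drift)-(1-\drift)(1+\mu)=2\drift(1+\mu)-(1+\drift)\mu=2\drift-\mu(1-\drift)$, which are strictly negative by the parameter choice $\sigma>1$ (see~\eqref{eq:sigma}), reduces the bound to
\begin{equation*}
\Xi_u^s(t^+)-2\drift(t^+-t^-)\leq \Xi_u^s(t')-\mu(1-\drift)(t^+-\hat t)-R,
\end{equation*}
where $R\geq 0$ collects the residual slack in $\hat t-t'$ (and in $t^--t'$ when $t'<t^-$). When $\Xi_u^s(t')\leq 0$ (in particular when $t'=\hat t$), the right-hand side is strictly negative because $t^+>\hat t$, contradicting~\eqref{eq:ass_u_Xi}.

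The main obstacle is handling the subcase $t'<\hat t$, where the choice of $\hat t$ as a supremum does not directly bound $\Xi_u^s(t')$ from above. Here I would exploit that $\Xi_u^s$ is Lipschitz with constant at most $(1+\drift)(1+\mu)-(1-\drift)=2\drift+\mu(1+\drift)$, so that $\Xi_u^s(t')\leq(2\drift+\mu(1+\drift))(\hat t-t')$. Using once more that $\mu(1-\drift)>2\drift$, one verifies $(2\drift+\mu(1+\drift))-2\mu=-(\mu(1-\drift)-2\drift)<0$, and this together with the reserve term $-(1+\drift)\mu(\hat t-t')$ supplied by Lemma~\ref{lemma:pull} absorbs the potentially positive contribution of $\Xi_u^s(t')$ in the reduced inequality above. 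The same strict inequality $\Xi_u^s(t^+)<2\drift(t^+-t^-)$ is obtained, delivering the contradiction and completing the proof.
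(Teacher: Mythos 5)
Your proof takes a fundamentally different route from the paper's. The paper proves this lemma directly and elementarily: it fixes a path $p$ maximizing $\xi_p^s(t^+)$, uses Lemma~\ref{lemma:paths} to conclude that this specific path stays in $P_u^s(t)$ throughout $[t^-,t^+]$, and then tracks $\xi_p^s(t)$ backward from $t^+$ using only the rate bounds: $\lc{u}(t^+)-\lc{u}(t)\leq(1+\drift)(t^+-t^-)-(1-\drift)(t-t^-)$ (via~\eqref{eq:ass_avg_rate} plus the minimum-rate bound on $[t^-,t]$) and $\lc{v}(t^+)-\lc{v}(t)\geq(1-\drift)(t^+-t)$, which combine to give $\xi_p^s(t)>0$. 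No appeal to Lemma~\ref{lemma:pull}, no contradiction, no extremal-time argument. Instead you invoke the much heavier Lemma~\ref{lemma:pull}, and although the two lemmas are independently proved (so this is not strictly circular), it is a detour, and it introduces a gap.

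The gap is in the subcase where the time $t'$ returned by Lemma~\ref{lemma:pull} (applied on $[\hat t,t^+]$) satisfies $t'<t^-$. Write $y:=t^--t'>0$, $x:=\hat t-t^-\geq 0$, $z:=t^+-\hat t>0$. To bound $L_u(t^+)-L_u(t')$ you must split at $t^-$: the interval $[t^-,t^+]$ is covered by~\eqref{eq:ass_avg_rate}, but on $[t',t^-]$ the only available bound is the maximum logical rate $(1+\drift)(1+\mu)$, so $L_u(t^+)-L_u(t')\leq(1+\drift)(x+z)+(1+\drift)(1+\mu)y$. Carrying this through Lemma~\ref{lemma:pull}'s estimate (including the reserve term $-(1+\drift)\mu(\hat t-t')=-(1+\drift)\mu(x+y)$) and subtracting $2\drift(t^+-t^-)$, the inequality you must establish reduces to
\begin{equation*}
  \Xi_u^s(t')-2\mu x+\bigl(2\drift-\mu(1-\drift)\bigr)y-\mu(1-\drift)z<0.
\end{equation*}
Even granting the Lipschitz bound $\Xi_u^s(t')\leq(2\drift+\mu(1+\drift))(x+y)$, substituting yields
\begin{equation*}
  \bigl(2\drift-\mu(1-\drift)\bigr)x+2\drift(2+\mu)\,y-\mu(1-\drift)z,
\end{equation*}
in which the $y$ coefficient $2\drift(2+\mu)$ is \emph{positive}. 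Lemma~\ref{lemma:pull} only guarantees $\hat t-t'=x+y\leq\stabint{s}$; it imposes no upper bound on $y$ in terms of $x$ or $z$, so this quantity can be positive (e.g.\ small $x$, small $z$, but $y$ close to $\stabint{s}$). In other words, in Case~1 ($t'\geq t^-$) your reserve term and Lipschitz constant cancel neatly, but in Case~2 the portion of $[t',t^+]$ lying before $t^-$ is charged at rate $(1+\drift)(1+\mu)$ rather than $(1+\drift)$, and nothing in your argument absorbs the resulting $2\drift(2+\mu)y$ surplus. A secondary concern is the claim that $\Xi_u^s$ is continuous (needed for $\Xi_u^s(\hat t)=0$): the function $\Xi_u^s=\max_{p\in P_u^s}\xi_p^s$ can jump when $P_u^s$ changes, and the paper circumvents this by tracking a single path via Lemma~\ref{lemma:paths}, a step you omit. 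I would recommend abandoning the reduction to Lemma~\ref{lemma:pull} and following the paper's direct path-tracking argument.
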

\begin{proof}
  Let $p = (u, \ldots, v)$ be a path such that $\xi_p^s(t^+) =
  \Xi_u^s(t^+) \geq 2\drift(t^+ - t^-)$ and let $t \in [t^-,t^+]$ Lemma~\ref{lemma:paths} states
  that $p \in P_u^s(t)$ and hence $\Xi_u^s(t) \geq \xi_p^s(t)$. How
  much can $\xi_p^s$ decrease when we go back from time $t^+$ to time
  $t$? We have
  \begin{equation*}
    \lc{u}(t^+) - \lc{u}(t)
    =
    \lc{u}(t^+) - \lc{u}(t^-)
    -
    \left( \lc{u}(t) - \lc{u}(t^-)\right)
    \leq
    (1 + \drift)(t^+ - t^-)
    -
    (1 - \drift)(t - t^-).
  \end{equation*}
  Therefore,
  \begin{align*}
    \xi_p^s(t)
    &=
    \xi_p^s(t^+)
    - \left( \lc{u}(t^+) - \lc{u}(t) \right)
    + \left( \lc{v}(t^+) - \lc{v}(t) \right)
    \\
    &
    >
    2\drift(t^+ - t^-)
    - (1 + \drift)(t^+ - t^-) + (1 - \drift)(t - t^-)
    + (1 - \drift)(t^+ - t)
    \\
    &=
    2\drift(t^+ - t^-) - (1 + \drift)(t^+ - t^-) + (1 - \drift)(t^+ - t^-)
    \\
    &= 2\drift(t^+ - t^-) - 2\drift(t^+ - t^-) = 0.\qedhere
  \end{align*}
\end{proof}

\subsection{Decrease of \texorpdfstring{\boldmath{$\Psi$}}{Psi} and
  Stability}

Next, we show a simple lemma that roughly states that if a node is in
fast mode, then a condition slightly weaker than the slow mode
condition \SC is false.  We already know that \SC itself cannot
hold when a node is in fast mode from Lemma~\ref{lemma:algo_correct}.
The weaker condition is almost the same as \SC, except that the slack
$\delta$ is removed.

\begin{lemma}\label{lemma:neighbors_clocks}
  Assume that for some node $u\in V$, a level $s\in \nat$, and times
  $t^-<t^+$ we have $(L_u(t^-),L_u(t^+)]\cap \Tset_s =\emptyset$. If
  \begin{equation*}
    t^-<t_0 \coloneq \min \set{ t \in [t^-,t^+] \st L_u(t^+)-L_u(t) \leq (1 +
      \drift)(t^+ - t)},
  \end{equation*}
  then
  \begin{equation}\label{eq:fast_neighbor}
    \exists w\in \N_u^s(t_0):~L_w(t_0)-L_u(t_0) >
    \left(s+\frac{1}{2}\right)\kappa_{\set{u,w}} + \mu(1+\drift)\tup_{\set{u,w}}
  \end{equation}
  or
  \begin{equation}\label{eq:slow_neighbors}
    \forall v\in \N_u^s(t_0):~L_u(t_0)-L_v(t_0)<
    \left(s+\frac{1}{2}\right)\kappa_{\set{u,v}}.
  \end{equation}
\end{lemma}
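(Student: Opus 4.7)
The plan is to proceed by contradiction. Suppose neither (\ref{eq:fast_neighbor}) nor (\ref{eq:slow_neighbors}) holds at $t_0$; then there is some $v^*\in N_u^s(t_0)$ with $L_u(t_0)-L_{v^*}(t_0)\geq(s+\tfrac{1}{2})\kappa_{\{u,v^*\}}$, while every $w\in N_u^s(t_0)$ satisfies $L_w(t_0)-L_u(t_0)\leq(s+\tfrac{1}{2})\kappa_{\{u,w\}}+\mu(1+\drift)\tup_{\{u,w\}}$. Choosing $\delta:=\min_e\delta_e>0$, both inequalities hold with a strict slack of $\delta$ compared to the corresponding bounds in \SC\ (Definition~\ref{def:sc}); in particular, the hypotheses of \SC\ at $u$, level $s$, and time $t_0$ are satisfied, with strict inequalities.

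The next step is to promote this from the single time $t_0$ to a left-open interval ending at $t_0$. First, $N_u^s$ is locally constant from the left at $t_0$: because $L_u$ is strictly increasing and $L_u(t_0)\in(L_u(t^-),L_u(t^+)]$, the hypothesis $(L_u(t^-),L_u(t^+)]\cap\Tset_s=\emptyset$ forbids any addition to $N_u^s$ in a neighborhood of $t_0$; removals occur only at discrete times and, by the closed-set convention, any node removed from $N_u^s$ exactly at $t_0$ still lies in $N_u^s(t_0)$. The finite-change assumption therefore yields $\epsilon>0$ with $N_u^s(t)=N_u^s(t_0)$ for all $t\in(t_0-\epsilon,t_0]$. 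Combining left-continuity of the logical clocks (implied by left-differentiability) with the strict slack above, there is $\epsilon'\in(0,\epsilon]$ such that \SC\ holds at $u$ and level $s$ throughout $(t_0-\epsilon',t_0]$. By Lemma~\ref{lemma:algo_correct}, the slow mode trigger is then satisfied on $(t_0-\epsilon',t_0]$, so $u$ is in slow mode and $L_u$ advances at rate at most $h_u(t)\leq 1+\drift$, giving $L_u(t_0)-L_u(t_0-\epsilon')\leq(1+\drift)\epsilon'$.

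It remains to contradict this bound. By continuity of $L_u$ and the minimality of $t_0$ one gets $L_u(t^+)-L_u(t_0)=(1+\drift)(t^+-t_0)$; subtracting this from the strict inequality $L_u(t^+)-L_u(t)>(1+\drift)(t^+-t)$, which holds for all $t\in[t^-,t_0)$ by the assumption $t^-<t_0$ and minimality of $t_0$, yields $L_u(t_0)-L_u(t)>(1+\drift)(t_0-t)$ for every such $t$. Shrinking $\epsilon'$ if necessary so that $t_0-\epsilon'\geq t^-$ and applying this at $t=t_0-\epsilon'$ contradicts the slow-mode bound of the preceding paragraph. The main obstacle is the neighbor-set-stability argument in the middle step, which is precisely where the $\Tset_s$ hypothesis is used; everything else reduces to continuity plus the \SC-characterization of slow mode supplied by Lemma~\ref{lemma:algo_correct}.
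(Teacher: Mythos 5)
Your proof is correct and follows essentially the same route as the paper's: negate the disjunction, observe that the negation implies \SC\ holds at $t_0$ with slack $\delta$, use the $\Tset_s$ hypothesis together with the finite-change and closed-set conventions to freeze $\N_u^s$ on a left-neighborhood of $t_0$, extend \SC\ to that neighborhood by continuity, invoke Lemma~\ref{lemma:algo_correct} to get slow mode and hence rate at most $1+\drift$, and contradict the minimality of $t_0$. The only cosmetic difference is that you contradict the slow-mode rate bound directly by deriving $L_u(t_0)-L_u(t)>(1+\drift)(t_0-t)$ from the boundary equality, whereas the paper sums the two rate inequalities to exhibit a $t_0'<t_0$ satisfying the defining condition of $t_0$; these are equivalent.
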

\begin{proof}
  Assuming the contrary, the logical negation of $\eqref{eq:fast_neighbor} \vee
  \eqref{eq:slow_neighbors}$ is
  \begin{eqnarray*}
    &&\forall v\in \N_u^s(t_0):~L_v(t_0)-L_u(t_0)\leq
    \left(s+\frac{1}{2}\right)\kappa_{\{u,v\}}+ \mu(1+\drift)\tup_{\set{u,w}}\\
    \wedge&&\exists w\in \N_u^s(t_0):~L_u(t_0)-L_w(t_0)\geq
    \left(s+\frac{1}{2}\right)\kappa_{\{u,w\}}.
  \end{eqnarray*}
  Because $(L_u(t^-),L_u(t^+)]\cap \Tset_s =\emptyset$, no new neighbors
  are added to $\N_u^s$ in the interval $(t^-, t^+]$. Neighbors can be
  removed during the interval, but there are only finitely many
  neighbors to remove ($\N_u^s(t^-)$ is finite). Recall that a node is
  considered being in $N_u^s$ at both the times of its insertion and
  removal. Hence there is some $\tilde{t}\in [t^-,t_0)$ such that
  $\N_u^s(t)=\N_u^s(t')$ for all $t,t'\in [\tilde{t},t_0]$.
  Furthermore, since logical clocks are continuous and $\delta>0$, there
  is a sub-interval $[t_0', t_0] \subseteq [\tilde{t},t_0]$ such that
  $t_0'<t_0$ and for all $t \in [t_0', t_0]$ we have
  \begin{eqnarray*}
    &&\forall v\in \N_u^s(t_0)=\N_u^s(t):~L_v(t)-L_u(t)\leq
    \left(s+\frac{1}{2}\right)\kappa_{\{u,v\}}+
    \mu(1+\drift)\tup_{\set{u,w}}+\delta\\
    \wedge&&\exists w\in \N_u^s(t_0)=\N_u^s(t):~L_u(t)-L_w(t)\geq
    \left(s+\frac{1}{2}\right)\kappa_{\{u,w\}}-\delta.
  \end{eqnarray*}
  Thus, $\SC$ applies at $u$ for all $t \in [t_0', t_0]$, and $u$ must be in slow
  mode during this interval, so $\lc{u}(t_0) - \lc{u}(t_0') \leq (1 +
  \drift)(t_0 - t_0')$. But we also know by choice of $t_0$ that $\lc{u}(t^+) -
  \lc{u}(t_0) \leq (1 + \drift)(t^+ - t_0)$; therefore,
  \begin{equation*}
    L_u(t^+)-L_u(t_0')=L_u(t^+)-L_u(t_0)+L_u(t_0)-L_u(t_0')\leq
    (1+\drift)(t^+-t_0'),
  \end{equation*}
  contradicting the definition of $t_0$.
\end{proof}

We are now ready to prove our main theorem.
\begin{proof}[Proof of Theorem~\ref{thm:catch_up}]
Assume for the sake of contradiction that at some time $t_0 \in [t^- + \Lambda_s
+ 2\stabint{s}, t^+]$ we have $\Psi_u^s(t_0) \geq 2\drift \Lambda_s$; that is,
there is a path $(u = u_k, \ldots, u_0) \in P_u^s(t_0)$ such that
  \begin{equation}
    \psi_{(u_k, \ldots, u_0)}^s(t_0) = \Psi_u^s(t_0) \geq 2\drift \Lambda_s.
    \label{eq:initial_psi}
  \end{equation}
  For the inverse path $(u_0, \ldots, u_k)$ we have
  \begin{equation}
    \xi_{(u_0, \ldots, u_k)}^s(t_0) = \psi_{(u_0, \ldots, u_k)}^s(t_0)
    + \frac{\kappa_{(u_0, \ldots, u_k)}}{2}
    \geq
    2\drift\Lambda_s
    + \frac{\kappa_{(u_0, \ldots, u_k)}}{2}
    > 0.
    \label{eq:initial_xi}
  \end{equation}
  Lemma~\ref{lemma:shortpaths} shows that
  \begin{equation}
    \kappa_{(u_0, \ldots, u_k)} \leq C_{s-1}.
    \label{eq:initial_kappa}
  \end{equation}
  By the conditions of the lemma, $u$ satisfies the
  $(C,s)$-stabilization condition throughout $[t^-, t^+]$. Hence
  Lemma~\ref{lemma:paths} shows that for all $t \in [t^-, t_0]$ we
  have $(u_0, \ldots, u_k) \in P_u^s(t)$. In particular, each sub-path
  $(u_0, \ldots, u_i)$ is also in $P_u^s$ throughout $[t^-, t_0]$, and
  by the conditions of the lemma, this shows that each node $u_i$ on
  the path satisfies the $(C,s)$-stabilization condition throughout
  $[t^-, t_0]$.

  We construct a sequence of non-increasing times
  $t_0 = t_0' \geq t_1 \geq t_1' \geq \ldots t_{\ell} \geq t_{\ell'}$,
  where $t_{\ell} \geq t_0 - \Lambda_s$ and $t_{\ell'} \in [t^-, t_0 - \Lambda_s]$,
  and where each pair $t_i, t_i'$ for $i < \ell$ is associated with a path $p_i$
  of non-zero length ending at $u_k$. The construction maintains the following
  properties for all $0 \leq i \leq \ell$:
  \begin{enumerate}[(1)]
  \item For all $t \in [t^-, t_i']$ we have $p_i \in P_{u_k}^s(t)$.
    \label{item:path_ih}
  \item We have
    \begin{equation}
      \xi_{p_i}^s(t_i')
      \geq
      2\drift \Lambda_s
      + \frac{\kappa_{p_i}}{2}
      - (1 + \drift)(t_0 - t_i')
      + \lc{u}(t_0) - \lc{u}(t_i')
      .
      \label{eq:xi_ih_prime}
    \end{equation}
    \label{item:xi_ih_prime}
  \item If $i < \ell$ then we have
    \begin{equation}
      \xi_{p_i}^s(t_{i+1})
      \geq
      2\drift \Lambda_s
      + \frac{\kappa_{p_i}}{2}
      - (1 + \drift)(t_0 - t_{i+1})
      + \lc{u}(t_0) - \lc{u}(t_{i+1})
      .
      \label{eq:xi_ih}
    \end{equation}
    \label{item:xi_ih}
  \end{enumerate}

  \paragraph{Constructing the sequence.}
  We construct the sequence as follows: first, we show that we can find an initial
  path $p_0$ satisfying Properties~\eqref{item:path_ih}
  and~\eqref{item:xi_ih_prime}. Then we show that if we have already constructed
  the sequence up to $i$, such that
  Properties~\eqref{item:path_ih} and~\eqref{item:xi_ih_prime} hold for $i$, then
  we can extend the construction by one step, choosing a time $t_{i+1}$ such that
  \eqref{item:xi_ih} holds at $i$ as well, and selecting a new path $p_{i+1}$ and
  time $t_{i+1}'$ for which Properties~\eqref{item:path_ih}
  and~\eqref{item:xi_ih_prime} hold (until we finally reach some time
  $t_{\ell'}\in [t^-, t_0 - \Lambda_s]$ and the construction halts).

  For the base of the construction we set $p_0 \coloneq (u_0, \ldots, u_k)$ and
  $t_0' := t_0$. For this path we have already seen that $(u_0, \ldots, u_k) \in
  P_{u_k}^s(t)$ for all $t \in [t^-,t_0]$, so Property~\eqref{item:path_ih} is
  satisfied. Also, at time $t_0$ we have $\xi_{p_0}^s(t_0) \geq 2\drift \Lambda_s
  + \kappa_{p_0} / 2$ by choice of $p_0$, and since $t_0' = t_0$, this shows that
  Property~\eqref{item:xi_ih_prime} holds.

  Suppose that we have already constructed the sequence up to time $t_i'$ such
  that Properties~\eqref{item:path_ih} and \eqref{item:xi_ih_prime} hold. In
  particular, at time $t_i'$ we have
  \begin{equation}
    \xi_{p_i}^s(t_i') \geq
    2\drift \Lambda_s
    + \frac{\kappa_{p_i}}{2}
    - (1 + \drift)(t_0 - t_i')
    + \lc{u}(t_0) - \lc{u}(t_i')
    .
    \label{eq:inst_xi_ih}
  \end{equation}
  Suppose further that $t_i' > t_0 - \Lambda_s$ (otherwise the construction halts
  at $i$). Let $p_i = (w = w_0, \ldots, w_m = u_k)$ be the path associated with
  the $i$-th step. We define
  \begin{equation}
    t_{i+1} \coloneq \min \set{ t \in [t_0 - \Lambda_s, t_i'] \st \lc{w}(t_i') - \lc{w}(t) \leq (1 + \drift)(t_i' - t)}.
    \label{eq:t_i_plus_1}
  \end{equation}
  The minimum is taken over a non-empty set because $\lc{w}(t_i') - \lc{w}(t_i')
  \leq (1 + \drift)(t_i' - t_i')$.

  From~\eqref{eq:inst_xi_ih}, we get that
  \begin{eqnarray}
    \xi_{p_i}^s(t_{i+1})
    &=&
    \xi_{p_i}^s(t_i')
    - \left(\lc{w}(t_i') - \lc{w}(t_{i+1})\right)
    + \left( \lc{u_k}(t_i') - \lc{u_k}(t_{i+1}) \right)
    \nonumber
    \\
    &\stackrel{\eqref{eq:inst_xi_ih}}{\geq}&
    2\drift \Lambda_s
    + \frac{\kappa_{p_i}}{2}
    - (1 + \drift)(t_0 - t_i')
    + \lc{u}(t_0) - \lc{u}(t_i')\\
    &&- (1 + \drift)(t_i' - t_{i+1})
    + \lc{u_k}(t_i') - \lc{u_k}(t_{i+1})
    \nonumber
    \\
    &=&
    2\drift \Lambda_s
    + \frac{\kappa_{p_i}}{2}
    - (1 + \drift)(t_0 - t_{i+1})
    + \lc{u}(t_0) - \lc{u}(t_{i+1})
    .
    \label{eq:xi_at_t_i_1}
  \end{eqnarray}
  This shows that Property~\eqref{item:xi_ih} holds at $i$.

  If $t_{i+1} = t_0 - \Lambda_s$, then we define $t_{i+1}' \coloneq
  t_{i+1}$ and $p_{i+1} \coloneq p_i$, and we are done. Thus, suppose
  that $t_{i+1} > t_0 - \Lambda_s$. In this case~\eqref{eq:xi_at_t_i_1}
  shows that $\xi_{p_i}^s(t_{i+1}) > 0$. Consequently, since $u_k$
  satisfies the level $s$ stabilization condition throughout $[t^-,
  t_0]$, we can use Lemma~\ref{lemma:shortpaths} to show that
  $\kappa_{p_i} \leq C_{s-1}$.

  For the $(i+1)$-th step, we choose path $p_{i+1}$ using
  Lemma~\ref{lemma:neighbors_clocks}, but first we must establish the conditions
  of the lemma. From I.H.~\eqref{item:path_ih}, for all $t \in [t^-, t_{i+1}]
  \subseteq [t^-, t_i']$ we have $p_i \in P_u^s(t)$.
  Hence, by the conditions of the lemma, node $w$ satisfies the level $s$
  stabilization condition  throughout $[t^-, t_{i+1}]$. In particular,
  this implies that
  \begin{equation*}
    \Tset_s \cap [\lc{w}(t^-),\lc{w}(t_{i+1})] = \emptyset,
  \end{equation*}
  because the level $s$ stabilization condition implies that node $w$'s logical
  clock does not cross any update point $T_s^{e}$ for any edge $e$ at any
  point throughout the interval $[t^-, t_{i+1}]$.

  This allows us to apply Lemma~\ref{lemma:neighbors_clocks}, which shows that
  \begin{equation}
    \exists w' \in N_w^s(t_{i+1})
    \medspace : \medspace
    \lc{w'}(t_{i+1}) - \lc{w}(t_{i+1})
    >
    \left( s + \frac{1}{2} \right)\kappa_{\set{w, w'}} + \mu(1 + \drift)\tau_{\set{w, w'}},
    \label{eq:switch_fwd}
  \end{equation}
  or
  \begin{equation}
    \forall v \in N_w^s(t_{i+1})
    \medspace : \medspace
    \lc{w}(t_{i+1}) - \lc{v}(t_{i+1})
    <
    \left( s + \frac{1}{2} \right)
    \kappa_{\set{v,w}}.
    \label{eq:switch_back}
  \end{equation}
  We consider each case separately.

  First, suppose that~\eqref{eq:switch_fwd} holds for some node
  $w'$ and define $p_{i+1} \coloneq (w',w) \circ
  p_i = (w', w = w_0, \ldots, w_m = u_k)$ and
  $t_{i+1}' = t_{i+1} - \tau_{\set{w, w'}}$.
  We call this a \emph{forward step}, as the distance to $u_k$ increases.

  For the new path $p_{i+1}$ we have
  \begin{align}
    \xi_{p_{i+1}}^s(t_{i+1})
    &\medspace = \medspace
    \xi_{p_i}^s(t_{i+1})
    + \lc{w'}(t_{i+1})
    - \lc{w}(t_{i+1})
    - s \cdot \kappa_{\set{w, w'}}
    \nonumber
    \\
    & \medspace \stackrel{\mathclap{\eqref{eq:switch_fwd}}}{>} \medspace
    \xi_{p_i}^s(t_{i+1})
    + \frac{\kappa_{\set{w,w'}}}{2}
    + \mu(1 + \drift)\tau_{\set{w, w'}}
    \nonumber
    \\
    & \medspace \stackrel{\mathclap{\eqref{eq:xi_at_t_i_1}}}{\geq} \medspace
    2\drift \Lambda_s
    + \frac{\kappa_{p_{i+1}}}{2}
    - (1 + \drift)(t_0 - t_{i+1})
    + \lc{u}(t_0) - \lc{u}(t_{i+1})
    + \mu(1 + \drift)\tau_{\set{w, w'}}
    \nonumber
    \\
    & \medspace > 2\drift \Lambda_s-(1+\drift)(t_0 - t_{i+1})+ (1 - \drift)(t_0 -
    t_{i+1})
    \nonumber
    \\
    & \medspace > 0.
    \label{eq:xi_after_fwd_switch}
  \end{align}

  By assumption we have $t_{i+1} > t_0 - \Lambda_s$, and hence $t_{i+1} > t^- +
  \stabint{s}$. Because $u_k$ satisfies the $(C,s)$-stabilization condition
  throughout $[t^-, t_{i+1}]$, we can apply Lemma~\ref{lemma:paths} to the
  interval $[t^- + \stabint{s}, t_{i+1}]$ to show that for all $t \in [t^-,
  t_{i+1}']$ we have $p_{i+1} \in P_{u_k}^s(t)$, so Property~\eqref{item:path_ih}
  holds. The lemma also shows that $t_{i+1}'\geq t^-$.

  Going back to time $t_{i+1}' $, we have
  \begin{align*}
    \xi_{p_{i+1}}^s(t_{i+1}')
    &\medspace = \medspace
    \xi_{p_{i+1}}^s(t_{i+1})
    - \left( \lc{w'}(t_{i+1}) - \lc{w'}(t_{i+1}') \right)
    + \left( \lc{u_k}(t_{i+1}) - \lc{u_k}(t_{i+1}') \right)
    \\
    &\medspace \geq \medspace
    \xi_{p_{i+1}}^s(t_{i+1})
    - (1 + \mu)(1 + \drift)(t_{i+1} - t_{i+1}')
    + \lc{u_k}(t_{i+1}) - \lc{u_k}(t_{i+1}')
    \\
    &\medspace \stackrel{\mathclap{\eqref{eq:xi_after_fwd_switch}}}{>} \medspace
    2\drift \Lambda_s
    + \frac{\kappa_{p_{i+1}}}{2}
    - (1 + \drift)(t_0 - t_{i+1})
    + \lc{u}(t_0) - \lc{u}(t_{i+1})
    + \mu(1 + \drift)\tau_{\set{w, w'}}
    \\
    &\qquad
    - (1 + \mu)(1 + \drift)\tau_{\set{w, w'}}
    + \lc{u_k}(t_{i+1}) - \lc{u_k}(t_{i+1}')
    \\
    &\medspace = \medspace
    2\drift \Lambda_s
    + \frac{\kappa_{p_{i+1}}}{2}
    - (1 + \drift)(t_0 - t_{i+1} + \tau_{\set{w, w'}})
    + \lc{u}(t_0) - \lc{u}(t_{i+1}')
    \\
    &\medspace = \medspace
    2\drift \Lambda_s
    + \frac{\kappa_{p_{i+1}}}{2}
    - (1 + \drift)(t_0 - t_{i+1}')
    + \lc{u}(t_0) - \lc{u}(t_{i+1}')
    .
  \end{align*}
  This shows that Property~\eqref{item:xi_ih_prime} holds for $i + 1$.

  Now let us turn to the other case, in which~\eqref{eq:switch_back} holds.
  From I.H.~\eqref{item:path_ih} we know that $p_i \in P_{u_k}^s(t_{i+1})$, and in
  particular, for the next node $w_1$ on the path $(w = w_0, w_1, \ldots, w_m =
  u_k)$, we have
  $w_1 \in \N_w^s(t_{i+1})$.
  Thus,~\eqref{eq:switch_back} shows that
  \begin{equation}
    \lc{w}(t_{i+1}) - \lc{w_1}(t_{i+1})
    <
    \left( s + \frac{1}{2} \right)\kappa_{\set{w, w_1}}.
    \label{eq:backtrace}
  \end{equation}

  In this case, we define $p_{i+1} \coloneq (w_1, \ldots, w_m = u_k)$, that is, we
  remove $w$ from the head of the path, and $t_{i+1}' \coloneq t_{i+1}$. We call
  this a \emph{backward step}. Property~\eqref{item:path_ih} for $i + 1$ follows
  immediately from I.H.~\eqref{item:path_ih} (for $i$). As for
  Property~\eqref{item:xi_ih_prime}, we have
  \begin{align*}
    \xi_{p_{i+1}}^s(t_{i+1})
    &\medspace = \medspace
    \xi_{p_i}^s(t_{i+1})
    - \lc{w}(t_{i+1})
    + \lc{w_1}(t_{i+1})
    + s \cdot \kappa_{\set{w, w_1}}
    \\
    &\medspace \stackrel{\mathclap{\eqref{eq:backtrace}}}{>} \medspace
    \xi_{p_i}^s(t_{i+1})
    - \frac{\kappa_{\set{w, w_1}}}{2}
    \\
    &\medspace \stackrel{\mathclap{\eqref{eq:xi_at_t_i_1}}}{\geq} \medspace
    2 \drift \Lambda_s
    + \frac{\kappa_{p_i}}{2}
    - \frac{\kappa_{\set{w, w_1}}}{2}
    - (1 + \drift)(t_0 - t_{i+1})
    + \lc{u}(t_0) - \lc{u}(t_{i+1})
    \\
    &\medspace = \medspace
    2 \drift \Lambda_s
    + \frac{\kappa_{p_{i+1}}}{2}
    - (1 + \drift)(t_0 - t_{i+1})
    + \lc{u}(t_0) - \lc{u}(t_{i+1})
    .
  \end{align*}
  Since $t_{i+1}' = t_{i+1}$, Property~\eqref{item:xi_ih_prime} is satisfied for $i + 1$.
  Note in particular that we have $\xi_{p_{i+1}}^s(t_{i+1}) > 0$, and hence $p_{i+1} \neq (u_k)$, because
  $\xi_{(u_k)}^s(t) = 0$ for all times $t$.
  This concludes the induction.

  We note that the sequence we constructed is finite, that is, there is some $\ell \in \nat$ such that
  $t_{\ell}' \in [t^-, t_0 - \Delta]$.
  This is because every time we make a forward step we have $t_{i+1}' \leq t_{i+1} - \min_{e \in \binom{V}{2}} \tau_e$,
  so after finitely many such steps we reach time $t_0 - \Delta$;
  as for backward steps, each such step shortens the path, so only finitely many backward steps
  can occur between two forward steps.

  \paragraph{Properties of the chain construction.}
  Let $v_0, \ldots, v_{\ell - 1}$ denote the first node on each path $p_0, \ldots, p_{\ell - 1}$ in the sequence above.
  Before proceeding, we establish the following additional properties of the chain $v_0, \ldots, v_{\ell - 1}$.
  \begin{enumerate}[(1)]
    \setcounter{enumi}{3}
  \item For all $i = 0, \ldots, \ell - 1$ we have $\kappa_{p_i} \leq C_{s-1}$.
    \label{item:kappa}
  \item For all $i = 0, \ldots, \ell - 1$, node $v_i$ satisfies the level $s$ stabilization condition throughout $[t^-, t_i']$.
    \label{item:stab}
  \item For all $i = 0, \ldots, \ell - 1$ we have $\Xi_{v_i}(t_i') \geq
  2\drift(t_i' - (t_0 - \Lambda_s))$ and for all $t \in (t_{i+1}, t_i']$ we have
  $\Xi_{v_i}^s(t) > 0$.
    \label{item:Xi}
  \item There is an index $m \in \set{0, \ldots, \ell - 1}$ such that in the construction
    all steps prior to index $m$ are backward steps and all steps starting from index $m$ are forward steps.
    \label{item:fwd_suffix}
  \item
    For all $m \leq i < j \leq \ell$, we have
    \begin{equation}
      \lc{v_i}(t_{i+1})
      -
      \lc{v_j}(t_j')
      \leq
      (1 + \drift)(t_{i+1} - t_j')
      -
      \left( s + \frac{1}{2} \right)\kappa_{(v_i,\ldots, v_j)}
      .
      \label{eq:chain}
    \end{equation}
    \label{item:span}
  \end{enumerate}
  The proof of these properties follows.

  Fix $i \leq \ell - 1$. We know that node $u$ satisfies the level $s$
  stabilization condition throughout $[t^-, t^+]$. From
  Property~\eqref{item:path_ih} for index $i$ we have $(v_i, \ldots, u_k) = p_i
  \in P_{u_k}^s(t)$ for all $t \in [t^-, t_i]$. Also,
  Property~\eqref{item:xi_ih_prime} states that
  \begin{align*}
    \xi_{p_i}^s(t_i') &\geq 2\drift \Lambda_s - (1 + \drift)(t_0 - t_i') +
    \lc{u_k}(t_0) - \lc{u_k}(t_i')
    \\
    &
    \geq 2\drift(t_i' - (t_0 - \Lambda_s)) \geq 0
  \end{align*}
  Consequently, Lemma~\ref{lemma:shortpaths} shows that $\kappa_{p_i} \leq C_{s -
    1}$, so Property~\eqref{item:kappa} holds, and Property~\eqref{item:stab}
  follows from the conditions of the current lemma.

  Because $p_i \in P_{u_k}^s(t_i')$ we have $\Xi_{v_i}^s(t_i') \geq
  \xi_{p_i}^s(t_i')$, and hence $\Xi_{v_i}^s(t_i') \geq 2\drift( t_i' - (t_0 -
  \Lambda_s))$. Also, because $t_i' - (t_0 - \Lambda_s) \geq t_i' - t_{i+1}$, we can
  apply Lemma~\ref{lemma:Xi_pos} to show that $\Xi_{v_i}^s(t) > 0$ throughout
  $(t_{i+1}, t_i']$. This shows that Property~\eqref{item:Xi} holds.

  To show Property~\eqref{item:fwd_suffix}, we show that if a forward step occurs
  at index $i < \ell - 1$ of the construction, then at index $i + 1$ we also take
  a forward step. The property follows.

  Suppose that this is not the case, that is, at some index $i < \ell - 1$ we have
  \begin{equation*}
    \lc{v_i}(t_i) - \lc{v_{i-1}}(t_i) > 
    \left( s + \frac{1}{2} \right)\kappa_{\set{v_{i-1}, v_i}} + \mu(1 + \drift)\tau_{\set{v_{i-1}, v_i}}
  \end{equation*}
  and at index $i + 1$ we have $v_{i+1} = v_{i-1}$ and
  \begin{equation*}
    \lc{v_i}(t_{i+1}) - \lc{v_{i-1}}(t_{i+1}) < \left( s + \frac{1}{2} \right)\kappa_{\set{v_{i-1}, v_i}}.
  \end{equation*}
  We show that this implies that node $v_{i-1}$'s average rate over the interval $[t_{i+1}, t_i]$ was no greater than $1 + \drift$,
  contradicting the choice of $t_i$ as the \emph{minimal} time such that $v_{i-1}$'s average rate over $[t_i, t_{i-1}']$ did not exceed $1 + \drift$.

  Summing the two inequalities above yields
  \begin{equation}
    \lc{v_{i-1}}(t_i) - \lc{v_{i-1}}(t_{i+1})
    <
    \lc{v_i}(t_i) - \lc{v_i}(t_{i+1})
    - \mu(1 + \drift)\tau_{\set{v_{i-1}, v_i}}.
    \label{eq:untitled1}
  \end{equation}
  By definition of $t_{i+1}$ we have $\lc{v_i}(t_i') - \lc{v_i}(t_{i+1}) \leq (1 + \drift)(t_i' - t_{i+1})$,
  and since at index $i$ we took a forward step, we defined $t_i' = t_i - \tau_{\set{v_{i-1}, v_i}}$.
  Hence
  \begin{align*}
    \lc{v_i}(t_i) - \lc{v_i}(t_{i+1})
    &=
    \lc{v_i}(t_i) - \lc{v_i}(t_i')
    + \lc{v_i}(t_i') - \lc{v_i}(t_{i+1})
    \\
    &\leq
    (1 + \drift)(1 + \mu)\tau_{\set{v_{i-1}, v_i}}
    + (1 + \drift)(t_i - \tau_{\set{v_{i-1}, v_i}} - t_{i+1})
    \\
    &=
    (1 + \drift)(t_i - t_{i+1}) + \mu(1 + \drift)\tau_{\set{v_{i-1}, v_i}}
    .
  \end{align*}
  Combining with~\eqref{eq:untitled1} yields
  \begin{align*}
    \lc{v_{i-1}}(t_i) - \lc{v_{i-1}}(t_{i+1})
    &<
    (1 + \drift)(t_i - t_{i+1}) + \mu(1 + \drift)\tau_{\set{v_{i-1}, v_i}}
    -
    \mu(1 + \drift)\tau_{\set{v_{i-1}, v_i}}
    \\
    &\leq
    (1 + \drift)(t_i - t_{i+1}).
  \end{align*}
  This is a contradiction.

  This shows that after the first forward step in the construction (if one occurs),
  no backward steps can occur. Thus, there is some index $m \in \set{ 0, \ldots, \ell - 1}$ such that for all
  $i = m, \ldots, \ell - 2$,
  node $v_{i+1}$ is obtained from node $v_i$ by a forward step at time $t_{i+1}$.
  (If no forward steps occur in the construction then we set $m \coloneq \ell - 1$.)

  Finally we show Property~\eqref{item:span}. Fix $i, j$ such that $m \leq i < j \leq \ell - 1$.
  All steps between index $m$ and index $\ell - 1$ are forward steps, and hence
  for each $k = i, \ldots, j - 1$ we have
  \begin{equation*}
    \lc{v_{k+1}}(t_{k+1}) - \lc{v_k}(t_{k+1})
    >
    \left( s + \frac{1}{2} \right) \kappa_{\set{v_k, v_{k+1}}}
    + \mu(1 + \drift)\tau_{\set{v_k, v_{k+1}}}
    .
  \end{equation*}
  Also, because $t_{k+1}' = t_{k+1} - \tau_{\set{v_k, v_{k+1}}}$ we have
  \begin{equation*}
    \lc{v_{k+1}}(t_{k+1})
    -
    \lc{v_{k+1}}(t_{k+1}')
    \leq
    (1 + \drift)(1 + \mu)\tau_{\set{v_k, v_{k+1}}},
  \end{equation*}
  and by definition of $t_{k+2}$,
  \begin{equation*}
    \lc{v_{k+1}}(t_{k+1}') - \lc{v_{k+1}}(t_{k+2})
    \leq
    (1 + \drift)(t_{k+1}' - t_{k+2}).
  \end{equation*}
  Summing the three inequalities above yields, for each $k = i, \ldots, j-2$,
  \begin{equation*}
    \lc{v_k}(t_{k+1}) - \lc{v_{k+1}}(t_{k+2})
    <
    (1 + \drift)(t_{k+1} - t_{k+2})
    -
    \left( s + \frac{1}{2} \right)\kappa_{\set{v_k, v_{k+1}}}
    .
  \end{equation*}
  Summing over $k = i, \ldots, j - 2$, we obtain
  \begin{equation*}
    \lc{v_i}(t_{i+1}) - \lc{v_{j - 1}}(t_j)
    \leq
    (1 + \drift)(t_i - t_j)
    -
    \left( s + \frac{1}{2} \right)\kappa_{(v_i, \ldots, v_{j-1})}
    .
  \end{equation*}
  For the final step, from $j - 1$ at time $t_j$ to $j$ at time $t_j'$, we use
  only the first two inequalities, which show that
  \begin{equation*}
    \lc{v_{j-1}}(t_j) - \lc{v_j}(t_j') \leq
    (1 + \drift)\tau_{\set{v_{j-1}, v_j}}
    -
    \left( s + \frac{1}{2} \right)\kappa_{\set{v_{j-1}, v_j}}
    .
  \end{equation*}
  Since $t_j - t_j' = \tau_{\set{v_{j-1}, v_j}}$, we combine the two inequalities above to obtain~\eqref{eq:chain}, as desired.

  \paragraph{Bounding $\Xi^s$.}
  The chain construction provides us with a sequence of sub-intervals $\set{ [t_0,
    t_m']} \cup \set{ [t_{i+1}, t_i'] \st i = m, \ldots, \ell - 2} \cup \set{ [t_0 -
    \Lambda_s, t_{\ell - 1}']}$, each associated with a node $v_i$ that has a
  non-negative $\Xi^s$-value and a small average clock rate (at most $1 + \drift$)
  over the entire sub-interval. Roughly speaking, over each
  sub-interval, Lemma~\ref{lemma:pull} shows that $\Xi^s$ decreases at an average
  rate of at least $(1 - \drift)(1 + \mu) - (1 + \drift) = (1 - \drift)\mu -
  2\drift$; since at the end of the whole interval (time $t_0$) we had
  $\Xi_{v_0}^s(t_0) \geq 2\drift C_{s-1} / ((1 - \drift)\mu)$, at the beginning
  (time $t_0 - \Lambda_s = t_0 - C_{s-1} / ( (1 - \drift)\mu)$) we will be able to
  show that for some node we have $\Xi^s(t_0 - \Lambda_s) \geq \Xi_{v_0}^s(t_0) +
  ((1 - \drift)\mu - 2\drift)\cdot \Lambda_s > C_{s-1}$, which will result in a
  contradiction to $(s-1)$-legality at time $t_0 - \Lambda_s$. Thus, our strategy
  now is to apply Lemma~\ref{lemma:pull} successively to the sub-intervals we
  obtained in the construction as we went back in time from $t_0$ to $t_0 - \Delta
  t$.

  However, when we apply Lemma~\ref{lemma:pull} to a sub-interval $[t_{i+1},
  t_i']$ there is some ``overshoot'': the lemma yields a time $t' \in [t_{i+1} -
  \stabint{s}, t_{i+1}]$ such that over the interval $[t', t_i']$, $\Xi^s$
  decreases quickly, but $t'$ falls in some other sub-interval $[t_{j+1}, t_j]$.
  Before we can apply Lemma~\ref{lemma:pull} again, we must deal with two
  concerns:
  \begin{itemize}
  \item The new interval $[t_{j+1}, t_j]$ is not necessarily contiguous to the
    interval $[t_{i+1}, t_i']$ to which we applied Lemma~\ref{lemma:pull}; that
    is, we can have $j > i + 1$. Thus we define a sequence of indices $i_0 \leq
    \ldots \leq i_h$ representing the indices of the sub-intervals in which we
    find ourselves after each application of Lemma~\ref{lemma:pull}.
  \item We have $t' \in [t_{j+1}, t_j]$, but we do not know whether $t' \in
    [t_{j+1}, t_j']$ or $t' \in (t_j', t_j]$.
    Our construction in the previous part only ensures that $v_j$ has an average
    rate of at most $1 + \drift$ over the interval $[t_{j+1}, t_j']$; hence, if we
    have $t' \in (t_j', t_j]$, we must first ``go back in time'' to $t_j'$ before
    we can usefully apply Lemma~\ref{lemma:pull} to $v_j$.
  \end{itemize}
  In other words, in the current part of the proof, we successively apply two kinds
  of steps: the first is an application of Lemma~\ref{lemma:pull} to obtain a time
  $t'$ for which we have a large average $\Xi^s$-increase rate; the second is a
  step back in time to the nearest preceding time $t_j' \leq t'$, in preparation
  for the next application of Lemma~\ref{lemma:pull}.

  Accordingly, we define two sequences of times, $\theta_0 \geq \ldots \geq
  \theta_h$ and $\phi_0 \geq \ldots \geq \phi_h$, such that
  \begin{equation*}
    t_{m+1} \geq \phi_0 \geq \theta_1 \geq \phi_1 \geq \ldots \geq \theta_h \geq t_0 - \Lambda_s \geq  \phi_h \geq t^-.
  \end{equation*}
  The first sequence $\phi_0, \ldots, \phi_h$ represents the times obtained by
  successively applying Lemma~\ref{lemma:pull}; the second sequence $\theta_1,
  \ldots, \theta_h$ represents the second step back in time. Each $\theta_j$ is
  chosen such that for some $k \leq \ell - 1$ we have $\theta_j \in [t_{k+1},
  t_k']$. Note that since at each index $j \geq m$ we take a forward step, we
  always have $t_{j+1} \leq t_j' < t_j$ (this is a property of the construction),
  and hence the index $k$ is unique. We define a sequence of indices $i_0 \leq
  \ldots \leq i_h$ as follows:
  \begin{equation}
    i_j \coloneq
    \begin{cases}
      m
      & \text{ if $j = 0$,}\\
      \text{the unique index k such that $\theta_j \in [t_{k+1}, t_k']$}
      & \text{ if $j > 0$.}\\
    \end{cases}
    \label{eq:i_j}
  \end{equation}
  Finally, each $\phi_j$ is chosen such that $\phi_j \in [t_{i_j+1} - \stabint{s},
  t_{i_j+1}]$ (as $\phi_j$ is obtained by applying Lemma~\ref{lemma:pull} to the
  interval $[t_{i_j+1}, \theta_j]$).

  We maintain the following properties:
  \begin{enumerate}[(i)]
  \item For all $j = 0, \ldots, h$,
    \begin{equation}
      \Xi_{v_{i_j}}^s(\phi_j)
      \geq
      2 \drift \Lambda_s
      + ( (1 - \drift)\mu - 2\drift)(t_0 - t_{i_j+1})
      + (1 - \drift + 2\mu)(t_{i_j+1} - \phi_j)
      - \left( \lc{v_{i_j}}(t_{i_j+1}) - \lc{v_{i_j}}(\phi_j) \right)
      .
      \label{eq:pull}
    \end{equation}
    \label{item:phi}
  \item For all $j = 1, \ldots, h$,
    \begin{equation}
      \Xi_{v_{i_j}}^s(\theta_j)
      \geq
      2 \drift \Lambda_s
      + ( (1 - \drift)\mu - 2\drift)(t_0 - t_{i_j}')
      + (1 - \drift)(1 + \mu)(t_{i_j}' - \theta_j)
      - \left( \lc{v_{i_j}}(t_{i_j}') - \lc{v_{i_j}}(\theta_j) \right)
      .
      \label{eq:xi_ih2}
    \end{equation}
    \label{item:theta}
  \end{enumerate}
  Intuitively, at each point $\phi_j$ we claim an average increase rate of $( (1 -
  \drift)\mu - 2\drift)$ for $\Xi^s$ as we go back in time over the interval
  $[\phi_j, t_0]$: this follows from~\eqref{eq:pull}, because
  $\lc{v_{i_j}}(t_{i_j+1}) - \lc{v_{i_j}}(\phi_j) \leq ( 1 + \drift)(1 +
  \mu)(t_{i_j+1} - \phi_j)$. However, the statement of Property~\eqref{item:phi}
  is more precise, and keeps track of the exact clock increase of $v_{i_j}$.
  We need this additional information because the chain construction relates the
  clock values of the different nodes at points where we switch from one to the
  other (Property~\eqref{item:span}), not at arbitrary times such as $\phi_j$.
  Thus, we keep track of $v_{i_j}$'s increase from time $\phi_j$ to time
  $t_{i_j+1}$, at which the switch occurs. This leads us to show that because
  $\Xi_{v_{i_j}}^s(\phi_j)$ is large, so is $\Xi_{v_{i_{j+1}}}^s(\phi_j)$, and
  allows the induction to go through.

  In Property~\eqref{item:theta} we also keep track of $v_{i_j}$'s exact
  clock increase, but here the reason is different:
  we simply do not have a good enough bound on $v_{i_j}$'s clock increase over the
  interval $[\theta_j, t_{i_j}']$. The choice of $t_{i_j+1}$ yields an average
  rate of at most $1 + \drift$ over $[t_{i_j+1}, t_{i_j}']$, but tells us nothing
  about arbitrary points in the interval, such as $\theta_j$. Therefore, we keep
  track of $v_{i_j}$'s clock increase over $[\theta_j, t_{i_j}']$, and also of the
  decrease $(1 - \drift)(1 + \mu)(t_{i_j'} - \theta_j)$ provided by
  Lemma~\ref{lemma:pull}, which represents the fact that the other endpoints of
  paths maximizing $\Xi^s$ are acting to catch up over $[\theta_j, t_{i_j}']$.
  Later, when we go back in time to $\phi_{j+1} \leq t_{i_j+1}$, we ``complete''
  the sub-interval, and use the fact that $v_{i_j}$'s average rate over
  $[t_{i_j+1}, t_{i_j}']$ is at most $1 + \drift$ to obtain the average decrease
  rate of $( (1 - \drift)\mu - 2\drift)$ over all of $[\phi_j, t_0]$.

  The definition of the two sequences is mutually-recursive. We begin by showing
  how $\phi_j$ is chosen, assuming that if $j > 0$ then we have already chosen
  $\theta_j$ such that Property~\eqref{item:theta} holds.

  First, consider the base case, $j = 0$: we must find a time $\phi_0 \in [t_{m+1}
  - \stabint{s}, t_{m+1}]$ satisfying Property~\eqref{item:phi}.
  This requires a bit more effort than the step, because we are claiming an
  average decrease rate of $( (1 - \drift)\mu - 2\drift)$ for $\Xi^s$ over an
  interval $[\phi_0, t_0] \supseteq [t_{m+1}, t_0]$; but since we skipped the
  prefix $v_0, \ldots, v_{m-1}$ of the chain construction, the properties of the
  chain tell us nothing about $v_m$ during the sub-interval $[t_m, t_0]$.
  Nevertheless, we show that we can apply Lemma~\ref{lemma:pull} to the entire
  interval $[t_{m+1}, t_0]$.

  Because only backward steps occur prior to index $m$, the path $p_m$ is a
  sub-path of $p_0$; we know that $p_0 \in P_{u_k}^s(t)$ for all $t \in [t^-,
  t_0]$, and hence $p_m \in P_{u_k}^s(t)$ for all $t \in [t^-, t_0]$ as well.
  Also, from Property~\eqref{item:xi_ih} of the construction,
  \begin{equation}
    \xi_{p_m}^s(t_{m+1}) \geq
    2\drift \Lambda_s
    - (1 + \drift)(t_0 - t_{m+1})
    + \lc{u_k}(t_0) - \lc{u_k}(t_{m+1}),
    \label{eq:xi_t_m_1}
  \end{equation}
  so going forward to any time $t \in [t_{m+1}, t_0]$ we have
  \begin{align*}
    \Xi_{v_m}^s(t)
    &\geq
    \xi_{p_m}^s(t)
    =
    \xi_{p_m}^s(t_{m+1})
    + \left(\lc{v_m}(t) - \lc{v_m}(t_{m+1}) \right)
    - \left( \lc{u_k}(t) - \lc{u_k}(t_{m+1}) \right)
    \\
    &\geq
    2\drift \Lambda_s
    - (1 + \drift)(t_0 - t_{m+1})
    + \left( \lc{v_m}(t) - \lc{v_m}(t_{m+1}) \right)
    + \left( \lc{u_k}(t_0) - \lc{u_k}(t) \right)
    \\
    &\geq
    2\drift \Lambda_s
    - (1 + \drift)(t_0 - t_{m+1})
    + (1 - \drift)(t - t_{m+1})
    + (1 - \drift)(t_0 - t)
    \\
    &\geq
    2\drift(\Lambda_s - (t_0 - t_{m+1})),
  \end{align*}
  which is strictly greater than $0$ for $t\neq t_0$. This is sufficient
  to apply Lemma~\ref{lemma:pull} to the interval $[t_{m+1},t_0]$, yielding a
  time $\phi_0 \in [t_{m+1} - \stabint{s}, t_{m+1}]$ such that
  \begin{align*}
    \Xi_{v_m}^s(\phi_0)
    &
    \medspace
    \geq
    \medspace
    \Xi_{v_m}^s(t_0)
    + (1 - \drift)(1 + \mu)(t_0 - \phi_0)
    - \left( \lc{v_m}(t_0) - \lc{v_m}(\phi_0) \right)
    + \mu(1 + \drift)(t_{m+1} - \phi_0)
    \\
    &
    \medspace
    \geq
    \medspace
    \xi_{p_m}^s(t_0)
    + (1 - \drift)(1 + \mu)(t_0 - \phi_0)
    - \left( \lc{v_m}(t_0) - \lc{v_m}(\phi_0) \right)
    + \mu(1 + \drift)(t_{m+1} - \phi_0)
    \\
    &
    \medspace
    \geq
    \medspace
    \xi_{p_m}^s(t_{m+1})
    + \left(\lc{v_m}(t_0) - \lc{v_m}(t_{m+1}) \right)
    - \left( \lc{u_k}(t_0) - \lc{u_k}(t_{m+1}) \right)
    \\
    &\qquad
    + (1 - \drift)(1 + \mu)(t_0 - \phi_0)
    - \left( \lc{v_m}(t_0) - \lc{v_m}(\phi_0) \right)
    + \mu(1 + \drift)(t_{m+1} - \phi_0)
    \\
    &
    \medspace
    \stackrel{\mathclap{\eqref{eq:xi_t_m_1}}}{\geq}
    \medspace
    2\drift \Lambda_s
    - (1 + \drift)(t_0 - t_{m+1})
    + (1 - \drift)(1 + \mu)(t_0 - \phi_0)
    - \left( \lc{v_m}(t_{m+1}) - \lc{v_m}(\phi_0) \right)
    \\
    &\qquad
    + \mu(1 + \drift)(t_{m+1} - \phi_0)
    \\
    &
    \medspace
    =
    \medspace
    2 \drift \Lambda_s
    +
    ((1 - \drift)\mu - 2\drift)(t_0 - t_{m+1})
    + (1 - \drift + 2\mu)(t_{m+1} - \phi_0)\\
    &\qquad- \left( \lc{v_m}(t_{m+1}) - \lc{v_m}(\phi_0) \right)
    .
  \end{align*}
  This shows that Property~\eqref{item:phi} holds for this choice of $\phi_0$.

  For the step, suppose that at $j \geq 1$ we have already chosen a time
  $\theta_j \geq t_0 - \Lambda_s$ such that
  \begin{equation*}
    \Xi_{v_{i_j}}^s(\theta_j)
    \geq
    2 \drift \Lambda_s
    + ( (1 - \drift)\mu - 2\drift)(t_0 - t_{i_j}')
    + (1 - \drift)(1 + \mu)(t_{i_j}' - \theta_j)
    \nonumber
    - \left( \lc{v_{i_j}}(t_{i_j}') - \lc{v_{i_j}}(\theta_j) \right)
    .
  \end{equation*}
  Since $\theta_j \leq t_{i_j}'$ by choice of $i_j$, Properties~\eqref{item:stab}
  and~\eqref{item:Xi} show that we have $\Xi_{v_{i_j}}^s(t) > 0$ for all $t \in
  (t_{i_j + 1}, \theta_j]$ and that $v_{i_j}$ satisfies the
  $(C,s)$-stabilization condition during $[t_{i_j + 1}, \theta_j]$. Thus, we can
  apply Lemma~\ref{lemma:pull} to obtain a time $\phi_j \in [t_{i_j+1} - \stabint{s},
  t_{i_j+1}]$ such that
  \begin{equation*}
    \Xi_{v_{i_j}}^s(\phi_j)
    \geq
    \Xi_{v_{i_j}}^s(\theta_j)
    + (1 - \drift)(1 + \mu)(\theta_j - \phi_j)
    - \left( \lc{v_{i_j}}(\theta_j) - \lc{v_{i_j}}(\phi_j) \right)
    + \mu(1 + \drift)(t_{i_j+1} - \phi_j).
  \end{equation*}
  Together with the induction hypothesis, we obtain
  \begin{align*}
    \Xi_{v_{i_j}}^s(\phi_j)
    &\geq
    2 \drift \Lambda_s
    + ( (1 - \drift)\mu - 2\drift)(t_0 - t_{i_j}')
    + (1 - \drift)(1 + \mu)(t_{i_j}' - \theta_j)
    + (1 - \drift)(1 + \mu)(\theta_j - \phi_j)
    \nonumber
    \\
    &\quad
    - \left( \lc{v_{i_j}}(t_{i_j}') - \lc{v_{i_j}}(\phi_j) \right)
    + \mu(1 + \drift)(t_{i_j+1} - \phi_j)
    \\
    &=
    2 \drift \Lambda_s
    + ( (1 - \drift)\mu - 2\drift)(t_0 - t_{i_j}')
    + (1 - \drift)(1 + \mu)(t_{i_j}' - t_{i_j+1} + t_{i_j+1} - \phi_j)
    \\
    &\quad
    - \left( \lc{v_{i_j}}(t_{i_j}') - \lc{v_{i_j}}(t_{i_j+1}) \right)
    - \left( \lc{v_{i_j}}(t_{i_j+1}) - \lc{v_{i_j}}(\phi_j) \right)
    + \mu(1 + \drift)(t_{i_j+1} - \phi_j)
    .
  \end{align*}
  Recall that by definition of $t_{i_j+1}$ we have $\lc{v_{i_j}}(t_{i_j}') -
  \lc{v_{i_j}}(t_{i_j+1}) \leq (1 + \drift)(t_{i_j}' - t_{i_j+1})$. Thus, we have
  \begin{equation*}
    \Xi_{v_{i_j}}^s(\phi_j)
    \geq
    2 \drift \Lambda_s
    + ( (1 - \drift)\mu - 2\drift)(t_0 - t_{i_j+1})
    + (1 - \drift + 2\mu)(t_{i_j+1} - \phi_j)
    - \left( \lc{v_{i_j}}(t_{i_j+1}) - \lc{v_{i_j}}(\phi_j) \right)
    .
  \end{equation*}
  Note that since $\phi_j \geq t_{i_j + 1} - \stabint{s} \geq t_0 - \Lambda_s -
  \stabint{s}$, we have $\phi_j \geq t^-$, as desired. This completes the
  induction step for the sequence $\phi_j$.

  Next we show how $\theta_{j+1}$ is chosen, assuming that we have already chosen
  a time $\phi_j$ satisfying Property~\eqref{item:phi}. Assume also that $\phi_j >
  t_0 - \Lambda_s$, as otherwise the construction halts. Recall that we must choose
  $\theta_{j+1}$ such that $\theta_{j+1} \in [t_{k+1}, t_k']$ for some $k$. Thus,
  if $\phi_j \in [t_{k+1}, t_k']$ for some $k$, then we set $\theta_j \coloneq
  \phi_j$; otherwise it must be that $\phi_j \in (t_k', t_k]$ for some unique $k$,
  and in this case we define $\theta_{j+1} \coloneq t_k'$.

  The choice of $\theta_{j+1}$ induces an index $i_{j+1}$ (which is the minimal
  index $k$ from the definition of $\theta_{j+1}$). The induction hypothesis
  (Property~\eqref{item:phi}) states that $\Xi_{v_{i_j}}^s(\phi_j)$ is large, that
  is, there is some path $q \in P_{v_{i_j}}^s(\phi_j)$ such that $\xi_q^s(\phi_j)$
  is large. To show Property~\eqref{item:theta} at $j$, we first extend $q$ into a
  path $q'$ starting at $v_{i_{j+1}}$, and show that because $\xi_q^s(\phi_j)$ is
  large, so is $\xi_{q'}^s(\phi_j)$. Then we go back in time and show that
  $\xi_{q'}^s(\theta_{j+1})$ is large (that is, not much skew is lost as we step
  back from $\phi_j$ to $\theta_{j+1}$), and finally we show that $q' \in
  P_{v_{i_{j+1}}}^s(\theta_{j+1})$, which implies that
  $\Xi_{v_{i_{j+1}}}^s(\theta_{j+1}) \geq \xi_{q'}^s(\theta_{j+1})$.

  Formally, let $q = (v_{i_j}, \ldots, x) \in P_{v_{i_j}}^s(t)$ be a path such
  that $\xi_q^s(t) = \Xi_{v_{i_j}}^s(t)$. For the extended path $q' \coloneq
  (v_{i_{j+1}}, \ldots, v_{i_j}) \circ q$, we have
  \begin{align*}
    \xi_{q'}^s(\phi_j)
    &
    \medspace
    =
    \medspace
    \xi_q^s(\phi_j)
    - \lc{v_{i_j}}(\phi_j)
    + \lc{v_{i_{j+1}}}(\phi_j)
    - s \cdot \kappa_{(v_{i_j}, \ldots, v_{i_{j+1}})}
    \\
    &
    \medspace
    =
    \medspace
    \Xi_{v_{i_j}}^s(\phi_j)
    - \lc{v_{i_j}}(\phi_j)
    + \lc{v_{i_{j+1}}}(\phi_j)
    - s \cdot \kappa_{(v_{i_j}, \ldots, v_{i_{j+1}})}
    \\
    &
    \medspace
    \stackrel{\mathclap{\eqref{eq:pull}}}{\geq}
    \medspace
    2 \drift \Lambda_s
    + ( (1 - \drift)\mu - 2\drift)(t_0 - t_{i_j+1})
    + (1 - \drift)(1 + \mu)(t_{i_j+1} - \phi_j)
    \\
    &\qquad
    - \left( \lc{v_{i_j}}(t_{i_j+1}) - \lc{v_{i_{j+1}}}(\phi_j) \right)
    - s \cdot \kappa_{(v_{i_j}, \ldots, v_{i_{j+1}})}
    .
  \end{align*}
  (Note that we omit the term $\mu(1 + \drift)(t_{i_j+1} - \phi_j)$
  in~\eqref{eq:pull}, which is non-negative because $\phi_j \leq t_{i_j + 1}$.)

  Next we deal with the gap
  $\left( \lc{v_{i_j}}(t_{i_j+1}) - \lc{v_{i_{j+1}}}(t) \right)$
  using Property~\eqref{item:span}, which shows that
  \begin{equation*}
    \lc{v_{i_j}}(t_{i_j+1}) - \lc{v_{i_{j+1}}}(t_{i_{j+1}}')
    \leq
    (1 + \drift)(t_{i_j+1} - t_{i_{j+1}}')
    -
    \left( s + \frac{1}{2} \right)\kappa_{(v_{i_j}, \ldots, v_{i_{j+1}})};
  \end{equation*}
  thus we have
  \begin{align*}
    \xi_{q'}^s(\phi_j)
    &\geq
    2 \drift \Lambda_s
    + ( (1 - \drift)\mu - 2\drift)(t_0 - t_{i_{j+1}}')
    + (1 - \drift)(1 + \mu)(t_{i_{j+1}}' - \phi_j)
    \\
    &\quad
    - \left( \lc{v_{i_{j+1}}}(t_{i_{j+1}}') - \lc{v_{i_{j+1}}}(\phi_j) \right)
    + \frac{\kappa_{(v_{i_j}, \ldots, v_{i_{j+1}})}}{2}
    .
  \end{align*}
  Now let us go back in time to $\theta_{j+1}$. Note that by definition of
  $\theta_{j+1}$ we have $\phi_j - \theta_{j+1} \leq t_{i_{j+1}} - t_{i_{j+1}}'
  \leq \tau_{\set{v_{i_{j+1} - 1}, v_{i_{j+1}}}}$.
  Therefore,
  \begin{align*}
    \xi_{q'}^s(\theta_{j+1})
    &=
    \xi_{q'}^s(\phi_j)
    - \left( \lc{v_{i_{j+1}}}(\phi_j) - \lc{v_{i_{j+1}}}(\theta_{j+1}) \right)
    + \left( \lc{x}(\phi_j) - \lc{x}(\theta_{j+1}) \right)
    \\
    &\geq
    2 \drift \Lambda_s
    + ( (1 - \drift)\mu - 2\drift)(t_0 - t_{i_{j+1}}')
    + (1 - \drift)(1 + \mu)(t_{i_{j+1}}' - \phi_j)
    \\
    &\quad
    - \left( \lc{v_{i_{j+1}}}(t_{i_{j+1}}') - \lc{v_{i_{j+1}}}(\theta_{j+1}) \right)
    + \frac{\kappa_{(v_{i_j}, \ldots, v_{i_{j+1}})}}{2}
    + (1 - \drift)(\phi_j - \theta_{j+1})
    \\
    &=
    2 \drift \Lambda_s
    + ( (1 - \drift)\mu - 2\drift)(t_0 - t_{i_{j+1}}')
    + (1 - \drift)(1 + \mu)(t_{i_{j+1}}' - \theta_{j+1})
    \\
    &\quad
    - (1 - \drift)(1 + \mu)(\phi_j - \theta_{j+1})
    + (1 - \drift)(\phi_j - \theta_{j+1})
    \\
    &\quad
    - \left( \lc{v_{i_{j+1}}}(t_{i_{j+1}}') - \lc{v_{i_{j+1}}}(\theta_{j+1}) \right)
    + \frac{\kappa_{(v_{i_j}, \ldots, v_{i_{j+1}})}}{2}
    \\
    &>
    2 \drift \Lambda_s
    + ( (1 - \drift)\mu - 2\drift)(t_0 - t_{i_{j+1}}')
    + (1 - \drift)(1 + \mu)(t_{i_{j+1}}' - \theta_{j+1})
    \\
    &\quad
    - \mu \tau_{\set{v_{i_{j+1} - 1}, v_{i_{j+1}}}}
    - \left( \lc{v_{i_{j+1}}}(t_{i_{j+1}}') - \lc{v_{i_{j+1}}}(\theta_{j+1}) \right)
    + \frac{\kappa_{(v_{i_j}, \ldots, v_{i_{j+1}})}}{2}
    \\
    &\stackrel{\mathclap{\eqref{eq:def_kappa}}}{>}
    2 \drift \Lambda_s
    + ( (1 - \drift)\mu - 2\drift)(t_0 - t_{i_{j+1}}')
    + (1 - \drift)(1 + \mu)(t_{i_{j+1}}' - \theta_{j+1})
    \\
    &\quad
    - \left( \lc{v_{i_{j+1}}}(t_{i_{j+1}}') - \lc{v_{i_{j+1}}}(\theta_{j+1}) \right)
    .
  \end{align*}
  Finally, to show that $q' \in P_{v_{i_{j+1}}}^s(\theta_{j+1})$, recall that $q'
  = (v_{i_{j+1}}, \ldots, v_{i_j}) \circ q$. Because $\theta_{j+1} \leq
  t_{i_{j+1}}'$ (by definition of $i_{j+1}$), Property~\eqref{item:path_ih} of the
  chain shows that $p_{i_{j+1}} \in P_{v_{i_{j+1}}}^s(\theta_{j+1})$, and in
  particular $(v_{i_{j+1}}, \ldots, v_{i_j}) \in P_{v_{i_{j+1}}}^s(\theta_{j+1})$,
  as this is a sub-path of $p_{i_{j+1}}$. Thus, to show that $q' \in
  P_{v_{i_{j+1}}}^s(\theta_{j+1})$, it remains to show that $q \in
  P_{v_{i_j}}^s(\theta_{j+1})$.

  By definition of $q$ we have $q \in P_{v_{i_j}}^s(\phi_j)$, and from~\eqref{eq:pull}
  it follows that
  \begin{align*}
    \xi_q^s(\phi_j)
    &= \Xi_{v_{i_j}}^s(\phi_j)
    \geq
    2 \drift \Lambda_s
    + ( (1 - \drift)\mu - 2\drift)(t_0 - t_{i_j+1})
    + (1 - \drift + 2\mu)(t_{i_j+1} - \phi_j)
    \\
    &\quad
    - (1 + \drift)(1 + \mu)(t_{i_j + 1} - \phi_j)
    \\
    &=
    2 \drift \Lambda_s
    + 2( (1 - \drift) \mu - 2\drift)(t_0 - \phi_j)
    \stackrel{\mathclap{\eqref{eq:sigma}}}{>} 0.
  \end{align*}
  Also, from Property~\eqref{item:stab}, $v_{i_j}$ satisfies the
  $(C,s)$-stabilization condition at time $\phi_j \leq t_{i_j}'$. Therefore,
  Lemma~\ref{lemma:paths} shows that for all $t' \in [t^-, \phi_j]$ we have $q \in
  P_{v_{i_j}}^s(t')$. In particular, then, $q \in P_{v_{i_j}}^s(\theta_{j+1})$.
  This shows that $\Xi_{v_{i_{j+1}}}^s(\theta_{j+1}) \geq
  \xi_{q'}^s(\theta_{j+1})$ and completes the induction.

  The induction ends at a time $\phi_h \in [t^-, t_0 - \Lambda_s]$ and a node
  $v_{i_h}$ satisfying
  \begin{align*}
    \Xi_{v_{i_h}}^s(\phi_h)
    &\geq
    2 \drift \Lambda_s
    + ( (1 - \drift)\mu - 2\drift)(t_0 - t_{i_h+1})
    + (1 - \drift)(1 + \mu)(t_{i_h+1} - \phi_h)
    \nonumber
    \\
    &\quad
    - \left( \lc{v_{i_h}}(t_{i_h+1}) - \lc{v_{i_h}}(\phi_h) \right)
    + (1 + \drift)\mu(t_{i_h + 1} - \phi_h)
    \\
    &\geq
    2 \drift \Lambda_s
    + ( (1 - \drift)\mu - 2\drift)(t_0 - t_{i_h+1})
    + (1 - \drift)(1 + \mu)(t_{i_h+1} - \phi_h)
    \nonumber
    \\
    &\quad
    - (1 + \drift)(1 + \mu)(t_{i_h+1}) - \phi_h)
    + (1 + \drift)\mu(t_{i_h + 1} - \phi_h)
    \\
    &=
    2 \drift \Lambda_s
    + ( (1 - \drift)\mu - 2\drift)(t_0 - \phi_h)
    \\
    &\geq
    2 \drift \Lambda_s
    + ( (1 - \drift)\mu - 2\drift)\Lambda_s
    \\
    &=
    (1 - \drift)\mu \cdot \frac{C_{s-1}}{2(1 - \drift)\mu}
    = \frac{C_{s-1}}{2}
    ,
  \end{align*}
  where in the second to last step we again used that $(1-\drift)\mu-2\drift>0$ due
  to Inequality~\eqref{eq:mu}. Let $p = (v_{i_h}, \ldots, y) \in
  P_{v_{i_h}}^s(\phi_{h})$ be a path such that $\xi_p^s(\phi_h) =
  \Xi_{v_{i_h}}^s(\phi_h) \geq C_{s-1}/2$. Thus,
  \begin{equation}
    \Psi_y^{s-1}(\phi_h) \geq \psi_{\bar{p}}^{s-1}(\phi_h) = \xi_p^s(\phi_h) +
    \frac{\kappa_p}{2} > \frac{C_{s-1}}{2}.
    \label{eq:Psi_contra}
  \end{equation}
  From Property~\eqref{item:stab}, node $v_{i_h}$ satisfies the
  $(C,s)$-stabilization condition at time $\phi_{h}$. Hence, we can apply
  Lemma~\ref{lemma:shortpaths} to $p$, yielding that $y$ is $(C,s-1)$-legal at
  time $\phi_h$, contradicting~\eqref{eq:Psi_contra}.
\end{proof}

\section{Dynamic and Local Global Skew Estimates}\label{sec:dynamic_est}

In this section, we extend the analysis to handling and adapting to dynamic global skew estimates. From here on, we therefore assume that when inserting an edge $\set{u,v}$, the insertion duration $\instime_{\set{u,v}}$ is computed according to \eqref{eq:instime_dynamic} in Algorithm \ref{algo:insertiontimes}. In the next lemma, we show that with this assumption the logical times of edge insertions
on different levels are well separated, even if the insertions are for
different edges and if different global skew estimates are used for
inserting the different edges. 

\begin{lemma}\label{lemma:instime_separation}
  Let $e$ and $e'$ be two edges that are inserted with global skew
  estimates $\Ge_e$ and $\Ge_{e'}$, respectively. Further, let $s\geq 1$ and
  $s'\geq 1$ be two levels and consider the logical insertion times
  $T_s^e$ and $T_{s'}^{e'}$. If $s\neq s'$, it holds that
  \begin{equation}\label{eq:instime_separation}
  |T_s^e - T_{s'}^{e'}| \geq
  \frac{\min\set{\instime_e,\instime_{e'}}}{
  2^7\cdot 4^{\min\set{s,s'}-2}}.
  \end{equation}
  If $s=s'$, either \eqref{eq:instime_separation} holds or $T_s^e=T_{s'}^{e'}$.
\end{lemma}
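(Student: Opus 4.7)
The plan is to exploit the dyadic structure of the insertion times imposed by~\eqref{eq:instime_dynamic}. Three observations will drive the proof.
(i) $\instime_e$ and $\instime_{e'}$ are both powers of $2$, since~\eqref{eq:instime_dynamic} defines them as $2^{\lceil\log_2\ell\rceil}$.
(ii) By Algorithm~\ref{algo:insertiontimes}, $T_0^e$ is an integer multiple of $\instime_e$ and $T_0^{e'}$ of $\instime_{e'}$. Assuming WLOG that $\instime_e\leq\instime_{e'}$ and writing $\instime_{e'}=2^j\instime_e$ with $j\geq 0$, the ratio is integral, so $T_0^e-T_0^{e'}=m\instime_e$ for some $m\in\mathbb{Z}$.
(iii) Substituting $T_s^e=T_0^e+(1-2^{-(s-1)})\instime_e$ and likewise for $T_{s'}^{e'}$, and collecting integer contributions into a single integer $k:=m+1-2^j$, gives
\begin{equation*}
T_s^{e}-T_{s'}^{e'}\;=\;\instime_e\cdot\Bigl(k\;-\;\tfrac{1}{2^{s-1}}\;+\;\tfrac{2^{j}}{2^{s'-1}}\Bigr).
\end{equation*}

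First I handle $s=s'$: the bracket reduces to $k+(2^j-1)/2^{s-1}$, a dyadic rational with denominator dividing $2^{s-1}$. It is therefore either $0$ (yielding the permitted equality $T_s^e=T_{s'}^{e'}$) or has absolute value at least $2^{-(s-1)}$; in the latter case $|T_s^e-T_{s'}^{e'}|\geq\instime_e/2^{s-1}\geq \instime_e/2^{2s+3}$, matching the claimed right-hand side.

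For $s\neq s'$ I may assume $s<s'$ so $\min\{s,s'\}=s$. Clearing the common denominator $2^{s'-1}$ inside the bracket turns the difference into $\instime_e\cdot N/2^{s'-1}$, where $N:=k\cdot 2^{s'-1}+2^j-2^{s'-s}\in\mathbb{Z}$. The heart of the argument is a $2$-adic valuation computation. Using $v_2(2^j-2^{s'-s})=\min(j,s'-s)$ when $j\neq s'-s$, a case split distinguishes the resonant regime $j=s'-s$ (where the fractional contribution vanishes and the difference reduces to an integer multiple of $\instime_e$, hence either $0$ or at least $\instime_e$ in absolute value) from the non-resonant regime (where $v_2(N)=\min(j,s'-s)$, giving $|T_s^e-T_{s'}^{e'}|\geq \instime_e\cdot 2^{\min(j,s'-s)}/2^{s'-1}$). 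Each of the two sub-regimes $j\geq s'-s$ and $j<s'-s$ is then matched against the target bound $\min\{\instime_e,\instime_{e'}\}/(2^7\cdot 4^{s-2})$ by elementary manipulation, with the constant $2^7$ absorbing the slack.

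The main obstacle will be the resonant boundary where the fractional part collapses and a naive bound only yields equality; here I will need to rule out $N=0$ for $s\neq s'$ by pulling in the protocol-level constraints on the $T_0$-values (their lower bounds derived from $L_{\mathit{ins}}$ in Algorithm~\ref{algo:events}, together with the lower bounds on $\instime$ forced by~\eqref{eq:insconst}). Once that alignment-impossibility argument is in place, the remaining work is a bookkeeping exercise to collapse the case distinction on $j$ vs.\ $s'-s$ into the single clean bound claimed by the lemma.
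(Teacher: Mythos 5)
Your plan fails at exactly the point you flag as the ``main obstacle,'' and the gap cannot be patched. Under the formula you substitute, $T_s^e = T_0^e + (1-2^{-(s-1)})\instime_e$, the lemma is simply false. Take $\instime_{e'}=2\instime_e$, $T_0^e=\instime_e$, $T_0^{e'}=0$, $s=2$, $s'=3$: then $T_2^e=\tfrac{3}{2}\instime_e=T_3^{e'}$, contradicting~\eqref{eq:instime_separation} with $s\neq s'$. Similarly, $\instime_{e'}=4\instime_e$, $T_0^e=2\instime_e$, $T_0^{e'}=0$, $s=1$, $s'=2$ gives $T_1^e=2\instime_e=T_2^{e'}$. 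The protocol constraints you invoke cannot exclude such alignments: $T_0^e$ and $T_0^{e'}$ arise from snapping two \emph{independently} scheduled insertion events to their respective grids, and nothing couples them so as to forbid the bad residues. Incidentally, the second example also exposes a flaw in your non-resonant $2$-adic claim: with $j=2$, $s'-s=1$, $k=-1$ you have $j\neq s'-s$, yet $N=k\cdot 2^{s'-1}+2^j-2^{s'-s}=-2+4-2=0$, because when $s=1$ one can have $v_2(k\cdot 2^{s'-1})=\min(j,s'-s)$ and full cancellation.

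What actually makes the lemma true is that the paper's proof uses a different schedule, $T_s^e=T_0^e+\instime_e\bigl(1-\frac{1}{2^{s+1}-1}\bigr)$; Listing~\ref{algo:insertiontimes} and the proof disagree on this point, which is presumably what misled you. The odd denominator $2^{s+1}-1$ is the essential ingredient: after factoring out $\instime_{e'}$ and reducing by the power-of-two ratio $\Delta_\ell$, the two quantities become $\lceil x\rceil - \frac{2^{k_\ell}}{2^{s+1}-1}$ and $\lceil y\rceil - \frac{1}{2^{s'+1}-1}$ with $k_\ell=\Delta_\ell\bmod(s+1)$. Exact cancellation would require $2^{k_\ell}(2^{s'+1}-1)=2^{s+1}-1$, which forces $k_\ell=0$ by parity and then $s=s'$. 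So for $s\neq s'$ the fractional parts can never coincide, \emph{no matter what $T_0^e$ and $T_0^{e'}$ are}, and the quantitative bound then follows by expanding $\frac{1}{2^a-1}=\sum_{i\geq 1}2^{-ai}$ and case-splitting on $s-k_\ell$ versus $s'$. This parity obstruction is what replaces the protocol-level argument you were hoping for, and it is unavailable with the dyadic offsets $1/2^{s-1}$.
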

\begin{proof}
  For convenience, we define $\ell_e:=\left\lceil\log_2 (\Ge_e/\mu +
    \delay_e + \tup_e)\right\rceil$ and $\ell_{e'} :=
  \left\lceil\log_2 (\Ge_{e'}/\mu + \delay_{e'} + \tup_{e'})\right\rceil$.
  The lengths $\instime_e$ and $\instime_{e'}$ of the time insertion
  intervals are then $\instime_e=\insconst\cdot 2^{3+\ell_e}$ and
  $\instime_{e'}=\insconst\cdot 2^{3+\ell_{e'}}$, respectively.
  Without loss of generality, assume that $\ell_e\leq \ell_{e'}$ and
  thus also $\instime_e \leq \instime_{e'}$. Consider the insertion
  times $T_s^e$ and $T_{s'}^{e'}$ of the edges $e$ and $e'$ on levels
  $s$ and $s'$, respectively. We have $T_s^e = T_0^e +
  \instime_e\big(1-\frac{1}{2^{s+1}-1}\big)$ and similarly
  $T_{s'}^{e'} = T_0^{e'} +
  \instime_{e'}\big(1-\frac{1}{2^{s'+1}-1}\big)$. Recall that $T_0^e$
  is chosen to be an integer multiple of $\instime_e$ and $T_0^{e'}$
  is chosen to be an integer multiple of $\instime_e'$. Let
  $\Delta_\ell:=\ell_e-\ell_{e'}\geq 0$ such that $\instime_{e} =
  \instime_{e'}\cdot 2^{\Delta_\ell}$. Defining $S_0^e :=
  T_0^e/\instime_{e'}$ and $S_0^{e'}:=T_0^{e'}/\instime_{e'}$, we can
  then write $T_s^e$ and $T_{s'}^{e'}$ as
  \begin{equation}\label{eq:rewriteinstimes}
  T_s^e = \instime_{e'}\cdot\left(\underbrace{S_0^e +
    2^{\Delta_\ell}\left(1-\frac{1}{2^{s+1}-1}\right)}_{=:x}\right)
  \quad\text{and}\quad
  T_{s'}^{e'} = \instime_{e'}\cdot\left(\underbrace{S_0^{e'} +
    1-\frac{1}{2^{s'+1}-1}}_{=:y}\right),
  \end{equation}
  where $S_0^e$, $S_0^{e'}$, and $\Delta_\ell$ are all non-negative
  integers. Consider $x$ and $y$ as defined in
  \eqref{eq:rewriteinstimes}. We then have $|T_s^e-T_{s'}^{e'}| =
  \instime_{e'}\cdot|x-y|$ and it therefore suffices to bound
  $|x-y|$. For any integer $a\geq 1$, it holds that
  $\frac{1}{2^{a}-1}=\sum_{i=1}^\infty \frac{1}{2^{a\cdot i}}$. As a
  consequence, we can write $x$ and $y$ as
  \begin{equation}\label{eq:xandy}
    x = \lceil x\rceil -
    \sum_{i=1}^\infty\frac{1}{2^{i(s+1)-k_\ell}}
    = \lceil x\rceil - \frac{2^{k_\ell}}{2^{s+1}-1} 
    \quad\text{and}\quad
    y = \lceil y \rceil - 
    \frac{1}{2^{(s'+1)}-1},
  \end{equation}
  where $k_\ell = \Delta_\ell \!\!\mod (s+1)$ and thus
  $k_\ell\in\set{0,\dots, s}$. From \eqref{eq:xandy}, we have
  \begin{equation}\label{eq:boundonx}
    \lceil x\rceil - \frac{1}{2^{s-k_\ell+1}} > x =  \lceil x \rceil -
    \frac{1}{2^{s-k_\ell+1}}\cdot\frac{2^{s+1}}{2^{s+1}-1}
    \stackrel{(s\geq 1)}{\geq}
    \lceil x\rceil-\frac{1}{2^{s-k_\ell+1}}\cdot\frac{4}{3}
    \stackrel{(s- k_\ell\geq 0)}{\geq}
    \lceil x\rceil-\frac{2}{3}
  \end{equation}
  and
  \begin{equation}\label{eq:boundony}
    \lceil y\rceil - \frac{1}{2^{s'+1}} > y =  \lceil y \rceil -
    \frac{1}{2^{s'+1}}\cdot\frac{2^{s'+1}}{2^{s'+1}-1}
    \stackrel{(s'\geq 1)}{\geq}
    \lceil y\rceil-\frac{1}{2^{s'+1}}\cdot\frac{4}{3}.
    \stackrel{(s'\geq 1)}{\geq}
    \lceil y\rceil-\frac{1}{3}.
  \end{equation}
  Let us first consider the case where $\lceil x\rceil\neq\lceil
  y\rceil$. From the last inequalities of \eqref{eq:boundonx} and
  \eqref{eq:boundony}, we then get that
  \begin{equation}
    \label{eq:differentintegralpart}
    |x-y| > \frac{1}{3}.
  \end{equation}
  Let us therefore come to the case where $\lceil x\rceil = \lceil
  y\rceil$. If $s-k_\ell < s'$, i.e., $s-k_\ell \le s'-1$, \eqref{eq:boundonx} and
  \eqref{eq:boundony} imply that
  \begin{equation}
    \label{eq:ylarger}
    y - x > 
    \frac{1}{2^{s-k_\ell+1}} - \frac{1}{2^{s'+1}}\cdot\frac{4}{3}
    \stackrel{(s'-1 \ge s-k_\ell)}{\geq}
    \frac{1}{3\cdot 2^{s-k_\ell+1}}
    \geq
    \frac{1}{6\cdot 2^{\min\set{s,s'}}}.
  \end{equation}
  Similarly, if $\lceil x\rceil=\lceil y\rceil$ and $s-k_\ell>s'$, i.e., $s-k_\ell \ge s'+1$, we
  obtain
  \begin{equation}
    \label{eq:xlarger}
    x - y >
    \frac{1}{2^{s'+1}} - \frac{1}{2^{s-k_\ell+1}}\cdot\frac{4}{3}
    \stackrel{(s-k_\ell \ge s'+1)}{\geq}
    \frac{1}{3\cdot 2^{s'+1}}
    =
    \frac{1}{6\cdot 2^{\min\set{s,s'}}}.
  \end{equation}
  Finally, for $\lceil x\rceil=\lceil y\rceil$ and $s-k_\ell=s'$, we
  either have $k_\ell=0$ and $s=s'$ in which case
  $\eqref{eq:boundonx}$ and $\eqref{eq:boundony}$ imply that $x=y$ and
  therefore also $T_s^e=T_{s'}^{e'}$. Otherwise, assume that
  $k_\ell>0$ and $s = s' + k_\ell$. We then get
  \begin{equation}
    \label{eq:samefirstterm}
    x-y = \frac{1}{2^{s'+1}}\cdot\left(
      \frac{2^{s'+1}}{2^{s'+1}-1}-\frac{2^{s+1}}{2^{s+1}-1}
    \right)
    >
    \frac{1}{2^{s'+1}}\cdot
    \frac{2^{s+1}-2^{s'+1}}{2^{s+s'+2}}
    \stackrel{(s>s')}{\geq}
    \frac{1}{8\cdot 2^{2s'}} = 
    \frac{1}{8\cdot 4^{\min\set{s,s'}}}.
  \end{equation}
  Combining \eqref{eq:differentintegralpart}, \eqref{eq:ylarger},
  \eqref{eq:xlarger}, and \eqref{eq:samefirstterm}, we either have
  $s=s'$ and $x=y$ or we get that
  \[
  |x-y| \geq \frac{1}{8\cdot 4^{\min\set{s,s'}}}
  = \frac{1}{2^7\cdot 4^{\min\set{s,s'}-2}}.
  \]
  Consequently, we either have $s=s'$ and $T_s^e=T_{s'}^{e'}$, or we obtain
  \[
  |T_s^e - T_{s'}^{e'}| \geq
  \frac{\instime_{e'}}{2^7\cdot 4^{\min\set{s,s'}-2}}
  =  \frac{\min\{\instime_e,\instime_{e'}\}}{2^7\cdot 4^{\min\set{s,s'}-2}},
  \]
  and thus the claim of the lemma follows.
\end{proof}

``The'' gradient property here is actually a time-dependent
notion, since the global skew varies over time. The algorithm will take some
time to adapt to a smaller global skew, and this process is complicated by
potential simultaneous edge insertions. To capture the former, we define for each
time $t$ a certain time $P(t)$ that lies sufficiently far in the past for the
algorithm to have time to accomodate the corresponding global skew (at time
$P(t)$). For each $t\in \R^+_0$ such that this value is defined, we set
\begin{equation}\label{eq:Pt}
  P(t):=\max\{t'\in [0,t]\,|\,\insconst\cdot \G(t')=\mu (t-t')\}.
\end{equation}

Note that $P(t)$ exists by continuity of $f(t'):=\insconst\G(t')-\mu(t-t')$ for all
$t\geq \insconst\G(0)/\mu$, since this implies $f(0)< 0$ and trivially we
have that $f(t)\geq 0$. In the following, let $t_{\min}$ denote the minimal time
so that $P(t_{\min})$ exists, i.e., $P:[t_{\min},\infty)\to \R^+_0$ such that
$P(t)\leq t$. Our goal will be to prove a non-trivial gradient property for
times $t\geq t_{\min}$; for smaller times, the algorithm had insufficient time
to converge to small skews (where the meaning of ``small'' depends on
$\G(P(t))$, as clarified below).

Before proceeding with the definition of the gradient sequences we use, let us establish
some basic properties of $P(t)$. In the following, we use the shorthand
\begin{equation*}
  \label{eq:F}
  \Gu:=2\G(P(t)).
\end{equation*}
\begin{lemma}\label{lemma:P(t)}
For each $t\geq t_{\min}$ and all $t'\in [P(t),t]$, it holds that
\begin{compactitem}
\item [(i)] $\G(t')\leq \Gu$ and
\item [(ii)] $\G(t')\geq (t-t')\mu/\insconst$.
\end{compactitem}
\end{lemma}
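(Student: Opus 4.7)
The plan is to prove (ii) first, since the bound on $t - P(t)$ it provides will be the key input for (i). To this end, define the continuous function $f(t') := \insconst\cdot\G(t') - \mu(t-t')$ on $[0,t]$. By definition of $P(t)$, we have $f(P(t)) = 0$, and $f(t) = \insconst\cdot\G(t) \geq 0$. The maximality of $P(t)$ directly implies $f(t') \neq 0$ on $(P(t), t]$; moreover, if $f$ were strictly negative at some $t' \in (P(t), t]$, then by the intermediate value theorem (using $f(t) \geq 0$) there would be a later zero of $f$, contradicting maximality. Hence $f(t') \geq 0$ throughout $[P(t), t]$, which rearranges to give (ii). Applying (ii) at $t' = P(t)$ itself (or using equality at $P(t)$) yields the explicit identity $t - P(t) = \insconst\cdot\G(P(t))/\mu$, which is what we need for the next step.

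For (i), the idea is that the global skew cannot grow too much during the interval $[P(t), t]$ because Theorem~\ref{thm:global}(I) bounds its rate of increase by $2\drift$. Thus for any $t' \in [P(t), t]$,
\begin{equation*}
\G(t') \leq \G(P(t)) + 2\drift\,(t' - P(t)) \leq \G(P(t))\left(1 + \frac{2\drift\insconst}{\mu}\right),
\end{equation*}
where the second inequality uses the bound on $t - P(t)$ derived above. It remains to observe that the constraint $\insconst \leq \mu/(2\drift)$ from \eqref{eq:insconst} gives $2\drift\insconst/\mu \leq 1$, so the prefactor is at most $2$, yielding $\G(t') \leq 2\G(P(t)) = \Gu$ as claimed.

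The proof is essentially routine once the definition of $P(t)$ is unpacked; the only subtle point is the continuity/maximality argument justifying that $f$ stays non-negative on $[P(t), t]$, which underpins both statements. The choice of constants in \eqref{eq:insconst} is precisely calibrated to make the factor in (i) come out to $2$.
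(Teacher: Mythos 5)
Your proposal is correct and follows essentially the same route as the paper: the same function $f(t') = \insconst\G(t') - \mu(t-t')$, the same IVT/maximality argument for (ii), and the same combination of Theorem~\ref{thm:global} with the defining identity $\mu(t-P(t)) = \insconst\G(P(t))$ and the constraint $\insconst\leq\mu/(2\drift)$ from~\eqref{eq:insconst} for (i). Swapping the order of (i) and (ii) is a purely cosmetic difference, since as you note the key identity $t - P(t) = \insconst\G(P(t))/\mu$ comes directly from the definition~\eqref{eq:Pt} rather than from (ii).
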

\begin{proof}
Fix any $t\geq t_{\min}$. For $t'\in [P(t),t]$, by Theorem~\ref{thm:global} it holds that
\begin{equation*}
\G(t')\leq \G(P(t))+2\rho(t'-P(t))\leq \G(P(t))+2\rho (t-P(t))=\G(P(t))+\frac{2\rho
\insconst\G(P(t))}{\mu}\sr{eq:insconst}{\leq}\Gu,
\end{equation*}
yielding Statement (i). Regarding Statement (ii), assume for the sake of contradiction that
\begin{equation*}
f(t'):=\insconst\G(t')-\mu(t-t')< 0.
\end{equation*}
Clearly, $f$ is continuous and $f(t)\geq 0$. Hence, there must exist some
$t''\in (t',t]$ so that $f(t'')=0$. This contradicts the maximality of
$P(t)\leq t'<t''\leq t$ among times smaller or equal to $t$ with the property
that $f(t)=0$. We conclude that Statement (ii) must be true as well.
\end{proof}
The first property enables us to use $\Gu$ as a global skew upper bound
for defining gradient sequences pertinent for the entire interval $[P(t),t]$.
As shown in the following lemma, the second property guarantees that, during
$[(2P(t)+t)/3,t]$, no edge insertions happen on any level $s>0$ that are based on
a global skew estimate substantially smaller than $\G(P(t))$.
\begin{lemma}\label{lemma:no_bad_insert}
For each $t\geq t_{\min}$, all $t'\in [(2P(t)+t)/3,t]$, and any nodes $u,v$, if
$u$ adds $v$ to $\N[u]^s$ for any $s>0$ at time $t'$, then it holds that the
corresponding call to \textbf{insertedge} (see
Listing~\ref{algo:insertiontimes}) computes $\instime_{\{u,v\}}(\Ge) \geq
(1-\rho)\insconst\G(P(t))/(10\mu)$, where $\Ge$ is the global skew estimate passed to
\textbf{insertedge}.
\end{lemma}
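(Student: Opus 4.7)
By \eqref{eq:instime_dynamic}, $\instime_{\set{u,v}}(\Ge) \geq \ell_{\set{u,v}} \geq 8\insconst\Ge/\mu$, so it suffices to show $\Ge \geq (1-\rho)\G(P(t))/80$. Let $t_S$ denote the time at which the leader of $\set{u,v}$ executed the \textbf{send} statement in Listing~\ref{algo:events}, so that $\Ge = \Ge_{\mathrm{leader}}(t_S) \geq \G(t_S)$ by the standing assumption $\Ge_u(t)\geq \G(t)$.

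The first step is to upper-bound the real-time delay $t'-t_S$ in terms of $\instime_{\set{u,v}}$. Combining $T_s^{\set{u,v}} \leq T_0^{\set{u,v}}+\instime_{\set{u,v}} < L_{\mathit{ins}} + 2\instime_{\set{u,v}}$ from Listing~\ref{algo:insertiontimes} with $L_{\mathit{ins}} = \lc{\mathrm{leader}}(t_S)+\Ge+(1+\rho)(1+\mu)\delay_{\set{u,v}}$ and $|\lc{\mathrm{leader}}(t_S)-\lc{u}(t_S)|\leq \G(t_S)\leq \Ge$ (to cover the case that the leader is~$v$), I get $T_s^{\set{u,v}}-\lc{u}(t_S) < 2\instime_{\set{u,v}}+2\Ge+(1+\rho)(1+\mu)\delay_{\set{u,v}}$. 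The two correction terms are absorbed: $2\Ge \leq \mu\instime_{\set{u,v}}/(4\insconst)$ is negligible by \eqref{eq:insconst}, and $(1+\rho)(1+\mu)\delay_{\set{u,v}} \leq \ell_{\set{u,v}} \leq \instime_{\set{u,v}}$ by construction. This yields $t'-t_S \leq 3\instime_{\set{u,v}}/(1-\rho)$, up to a negligible additive term.

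The argument then splits on whether $t_S \leq P(t)$ or $t_S > P(t)$. If $t_S \leq P(t)$, the hypothesis $t' \geq (2P(t)+t)/3$ gives $t'-t_S \geq t'-P(t) \geq (t-P(t))/3 = \insconst\G(P(t))/(3\mu)$, and combining with the duration bound yields $\instime_{\set{u,v}} \geq (1-\rho)\insconst\G(P(t))/(9\mu) > (1-\rho)\insconst\G(P(t))/(10\mu)$, as desired. If $t_S > P(t)$, Lemma~\ref{lemma:P(t)}(ii) gives $\Ge \geq \G(t_S) \geq (t-t_S)\mu/\insconst$, so $\instime_{\set{u,v}} \geq 8(t-t_S)$; the claim then reduces to showing $t-t_S \geq (1-\rho)(t-P(t))/80$. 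This I would prove by assuming the contrary, deriving $t_S \geq t' - 3\instime_{\set{u,v}}/(1-\rho)$ from the duration bound, combining with $t' \geq (2P(t)+t)/3$, and invoking the decrease-rate bound on $\G$ available from Theorem~\ref{thm:global}(I) (which caps how much $\G$ can have shrunk from $P(t)$ to $t_S$) to force a contradiction via \eqref{eq:insconst}.

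The main obstacle is the tight regime in the second case, where $t_S$ lies very close to $t$: here Lemma~\ref{lemma:P(t)}(ii) alone yields only a weak bound on $\G(t_S)$, and the two lower bounds on $\instime_{\set{u,v}}$ (via $\Ge\geq\G(t_S)$ on one hand, and via the duration upper bound combined with $t' \geq (2P(t)+t)/3$ on the other) must be combined carefully. The large lower bound \eqref{eq:insconst} on $\insconst$ is essential for absorbing the correction terms from $\Ge$ and $\delay_{\set{u,v}}$ into $\instime_{\set{u,v}}$ while preserving the factor $1/10$ in the final bound.
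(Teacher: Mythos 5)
Your case split (on whether $t_S \leq P(t)$ or $t_S > P(t)$, using the send time $t_S$ rather than the paper's discovery time $t_0$) and the first case are essentially the paper's argument: the duration bound $t' - t_S \leq 3\instime_{\set{u,v}}/(1-\rho)$ plus $t' - t_S \geq t' - P(t) \geq (t-P(t))/3 = \insconst\G(P(t))/(3\mu)$ gives the claim, and the slack between the factors $1/9$ and $1/10$ is what the paper uses to absorb the $\Gu$ correction term.

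Your second case, however, has a genuine gap. Theorem~\ref{thm:global}(I) states that $\G$ \emph{increases} at rate at most $2\drift$; it places no constraint on how fast $\G$ can decrease, so it cannot ``cap how much $\G$ can have shrunk from $P(t)$ to $t_S$.'' (Clocks in fast mode may run at $(1+\mu)(1+\drift)$ while the maximum clock runs at $1-\drift$, so $\G$ can drop at rate up to roughly $\mu + 2\drift$.) With that tool removed, your planned contradiction for ``$t-t_S < (1-\drift)(t-P(t))/80$'' does not close: combining $t_S \geq t' - 3\instime_{\set{u,v}}/(1-\drift)$, $t' \geq (2P(t)+t)/3$, and $\instime_{\set{u,v}} \geq 8(t-t_S)$ yields consistent constraints, not a contradiction. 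The fact that saves you is stronger and of a different flavor: when $t_S > P(t)$, Lemma~\ref{lemma:P(t)}(ii) together with $\instime_{\set{u,v}} \geq 8\insconst\Ge/\mu \geq 8(t-t_S)$ implies that the logical time $T_1^{\set{u,v}} \geq L_{\mathit{ins}} + \instime_{\set{u,v}}/2 \geq L_u(t_S) + \instime_{\set{u,v}}/2$ lies at least $4(t-t_S)$ units of logical time ahead of $L_u(t_S)$; since $u$'s clock gains at most $(1+\drift)(1+\mu)(t-t_S) < 4(t-t_S)$ by time $t$, node $u$ cannot even reach $T_1^{\set{u,v}}$ by time $t$, so no level-$s$ insertion for $s\geq 1$ can occur in $[(2P(t)+t)/3,t]$ at all. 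In other words, the second case is vacuous (which is exactly how the paper disposes of $t_0 \geq P(t)$), and you need to prove \emph{that}, not a lower bound on $t-t_S$.
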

\begin{proof}
Fix $t$, $t'$, $u$, and $v$, and suppose $u$ adds $v$ to $\N[u]^s$ at time $t'$.
Denote by $t_0$ the (most recent) time when $u$ added $v$ to $\N[u]^0$ and
suppose that $\Ge$ was used in the call to \textbf{insertedge} at time
$t_0$.

Assume that $w\in\{u,v\}$ is the leader of edge $\{u,v\}$. We distinguish two
cases, the first being that $t_0< P(t)$. For this case, note that the
corresponding logical time for which the insertion is complete on all levels
satisfies
\begin{eqnarray*}
T_{\infty}^{\{u,v\}}&\leq &
L_w(t_0)+(1+\rho)(1+\mu)(\delay_{\{u,v\}}+2\tau_{\{u,v\}})
+2\instime_{\set{u,v}}\\
&\stackrel{\eqref{eq:instime_dynamic}}{=} &
L_w(t_0)+(1+\rho)(1+\mu)(\delay_{\{u,v\}}+2\tau_{\{u,v\}})
+2\insconst\cdot 2^{3+\lceil\log(
\Ge/\mu+\delay_{\{u,v\}}+\tau_{\{u,v\}})\rceil}\\
&\stackrel{\eqref{eq:instime_dynamic}}{<} & L_w(t_0)+3\instime_{\set{u,v}}.
\end{eqnarray*}
We conclude that the time $t_{\infty}$ so that
$L_w(t_{\infty})=T_{\infty}^{\{u,v\}}$ is bounded from above by
\begin{equation*}
t_{\infty}\leq t_0+\frac{3\instime_{\{u,v\}}}{1-\rho}.
\end{equation*}
Moreover, if $t'>t_{\infty}$ (which may happen if $w=v$), Statement~(i) of
Lemma~\ref{lemma:P(t)} yields that
\begin{eqnarray*}
T_{\infty}^{\{u,v\}}&>& L_u(t')\\
&\geq & L_w(t')-\G(t')\\
&\geq & L_w(t_{\infty})+(1-\rho)(t'-t_{\infty})-\Gu\\
&=& T_{\infty}^{\{u,v\}}+(1-\rho)(t'-t_{\infty})-\Gu.
\end{eqnarray*}
Therefore,
\begin{equation*}
t'-t_0= t'-t_{\infty}+t_{\infty}-t_0\leq
\frac{3\instime_{\{u,v\}}+\Gu}{1-\rho}.
\end{equation*}
As $t'\geq (2P(t)+t)/3$ is equivalent to $t-P(t)\leq 3(t'-P(t))$, this leads to
\begin{equation*}
\G(P(t))=\frac{\mu (t-P(t))}{\insconst}\leq \frac{3\mu (t'-P(t))}{\insconst}
< \frac{3\mu (t'-t_0)}{\insconst}\leq \frac{9\mu \instime_{\{u,v\}}+3\mu
\Gu}{(1-\rho)\insconst}\stackrel{(\ref{eq:mu},\ref{eq:insconst})}{<}\frac{9\mu
\instime_{\{u,v\}}}{(1-\rho)\insconst}+\frac{\G(P(t))}{10}.
\end{equation*}
Rearranging this inequality, we obtain that
\begin{equation*}
\frac{(1-\rho)\insconst\G(P(t))}{10\mu}\leq \instime_{\{u,v\}},
\end{equation*}
i.e., the claim of the lemma holds.

The second case is that $t_0\geq P(t)$. Since trivially $t_0<t'\leq t$,
Statement~(ii) of Lemma~\ref{lemma:P(t)} implies that
\begin{equation*}
\instime_{\{u,v\}}> \frac{\insconst \cdot 2^3\cdot\Ge}{\mu} \geq 
\frac{\insconst \cdot 2^3\cdot\G(t_0)}{\mu}\stackrel{(\ref{eq:mu},\ref{eq:insconst})}{>}
3\G(t_0)+\frac{4\insconst \G(t_0)}{\mu}
\stackrel{\eqref{eq:Pt}}{\geq} 3\G(t_0)+4(t-t_0)
\end{equation*}
and hence
\begin{equation*}
T_1^{\{u,v\}}>  L_w(t_0)+\frac{2\instime_{\{u,v\}}}{3}
> L_u(t_0)+\frac{8(t-t_0)}{3}
\sr{eq:mu}{>} L_u(t_0)+(1+\rho)(1+\mu)(t-t_0).
\end{equation*}
We conclude that
\begin{equation*}
t'\geq t_0+\frac{T_1^{\{u,v\}}-L_u(t_0)}{(1+\rho)(1+\mu)} > t,
\end{equation*}
contradicting the prerequisite that $t'\in [(2P(t)+t)/3,t]$. Therefore, the
first case must always apply and the proof is complete.
\end{proof}

In summary, we have established that for any time $t\geq t_{\min}$ that (i)
$2\cdot\G(P(t))$ is a valid upper bound on the global skew throughout $[P(t),t]$ and
(ii) that no edge $e$ is inserted on any level $s>0$ during $[(2P(t)+t)/3,t]$
with a value of $\instime_e<(1-\rho)\insconst\G(P(t))/(10\mu)$. The latter property
ensures that the sets $\Tset_s$, $s>0$, are sufficiently ``sparse'' (i.e.,
insertion times are sufficiently well-separated) for the algorithm to stabilize
to small skews during $[(P(t)+2t)/3,t]$, based on the guaranteed upper bound of
$2\G(P(t))$ on the global skew. Note that we restrict the time interval
for which we show convergence to $[(P(t)+2t)/3,t]$, so there is a buffer
against interference from edge insertions with small values of $\instime_e$ that
may have occured before time $(2P(t)+t)/3$. However, we know little about
$\Tset_s\cap (t,\infty)$, since the global skew might decrease quickly after
time $t$; this is a technical issue that we will deal with by, essentially,
ignoring insertions after time $t$.

Let us now formalize the above intuition by defining suitable time periods and
(time- and node-dependent) gradient sequences so that the preconditions of
Theorem~\ref{thm:catch_up} will be satisfied for any level $s>1$, time, and
node for which the respective gradient sequence $C$ satisfies that
$C_s<C_{s-1}$.

\begin{definition}[Instability Periods]\label{def:periods}
  For a level $s>1$, a time $t\geq t_{\min}$, and a node $u$, we define the set
  of its \emph{$s$-unstable times (with respect to time $t$)} as
  \[
  U_s(u,t):=\set{t'\in [(P(t)+2t)/3,t]\,\Big|\,\exists \, T_s\in \Tset_s:
  |\lc{u}(t')-T_s|\leq A_s(\Gu)},
  \]
  where
  \begin{equation}
  A_s(\Gu) := \left(\frac{(7+2\rho)(1+\mu)}{2\mu(1-\rho)}+2s\right)
  \frac{2\Gu}{\sigma^{s-2}}.\label{eq:A}
  \end{equation}
For a level $s'>2$, a time $t\geq t_{\min}$, and a node $u$, the set of
$s'$-recovery times is
\begin{equation*}
R_{s'}(u,t):=\bigcup_{s<s'}\{t'\in [(P(t)+2t)/3,t]\,\big|\,
\exists \,T_s\in \Tset_s: B_{s,s'-1}(\Gu)<|L_u(t')-T_s|-A_s(\Gu)	\leq
B_{s,s'}(\Gu)\},
\end{equation*}
where
\begin{equation}
B_{s_1,s_2}(\Gu) := \sum_{s=s_1}^{s_2-1} \beta_s(\Gu)\quad
\text{and} \quad \beta_s(\Gu) := \left(\frac{(7+2\rho)(1+\mu)}{2\mu(1-\rho)}+s\right)
\frac{2\Gu}{\sigma^{s-2}}.\label{eq:beta}
\end{equation}
\end{definition}

Here, the time intervals $U_s(u,t)$ provide a ``buffer'' around the (logical!)
times from $T_s$, during which we will not make any non-trivial guarantees on
level $s$ at node $u$, i.e., the respective gradient sequence $C$ will satisfy
that $C_s=C_{s-1}$. The additional ``buffers'' provided by the sets $R_{s'}$
ensure that the gradient sequences at nodes with similar logical times do not
differ in more than a single level. This is
crucial for applying Theorem~\ref{thm:catch_up}, since it requires the
level-$s$ stabilization condition to hold for all nodes with logical times from
a certain range around the logical clock value of the node we examine.

Before defining suitable gradient sequences based on the above sets,
we must show that the sets for different levels are pairwise
disjoint. However, as mentioned earlier, we have no control over edge
insertions at times larger than $t$. We overcome this by first
considering a constrained set of executions for which there are no
insertions after time $t$ and only afterwards inferring skew bounds
for arbitrary executions.
\begin{definition}[Insertion-bounded Executions]
For $t\in \R^+_0$, an execution is called \emph{$t$-insertion-bounded} iff no
edges are inserted on any level at times greater than $t$.
\end{definition}
For such executions, we can show that the sets specified in
Definition~\ref{def:periods} are disjoint.
\begin{lemma}\label{lemma:instab}
Consider a $t$-insertion-bounded execution for some $t\geq t_{\min}$. Then
it holds for all nodes $u$ and $1<s'<s$ that $U_s(u,t)\cap
U_{s'}(u,t)=\emptyset$ and $U_s(u,t)\cap R_{s'}(u,t)=\emptyset$.
\end{lemma}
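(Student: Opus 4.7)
The plan is to argue by contradiction: suppose some $t'\in[(P(t)+2t)/3,t]$ lies in $U_s(u,t)\cap U_{s'}(u,t)$ (resp.\ $U_s(u,t)\cap R_{s'}(u,t)$) with $s>s'>1$. Unpacking the definitions, this yields times $T_s\in\Tset_s$ and $T_{s'}\in\Tset_{s'}$ (or $T_{s''}\in\Tset_{s''}$ for some $s''<s'<s$) whose distances to $L_u(t')$ are small, controlled by $A_s(\Gu)$ and $A_{s'}(\Gu)$ (or $A_{s''}(\Gu)+B_{s'',s'}(\Gu)$). Triangle inequality then provides an upper bound on $|T_s-T_{s'}|$ (or $|T_s-T_{s''}|$), while Lemma~\ref{lemma:instime_separation} provides a lower bound on the same quantity whose strength depends on the insertion-time durations $\instime$ of the two underlying edges. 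The goal is to show that the choice of $\insconst$ in \eqref{eq:insconst} forces the lower bound to exceed the upper bound.

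The central preparatory step is to certify that both $T_s$ and $T_{s'}$ were produced by insertions initiated at real times lying in $[(2P(t)+t)/3,t]$, so that Lemma~\ref{lemma:no_bad_insert} applies to both. For any such $T_s$ near $L_u(t')$, the leader $w$ of the edge satisfies $L_w(t_0)\le T_s\le L_u(t')+A_s(\Gu)$ at the initiation time $t_0$. Combining this with the global skew bound $\G(t')\le\Gu$ guaranteed by Lemma~\ref{lemma:P(t)}(i) and the minimal clock rate $1-\rho$ gives $(1-\rho)(t'-t_0)\le \Gu+A_s(\Gu)$. Since $A_s(\Gu)=O(\Gu/\mu)$ but $t'-(2P(t)+t)/3\ge (t-P(t))/3=\insconst\Gu/(6\mu)$, the lower bound on $\insconst$ from \eqref{eq:insconst} implies $t_0\ge (2P(t)+t)/3$; the $t$-insertion-boundedness gives the upper endpoint $t_0\le t$. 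Hence Lemma~\ref{lemma:no_bad_insert} yields $\instime_e,\instime_{e'}\ge (1-\rho)\insconst\Gu/(20\mu)$.

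Plugging these $\instime$-bounds into Lemma~\ref{lemma:instime_separation} (which applies since $s\neq s'$ in case one, and $s\neq s''$ in case two because $s>s'>s''$) produces $|T_s-T_{s'}|\ge (1-\rho)\insconst\Gu/(20\mu\cdot 2^7\cdot 4^{\min\{s,s'\}-2})$, respectively with $s''$ replacing $s'$. Under $\sigma\ge 3$ the function $A_s(\Gu)$ is decreasing in $s$, so the triangle-inequality upper bound is $O(\Gu/(\mu\sigma^{s'-2}))$ (resp.\ $O(\Gu/(\mu\sigma^{s''-2}))$ after absorbing the telescoping sum $B_{s'',s'}$ into the same geometric scale). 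Comparing both sides, the ratio is proportional to $\insconst\cdot(\sigma/4)^{\min\{s,s'\}-2}$ times absolute constants; the inequality $\insconst\ge 320\cdot 2^7/(1-\rho)^2$ was calibrated precisely to make this ratio exceed one for every level $\ge 2$, yielding the desired contradiction.

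The main obstacle is the second case ($U_s\cap R_{s'}$): here the two close-by logical times live at levels $s$ and some $s''<s'$, and one must ensure that the ``annular'' tolerance $A_{s''}+B_{s'',s'}$ on the $R_{s'}$ side still fits within the separation guaranteed for level-$s''$ insertion times at scale $4^{s''-2}$. The needed cancellation is delicate because $B_{s'',s'}$ accumulates over all intermediate levels between $s''$ and $s'$; however, $\beta_s(\Gu)$ decays geometrically in $s$ with base $\sigma$, so $B_{s'',s'}(\Gu)=\Theta(\beta_{s''}(\Gu))$, which keeps the upper bound at the same order as $A_{s''}(\Gu)$ and lets the argument of the previous paragraph go through unchanged up to constants that are absorbed by $\insconst$.
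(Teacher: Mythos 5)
Your overall strategy is the same as the paper's: bound the radius of the interval around each insertion time $T_s$ in which $U_s$ and $R_{s'}$ can be nonempty, show via Lemma~\ref{lemma:no_bad_insert} and Lemma~\ref{lemma:instime_separation} that distinct insertion times are separated by more than the sum of the two radii, and conclude disjointness. Your accounting of the constants (the $16\Gu/(4^{s-2}\mu(1-\rho))$ scale, the $(1-\rho)\insconst\Gu/(20\mu)$ lower bound on $\instime$, the geometric decay absorbing $B_{s'',s'}$ at the same order as $A_{s''}$) all tracks the paper's computation.

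The gap is in the ``central preparatory step''. You claim $(1-\drift)(t'-t_0)\leq \Gu+A_s(\Gu)$ from $L_w(t_0)\leq T_s\leq L_u(t')+A_s(\Gu)$ together with $\G(t')\leq\Gu$, but this inequality does not follow: you have only an \emph{upper} bound $L_w(t_0)\leq T_s$ at the discovery time $t_0$, while bounding $t'-t_0$ would require a matching \emph{lower} bound on $L_w(t_0)$. Indeed $L_w(t_0)$ can be far below $T_s$ whenever $\instime_{\set{u,v}}$ is large, so the edge may have been discovered long before $(2P(t)+t)/3$ — and the conclusion $t_0\geq (2P(t)+t)/3$ is simply false in general. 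What is actually needed to apply Lemma~\ref{lemma:no_bad_insert} is not that the \emph{discovery} happened in $[(2P(t)+t)/3,t]$, but that the real time $t_w$ at which the inserting node's logical clock reaches $T_s$ (i.e., the moment the level-$s$ insertion is executed) lies in that window. For $t_w$ the argument goes through: $L_w(t_w)=T_s$, so $L_w(t')-L_w(t_w)\leq A_s(\Gu)+\Gu$ gives the needed two-sided control, and $t$-insertion-boundedness supplies the upper endpoint. The paper sidesteps the clock-value comparison entirely by noting that any $T_s$ with $L_v((2P(t)+t)/3)>T_s$ for some $v$ contributes nothing to the instability sets, so the remaining $T_s$ automatically satisfy $t_w\geq (2P(t)+t)/3$. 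You should either switch to arguing about $t_w$ directly or adopt the paper's elimination of small $T_s$; as written, the conflation of $t_0$ with $t_w$ is a real error, not a shortcut.
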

\begin{proof}
  For a given $s>1$ and $T_{s}\in \Tset_{s}$, abbreviate
  \begin{equation*}
    U_s(T_s):=\set{t'\in [(P(t)+2t)/3,t]\,\Big|\, |\lc{u}(t')-T_s|\leq A_s(\Gu)}
  \end{equation*}
  and, for $s'> s$,
  \begin{equation*}
    R_{s'}(T_s):= \{t'\in [(P(t)+2t)/3,t]\,\big|
    B_{s,s'-1}(\Gu)<|L_u(t')-T_s|-A_s(\Gu)\leq
    B_{s,s'}(\Gu)\}.
  \end{equation*}
  This entails that
  \begin{equation*}
    U_s(u,t)=\bigcup_{T_s\in \Tset_s} U_s(T_s)\quad \mbox{and}\quad
    R_{s'}(u,t)=\bigcup_{s<s'}\bigcup_{T_s\in \Tset_s} R_{s'}(T_s).
  \end{equation*}
  Moreover, we have
  \begin{equation}
    U_s(T_s)\cup \bigcup_{s'> s} R_{s'}(T_s)=
    \Big(T_s-A_s(\Gu)-B_{s,\infty}(\Gu),T_s+A_s(\Gu)+B_{s,\infty}(\Gu)\Big),
    \label{eq:set_union}
  \end{equation}
  where 
  \begin{equation*}
    B_{s,\infty}(\Gu):=\lim_{s'\to \infty}B_{s,s'}(\Gu)
    = \sum_{s'= s}^{\infty}\beta_{s'}(\Gu).
  \end{equation*}
  Evaluating this limit is straightforward, yielding
  \begin{eqnarray}
    A_s(\Gu)+B_{s,\infty}(\Gu)
    &=&\frac{2\Gu}{\sigma^{s-2}}
        \left(\frac{(7+2\rho)(1+\mu)}{2\mu(1-\rho)}\left(1+\frac{\sigma}{\sigma-1}\right)+
        s\left(2+\frac{\sigma}{\sigma-1}\right)+\frac{\sigma}{(\sigma-1)^2}\right)\nonumber\\
    &\stackrel{\sigma\geq 101}{\leq}& \frac{2\Gu}{4^{s-2}}
                                      \left(\left(\frac{1407}{200}+
                                      \frac{201\drift}{100}\right)\cdot\frac{1+\mu}{\mu(1-\rho)} +\frac{301s}{100\cdot
                                      12^{s-2}}+\frac{101}{100^2}\right)\nonumber\\
    &\stackrel{s\geq 2}{\leq} & \frac{2\Gu}{4^{s-2}}
                                \left(\left(\frac{1407}{200}+
                                \frac{201\drift}{100}\right)\cdot\frac{1+\mu}{\mu(1-\rho)} +\frac{301}{100}+\frac{101}{100^2}\right)\nonumber\\
    &\stackrel{(\ref{eq:rho_dynamic},\ref{eq:mu})}{<} &
                                                \frac{16\Gu}{4^{s-2}\mu(1-\rho)}.\label{eq:AB}
\end{eqnarray}
In particular, this expression is maximized for $s=2$, giving
\begin{eqnarray*}
  A_s(\Gu)+B_{s,\infty}(\Gu)+\Gu 
  & \stackrel{(\ref{eq:mu},\ref{eq:AB})}{<} & \frac{17\Gu}{\mu(1-\drift)}\\
  & = & \frac{34\G(P(t))}{\mu(1-\drift)}\\
  & \stackrel{\eqref{eq:Pt}}{=} &
                                  \frac{34\cdot(t-P(t))}{(1-\drift)\insconst}\\
  & \sr{eq:insconst}{\leq} &
                                                         \frac{(1-\rho)(t-P(t))}{3}.
\end{eqnarray*}

Now consider any $T_s\in \Tset_s$ such that $L_v((2P(t)+t)/3)> T_s$ for some $v\in V$. From Statement~(i) of
Lemma~\ref{lemma:P(t)} and the above inequality, we obtain that
\begin{eqnarray*}
L_u\left(\frac{P(t)+2t}{3}\right)-T_s &\geq &
L_u\left(\frac{2P(t)+t}{3}\right)
+\frac{(1-\rho)(t-P(t))}{3}-T_s\\
&\geq & L_v\left(\frac{2P(t)+t}{3}\right)-\G\left(\frac{2P(t)+t}{3}\right)
+\frac{(1-\rho)(t-P(t))}{3}-T_s\\
&\geq & \frac{(1-\rho)(t-P(t))}{3}-\Gu\\
&>& A_s(\Gu)+B_{s,\infty}(\Gu),
\end{eqnarray*}
and hence $U_s(T_s)=\emptyset$ and $R_{s'}(T_s)=\emptyset$ for all $s'>s$. Thus,
any insertion time $T_s$ lies sufficiently far in the past and is of no concern.
As the execution is $t$-insertion-bounded, we conclude that it suffices to
consider $s,s'\geq 2$, $T_s\in \Tset_s$, and $T_{s'}\in \Tset_{s'}$,
so that there exists a node $v$ inserting an edge $\{v,w\}$ on level $s$ at the
time $t_v\in [(2P(t)+t)/3,t]$ satisfying that $L_v(t_v)=T_s$ and a node $v'$
inserting an edge $\{v',w'\}$ on level $s'$ at the time $t_{v'}\in
[(2P(t)+t)/3,t]$ satisfying that $L_{v'}(t_{v'})=T_{s'}$.

Because we have $s\neq s'$, Lemma~\ref{lemma:instime_separation} states that
\begin{equation*}
|T_s-T_{s'}|\geq
\frac{\min\{\instime_{\{v,w\}}(\Ge),\instime_{\{v',w'\}}(\Ge')\}}
{2^7\cdot 4^{\min\set{s,s'}-2}},
\end{equation*}
where $\Ge$ and $\Ge'$ are the estimates used in the computation of the logical
insertion times $T_s$ and $T_{s'}$, respectively (by the leaders of the
inserted edges $\{u,v\}$ and $\{u',v'\}$). Applying
Lemma~\ref{lemma:no_bad_insert}, we see that
\begin{eqnarray*}
\frac{\min\{\instime_{\{v,w\}}(\Ge),\instime_{\{v',w'\}}(\Ge')\}}
{2^7\cdot 4^{\min\set{s,s'}-2}}
&\geq &\frac{(1-\rho)\insconst\G(P(t))}{10\mu \cdot 2^7\cdot 4^{\min\set{s,s'}-2}}\\
&\sr{eq:insconst}{\geq}& \frac{32\G(P(t))}{4^{\min\{s,s'\}-2}\mu(1-\rho)}\\
&=& \frac{16\Gu}{4^{\min\{s,s'\}-2}\mu(1-\rho)}\\
&\sr{eq:AB}{>}& A_s(\Gu)+B_{s,\infty}(\Gu)+A_{s'}(\Gu)+B_{s',\infty}(\Gu).
\end{eqnarray*}
Recalling \eqref{eq:set_union}, we conclude that
\begin{equation*}
\left(U_s(T_s)\cup \bigcup_{s''> s} R_{s''}(T_s)\right)\cap
\left(U_{s'}(T_{s'})\cup \bigcup_{s''> s'} R_{s''}(T_{s'})\right)=\emptyset.
\end{equation*}
Thus, all cases are covered and the proof is complete.
\end{proof}

With this lemma and the definitions preceding it at hand, we can now specify
gradient sequences suitable for deriving our skew bounds.

\begin{definition}[Global and Local Gradient Sequences]\label{def:levelSeq}
  Given times $t\geq t_{\min}$ and $t'\in [P(t),t]$, we define the \emph{global
  gradient sequence} $C^{(t,t')}$ as follows. Set $\Delta_1(t):=t-P(t)$,
  and
  \begin{equation}
  \Delta_s(t):=\sum_{s'=s}^{\infty}\frac{7\Gu}{2(1-\rho^2)\mu\sigma^{\max\{s'-3,0\}}}
  \label{eq:def_delta}
  \end{equation}
  for $s>1$. Denote by $s'\in \nat$ the uniqe level such that $t'\in
  [t-\Delta_{s'}(t),t-\Delta_{s'+1}(t))$. Then
  \begin{equation*}
  C^{(t,t')}_s:=\left\{\begin{matrix}
  \frac{2\Gu}{\sigma^{s-1}} & \mbox{if } s\leq s' \\[1ex]
  \frac{2\Gu}{\sigma^{s'-1}} & \mbox{else.} \end{matrix}\right.
  \end{equation*}
  For the above parameters and a node $u$, the \emph{local gradient sequence
  at $u$} is given by
  \begin{equation*}
  C^{(t,t',u)}_s:=\left\{\begin{matrix}
  C^{(t,t')}_{s-1}& \mbox{if $t'\in \bigcup_{s''\leq s}U_{s''}(u,t)\cup
  R_{s''}(u,t)$}\\[1ex]
  C^{(t,t')}_s & \mbox{else.}\\
  \end{matrix}\right.
  \end{equation*}
\end{definition}

Since
\begin{eqnarray}
\Delta_2(t)&=&
\sum_{s'=2}^{\infty}\frac{7\Gu}{2(1-\rho^2)\mu\sigma^{\max\{s'-3,0\}}}\nonumber\\
&=& \left(1+\frac{\sigma}{\sigma-1}\right)\frac{7\Gu}{2(1-\rho^2)\mu}\nonumber\\
&= & \left(1+\frac{\sigma}{\sigma-1}\right)\frac{7(t-P(t))}
{2(1-\rho^2)\insconst}\nonumber\\
&\sr{eq:insconst}{\leq} & \frac{t-P(t)}{3}\label{eq:Delta_2}\\ &<& t-P(t),\nonumber
\end{eqnarray}
the global sequences are well-defined (i.e., decreasing in $s$), implying the
same for the local sequences. We are now ready to prove our main result.

\begin{lemma}\label{lemma:proto_bound}
For any $t$-insertion-bounded execution with $t_{\min}\leq t$, it holds for all
times $t'\in [P(t),t]$ and nodes $u\in V$ that $u$ is $C^{(t,t',u)}$-legal at
time $t$.
\end{lemma}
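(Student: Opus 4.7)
The plan is to mimic the proof strategy of Lemma~\ref{lemma:legality_static}, proceeding by contradiction through a minimal violating time and level. Assume there exists some time $\bar{t}'\in[P(t),t]$, node $\bar{u}\in V$, and minimal level $\bar{s}$ such that $(C^{(t,\bar{t}',\bar{u})},\bar{s})$-legality is violated at $\bar{u}$ and time $\bar{t}'$, with $\bar{t}'$ minimal. For $\bar{s}=1$, legality is immediate since $C_1^{(t,\bar{t}',\bar{u})}=2\Gu\geq 2\G(\bar{t}')$ by Lemma~\ref{lemma:P(t)}(i), combined with Lemma~\ref{lemma:1legality}. Hence $\bar{s}>1$, and Lemma~\ref{lemma:simple_stuff}(iii) forces $C_{\bar{s}}^{(t,\bar{t}',\bar{u})}\neq C_{\bar{s}-1}^{(t,\bar{t}',\bar{u})}$; by definition of the local sequence, this implies $\bar{t}'\notin U_{\bar{s}}(\bar{u},t)\cup R_{\bar{s}}(\bar{u},t)$.

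I would then apply Theorem~\ref{thm:catch_up} at level $\bar{s}$ with $t^+=\bar{t}'$, $t^-=\bar{t}'-\Lambda_{\bar{s}}-2\Theta_{\bar{s}}$, and the truncated sequence $C$ defined by $C_s:=C^{(t,\bar{t}',\bar{u})}_{\min\{s,\bar{s}\}}$. If all preconditions are met, the theorem's conclusion $\Psi_{\bar{u}}^{\bar{s}}(\bar{t}')<C_{\bar{s}-1}/(2\sigma)=C^{(t,\bar{t}',\bar{u})}_{\bar{s}}/2$ contradicts the assumed violation. The preconditions require, for every $t''\in[t^-,t^+]$ and every path $(\bar{u},\ldots,v)\in P_{\bar{u}}^{\bar{s}}(t'')$ with $\kappa_{(\bar{u},\ldots,v)}\leq C_{\bar{s}-1}$, that $v$ satisfy the $(C,\bar{s})$-stabilization condition at $t''$. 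Condition (S0) is direct from $C_1=2\Gu\geq 2\G(t'')$ throughout $[P(t),t]$ (Lemma~\ref{lemma:P(t)}(i)). For (S2), the assumption $\bar{t}'\notin U_{\bar{s}}(\bar{u},t)$ yields $|L_{\bar{u}}(\bar{t}')-T_{\bar{s}}|>A_{\bar{s}}(\Gu)$ for all $T_{\bar{s}}\in\Tset_{\bar{s}}$; since $|L_v(t'')-L_{\bar{u}}(\bar{t}')|\leq \bar{s}C_{\bar{s}-1}+(1+\rho)(1+\mu)(\Lambda_{\bar{s}}+2\Theta_{\bar{s}})$ by Lemmas~\ref{lemma:shortpaths} and~\ref{lemma:clocks} together with clock progression bounds, the slack built into $A_{\bar{s}}(\Gu)$ (cf.~\eqref{eq:A}) is exactly what is needed to absorb this difference and establish the required separation $(1+\mu)(1+\rho)\Theta_{\bar{s}}+\bar{s}C_{\bar{s}-1}$.

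The main obstacle will be verifying~(S1): for every $s'\in\{2,\ldots,\bar{s}-1\}$ and every node $w$ with $|L_v(t'')-L_w(t'')|\leq s'C_{s'-1}+(2\rho+\mu(1+\rho))\Theta_{s'}$, one needs $(C,s')$-legality at $w$ throughout $[t''-\Theta_{s'},t'']$. By minimality of $\bar{t}'$, for any such time $t'''$ we have $(C^{(t,t''',w)},s')$-legality at $w$, so it suffices to prove $C^{(t,t''',w)}_{s'}\leq C_{s'}$, i.e., $t'''\notin U_{s'}(w,t)\cup R_{s'}(w,t)$. Here the quantitative role of the buffers $R_{s''}$ becomes crucial. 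The total logical clock offset between $\bar{u}$ at time $\bar{t}'$ and $w$ at time $t'''$ is bounded by the path-induced skew $\bar{s}C_{\bar{s}-1}+s'C_{s'-1}+(2\rho+\mu(1+\rho))\Theta_{s'}$ plus at most $(1+\rho)(1+\mu)(\Lambda_{\bar{s}}+2\Theta_{\bar{s}}+\Theta_{s'})$ from real-time drift. The design of $\beta_{s'}(\Gu),\ldots,\beta_{\bar{s}-1}(\Gu)$ in~\eqref{eq:beta} exactly covers this budget: because $\bar{t}'\notin U_{\bar{s}}(\bar{u},t)\cup R_{\bar{s}}(\bar{u},t)$, the clock of $\bar{u}$ is at distance exceeding $A_{\bar{s}}(\Gu)+B_{s',\bar{s}}(\Gu)$ from every $T_{s'}\in\Tset_{s'}$ (by the disjointness result of Lemma~\ref{lemma:instab} applied to a $t$-insertion-bounded execution, which lets us rule out interfering insertion times on all levels $s'<\bar{s}$), and this margin exceeds the aforementioned offset, placing $w$'s clock at $t'''$ strictly outside $U_{s'}$ and $R_{s'}$. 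Once~(S1),~(S2), and~(S0) are established, Theorem~\ref{thm:catch_up} yields the contradiction, completing the induction on $\bar{t}'$.
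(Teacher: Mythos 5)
The high-level strategy is the same as the paper's: find a minimal violating time and level, build an auxiliary gradient sequence, and apply Theorem~\ref{thm:catch_up}. Where your argument breaks is in the choice of auxiliary sequence and the claim about the $R_{s''}$ buffers.

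You set $C_s := C^{(t,\bar{t}',\bar{u})}_{\min\{s,\bar{s}\}}$ and then argue that $\bar{t}'\notin U_{\bar{s}}(\bar{u},t)\cup R_{\bar{s}}(\bar{u},t)$ (which you correctly derive from $C^{(t,\bar{t}',\bar{u})}_{\bar{s}}\neq C^{(t,\bar{t}',\bar{u})}_{\bar{s}-1}$) ``lets us rule out interfering insertion times on all levels $s'<\bar{s}$,'' giving $|L_{\bar{u}}(\bar{t}')-T_{s'}|>A_{s'}(\Gu)+B_{s',\bar{s}}(\Gu)$ for all $T_{s'}\in\Tset_{s'}$ with $s'<\bar{s}$. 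This implication is false. Lemma~\ref{lemma:instab} gives pairwise disjointness, but being outside $U_{\bar{s}}\cup R_{\bar{s}}$ does not place $\bar{t}'$ outside the whole family of some \emph{lower} insertion time: it is entirely consistent that $\bar{t}'\in U_{s_0}(\bar{u},t)$ or $\bar{t}'\in R_{s_1}(\bar{u},t)$ for some $s_0\le s_1<\bar{s}$. In that situation $|L_{\bar{u}}(\bar{t}')-T_{s_0}|$ is \emph{small} (at most $A_{s_0}+B_{s_0,s_1}$), not large, so your buffer margin collapses at level $s_0$. Concretely: if $\bar{t}'\in R_{s_0+1}(T_{s_0})$ with $s_0+1<\bar{s}$, then your $C_{s_0}$ equals the tight value $C^{(t,\bar{t}')}_{s_0}$ (since $\bar{t}'\notin \bigcup_{s''\le s_0}U_{s''}\cup R_{s''}$), yet the only separation you have from $T_{s_0}$ is $|L_{\bar{u}}(\bar{t}')-T_{s_0}|>A_{s_0}$, far short of $A_{s_0}+B_{s_0,\bar{s}}$, and the verification of (S1) at level $s_0$ fails.

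This is precisely the situation the paper's choice of $\bar{C}$ is engineered to avoid: instead of keeping $C^{(t,\bar{t}',\bar{u})}_{s'}$ at lower levels, the paper splits into two cases based on whether $C^{(t^+,\bar{t},u)}_{\bar{s}}$ is the tight value $2\Gu^+/\sigma^{\bar{s}-1}$ or the loose one $2\Gu^+/\sigma^{\bar{s}-2}$, and takes a \emph{uniformly} tight or uniformly loose $\bar{C}$ accordingly. In the loose case the extra factor of $\sigma$ already absorbs the worst possible local sequence (i.e., $C_s^{(t^+,t,v)}\le \sigma C_s^{(t^+,t)}=\bar{C}_s$), so no buffer argument is needed at those levels, which is exactly where your argument has no buffer to give. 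In the tight case one does have $\bar{t}\notin\bigcup_{s''\le\bar{s}}U_{s''}\cup R_{s''}$, and only there does the strong margin $A_{s'}+B_{s',\bar{s}}$ hold. Your mixed sequence forces you to need the strong margin on tight levels $s'<\hat{s}$ even in the loose case, where it is simply not available. To salvage your proof you would need to replace the truncated local sequence by the paper's uniform two-case $\bar{C}$ (or establish the stronger containment $|L_{\bar{u}}(\bar{t}')-T_{s'}|>A_{s'}+B_{s',\infty}$ for all $s'$ strictly below whatever family $\bar{t}'$ lands in, and handle the ``loose'' levels via the $\sigma$-absorption argument — essentially rediscovering the paper's case split).

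There is also a secondary omission: you never verify that for the relevant times $t'''\in[\underline{t}-\Theta_{s'},\bar{t}']$ the \emph{global} sequence satisfies $C^{(t,t''')}_{s'}=C^{(t,\bar{t}')}_{s'}$, i.e., that these times remain inside the correct $\Delta$-band. Since $C^{(t,t')}_s$ is non-increasing in $t'$, going back in time can only increase the global value, so without the paper's band computation the inequality $C^{(t,t''',w)}_{s'}\le C_{s'}$ can fail even when the buffer condition holds.Your overall strategy is the right one — minimal violating time and level, construct an auxiliary gradient sequence, invoke Theorem~\ref{thm:catch_up}, use Lemma~\ref{lemma:instab} for separation — and your observation that $C^{(t,\bar{t}',\bar{u})}_{\bar{s}}\neq C^{(t,\bar{t}',\bar{u})}_{\bar{s}-1}$ forces $\bar{t}'\notin U_{\bar{s}}(\bar{u},t)\cup R_{\bar{s}}(\bar{u},t)$ is correct. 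However, there is a concrete gap in the step where you claim that this fact, via Lemma~\ref{lemma:instab}, ``lets us rule out interfering insertion times on all levels $s'<\bar{s}$,'' yielding $|L_{\bar{u}}(\bar{t}')-T_{s'}|>A_{s'}(\Gu)+B_{s',\bar{s}}(\Gu)$ for every $T_{s'}\in\Tset_{s'}$. Disjointness of the $U/R$ families does not give this: being outside $U_{\bar{s}}\cup R_{\bar{s}}$ leaves open the possibility that $\bar{t}'$ lies in $U_{s_0}(\bar{u},t)$ or $R_{s_1}(\bar{u},t)$ for some $s_0\le s_1<\bar{s}$. In that case $|L_{\bar{u}}(\bar{t}')-T_{s_0}|$ is \emph{at most} $A_{s_0}+B_{s_0,s_1}$, not larger than $A_{s_0}+B_{s_0,\bar{s}}$, and the separation you rely on simply does not exist at level $s_0$.

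This matters because of your choice of auxiliary sequence $C_s:=C^{(t,\bar{t}',\bar{u})}_{\min\{s,\bar{s}\}}$: for $s'<\hat{s}$ (where $\hat{s}$ is the smallest index with $\bar{t}'\in\bigcup_{s''\le\hat{s}}U_{s''}\cup R_{s''}$) this is the \emph{tight} value $C^{(t,\bar{t}')}_{s'}$, so establishing $(C,s')$-legality at $w$ from $(C^{(t,t''',w)},s')$-legality requires $C^{(t,t''',w)}_{s'}\le C^{(t,\bar{t}')}_{s'}$, which in turn forces the buffer separation you cannot prove. The paper sidesteps this exactly by \emph{not} taking the truncated local sequence: it defines $\bar{C}$ to be uniformly tight ($2\Gu^+/\sigma^{s-1}$) or uniformly loose ($2\Gu^+/\sigma^{\max\{s-2,0\}}$) based solely on whether $C^{(t^+,\bar{t},u)}_{\bar{s}}$ is tight or loose. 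In the loose case $\bar{C}_s$ already contains the extra factor $\sigma$, so $C_s^{(t^+,t,v)}\le \sigma C_s^{(t^+,t)}=\bar{C}_s$ delivers (S1) with no buffer argument at all; the buffer argument is used only in the tight case, where $\bar{t}\notin\bigcup_{s''\le\bar{s}}U_{s''}\cup R_{s''}$ \emph{does} hold. Your mixed sequence demands buffer margins on tight lower levels precisely in the situation where they are unavailable. You would either have to adopt the paper's uniform two-case $\bar{C}$, or prove the stronger fact $|L_{\bar{u}}(\bar{t}')-T_{s'}|>A_{s'}+B_{s',\infty}$ for all $s'$ strictly below the family index of $\bar{t}'$, and separately run the $\sigma$-absorption argument on the levels above it — which amounts to reconstructing the paper's case split.

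A smaller omission: you do not verify that the times $t'''$ appearing in (S1) satisfy $C^{(t,t''')}_{s'}=C^{(t,\bar{t}')}_{s'}$, i.e., that going back as far as $\underline{t}-\Theta_{s'}$ stays within the same $\Delta$-band; since $C^{(t,t')}_s$ is non-increasing in $t'$, the global sequence value can only grow when moving backwards, and without the paper's explicit band computation the inequality $C^{(t,t''',w)}_{s'}\le C_{s'}$ need not hold even when the $U/R$ separation is fine.
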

\begin{proof}
  Suppose that this statement is false: Let $\bar{t}$ be the smallest
  time such that there exists a node $u$ and a time $t^+\geq \bar{t}$
  such that $\bar{t}\in[P(t^+),t^+]$ and $u$ is not
  $C^{(t^+,\bar{t},u)}$-legal at time $\bar{t}$ for some level
  $\bar{s}$.  W.l.o.g., assume that $\bar{s}$ is the smallest level
  for which $u$ is not $C^{(t^+,\bar{t},u)}$-legal at time
  $\bar{t}$. To simplify the notation, in the following we use
  $\Gu^+:=2\G(P(t^+))$. Note that by statement (i) of Lemma
  \ref{lemma:P(t)}, $\G(t)\leq \Gu^+$ for all $t\in[P(t^+),t^+]$.
  This implies that for all
  $t\in \big[P(t^+)+\stabint{2}^{(t^+,\bar{t},u)},t^+\big]$, condition
  (S0) of the stabilization condition (cf.\ Def.\
  \ref{def:stabcondition}) holds and therefore by Lemma
  \ref{lemma:1legality}, the system is $(C^{(t^+,\bar{t},u)},1)$-legal
  for all nodes $v\in V$ and all times $t'\in[P(t^+),t^+]$; hence, we
  in particular have $\bar{s}>1$.

We will now define a gradient sequence $\bar{C}$ such that
$\bar{C}_{\bar{s}}=C^{(t^+,\bar{t},u)}_{\bar{s}}$ and, for $s<\bar{s}$,
$\bar{C}_{s}\geq C^{(t^+,\bar{t},u)}_{s}$. Note that, by minimality of
$\bar{s}$, node $u$ is $(\bar{C},\bar{s})$-legal if and only if it is
$(C^{(t^+,\bar{t},u)},\bar{s})$-legal. It is thus sufficient to derive a
contradiction to the assumption that node $u$ is not $(\bar{C},\bar{s})$-legal
at time $\bar{t}$. The sequence is defined as follows:
\[ \bar{C}_{s} :=
\begin{cases}
\frac{2\Gu^+}{\sigma^{s-1}} & \text{ if }
C^{(t^+,\bar{t},u)}_{\bar{s}}=\frac{2\Gu^+}{\sigma^{\bar{s}-1}}\\
\frac{2\Gu^+}{\sigma^{\max\{s-2,0\}}} & \text{ if }
C^{(t^+,\bar{t},u)}_{\bar{s}}=\frac{2\Gu^+}{\sigma^{\bar{s}-2}}.
\end{cases}
\]
Note that due to the minimality of $\bar{s}$, Statement~(iii) of
Lemma~\ref{lemma:simple_stuff} shows that $C^{(t^+,\bar{t},u)}_{\bar{s}}<
C^{(t^+,\bar{t},u)}_{\bar{s}-1}$ and therefore either
$C^{(t^+,\bar{t},u)}_{\bar{s}}=\frac{2\Gu^+}{\sigma^{\bar{s}-1}}$ or
$C^{(t^+,\bar{t},u)}_{\bar{s}}=\frac{2\Gu^+}{\sigma^{\bar{s}-2}}$. Hence,
$\bar{C}$ is well-defined for all cases.

We will use Theorem~\ref{thm:catch_up} to prove $(\bar{C},\bar{s})$-legality
of $u$ at time $\bar{t}$, so our goal is to show that node $u$ satisfies the
preconditions to apply the lemma. We set the time
\begin{equation}\label{eq:def_undert1}
\underline{t} \coloneq \bar{t} -
\frac{5}{2}\cdot\frac{\bar{C}_{\bar{s}-1}}{\mu(1-\drift^2)}.
\end{equation}
Further, for a time $t\in[\underline{t},\bar{t}]$, we define
\begin{equation}\label{eq:defVut}
V_u(t):=\set{v\in V: \exists \text{ path } p=(u,\dots,v)\in
P_u^{\bar{s}}(t) \text{ with } \kappa_{p}\leq \bar{C}_{\bar{s}-1}}.
\end{equation}
In order to apply Theorem \ref{thm:catch_up}, we show that for all
times $t\in\left[\underline{t},\bar{t}\right]$, each node $v\in
V_u(t)$ satisfies the $(\bar{C},\bar{s})$-stabilization condition at
time $t$.  Setting $t^-=\underline{t}$ and $t^+=\bar{t}$, we can
then apply the lemma. Since
\[
\bar{t}-\underline{t}=\frac{5}{2}\cdot\frac{\bar{C}_{\bar{s}-1}}{\mu(1-\drift^2)}
> \frac{\bar{C}_{\bar{s}-1}}{2(1-\drift)\mu} + 2\barstabint{\bar{s}},
\]
the lemma implies that at time $\bar{t}$,
\[
\Psi_u^{\bar{s}}(\bar{t}) < \frac{\bar{C}_{\bar{s}-1}}{2\sigma}
\leq \frac{\bar{C}_{\bar{s}}}{2}.
\]
By Definition \ref{def:legality}, this is a contradiction to the
assumption that $u$ is not $(\bar{C},\bar{s})$-legal at time
$\bar{t}$.

We now show that the stabilization condition applies. Since
$C^{(t^+,\bar{t},u)}_{\bar{s}}<C^{(t^+,\bar{t},u)}_{\bar{s}-1}$ (or
Statement~(iii) of Lemma~\ref{lemma:simple_stuff} and the minimality of
$\bar{s}$ yield a contradiction), we have that $\bar{t}\notin
\bigcup_{s=0}^{\bar{s}-1}[t^+-\Delta_s(t),t^+-\Delta_{s+1}(t)]$. Therefore,
$\bar{t} \geq t^+-\Delta_{\bar{s}}$. In particular,
\begin{eqnarray*}
\underline{t}&\geq & \bar{t}-\frac{5}{2}\cdot \frac{\bar{C}_1}{\mu(1-\rho^2)}\\
&> & t^+-\Delta_2(t^+)-\frac{5\Gu^+}{2\mu(1-\rho^2)}\\
&\sr{eq:Delta_2}{\geq}&
P(t^+)+\frac{2(t^+-P(t^+))}{3}-\frac{5\Gu^+}{2\mu(1-\rho^2)}\\
&\sr{eq:Pt}{=}&
P(t^+)+\frac{2\insconst\Gu^+}{3\mu}-\frac{5\Gu^+}{2\mu(1-\rho^2)}\\
&\sr{eq:insconst}{>} &
P(t^+)+\frac{14\Gu^+}{2\mu(1-\rho^2)}-\frac{5\Gu^+}{2\mu(1-\rho^2)}\\
&\sr{eq:stabint}{>} & P(t^+)+\barstabint{2}.
\end{eqnarray*}
In other words, $[\underline{t},\bar{t}]\subseteq [P(t^+)+\barstabint{2},t^+]$.
As shown earlier, this entails that (S0) is satisfied at all nodes and times
$t\in [\underline{t},\bar{t}]$.

Concerning (S1), from the previous observation that $\bar{t}\geq
t^+-\Delta_{\bar{s}}$, we have for $1<s<\bar{s}$ that
\begin{eqnarray*}
\underline{t}&=& \bar{t}-\frac{5}{2}\cdot
\frac{\bar{C}_{\bar{s}-1}}{\mu(1-\rho^2)}\\
&\geq & t^+-\Delta_s(t^+)+\left(\Delta_s-\Delta_{s+1}-\frac{5}{2}\cdot
\frac{\bar{C}_{\bar{s}-1}}{\mu(1-\rho^2)}\right)\\
&\sr{eq:def_delta}{\geq}&
t^+-\Delta_s(t^+)+\left(\frac{7\Gu^+}{2\mu(1-\rho^2)\sigma^{\max\{s-3,0\}}}-
\frac{5\Gu^+}{2\mu(1-\rho^2)\sigma^{\max\{\bar{s}-3,0\}}}\right)\\
&>& t^+-\Delta_s(t^+)+\frac{\Gu^+}{(1+\rho)\mu\sigma^{\max\{s-3,0\}}}\\
&\geq & t^+-\Delta_s(t^+)+\barstabint{s}.
\end{eqnarray*}
Thus, for $t\in [\underline{t}-\barstabint{s},\bar{t}]$ and $1<s< \bar{s}$,
$C_s^{(t^+,t)}=2\Gu^+/\sigma^{s-1}$. We distinguish two cases according to
which gradient sequence we use for $\bar{C}$.

\paragraph{Case \boldmath$\bar{C}_s=2\hat{\G}/\sigma^{\max\{s-2,0\}}$ for all
$s$:}
Because for any node $v$ and $s>1$, $C_s^{(t^+,t,v)}\leq \sigma
C_s^{(t^+,t)}=\bar{C}_s$, the minimality of $\bar{t}$ and $\bar{s}$ imply that
for all $t\in [\underline{t}-\barstabint{s},\bar{t}]$, $1<s< \bar{s}$,
and $v\in V$, the system is $(\bar{C},s)$-legal at node $v$ and time $t$. In
particular, (S1) is satisfied at time $t$ at all nodes $v\in V_u(t)$ w.r.t.\
$\bar{C}$ and $\bar{s}$.

\paragraph{Case \boldmath$\bar{C}_s=2\hat{\G}/\sigma^{s-1}$ for all
$s$:}
Consider $1<s<\bar{s}$, a time $t\in [\underline{t}-\barstabint{s},\bar{t}]$,
and a node $v\in V_u(t)$. We need to show that $C_s^{(t^+,t,v)}=
C_s^{(t^+,t)}=\bar{C}_s$. Assume for the sake of contradiction that there is a
minimal level $1<s<\bar{s}$ violating this claim for some $t$ and $v\in
V_u(t)$. We apply Lemma~\ref{lemma:clocks} to nodes $u$ and $v$ with respect
to $(\bar{C},s-1)$-legality, where $\kappa_p\leq \bar{C}_{\bar{s}-1}\leq
\bar{C}_{s-1}$. This yields
\begin{equation*}
|L_v(t)-L_u(t)|<\left(s+\frac{1}{2}\right)\kappa_p+\frac{\bar{C}_{s-1}}{2}
\leq s\bar{C}_{s-1},
\end{equation*}
and thus
\begin{eqnarray}
|L_v(t)-L_u(\bar{t})|&< & s\bar{C}_{s-1}
+(1+\rho)(1+\mu)(\bar{t}-(\underline{t}-\bar{\Theta}_s))\nonumber\\
&=& s\bar{C}_{s-1} +\frac{5(1+\mu)\bar{C}_{\bar{s}-1}}{2\mu(1-\rho)}
+\frac{(1+\rho)(1+\mu)\bar{C}_{s-1}}{\mu(1-\rho)}\nonumber\\
&\sr{eq:beta}{\leq} & \beta_s(\Gu^+).\label{eq:close}
\end{eqnarray}

Moreover, the fact that $C_s^{(t^+,t,v)}\neq C_s^{(t^+,t)}$ entails that
\begin{equation*}
t\in \bigcup_{s'\leq s}U_{s'}(v,t^+)\cup R_{s'}(v,t^+)
\end{equation*}
Therefore, there exist $s'\leq s$ and $T_{s'}\in\Tset_{s'}$ so that
\begin{equation*}
|L_v(t)-T_{s'}|\leq A_{s'}(\Gu^+)+B_{s',s}(\Gu^+).
\end{equation*}
Similarly, as we have that
\begin{equation}\label{eq:t_bar_range}
\frac{2P(t^+)+t^+}{3}\sr{eq:Delta_2}{\leq}t^+-\Delta_2 \leq \bar{t}
\leq t^+
\end{equation}
and $C_{\bar{s}}^{(t^+,\bar{t},u)}=
2\Gu^+/\sigma^{\bar{s}-1}= C_{\bar{s}}^{(t^+,\bar{t})}$, it holds that
\begin{equation*}
\bar{t}\notin \bigcup_{s''\leq \bar{s}}U_{s''}(v,t^+)\cup R_{s''}(v,t^+)
\end{equation*}
and hence
\begin{equation*}
|L_u(\bar{t})-T_{s'}|> A_{s'}(\Gu^+)+B_{s',\bar{s}}(\Gu^+).
\end{equation*}
Combining these two inequalities yields that
\begin{equation*}
|L_u(\bar{t})-L_v(t)|>B_{s',\bar{s}}(\Gu^+)-B_{s',s}(\Gu^+)=
B_{s,\bar{s}}(\Gu^+)\geq \beta_s(\Gu^+),
\end{equation*}
contradicting \eqref{eq:close}.

It remains to show (S2) for each $v\in V_u(t)$ and all $t\in
[\underline{t},\bar{t}]$. Since we already established (S1), we can apply
Lemma~\ref{lemma:clocks} to nodes $u$ and $v$ with respect to
$(\bar{C}_{\bar{s}-1},\bar{s}-1)$-legality, where $\kappa_p\leq
\bar{C}_{\bar{s}-1}$. This yields
\begin{equation*}
|L_v(t)-L_u(t)|<\left(\bar{s}-\frac{1}{2}\right)\kappa_p+\frac{\bar{C}_{\bar{s}-1}}{2}
\leq \bar{s}\bar{C}_{\bar{s}-1}.
\end{equation*}
We claim that $\bar{t}\notin U_{\bar{s}}(u,t^+)$. For contradition,
assume that $\bar{t}\in U_{\bar{s}}(u,t^+)$. We then have
$C_{\bar{s}}^{(t^+,\bar{t},u)} = C_{\bar{s}-1}^{(t^+,\bar{t})}$. In
order to have
$C_{\bar{s}}^{(t^+,\bar{t},u)}\neq C_{\bar{s}-1}^{(t^+,\bar{t},u)}$,
we thus need that
$C_{\bar{s}-1}^{(t^+,\bar{t},u)} = C_{\bar{s}-2}^{(t^+,\bar{t})}$,
which implies that
$\bar{t}\in \bigcup_{s'\leq \bar{s}-1} U_{s'}(u,\bar{t})\cup
R_{s'}(u,\bar{t})$,
which is a contradiction to the statement of Lemma
\ref{lemma:instab}. We can therefore conclude that
$\bar{t}\notin U_{\bar{s}}(u,t^+)$.  Together with
\eqref{eq:t_bar_range}, this entails for any
$T_{\bar{s}}\in \mathbb{T}_{\bar{s}}$ that
$|L_u(\bar{t})-T_{\bar{s}}|\geq A_{\bar{s}}(\Gu^+)$ and thus
\begin{eqnarray*}
|L_v(t)-T_{\bar{s}}|&\geq &
|L_u(\bar{t})-T_{\bar{s}}|-|L_u(\bar{t})-L_u(t)|-|L_u(t)-L_v(t)|\\
&\geq & A_{\bar{s}}(\Gu^+)-(1+\rho)(1+\mu)(\bar{t}-\underline{t})
-\bar{s}\bar{C}_{\bar{s}-1}\\
&= & A_{\bar{s}}(\Gu^+)
-\frac{5(1+\mu)\bar{C}_{\bar{s}-1}}{2\mu(1-\rho)}
-\bar{s}\bar{C}_{\bar{s}-1}\\
&\sr{eq:A}{=} &
\left(\frac{(1+\rho)(1+\mu)}{(1-\rho)\mu}+\bar{s}\right)\bar{C}_{\bar{s}-1}\\
&= & (1+\rho)(1+\mu)\bar{\Theta}_{\bar{s}}+\bar{s}\bar{C}_{\bar{s}-1}.
\end{eqnarray*}
We conclude that the preconditions for the application of
Theorem~\ref{thm:catch_up} described earlier are met, yielding the stated
contradiction to the original assumption that the claim of the lemma is wrong.
\end{proof}

\begin{theorem}\label{theorem:gradient}
At all times $t\geq t_{\min}$ and nodes $u\in V$, the system is $C^{(t)}$-legal,
where $C_s^{(t)}=4\G(P(t))/\sigma^{\max\{s-2,0\}}$.
\end{theorem}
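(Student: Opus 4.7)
The plan is to reduce the statement to the $t$-insertion-bounded setting covered by Lemma~\ref{lemma:proto_bound} and then, for each level $s$, choose a reference time $t'_s\in[P(t),t]$ so that the corresponding local gradient sequence already certifies the desired bound. Fix $t\geq t_{\min}$ and a node $u$. Because logical clocks, neighbor sets, and paths at time $t$ depend only on events at times $\leq t$, disabling all edge insertions after time $t$ yields a $t$-insertion-bounded execution that agrees with the given one on every quantity appearing in the definition of $(C^{(t)},s)$-legality at $u$ at time $t$. It therefore suffices to prove the statement in this modified execution, where Lemma~\ref{lemma:proto_bound} guarantees that $u$ is $C^{(t,t',u)}$-legal at $t$ for \emph{every} $t'\in[P(t),t]$; crucially, different levels $s$ are permitted to use different witnesses $t'_s$.

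For $s=1$ I would pick $t'_1\in[P(t),t-\Delta_2(t))$, which corresponds to parameter $s'=1$ in Definition~\ref{def:levelSeq}; for $s\geq 2$ I would pick any $t'_s\in[t-\Delta_s(t),t-\Delta_{s+1}(t))$, corresponding to $s'=s$. These intervals are nonempty because $\Delta_s(t)$ is strictly decreasing in $s$, and they are contained in $[P(t),t)$ since $\Delta_2(t)\leq(t-P(t))/3<\Delta_1(t)=t-P(t)$ by~\eqref{eq:Delta_2}. Unwinding the definition of $C^{(t,t')}$ gives $C^{(t,t'_1)}_1=2\Gu$ and $C^{(t,t'_s)}_s=2\Gu/\sigma^{s-1}$ for $s\geq 2$. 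If $t'_s$ avoids the bad set $\bigcup_{s''\leq s}(U_{s''}(u,t)\cup R_{s''}(u,t))$, then directly $C^{(t,t'_s,u)}_s=C^{(t,t'_s)}_s\leq 2\Gu/\sigma^{\max\{s-2,0\}}=C^{(t)}_s$; note that for $s=1$ this is automatic since $U_{s''}$ and $R_{s''}$ are empty for $s''=1$, so no shift can occur.

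The main point to address is the possibility that $t'_s$ lies in a bad set, in which case Definition~\ref{def:levelSeq} replaces $C^{(t,t'_s,u)}_s$ by $C^{(t,t'_s)}_{s-1}$. With the choice $s'=s$, this bumped value equals $2\Gu/\sigma^{\min\{s-1,s\}-1}=2\Gu/\sigma^{s-2}$, which still matches $C^{(t)}_s=2\Gu/\sigma^{\max\{s-2,0\}}$ for every $s\geq 2$. Hence in all cases $C^{(t,t'_s,u)}_s\leq C^{(t)}_s$, so the $(C^{(t,t'_s,u)},s)$-legality provided by Lemma~\ref{lemma:proto_bound} yields $(C^{(t)},s)$-legality of $u$ at $t$. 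Applying this for every $s\in\nat$ delivers the claimed $C^{(t)}$-legality. The whole argument is essentially bookkeeping around Definition~\ref{def:levelSeq}; the only subtle observation is that choosing $s'=s$ makes the local sequence's ``bump'' to level $s-1$ precisely cancel the extra factor of $\sigma$ built into $C^{(t)}_s$.
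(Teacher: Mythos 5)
Your proposal is correct and follows essentially the same route as the paper: pass to a $t$-insertion-bounded execution identical to the original up to time $t$, invoke Lemma~\ref{lemma:proto_bound}, and observe that the at-most-one-level bump in the local gradient sequence is absorbed by the extra factor of $\sigma$ built into $C^{(t)}$. The only cosmetic difference is that the paper fixes $t'=t$ for all levels (so that effectively $s'\to\infty$ in Definition~\ref{def:levelSeq}), while you pick a level-dependent witness $t'_s$ with $s'=s$; both lead to the identical bound $C^{(t,t',u)}_s\leq 2\Gu/\sigma^{\max\{s-2,0\}}=C_s^{(t)}$.
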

\begin{proof}
Consider any execution of the algorithm and fix a time $t\geq t_{\min}$. We
create a $t$-insertion-bounded execution that is identical on $[0,t)$ as
follows. We modify the given execution in that at time $t$ all edges
fail, i.e., $E(t')=\emptyset$ for all $t'\geq t$. Moreover, all
nodes become aware of the non-existence of their incident edges at time $t$.
Hence, all nodes clear their neighbor sets at time $t$ and, for all $s\in \nat$,
$\Tset_s\cap (t,\infty)=\emptyset$. Therefore, the resulting execution is
$t$-insertion-bounded and identical to the original one on $[0,t)$.

We can apply Lemma~\ref{lemma:proto_bound} to the new execution, showing that
at time $t$, each node $u$ is $C^{(t,t,u)}$-legal. As $C^{(t,t,u)}_s\geq
C^{(t,t)}_{s-1}=C_s^{(t)}$, each node is $C^{(t)}$-legal at time $t$ in the
modified execution. Since the modified execution is identical to the original
execution during $[0,t)$ and logical clocks are continuous, the claim of the
theorem follows.
\end{proof}

\begin{corollary}[Gradient Property]\label{coro:gradient}
For $t\geq t_{\min}$, set $E^{\infty}(t):=\cap_{s=1}^{\infty}E^s(t)$ and let
$p$ be a path connecting $u$ and $v$ in $(V,E^{\infty})$ of
minimal weight $\kappa_p$. For
\begin{equation*}
s(p):=\max\{2+\lceil\log_{\sigma}(4\G(P(t))/\kappa_p\rceil,1\},
\end{equation*}
it holds that
\begin{equation*}
|L_u(t)-L_v(t)|\leq (s(p)+1)\kappa_p.
\end{equation*}
\end{corollary}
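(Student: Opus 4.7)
The plan is to derive the corollary directly from Theorem~\ref{theorem:gradient} together with Lemma~\ref{lemma:clocks}. First I would invoke Theorem~\ref{theorem:gradient} to conclude that at time $t$ the system is $C^{(t)}$-legal at every node, in particular at both $u$ and $v$, where $C^{(t)}_s = 4\G(P(t))/\sigma^{\max\{s-2,0\}}$. Next, since $p$ is a path in $(V, E^{\infty}(t))$, every edge of $p$ lies in $E^s(t)$ for every $s \in \nat$; by Definition~\ref{def:paths} this gives $p \in P_u^s(t)$ for all $s \geq 1$, and likewise $\bar p \in P_v^s(t)$.

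Then I would apply Lemma~\ref{lemma:clocks} at level $s := s(p)$ to the path $p$ to obtain
\[
|L_u(t) - L_v(t)| < \left(s(p) + \tfrac{1}{2}\right)\kappa_p + \tfrac{1}{2} C^{(t)}_{s(p)}.
\]
The remaining routine calculation is to check that $C^{(t)}_{s(p)} \leq \kappa_p$ with the chosen $s(p)$, which then yields the claimed bound $(s(p)+1)\kappa_p$. In the case $s(p) \geq 2$, the definition of $s(p)$ gives $s(p) - 2 \geq \log_\sigma(4\G(P(t))/\kappa_p)$, so $\sigma^{s(p)-2} \geq 4\G(P(t))/\kappa_p$ and hence $C^{(t)}_{s(p)} = 4\G(P(t))/\sigma^{s(p)-2} \leq \kappa_p$. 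In the degenerate case $s(p) = 1$, we have $4\G(P(t)) \leq \kappa_p/\sigma \leq \kappa_p$, so $C^{(t)}_1 = 4\G(P(t)) \leq \kappa_p$ as well.

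There is essentially no obstacle in this argument: the content of the corollary is just to translate the discretized legality condition provided by Theorem~\ref{theorem:gradient} into the familiar logarithmic gradient skew bound, by choosing the smallest level $s$ for which the $C^{(t)}_s$-term in Lemma~\ref{lemma:clocks} becomes dominated by $\kappa_p$. The only point requiring minor care is ensuring $p \in P_u^s(t)$ \emph{for all} $s$, which is precisely why the statement is formulated with respect to the ``fully inserted'' edge set $E^{\infty}(t)$.
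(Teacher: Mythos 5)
Your proof is correct and follows essentially the same path as the paper's: invoke Theorem~\ref{theorem:gradient} for $C^{(t)}$-legality, note that $E^{\infty}(t)\subseteq E^{s(p)}(t)$ so $p\in P_u^{s(p)}(t)$, and apply Lemma~\ref{lemma:clocks} at level $s(p)$, finishing by checking $C^{(t)}_{s(p)}\leq\kappa_p$. Your case analysis at the end is in fact slightly more explicit than the paper's, which only spells out the $s(p)=1$ case and leaves the $s(p)\geq2$ case implicit.
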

\begin{proof}
Since $p$ is a path in $(V,E^{\infty})$ and $E^{s(p)}\subseteq E^{\infty}$,
$p\in P_u^{s(p)}(t)$. By Theorem~\ref{theorem:gradient}, the system is
$C^{(t)}$-legal at $u$ and $v$ at time $t$. Applying Lemma~\ref{lemma:clocks}
for level $s(p)$, we obtain that
\begin{equation*}
|L_u(t)-L_v(t)|\leq
\left(s(p)+\frac{1}{2}\right)\kappa_p+\frac{C^{(t)}_{s(p)}}{2}
=\left(s(p)+\frac{1}{2}\right)\kappa_p+\frac{2\G(P(t))}{\sigma^{\max\{s(p)-2,0\}}}
\leq (s(p)+1)\kappa_p,
\end{equation*}
where the last inequality holds because $s(p)=1$ implies that
$\kappa_p>4\G(P(t))$.
\end{proof}


\section{Lower Bound on the Insertion Time}
\label{sec:lower}

In this section, we strengthen the lower bound in~\cite{kuhn09} to
match the stabilization time of $\Aopt$. The original lower bound
stated, roughly speaking, that the stabilization time of any
$\localskew$-dynamic gradient CSA with a stable gradient skew of
$\stableskew$ cannot be better than $\Omega(D / \stableskew(1))$ in
graphs of diameter $D$. For CSAs with $\BO(\log_{1 / \drift} D)$-local
skew, this bound implies that the stabilization time must be $\Omega(D
/ \log_{1 / \drift} D)$. Algorithm $\Aopt$ has a stabilization time of
$\BO(D)$, which does not match the bound in~\cite{kuhn09}; however, by
refining the analysis in the lower bound, we can show that the
algorithm is in fact asymptotically optimal in its stabilization time.
In the stronger bound we reason about the \emph{full} gradient
property, which bounds the skew on paths of all distances, rather than
just the local skew property, which bounds the skew on single edges.

Let us call a dynamic gradient CSA \emph{non-trivial} if it has a
stable gradient skew satisfying $\stableskew(1) \in o(D)$. This
essentially means that the algorithm guarantees a \emph{local} skew
(e.g., along single edges) that is better than the global skew.

\begin{theorem}\label{thm:lower}
  Let $\mathcal{F} = \set{ f_D : \posreals \rightarrow \posreals \st D
    \in \reals}$ be a family of functions,
  and let $c_1,c_2 \in (0, 1/16)$ be constants such that for all $f_D
  \in \mathcal{F}$ we have $f_D(c_1 D) \leq c_2 D$. Let $\mathcal{A}$
  be a non-trivial stabilizing CSA guaranteeing a dynamic gradient
  skew of $f_D$ in graphs of weighted diameter $D$. Then the
  stabilization time of $\mathcal{A}$ is at least $\Omega(D)$.
\end{theorem}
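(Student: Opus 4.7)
The plan is to combine an indistinguishability argument (in the spirit of~\cite{kuhn09,biaz01,lundelius84}) with a rate-based accounting. Classical global-skew lower bounds guarantee that the adversary can force two nodes $u,v$ at weighted distance $D$ in a suitable graph $G$ to exhibit a clock skew $|L_u(t)-L_v(t)| \geq cD$ for some universal constant $c \geq 1/8 > c_2$ at every sufficiently late time $t$. On the other hand, the rate of change of $|L_u - L_v|$ is bounded by $\beta - \alpha = \mu + 2\rho + \rho\mu$, a constant independent of $D$. Hence if a path $p$ of weight $c_1 D$ between $u$ and $v$ appears at an instant where the skew is $\geq cD$, shrinking this skew below the stable-gradient guarantee $f_D(c_1 D) \leq c_2 D$ requires at least $(c-c_2)D/(\beta-\alpha) \in \Omega(D)$ real time, which is therefore a lower bound on $\stabtime$.

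In more detail, I would fix a diameter-$D$ graph $G$ in which $u$ and $v$ are connected only by a long path $q$ of weight $D$, and consider two executions $E_0, E_1$: in $E_0$ the path $p$ of weight $c_1 D$ never appears, while in $E_1$ it appears at a carefully chosen time $t_0$ that is late enough for the algorithm to have settled into its worst-case global skew. Using the shifting construction together with the $\Omega(D)$ global-skew lower bound of~\cite{biaz01} (or~\cite{lenzen10} under the envelope condition), I would arrange that in $E_0$ we have $|L_u(t_0)-L_v(t_0)| \geq cD$. Crucially, before $t_0$ no node in $G$ has any information about $p$ (the estimate layer provides estimates only along existing edges), so $\mathcal{A}$ behaves identically in $E_0$ and $E_1$ on $[0,t_0]$, and the skew bound $|L_u(t_0)-L_v(t_0)| \geq cD$ therefore carries over to $E_1$. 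The stable-gradient guarantee applied to $E_1$ at time $t_0 + \stabtime$ demands $|L_u(t_0+\stabtime)-L_v(t_0+\stabtime)| \leq f_D(c_1 D) \leq c_2 D$, and the rate bound $l_w(t)\in[\alpha,\beta]$ then yields $\stabtime \geq (c-c_2)D/(\beta-\alpha) \in \Omega(D)$.

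The main obstacle will be the careful setup of $E_0$ within the continuous, differentiable clock model of this paper, coupled with the estimate-graph abstraction. In particular, the shifting argument must be instantiated so that the shifted execution is itself a valid run of $\mathcal{A}$ (respecting the continuous drift constraint $h_w \in [1-\rho,1+\rho]$, the $\delay$-bounded message delivery in the estimate layer, and consistency of all local estimates along edges of $G\setminus\{p\}$), and so that the forced skew $\geq cD$ between $u$ and $v$ holds throughout an \emph{interval} long enough for $t_0$ to be chosen within it, rather than at a single isolated instant. The non-triviality hypothesis $\localskew(1)\in o(D)$ is used only to exclude algorithms whose gradient bound degenerates into the trivial global-skew bound; the quantitative step relies entirely on $f_D(c_1 D)\leq c_2 D$ together with $c_1,c_2 < 1/16$, which yields a strict constant gap $c - c_2 > 0$.
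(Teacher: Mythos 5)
Your argument is correct in outline and genuinely different from the paper's. You reduce the problem to a rate-of-change argument: since the logical clock rates lie in $[\alpha,\beta]$, the skew $|L_u-L_v|$ can change at rate at most $\beta-\alpha$, so reducing it from $\Omega(D)$ (forced by the global-skew shifting lower bound just before the new path $p$ of weight $c_1 D$ appears) down to $f_D(c_1 D)\le c_2 D$ takes $\ge (c-c_2)D/(\beta-\alpha)\in\Omega(D)$ time. The paper instead never invokes the $[\alpha,\beta]$ rate envelope: it uses the same line-graph setup and the execution from~\cite{kuhn09}, but buys time for the argument by exploiting large message delays on the two side paths from $v_0$ to $u$ and from $v$ to $v_n$, so that $u,v$ remain oblivious to the new edge $\set{v_0,v_n}$ for $c_1 n T/(1+\rho)$ real time. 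Over that window the \emph{stable} gradient guarantee on the two long side paths (each of weight $\ge c_1 n$) bounds $|L_{v_0}-L_u|$ and $|L_v-L_{v_n}|$ by $2c_2 n$ each, and the unchanged $|L_u-L_v|\ge\frac14\dist(u,v)$ then forces $L_{v_0}-L_{v_n}>n/2-2>\localskew(1)$ at time $t_2$. The trade-off between the two approaches: yours is shorter and avoids the indistinguishability bookkeeping after the insertion, but the resulting constant depends on $\beta-\alpha$, i.e., on how aggressively the algorithm is permitted to speed up and slow down its logical clocks; the paper's bound $c_1 D/(1+\rho)$ is uniform over all algorithms regardless of $\beta-\alpha$, which is the reason it routes through the gradient guarantee on the stable side paths instead of the rate envelope. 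Both are legitimate proofs of the $\Omega(D)$ statement under the paper's model (where $\alpha,\beta$ are constants), and both delegate the initial forced-skew construction to the shifting argument of~\cite{kuhn09}, so the remaining gap you flag (instantiating the shift so it is a valid run producing skew $\ge cD$ at the chosen instant) is real but exactly parallel to what the paper also outsources.
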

\begin{proof}[Proof Sketch]
  We show that for sufficiently large diameters $D$, we can add a new
  edge and cause the skew on it to be larger than $\mathcal{S}$ after
  $\Omega(D)$ time. For simplicity we consider only line networks,
  where $D \in \Theta(n)$, but the proof can easily be modified to
  hold in general networks.

  Consider a static line graph over $n+1$ nodes $v_0, \ldots, v_{n}$,
  where the estimate graph is the same as the communication graph and
  the weights of all edges are $T$. The diameter of the graph is $D =
  nT$. Let $c_1, c_2$ be the constants from the statement of the
  theorem, and let $u \coloneq v_{\lceil c_1 n \rceil}, v \coloneq
  v_{\lfloor n - c_1 n \rfloor}$. Finally, let $t_s \geq \stabtime$ be
  some time after the stabilization time of the algorithm. By
  definition of $\stabtime$, at any time after $t_s$, the skew on any
  path of weight $d$ cannot exceed $2f_{D}(d)$.
	
  The distance between $v_0$ and $u$ and between $v$ and $v_n$ is at
  least $c_1 n$; thus, for all $t \geq t_s$ we have
  \begin{eqnarray*}
    \lc{v_0}(t) - \lc{u}(t) &\leq& f_{D}(c_1 n) \leq c_2 n,\\
    \lc{v}(t) - \lc{v_n}(t) &\leq& f_{D}(c_1 n) \leq c_2 n.
    \label{eq:diff}
  \end{eqnarray*}
  Also, $\dist(u,v) \geq n - c_1 n - 1 - (c_1 n + 1) = n - 2c_1
  n - 2$.
  
  In~\cite{kuhn09}, we show that we can create an execution $E$ in which
  \begin{enumerate}[(a)]
  \item There exists a time $t_2 \geq t_s$ such that
    $\lc{u}(t_2) - \lc{v}(t_2) \geq \frac{1}{4} \dist(u,v) \geq n - 2c_1
    n - 2$, and
  \item The message delays on all edges between $v_0$ and $u$ and
    between $v$ and $v_n$ are always at least $T/(1 + \drift)$.
  \end{enumerate}
  Next we create a new execution $E'$, which is identical to $E$ until
  time $t_1 \coloneq t_2 - c_1 n \cdot T / (1 + \drift)$. At time $t_1$
  in $E'$, an edge between $v_0$ and $v_n$ appears. Our goal is to
  maintain a large skew on $\set{v_0, v_n}$ until time $t_2$ to show
  that the algorithm has not stabilized by then.
  
  Due to the large message delays, nodes $u,v$ do not find out about the
  new edge until time $t_2$. Consequently, their skew in execution $E'$
  is the same as in $E$. 
  The paths $v_0, v_1, \ldots, u$ and $v, \ldots, v_{n-1}, v_n$, which
  have weight at least $c_1 n$ by definition, are stable in both $E$ and
  $E'$. Thus, the skew on each path cannot exceed $2f_{D}(c_1 n)$, that
  is, it cannot exceed $2 c_2 n$. It follows that the skew between $v_0$
  and $v_n$ at time $t_2$ in $E'$ is at least 
  \begin{eqnarray*}
    \lc{v_0}(t_2) - \lc{v_n}(t_2) 
    &=& 
    \lc{v_0}(t_2) - \lc{u}(t_2) + \lc{v}(t_2) - \lc{v_n}(t_2) + \lc{u}(t_2) - \lc{v}(t_2) \\
    & \geq & 
    n - 2c_1 n -2 - 4c_2 n > n/2 - 2. 
  \end{eqnarray*} 
  For sufficiently large $n$, this value exceeds $\stableskew(1)$, since
  we assumed that $\stableskew(1) \in o(D)$. Thus, we have showed that
  after $c_1 n \cdot T / (1 + \drift) \in \Omega(D)$ time since edge
  $\set{v_0, v_n}$ appeared, the algorithm has not yet stabilized.
\end{proof}

\bibliographystyle{abbrv}
\bibliography{journal,podc10}

\end{document}